\documentclass[11pt]{article}
\usepackage[a4paper,margin=1in]{geometry}
\usepackage[T1]{fontenc}
\usepackage[utf8]{inputenc}
\usepackage{amsmath,amssymb,amsthm,mathtools}
\usepackage{stmaryrd}
\usepackage{enumitem}
\usepackage{hyperref}
\usepackage[expansion=false]{microtype}
\usepackage{tikz}
\usepackage{algorithm}
\usepackage{algpseudocode}
\hypersetup{colorlinks=true,linkcolor=blue,citecolor=blue,urlcolor=blue}

\newtheorem{theorem}{Theorem}[section]
\newtheorem{lemma}[theorem]{Lemma}
\newtheorem{proposition}[theorem]{Proposition}
\newtheorem{corollary}[theorem]{Corollary}
\newtheorem{definition}[theorem]{Definition}
\newtheorem{assumption}[theorem]{Assumption}
\newtheorem{example}[theorem]{Example}
\newtheorem{remark}[theorem]{Remark}

\newcommand{\Pre}{\operatorname{Pre}}
\newcommand{\Viab}{\operatorname{Viab}}

\newcommand{\Cert}{\operatorname{Cert}}
\newcommand{\Val}{\operatorname{Val}}

\newcommand{\Sat}{\operatorname{Sat}}
\newcommand{\dist}{\operatorname{dist}}
\newcommand{\Attr}{\operatorname{Attr}}

\newcommand{\Post}{\operatorname{Post}}
\newcommand{\obs}{\operatorname{obs}}
\newcommand{\bpost}{\operatorname{post}}
\newcommand{\rk}{\operatorname{rk}}
\newcommand{\marg}{\operatorname{marg}}
\newcommand{\argmin}{\operatorname*{arg\,min}}
\newcommand{\argmax}{\operatorname*{arg\,max}}
\newcommand{\Bel}{\operatorname{Bel}}

\newcommand{\Rev}{\operatorname{Rev}}
\newcommand{\Just}{\operatorname{Just}}
\newcommand{\JAct}{\operatorname{JAct}}

\newcommand{\Opq}{\operatorname{Opq}}

\newcommand{\CAct}{\operatorname{CAct}}

\newcommand{\Res}{\operatorname{Res}}

\newcommand{\rank}{\operatorname{rank}}

\newcommand{\Der}{\operatorname{Der}}
\newcommand{\Lic}{\operatorname{Lic}}
\newcommand{\Mod}{\operatorname{Mod}}
\newcommand{\NoAdd}{\operatorname{NoAdd}}
\newcommand{\BYSA}{\operatorname{BYSA}}
\newcommand{\Root}{\operatorname{Root}}

\newcommand{\Cell}{\operatorname{Cell}}

\newcommand{\Out}{\operatorname{Out}}

\newcommand{\Loc}{\operatorname{Loc}}

\title{Value-Refined Modal Fixed-Point Semantics with Certified Choice and Public Share-Alike Certificates}
\author{Faruk Alpay\textsuperscript{*}\quad Levent Sar{\i}o\u{g}lu\\
Department of Computer Engineering, Bah\c{c}e\c{s}ehir University, Istanbul, Turkey\\
\texttt{\{faruk.alpay, levent.sarioglu\}@bahcesehir.edu.tr}}
\date{}

\begin{document}
\maketitle

\begin{abstract}
This article studies a finite modal semantics in which truth is closed under admissible continuation before it is refined by value and certified by finite residual tests.  The underlying admissibility kernel is the classical greatest fixed point $K_G=\nu X.(G\cap\Pre_{\exists\Box}X)$, where $[\exists]\Box$ has the standard one-step reading that some choice cell has all compatible successors inside the tested set.  The contribution is the canonical value-certified completion of that fixed point: certified choices are exactly its local witnesses, the discounted value transformer is defined only over those witnesses, and value-refined modal bisimulation is the coarsest local equivalence preserving formulas, kernels, certified choices, Bellman values, greedy sets, residual certificates, and public release certificates.  A canonical pseudometric refinement grades this exact equivalence: it is the unique fixed point of a Hausdorff-lifted choice-matching transformer over certified choices, its zero set is exactly the value-refined modal bisimulation, and the discounted value is $1$-Lipschitz with respect to it, so an approximate quotient incurs only a distance-bounded value error.  Branching choice-cell and locus presentations place the choice-bearing structure inside the model, while the earlier transition presentation is recovered as a conservative retraction preserving every displayed operator.  The same engine is then applied internally to a public share-alike release fragment: attribution is treated as label preservation, same-license propagation as derivative closure, absence of downstream restriction as an admissibility clause, and the resulting BY-SA witness as a residual-stable certificate.  Finite separating examples and a completion theorem show that changing the order of truth, admissibility, value, quotienting, public derivation, and certification changes the semantics rather than merely changing notation.
\end{abstract}

\noindent\textbf{Keywords.} modal $\mu$-calculus; branching-time choice cells; locus-indexed choice semantics; alternating-time temporal logic; game semantics; certified choice; admissible continuation; value-refined semantics; modal bisimulation; quotient preservation; fixed points; quantitative modal logic; residual certificates; public share-alike certificates; derivative-release semantics; information-state semantics; bisimulation metrics; behavioural pseudometrics; value Lipschitz continuity; certified domain maps.

\medskip
\noindent\textbf{Public release certificate.} This version declares the public certificate $\BYSA(M)$ for the manuscript object $M$: attribution to the named authors, a persistent reference to Creative Commons Attribution--ShareAlike 4.0 International, retention of modification traces for derivative manuscripts, and exclusion of additional downstream restrictions.  Section~\ref{sec:bysa-certificate} records this declaration as a semantic certificate generated by the same admissibility, derivative-closure, value-refinement, and residual-certificate components used in the main construction.

\section{Introduction}

A finite modal structure can carry more than truth at a point.  It can carry a structured choice of continuations, an order on the continuations that remain available, and a certificate explaining why a finite computation is enough.  Greatest fixed points, one-step coalition ability, choice-cell semantics, bisimulation, and discounted value equations are classical tools.  The question studied here is which ordered semantic construction preserves all of them at once.  Public share-alike release is used as an internal stress test for the construction, since it requires labels, derivative closure, admissible continuations, quotient preservation, and finite certificate acceptance to agree in one finite fragment.

The preserved order is
\[
\text{modal satisfaction}
\longrightarrow
\text{admissible continuation}
\longrightarrow
\text{value refinement}
\longrightarrow
\text{certificate}.
\]
The first layer is ordinary satisfaction in a labelled modal structure.  The second layer is the greatest fixed point
\[
K_G=\nu X.\bigl(G\cap\Pre_{\exists\Box}(X)\bigr),
\]
where $\Pre_{\exists\Box}$ has the standard one-step shape: there is a choice cell whose every compatible successor remains in $X$.  The third layer compares values only over cells witnessing membership in the fixed point.  The fourth layer records residual, quotient, and stopping certificates as semantic objects rather than as implementation annotations.

The finite core is deliberate.  State sets and choice sets are finite, costs are bounded, and the value equation is discounted, so every displayed operator is exact, every chain stabilizes, and every quotient claim has a finite witness.  Infinite-state, undiscounted, probabilistic, and approximation-heavy variants appear later only as delimited boundary fragments.  They do not enlarge the main theorem; they test how much of its layer order survives when one rigidity is removed.

The main presentation is model-internal.  A point is not only labelled by propositions; it carries cells of histories and, in the locus-indexed refinement, sites at which the model partitions those histories.  A semantic selection chooses a cell.  No external bearer is introduced.  The construction belongs to the line running through branching-time choice semantics, coalition and ATL one-step ability, modal $\mu$-calculus fixed points, bisimulation, and quantitative modal semantics.

The transition calculus from which the construction begins is retained as a conservative retraction.  Every transition choice induces a cell in an unfolded branching presentation, and every cell presentation flattens back to the same predecessor, kernel, certified choice set, value transformer, quotient semantics, and residual certificates.  This is the formal place where the earlier transition-style fixed point keeps its role: it is not the whole semantics, but it is a preserved fragment of the richer semantics.

Several expanded mechanisms are retained only as boundary fragments.  Their role is diagnostic: they mark which clauses survive when a rigidity is loosened, without changing the finite theorem cluster that bears the main proof load.

The vocabulary is kept small.  A \emph{target} names the modal region to be maintained.  A \emph{kernel} is the greatest fixed point of admissible continuation.  A \emph{certified choice} is a witness to the fixed-point equation at the current point.  A \emph{value refinement} ranks only those witnesses.  A \emph{certificate} records when a finite computation is semantically adequate.  Each term is tied below to a displayed operator or relation.

\subsection{Scope and semantic boundary}

The basic fixed point is classical: it is the greatest fixed point generated by an admissible predecessor.  The semantic content lies in the enforced composition around that fixed point.  Modal truth is first closed under admissible continuation; only then is it refined by a discounted value transformer; only value-refined equivalences may quotient the structure; and finite stopping is accepted only through residual certificates.  Weakening or reordering these layers is not a stylistic change: finite witnesses in Section~\ref{sec:worked-discipline} show that the induced semantics changes.

The scope of the main theorem is exact and finite.  It does not introduce additional syntax for the modal $\mu$-calculus, a replacement for ATL, or a full extension of STIT.  It uses their established one-step and fixed-point mechanisms to define a combined semantics in which formulas, admissible continuation, values, quotients, and certificates are preserved by one local relation.  The boundary sections later in the paper record how the construction can be relaxed, but the central result is the finite theorem cluster stated next.

\subsection{Theorem cluster}

The results are organized around four core claims and two auxiliary representation claims.  The first core claim is that the admissible-continuation operator is monotone and its descending Kleene chain stabilizes after at most $|S|$ strict removals; its limit is exactly the largest region admitting certified cells.  The second core claim is that the value transformer restricted to certified cells is a $\gamma$-contraction and has a unique fixed point with stationary value-optimal selectors.  The third core claim is that value-refined modal bisimulation preserves base formulas, the positive $\mu$-calculus fragment generated by $[\exists]\Box$, kernel membership, Bellman values, greedy choice sets, and residual stopping certificates.  The fourth core claim is that the greatest fixed point of the local bisimulation transformer is the coarsest quotient preserving the combined formula-kernel-value-certificate semantics; each clause of the transformer has a finite separating witness.  A fifth, quantitative claim grades this quotient: the value-refined modal bisimulation is the zero set of a canonical pseudometric---the unique fixed point of a Hausdorff-lifted choice-matching transformer over certified choices---with respect to which the optimal value is $1$-Lipschitz, so the exact quotient is the distance-zero case and an approximate quotient carries only a distance-bounded value error.

The two auxiliary representation claims explain why the choice vocabulary is model-internal.  Every finite transition-style modal choice structure has a branching choice-cell presentation with the same predecessor, kernel, certified choices, values, quotients, and certificates.  Conversely, locus-indexed choice cells assign the choice-bearing structure to the model itself while flattening conservatively to the unindexed cell semantics and to the retained transition fragment.

\subsection{Relation to established fixed-point and choice semantics}

The relation to the nearest formalisms is exact.  Relative to the modal $\mu$-calculus, the kernel is the familiar greatest fixed-point formula
\[
\nu Z.(\varphi_G\wedge[\exists]\Box Z),
\]
where $\varphi_G$ defines the target region.  The present construction does not alter the fixed-point principle; it constrains which later value and certificate operations may be placed on top of it.

Relative to coalition logic and ATL, $[\exists]\Box$ is a one-step ability modality.  The construction uses only that one-step ability inside a greatest fixed point; it does not require the full temporal language of ATL or ATL$^*$.  This is why the comparison is local: the semantic burden is not temporal expressiveness but preservation after value refinement and quotienting.

Relative to branching-time choice-cell semantics, cells locate choice inside the model as partitions of histories at a moment.  No additional deontic or intentional vocabulary is required.  The cell structure is used only to give a model-internal semantics to certified choice and to justify the conservative retraction back to the transition presentation.

Relative to quantitative modal and game semantics, the discounted value equation is standard once the certified domain is fixed.  What is not standardly preserved by ordinary modal bisimulation is the joint package of modal truth, certified choices, costs, successor classes, greedy sets, and residual certificates.  The value-refined bisimulation theorem is the exact quotient discipline for that package.

Relative to behavioural pseudometrics, the graded refinement of the quotient layer is a bisimulation metric in the sense of Desharnais--Gupta--Jagadeesan--Panangaden, van~Breugel--Worrell, and Ferns--Panangaden--Precup \cite{desharnais2004,vanbreugel2005,ferns2004}, specialized to the alternating choice-cell setting: the probabilistic Kantorovich lift is replaced by a Hausdorff lift because compatible successors are resolved universally rather than in expectation, and the value-Lipschitz theorem is the choice-refined counterpart of the result that the optimal value is nonexpansive in the bisimulation metric.  The same order also explains the certified-domain map: admissibility fixes the local witnesses first, and value refinement ranks only those witnesses afterward.

\section{Axiomatic semantic core}
\label{sec:axiomatic-core}

The core definitions appear before the longer derivations, examples, and boundary cases.

\begin{definition}[Modal choice structure]
A finite modal choice structure is a tuple
\[
\mathfrak M=(S,U,A,T,L,c),
\]
where $S$ is a finite state set, $U$ is a finite choice alphabet, $A(s)\subseteq U$ is nonempty for every $s$, $T(s,u)\subseteq S$ is nonempty for every $u\in A(s)$, $L:S\to 2^{AP}$ labels states by atomic propositions, and $c(s,u)\in[0,\bar c]$ is a bounded value cost.  For $X\subseteq S$, define
\[
\Pre_{\exists\Box}(X)=\{s\in S: \exists u\in A(s)\; T(s,u)\subseteq X\}.
\]
\end{definition}

\begin{definition}[Admissible continuation kernel and certified choices]
For a modal target $G\subseteq S$, define
\[
F_G(X)=G\cap\Pre_{\exists\Box}(X),\qquad
K_G=\nu X.F_G(X).
\]
For $s\in K_G$, the certified choice set is
\[
\CAct_G(s)=\{u\in A(s):T(s,u)\subseteq K_G\}.
\]
\end{definition}

\begin{theorem}[Fixed-point admissibility]
\label{thm:core-fixedpoint}
The set $K_G$ is the largest subset $K\subseteq G$ such that every $s\in K$ has at least one choice $u\in A(s)$ with $T(s,u)\subseteq K$.  The descending sequence $X_0=S$, $X_{n+1}=F_G(X_n)$ stabilizes at $K_G$ after at most $|S|$ strict inclusions.  Moreover $\CAct_G(s)$ is nonempty for every $s\in K_G$, and any selector $\sigma(s)\in\CAct_G(s)$ preserves $K_G$ under all compatible successors.
\end{theorem}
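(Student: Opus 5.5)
The argument is a direct Knaster--Tarski and Kleene argument on the finite lattice $(2^S,\subseteq)$.

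\emph{Monotonicity and characterization.} First I check that $F_G$ is monotone. If $X\subseteq Y$ and $T(s,u)\subseteq X$, then $T(s,u)\subseteq Y$, so $\Pre_{\exists\Box}(X)\subseteq\Pre_{\exists\Box}(Y)$. Intersecting with $G$ preserves this inclusion. By Knaster--Tarski,
\[
K_G=\bigcup\{X: X\subseteq F_G(X)\}
\]
and $K_G=F_G(K_G)$. A set $K$ satisfies $K\subseteq F_G(K)$ exactly when $K\subseteq G$ and every $s\in K$ has some $u\in A(s)$ with $T(s,u)\subseteq K$. So the post-fixed points are precisely the sets described in the statement. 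Their union $K_G$ is itself such a set, since $K_G=F_G(K_G)$, and it contains every other such set. This gives the ``largest'' characterization.

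\emph{Stabilization of the chain.} Next I show by induction that $X_{n+1}\subseteq X_n$ and $K_G\subseteq X_n$. The base cases are $X_1\subseteq S=X_0$ and $K_G\subseteq S$. The step follows from monotonicity: $X_{n+2}=F_G(X_{n+1})\subseteq F_G(X_n)=X_{n+1}$, and $K_G=F_G(K_G)\subseteq F_G(X_n)=X_{n+1}$.

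Once $X_{m+1}=X_m$ for some $m$, the sequence is constant from then on. Hence every strict inclusion occurs before the first equality, and each one removes at least one state. This bounds the number of strict inclusions by $|S|$.

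The stable value $X_m$ is a fixed point, hence a post-fixed point, so $X_m\subseteq K_G$. Combined with $K_G\subseteq X_m$, this gives equality. No continuity assumption is needed, because finiteness forces stabilization.

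\emph{Certified choices.} For $s\in K_G=F_G(K_G)$ we have $s\in\Pre_{\exists\Box}(K_G)$. So some $u\in A(s)$ satisfies $T(s,u)\subseteq K_G$, which means $u\in\CAct_G(s)$, and this set is nonempty. Invariance under a selector is then just the definition: if $\sigma(s)\in\CAct_G(s)$, then every compatible successor lies in $T(s,\sigma(s))\subseteq K_G$. By induction along plays, all reachable states stay in $K_G$.

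\emph{Main obstacle.} I expect no real obstacle. The one point needing care is the ``at most $|S|$'' count: strict inclusions can only occur before stabilization, and each one strictly decreases a cardinality that starts at $|S|$.
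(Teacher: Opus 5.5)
Your proof is correct and follows essentially the same route as the paper: monotonicity of $F_G$, Knaster--Tarski, identifying the post-fixed points with the closed subsets of $G$, finite descent of the Kleene chain, and reading certified witnesses off the fixed-point equality $K_G=F_G(K_G)$. You are more explicit than the paper's own proof on why the stable value of the chain equals $K_G$ (the induction $K_G\subseteq X_n$ together with the stable set being post-fixed). The paper's proof of this theorem leaves that step implicit and spells it out only later, in Proposition~\ref{prop:extract} and Appendix~A.
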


\begin{proof}
The map $F_G$ is monotone on the finite complete lattice $2^S$.  Tarski's theorem gives a greatest fixed point, and the descending Kleene chain from $S$ reaches it because a strict descent removes at least one state.  If $s\in K_G$, the fixed-point equality $K_G=G\cap\Pre_{\exists\Box}(K_G)$ gives $s\in G$ and a witness $u\in A(s)$ with $T(s,u)\subseteq K_G$, hence $u\in\CAct_G(s)$.  Induction on time then proves that every path generated by a selector taking only certified choices remains in $K_G\subseteq G$.  Conversely, if $Y\subseteq G$ has the same one-step closure property, then $Y\subseteq F_G(Y)$, so $Y$ is post-fixed and must be contained in the greatest fixed point $K_G$.
\end{proof}

\begin{definition}[Value-refinement transformer]
Fix $0<\gamma<1$.  On $\ell_\infty(K_G)$ define
\[
(\mathcal V_G J)(s)=\min_{u\in\CAct_G(s)}\left(c(s,u)+\gamma\max_{s'\in T(s,u)}J(s')\right).
\]
A selector is value-greedy when it realizes the displayed minimum at every $s\in K_G$.
\end{definition}

\begin{theorem}[Unique value refinement and certified optimality]
\label{thm:core-value}
The operator $\mathcal V_G$ is a $\gamma$-contraction on $\ell_\infty(K_G)$.  Hence it has a unique fixed point $J_G^\star$, value iteration converges uniformly to $J_G^\star$, and every value-greedy selector preserves $K_G$ while minimizing the worst-compatible discounted value among all selectors that preserve $K_G$.
\end{theorem}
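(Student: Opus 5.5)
The plan is to prove the contraction estimate first and then derive the remaining claims by standard Banach and Bellman arguments.

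\textbf{Well-definedness and contraction.} By Theorem~\ref{thm:core-fixedpoint}, $\CAct_G(s)$ is nonempty and finite for every $s\in K_G$. Every $u\in\CAct_G(s)$ satisfies $T(s,u)\subseteq K_G$, so $\max_{s'\in T(s,u)}J(s')$ is defined, and $\mathcal V_G$ maps $\ell_\infty(K_G)$ to itself. For $J,J'$ and a fixed $u$, the inner maximum is nonexpansive: $|\max_{T(s,u)}J-\max_{T(s,u)}J'|\le\|J-J'\|_\infty$. Adding the common cost $c(s,u)$ and multiplying by $\gamma$ gives a bound of $\gamma\|J-J'\|_\infty$ for each $u$. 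Taking the minimum over the same finite set $\CAct_G(s)$ is also nonexpansive, so $\|\mathcal V_GJ-\mathcal V_GJ'\|_\infty\le\gamma\|J-J'\|_\infty$. Since $K_G$ is finite, $\ell_\infty(K_G)$ is complete, and the Banach fixed-point theorem yields a unique fixed point $J_G^\star$. It also gives uniform convergence of value iteration, with error at most $\gamma^n\|J_0-J_G^\star\|_\infty$.

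\textbf{Preservation of $K_G$.} A value-greedy selector takes values in $\CAct_G(s)$ by definition. Theorem~\ref{thm:core-fixedpoint} then shows that it preserves $K_G$ under all compatible successors.

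\textbf{Optimality.} This is the step that needs care: one must say precisely what ``worst-compatible discounted value of a selector'' means and compare it with $J_G^\star$. For a stationary selector $\sigma$ preserving $K_G$, so that $\sigma(s)$ lies in $\CAct_G(s)$ at every $s\in K_G$, I will define
\[
(\mathcal V_\sigma J)(s)=c(s,\sigma(s))+\gamma\max_{s'\in T(s,\sigma(s))}J(s').
\]
This is again a $\gamma$-contraction. Its fixed point $J_\sigma$ equals the supremum, over all paths consistent with $\sigma$, of the discounted cost sum. To see this, unroll the recursion, bound the tail by $\gamma^n\bar c/(1-\gamma)$, and use finiteness of branching to attain the maximum at each step.

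\textbf{Comparison.} Since $\sigma(s)\in\CAct_G(s)$, we have $\mathcal V_G J\le\mathcal V_\sigma J$ pointwise. Both operators are monotone, so iterating from $J_G^\star$ gives $J_G^\star\le\mathcal V_\sigma^n J_G^\star\to J_\sigma$, hence $J_G^\star\le J_\sigma$. For a greedy $\sigma^\ast$, the greedy condition gives $\mathcal V_{\sigma^\ast}J_G^\star=\mathcal V_GJ_G^\star=J_G^\star$. Uniqueness of the fixed point of $\mathcal V_{\sigma^\ast}$ then yields $J_{\sigma^\ast}=J_G^\star$.

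\textbf{Non-stationary selectors.} If selectors may depend on history but must keep every compatible path inside $K_G$, then at every history ending in $s$ the chosen $u$ must satisfy $T(s,u)\subseteq K_G$, i.e.\ $u\in\CAct_G(s)$. A first-step induction on the $n$-step truncated values, combined with the same tail bound, gives $J_G^\star\le J_\pi$ for any such policy $\pi$. This establishes that greedy selectors minimize the worst-compatible discounted value among all $K_G$-preserving selectors.
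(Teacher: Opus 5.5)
Your proof is correct and follows essentially the same route as the paper: contraction via the per-choice maximum bound and the minimum over the common set $\CAct_G(s)$, then Banach, invariance from Theorem~\ref{thm:core-fixedpoint}, and optimality from the pointwise domination $\mathcal V_G J\le\mathcal V_\sigma J$ combined with monotone iteration and uniqueness of the greedy selector's fixed point. Your identification of $J_\sigma$ with the worst-case discounted path cost and your extension to history-dependent $K_G$-preserving selectors are useful details that the paper leaves implicit.
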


\begin{proof}
For $J,H\in\ell_\infty(K_G)$ and each fixed $s,u$,
\[
\left|\max_{s'\in T(s,u)}J(s')-\max_{s'\in T(s,u)}H(s')\right|
\leq \|J-H\|_\infty.
\]
Multiplication by $\gamma$ and taking minima over the same nonempty finite set $\CAct_G(s)$ gives
\[
|\mathcal V_GJ(s)-\mathcal V_GH(s)|\leq \gamma\|J-H\|_\infty.
\]
Taking the supremum over $s$ proves contraction.  Banach's fixed-point theorem gives existence, uniqueness, and convergence.  A value-greedy selector uses only certified choices, so invariance follows from Theorem~\ref{thm:core-fixedpoint}.  For any certified selector $\sigma$, its selector-evaluation operator is also a $\gamma$-contraction, and $\mathcal V_GJ\leq \mathcal V_G^\sigma J$ pointwise.  Monotone contraction iteration gives $J_G^\star\leq J^\sigma$ for every certified $\sigma$, while equality holds for a greedy selector because its selector operator agrees with $\mathcal V_G$ at the fixed point.
\end{proof}

\begin{definition}[Value-refined modal bisimulation]
An equivalence relation $R$ on $S$ is value-refined for $\mathfrak M$ when $sRt$ implies $L(s)=L(t)$ and, for every $u\in A(s)$, there exists $v\in A(t)$ such that
\[
c(s,u)=c(t,v),\qquad [T(s,u)]_R=[T(t,v)]_R,
\]
and symmetrically from $t$ to $s$.  Here $[X]_R$ is the set of $R$-classes met by $X$.
\end{definition}

\begin{theorem}[Preservation and coarsest quotient]
\label{thm:core-quotient}
Let $G$ be definable in the base modal language and let $R$ be value-refined.  If $sRt$, then $s$ and $t$ satisfy the same positive $\mu$-calculus formulas generated by Boolean connectives, $[\exists]\Box$, and greatest fixed points; they are either both in $K_G$ or both outside it; and, when they are in $K_G$, $J_G^\star(s)=J_G^\star(t)$.  The union of all value-refined equivalences is the greatest fixed point of the local relation transformer induced by the two matching clauses above, and therefore gives the coarsest quotient preserving the combined formula-kernel-value semantics.
\end{theorem}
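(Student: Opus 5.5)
The plan is to prove everything from one basic fact: a value-refined $R$ commutes with the one-step predecessor on $R$-closed sets.

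\textbf{Step 1 (closure lemma).} Let $X\subseteq S$ be a union of $R$-classes and suppose $sRt$. If $u\in A(s)$ satisfies $T(s,u)\subseteq X$, the matching clause gives $v\in A(t)$ with $[T(t,v)]_R=[T(s,u)]_R$. Every state of $T(t,v)$ then lies in a class met by $T(s,u)$, and hence in $X$. So $\Pre_{\exists\Box}(X)$ is again $R$-closed. Any base-definable $G$ is $R$-closed because $L(s)=L(t)$, and intersections and complements of $R$-closed sets are $R$-closed.

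\textbf{Step 2 (kernel and formulas).} Formula preservation goes by structural induction. The Boolean and $[\exists]\Box$ cases are covered by Step~1. For a greatest fixed point $\nu Z.\psi$, I would use the descending Kleene chain from $S$, as in Theorem~\ref{thm:core-fixedpoint}; the chain is finite since $S$ is finite. Each stage is $R$-closed by the induction hypothesis, applied uniformly in the valuation of $Z$ ranging over $R$-closed sets, so the limit is $R$-closed. In particular $K_G=\nu X.F_G(X)$ is $R$-closed.

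\textbf{Step 3 (certified choices and values).} For $sRt$ in $K_G$ and $u\in\CAct_G(s)$, the matching $v$ satisfies $T(t,v)\subseteq K_G$ by Step~1, so $v\in\CAct_G(t)$. It also has $c(t,v)=c(s,u)$ and the same successor classes. Let $\mathcal C$ be the subspace of $\ell_\infty(K_G)$ of $R$-invariant functions; it is closed. For $J\in\mathcal C$ the maximum over $T(s,u)$ depends only on $[T(s,u)]_R$, so $\mathcal V_GJ(s)\le\mathcal V_GJ(t)$, and symmetry gives equality. Thus $\mathcal V_G$ maps $\mathcal C$ into itself. By Theorem~\ref{thm:core-value} its unique fixed point is the limit of value iteration started at $0\in\mathcal C$, so $J_G^\star\in\mathcal C$.

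\textbf{Step 4 (coarsest quotient).} Define $\Phi$ on relations: $\Phi(R)$ consists of the pairs $(s,t)$ with equal labels and mutual matching of every choice by cost and by $R$-class image. I expect this to be the main obstacle, since classes are only meaningful for equivalences. My first attempt is to restrict $\Phi$ to equivalence relations, ordered by inclusion. Checking that $\Phi$ maps equivalences to equivalences needs care for transitivity: matchings compose and class images compose. Monotonicity also needs care, because a coarser $R$ merges classes, so equal class images are preserved under coarsening. The value-refined equivalences are exactly the post-fixed points $R\subseteq\Phi(R)$.

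I would then show the union of all post-fixed points is an equivalence. The transitive closure of a union of post-fixed equivalences is again post-fixed: chain the matchings, using monotonicity to pass to the coarser relation. By Knaster--Tarski on the finite lattice, this largest post-fixed point is the greatest fixed point $\approx$. Finally, any equivalence whose quotient preserves labels, costs of matched choices, and successor classes is post-fixed by definition, so it is contained in $\approx$. This gives coarseness with respect to the local clauses. The preservation properties of $\approx$ follow from Steps~2--3.
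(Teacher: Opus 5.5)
Your proposal is correct and follows the paper's own route. That route is class-saturation of $\Pre_{\exists\Box}$, saturated Kleene approximants for $\nu$-formulas and for $K_G$, matching of certified choices to obtain a class-invariant Bellman iteration, and the greatest (post-)fixed point of the local matching transformer; your Step~4 is more careful than the paper about $\Phi$ preserving equivalences, its monotonicity, and the union of post-fixed equivalences being an equivalence. One small slip: $R$-closure of a base-definable $G$ does not follow from $L(s)=L(t)$ alone when $G$ uses $\Box$ or $\Diamond$, so you need a structural induction that handles those cases with the back-and-forth successor-class clause, as in Step~1.
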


\begin{proof}
Formula preservation is the standard bisimulation induction, with the $[\exists]\Box$ step using equality of successor $R$-class sets: a certified existential choice on one side has a matched choice on the other side whose every successor class is the same.  Greatest fixed points are preserved because the descending approximants of a positive $\mu$-formula remain unions of $R$-classes.  Since $G$ is a union of $R$-classes and $\Pre_{\exists\Box}$ maps unions of $R$-classes to unions of $R$-classes, every approximant of $K_G$ is class-saturated; hence $K_G$ is class-saturated.  On $K_G$, certified choices also match certified choices, because matched successor classes are already contained in the class-saturated kernel.  The Bellman backups at related states therefore minimize over cost-equal choices with identical successor-class maxima.  The contraction fixed point is unique, so the class-constant solution on the quotient lifts exactly to $J_G^\star$ on $K_G$.  Finally, the local matching requirements define a monotone transformer on equivalence relations; its greatest fixed point contains exactly the pairs surviving all finite modal/value tests.  Any quotient preserving the displayed one-step semantic data is a post-fixed point of this transformer and is contained in its greatest fixed point.
\end{proof}

\begin{definition}[Residual certificate]
For an approximation $J\in\ell_\infty(K_G)$, define the residual
\[
\Res_G(J)=\|J-\mathcal V_GJ\|_\infty.
\]
\end{definition}

\begin{theorem}[Certified finite approximation]
\label{thm:core-certificate}
For every $J\in\ell_\infty(K_G)$,
\[
\|J-J_G^\star\|_\infty\leq \frac{\Res_G(J)}{1-\gamma}.
\]
If a greedy selector for $J$ is used and has one-step greedy gap at least $2\Res_G(J)/(1-\gamma)$ against every nongreedy certified choice, then it is also greedy for $J_G^\star$.
\end{theorem}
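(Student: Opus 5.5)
The plan is to derive both claims from the $\gamma$-contraction property of $\mathcal V_G$ (Theorem~\ref{thm:core-value}) together with its fixed-point equation $J_G^\star=\mathcal V_GJ_G^\star$. The first inequality is the standard a posteriori Banach estimate; the second is a perturbation argument for the argmin in the Bellman backup.

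For the first claim, I would apply the triangle inequality through $\mathcal V_GJ$:
\[
\|J-J_G^\star\|_\infty\leq \|J-\mathcal V_GJ\|_\infty+\|\mathcal V_GJ-\mathcal V_GJ_G^\star\|_\infty\leq \Res_G(J)+\gamma\|J-J_G^\star\|_\infty .
\]
Since $K_G$ is finite, $\|J-J_G^\star\|_\infty$ is finite. Rearranging then gives $(1-\gamma)\|J-J_G^\star\|_\infty\leq\Res_G(J)$, which is the stated bound. If $K_G=\emptyset$, the statement is vacuous.

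For the second claim, fix $s\in K_G$ and introduce the $Q$-functions
\[
Q_J(s,u)=c(s,u)+\gamma\max_{s'\in T(s,u)}J(s'),\qquad u\in\CAct_G(s).
\]
Define $Q_{J^\star}$ the same way with $J_G^\star$ in place of $J$. As in the proof of Theorem~\ref{thm:core-value}, the max over a fixed successor set is $1$-Lipschitz in sup norm. Writing $\varepsilon=\Res_G(J)/(1-\gamma)$, this gives
\[
|Q_J(s,u)-Q_{J^\star}(s,u)|\leq\gamma\|J-J_G^\star\|_\infty\leq\gamma\varepsilon<\varepsilon
\]
for every certified $u$. I will read the gap hypothesis as follows: the greedy choice $u^\ast$ for $J$ satisfies $Q_J(s,v)-Q_J(s,u^\ast)\geq 2\varepsilon$ for every certified $v$ that is not greedy for $J$. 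Then for each such $v$,
\[
Q_{J^\star}(s,v)-Q_{J^\star}(s,u^\ast)\geq 2\varepsilon-2\gamma\varepsilon>0 .
\]
Now consider a certified $v$ that is greedy for $J$, i.e.\ a tie with $u^\ast$.

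The main obstacle is this tie case. Two tied $J$-greedy choices may separate slightly under $J_G^\star$, so $u^\ast$ might not attain the $J_G^\star$-minimum among them. I would first try to resolve it through the interpretation of the conclusion. If ``it is greedy for $J_G^\star$'' means that the set of $J$-greedy choices at $s$ contains every $J_G^\star$-minimizer, the argument closes. Every nongreedy $v$ is strictly worse than $u^\ast$ under $Q_{J^\star}$, hence not a minimizer. The minimum over the finite nonempty set $\CAct_G(s)$ is therefore attained inside the $J$-greedy set. If instead a unique minimizer is required, I would assume the greedy choice is unique, or more generally that the gap hypothesis covers all other certified choices. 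The strict inequality above then makes $u^\ast$ the unique $J_G^\star$-minimizer, hence a value-greedy choice in the sense of the definition preceding Theorem~\ref{thm:core-value}.

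Either way, applying the conclusion at every $s\in K_G$ shows that the selector realizes the minimum defining $\mathcal V_GJ_G^\star$ at every point, so it is greedy for $J_G^\star$. The strictness $\gamma<1$, which gives $2\varepsilon(1-\gamma)>0$, is what keeps the argument clean. In fact the hypothesis has slack: a gap of $2\gamma\varepsilon$ would already suffice.
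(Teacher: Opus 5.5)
Your proof is correct and follows the paper's route. The residual bound comes from the same triangle inequality through $\mathcal V_GJ$ plus contraction. The greedy claim comes from perturbing each one-step backup, and your perturbation bound $\gamma\varepsilon$ is slightly sharper than the paper's ``at most the value error.''

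Your tie caveat is justified, and the paper's one-line argument passes over it. Take $J\equiv0$ and two zero-cost certified choices at $s$ leading to self-loops of costs $1$ and $0$. The gap hypothesis is then vacuous, yet the tied choice toward the cost-$1$ loop is not greedy for $J_G^\star$. So the stated conclusion holds only under your second reading, where the gap is required against every other certified choice.

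Your first reading gives only that every $J_G^\star$-minimizer is $J$-greedy. It does not show that the selector actually used realizes the minimum, as your closing ``either way'' sentence claims. Finally, your strict inequalities fail in the degenerate case $\Res_G(J)=0$, but that case is trivial because then $J=J_G^\star$.
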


\begin{proof}
Using $J_G^\star=\mathcal V_GJ_G^\star$ and contraction,
\[
\|J-J_G^\star\|_\infty
\leq \|J-\mathcal V_GJ\|_\infty+\|\mathcal V_GJ-\mathcal V_GJ_G^\star\|_\infty
\leq \Res_G(J)+\gamma\|J-J_G^\star\|_\infty,
\]
which gives the residual bound.  The second statement follows by perturbing each one-step backup by at most the value error: if every nongreedy choice is separated by twice that error bound, the ordering of greedy backups cannot be reversed at the exact fixed point.
\end{proof}

\begin{definition}[Exact value-certified extension]
Let $G\subseteq S$ be a modal target.  An exact value-certified extension of the underlying one-step semantics is a tuple
\[
(D,C,\mathcal B,\equiv),
\]
where $D\subseteq G$, $C(s)\subseteq A(s)$ is nonempty for every $s\in D$, $T(s,u)\subseteq D$ for every $u\in C(s)$, $\mathcal B$ is a discounted Bellman-type transformer on $\ell_\infty(D)$ whose backups use only choices in $C(s)$, and $\equiv$ is a local equivalence preserving labels, immediate costs, and successor equivalence classes for every choice used by $C$.  The extension is exact when all these clauses are interpreted on the given finite structure, without projection error, sampling error, or external feasibility predicates.
\end{definition}

\begin{theorem}[Canonical completion of the admissible fixed point]
\label{thm:canonical-completion}
For every exact value-certified extension $(D,C,\mathcal B,\equiv)$ of the one-step semantics, the inclusion $D\hookrightarrow S$ factors through the admissible kernel $K_G$.  More precisely,
\[
D\subseteq K_G,\qquad C(s)\subseteq \CAct_G(s)\quad(s\in D).
\]
Consequently every exact value-certified extension is a restriction of the certified-choice domain generated by $K_G$.  Among exact extensions that use all locally available certified witnesses, the transformer $\mathcal V_G$ is the unique discounted value transformer determined by the displayed costs and successor sets, and the value-refined quotient of Theorem~\ref{thm:core-quotient} is the coarsest quotient preserving the resulting formula-kernel-value-certificate semantics.
\end{theorem}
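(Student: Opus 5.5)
The plan is to prove the two inclusions first and then read off the remaining uniqueness and coarsest-quotient clauses from Theorems~\ref{thm:core-value} and~\ref{thm:core-quotient}.

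\emph{Step 1: $D\subseteq K_G$.} The defining clauses of an exact value-certified extension say that $D\subseteq G$. They also say that every $s\in D$ has a nonempty $C(s)\subseteq A(s)$ with $T(s,u)\subseteq D$ for each $u\in C(s)$. Choosing any $u\in C(s)$ shows $s\in\Pre_{\exists\Box}(D)$, so $D\subseteq G\cap\Pre_{\exists\Box}(D)=F_G(D)$. Thus $D$ is a post-fixed point of $F_G$. The characterization in Theorem~\ref{thm:core-fixedpoint} (or Tarski directly) then gives $D\subseteq K_G$.

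\emph{Step 2: $C(s)\subseteq\CAct_G(s)$.} For $s\in D$ and $u\in C(s)$, the extension clauses give $u\in A(s)$ and $T(s,u)\subseteq D\subseteq K_G$. By definition this is exactly $u\in\CAct_G(s)$. The factorization of $D\hookrightarrow S$ through $K_G$ is just the statement of Step~1. The phrase ``restriction of the certified-choice domain'' is then the pair of inclusions: $(D,C)$ is a subdomain of $(K_G,\CAct_G)$ that is closed under its own choices.

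\emph{Step 3: uniqueness of $\mathcal V_G$.} Suppose the extension uses all locally available certified witnesses. I will read this as $C(s)=\CAct_G(s)$ on $D$, and $D$ maximal, which forces $D=K_G$ because $K_G$ itself is closed under $\CAct_G$. A ``discounted Bellman-type transformer determined by the displayed costs and successor sets'' has backup $c(s,u)+\gamma\max_{s'\in T(s,u)}J(s')$ with the worst-case resolution of compatible successors. It minimizes this backup over the admissible choice set. Once the choice set is pinned to $\CAct_G(s)$ and the domain to $K_G$, the formula coincides pointwise with $\mathcal V_G$.

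This step is where I expect the main obstacle. The statement does not formally define ``Bellman-type'' or ``use all witnesses''. I would first try to make these readings explicit as the interpretation of the definition, then derive uniqueness by literal equality of formulas. Uniqueness of the associated value then follows from the contraction in Theorem~\ref{thm:core-value}.

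\emph{Step 4: coarsest quotient.} For the quotient clause, I would observe that the required $\equiv$ preserves labels, costs and successor classes for every choice used. With $C=\CAct_G$ and $D=K_G$, any such $\equiv$ satisfies the matching clauses of the value-refined bisimulation on the certified part. I would extend it to all of $S$ by the same clauses, or restrict attention to the semantics it must preserve. It is then a post-fixed point of the local transformer of Theorem~\ref{thm:core-quotient}, so it is contained in the union of all value-refined equivalences. Conversely, that union preserves formulas, the kernel and values by Theorem~\ref{thm:core-quotient}. It preserves residual certificates because $\Res_G$ is computed from $\mathcal V_G$ on class-constant functions. Hence the value-refined quotient is the coarsest one preserving the formula-kernel-value-certificate semantics.
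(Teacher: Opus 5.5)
Your Steps 1--3 follow the paper's proof. $D\subseteq F_G(D)$ makes $D$ a post-fixed point, so $D\subseteq K_G$. Then $T(s,u)\subseteq D\subseteq K_G$ gives $C(s)\subseteq\CAct_G(s)$. Once the backup set on $K_G$ is $\CAct_G(s)$, the costs and successor sets force the expression $\mathcal V_G$, and contraction gives a unique value. The paper tacitly uses the same reading $D=K_G$ that you make explicit. For the quotient clause, the paper simply cites the coarsest-preservation statement of Theorem~\ref{thm:core-quotient} and adds residual invariance because $\Res_G$ is built from the same transformer. Your Step~4 ends up there too.

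The step that fails is the detour in Step~4. There you claim that the extension's own $\equiv$ is a post-fixed point of the local transformer, and hence lies inside the union of value-refined equivalences. The extension only requires $\equiv$ to preserve labels, costs and successor classes for choices \emph{used by} $C$, so choices in $A(s)\setminus C(s)$ are unconstrained. The transformer of Theorem~\ref{thm:core-quotient}, by contrast, must match every $u\in A(s)$. For the same reason, extending $\equiv$ to $S$ ``by the same clauses'' is not possible in general.

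Here is a concrete counterexample. Let $p$ hold at $s,t,g$ and fail at $b$, and put $G=\Sat(p)$. Let $g$ and $b$ be zero-cost self-loops, and set $T(s,a)=T(t,a)=\{g\}$ with cost $1$. Add an extra choice $T(s,d)=\{b\}$ at $s$ only. Then $K_G=\{s,t,g\}$ and $\CAct_G(s)=\CAct_G(t)=\{a\}$. An $\equiv$ merging $s$ and $t$ satisfies every clause of an exact extension with $D=K_G$ and $C=\CAct_G$. Yet $s\models\Diamond\neg p$ while $t\not\models\Diamond\neg p$, and $d$ has no match at $t$, so $(s,t)$ is not in the value-refined bisimulation.

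Drop this detour. The ``coarsest'' claim is about quotients preserving the full formula-kernel-value-certificate semantics, including base formulas whose $\Box,\Diamond$ range over all of $A(s)$. The inclusion of every such quotient into the union of value-refined equivalences is exactly the last clause of Theorem~\ref{thm:core-quotient}. Citing it directly, together with your ``conversely'' half and your residual argument via class-constant functions, completes the step as in the paper.
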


\begin{proof}
The defining closure condition of an exact extension says that $D\subseteq G$ and that every $s\in D$ has some choice in $C(s)$ whose successor set is contained in $D$.  Hence $D\subseteq G\cap\Pre_{\exists\Box}(D)=F_G(D)$, so $D$ is a post-fixed point of the monotone admissibility transformer.  By greatestness of $K_G=\nu X.F_G(X)$, $D\subseteq K_G$.  If $u\in C(s)$, then $T(s,u)\subseteq D\subseteq K_G$, so $u\in\CAct_G(s)$.  This proves the factorization through the certified-choice domain.  If the extension uses all local witnesses, then its backup set at each $s\in K_G$ is exactly $\CAct_G(s)$; the displayed costs and successor sets therefore force the same Bellman expression as $\mathcal V_G$, and contraction gives uniqueness of the fixed value.  The quotient statement is exactly the coarsest-preservation clause of Theorem~\ref{thm:core-quotient}, with residual preservation added by Theorem~\ref{thm:core-certificate} because the residual is computed from the same transformer.
\end{proof}

\section{Branching choice-cell semantics}
\label{sec:choice-cell-semantics}

The transition presentation is compact, but it hides where choice is located.  In a branching presentation, choice is not an instruction attached from outside the model; it is a partition of histories available at a moment.  The semantics below uses only moments, histories, cells, loci, outcomes, labels, and costs.  The word ``locus'' marks a choice-bearing site in the model and has no extra ontology beyond the partition it indexes.

\begin{definition}[Finite branching choice-cell frame]
A finite branching choice-cell frame is a tuple
\[
\mathfrak C=(M,\prec,H,\Xi,L,\kappa),
\]
where $M$ is a finite set of moments, $\prec$ is an acyclic immediate-successor relation, $H$ is a nonempty set of maximal $\prec$-paths, and $H_m=\{h\in H:m\in h\}$ is nonempty for every $m\in M$.  For each moment $m$, $\Xi(m)$ is a finite nonempty partition of $H_m$ into nonempty choice cells.  The labelling map is $L:M\to2^{AP}$, and $\kappa(m,C)\in[0,\bar c]$ is the value cost of cell $C\in\Xi(m)$.  The one-step outcome of a cell is
\[
\Out(m,C)=\{m'\in M:m\prec m'\text{ and some }h\in C\text{ contains }m'\text{ immediately after }m\}.
\]
Cells with empty one-step outcome are terminal and are omitted from the admissible-continuation construction unless explicitly marked absorbing.
\end{definition}

\begin{definition}[Locus-indexed choice-cell frame]
A locus-indexed choice-cell frame refines the preceding object by replacing $\Xi(m)$ with a finite nonempty set $\Loc(m)$ and, for every $\lambda\in\Loc(m)$, a finite nonempty partition $\Xi_\lambda(m)$ of $H_m$.  A resolved cell is a pair $(\lambda,C)$ with $\lambda\in\Loc(m)$ and $C\in\Xi_\lambda(m)$, and
\[
\Out(m,\lambda,C)=\{m'\in M:m\prec m'\text{ and some }h\in C\text{ contains }m'\text{ immediately after }m\}.
\]
Its flattened cell family is the disjoint union
\[
\Xi^\flat(m)=\{(\lambda,C):\lambda\in\Loc(m),\ C\in\Xi_\lambda(m)\},
\qquad
\kappa^\flat(m,\lambda,C)=\kappa(m,\lambda,C).
\]
A one-locus frame is obtained when $\Loc(m)$ is a singleton for every $m$.
\end{definition}

\begin{proposition}[Flattening of choice loci]
\label{prop:locus-flattening}
For every locus-indexed frame, flattening $\Xi_\lambda(m)$ to $\Xi^\flat(m)$ preserves the one-step modality, the admissible-continuation kernel, certified resolved cells, value refinement, and residual certificates.
\end{proposition}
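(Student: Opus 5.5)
The plan is to reduce everything to the transition presentation of Section~\ref{sec:axiomatic-core}.  Every operator in Proposition~\ref{prop:locus-flattening} is built only from the one-step outcome map and the cost, so it suffices to construct two finite modal choice structures with literally identical data.  The first, $\mathfrak M_{\mathrm{loc}}$, is read off the locus-indexed frame: $S=M$, $A(m)$ the set of resolved cells $(\lambda,C)$ with nonempty outcome, $T(m,(\lambda,C))=\Out(m,\lambda,C)$, and $c(m,(\lambda,C))=\kappa(m,\lambda,C)$.  The second, $\mathfrak M^\flat$, is read off the flattened frame $(M,\prec,H,\Xi^\flat,L,\kappa^\flat)$ in the same way.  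The first step is to check that $\Xi^\flat(m)$ is indexed by exactly the same pairs.  Because $\Out$ is defined by the same formula on a cell $C$ regardless of which locus tags it, the two structures have identical $A$, $T$, $c$ and $L$.

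The second step is a caveat about partitions.  $\Xi^\flat(m)$ is a disjoint union of partitions rather than a single partition of $H_m$.  I would read it either as a tagged family, or I would note that the cell constructions of this section never use the partition property: only $\Out$ and $\kappa$ enter, and the terminal-cell convention is applied cellwise.  In either reading the disjoint-union tagging keeps cells that coincide as history sets but sit at different loci as distinct choices.  This matters because their costs may differ.

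With the structures identified, the preservation claims follow in order.
\begin{itemize}
\item $\Pre_{\exists\Box}$ coincides, since it depends only on $A$ and $T$.
\item Hence $F_G$ and, by Theorem~\ref{thm:core-fixedpoint}, the kernel $K_G$ coincide, along with their Kleene chains.
\item $\CAct_G(m)$ is the same set of resolved cells.
\item $\mathcal V_G$ is the same operator, so by Theorem~\ref{thm:core-value} $J_G^\star$ and the greedy sets agree.
\item $\Res_G$ is computed from $\mathcal V_G$, so residuals and the bounds of Theorem~\ref{thm:core-certificate} agree.
\end{itemize}

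The main obstacle is the bookkeeping of the identification, not any fixed-point argument.  There are two points to handle: the terminal-cell omission, and the fact that distinct loci may carry the same history set.  I would try first to state flattening at the level of the induced modal choice structure as an isomorphism of tuples, namely the identity on $M$ and on resolved-cell indices.  Preservation is then immediate, because every listed operator is a function of the tuple alone.
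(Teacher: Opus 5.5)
Your proposal is correct and follows the paper's own argument: flattening leaves the set of resolved cells $(\lambda,C)$, their outcomes $\Out(m,\lambda,C)$, and their costs unchanged. Hence the predecessor, the Kleene approximants and kernel, the certified cells, the value transformer, and the residual all coincide. Phrasing this as an identity of induced modal choice structures is only a more explicit version of the same observation. Your care about $\Xi^\flat(m)$ not being a partition, and about keeping the locus tag so that coincident history sets at different loci stay distinct, tightens a point the paper's phrase ``forgets the name of the indexing site'' leaves loose.
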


\begin{proof}
The one-step modality inspects only the existence of a resolved cell whose one-step outcomes are contained in the tested set.  Flattening neither changes the set of resolved cells nor their outcomes or costs; it only forgets the name of the indexing site after the pair $(\lambda,C)$ has been formed.  Therefore the predecessor operator is identical before and after flattening.  The descending fixed-point approximants are generated by the same operator, so the kernels coincide.  Certified resolved cells are the same witnesses to the fixed-point equality.  Since the value transformer and residual norm use the same costs, outcomes, and certified cell family, they are unchanged.
\end{proof}

\begin{definition}[Cellular admissible predecessor]
For $X\subseteq M$, define
\[
\Pre_{\Cell}(X)=\{m\in M: \exists C\in\Xi(m)\; \Out(m,C)\neq\varnothing\text{ and }\Out(m,C)\subseteq X\}.
\]
For a target $G\subseteq M$, the cellular kernel and certified cell set are
\[
K^{\Cell}_G=\nu X.\bigl(G\cap\Pre_{\Cell}(X)\bigr),
\qquad
\Cell_G(m)=\{C\in\Xi(m):\Out(m,C)\subseteq K^{\Cell}_G\}.
\]
The value-refinement transformer on $K^{\Cell}_G$ is
\[
(\mathcal V^{\Cell}_GJ)(m)=
\min_{C\in\Cell_G(m)}\left(\kappa(m,C)+\gamma\max_{m'\in\Out(m,C)}J(m')\right).
\]
\end{definition}

\begin{proposition}[Cell semantics is the same predecessor in internal form]
\label{prop:cell-to-transition}
Every finite branching choice-cell frame induces a modal choice structure
\[
S=M,\qquad A(m)=\Xi(m),\qquad T(m,C)=\Out(m,C),\qquad c(m,C)=\kappa(m,C),
\]
after deleting terminal cells or replacing them by marked absorbing successors.  Under this translation
\[
\Pre_{\Cell}(X)=\Pre_{\exists\Box}(X),
\qquad
K^{\Cell}_G=K_G,
\qquad
\Cell_G(m)=\CAct_G(m),
\]
and $\mathcal V_G^{\Cell}=\mathcal V_G$.
\end{proposition}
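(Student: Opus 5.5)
The plan is to verify that the translation produces a legitimate finite modal choice structure. Then I will show that the predecessor operators coincide, so that everything else follows by transport. First, well-formedness. $S=M$ is finite. $A(m)=\Xi(m)$ is finite, since it partitions the finite set $H_m$ and $H$ is finite because $M$ is finite and $\prec$ is acyclic. Costs lie in $[0,\bar c]$. The only defect is that a cell may have $\Out(m,C)=\varnothing$, which violates nonemptiness of $T(m,C)$, and a moment all of whose cells are terminal would have $A(m)$ empty. I would handle both cases uniformly by the absorbing convention: adjoin a marked absorbing moment $\bot\notin G$ with a single self-loop cell, and set $T(m,C)=\{\bot\}$ for each terminal cell. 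Under pure deletion, I would instead restrict $A(m)$ to the nonterminal cells and treat a moment with none as failing every existential test.

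Second, the predecessor identity. Fix $X\subseteq M$. With deletion, $m\in\Pre_{\exists\Box}(X)$ means there is a nonterminal $C\in\Xi(m)$ with $\Out(m,C)\subseteq X$. This is literally the clause $\Out(m,C)\neq\varnothing$ and $\Out(m,C)\subseteq X$ defining $\Pre_{\Cell}(X)$. With absorption, a terminal cell has $T(m,C)=\{\bot\}\not\subseteq X$ whenever $X\subseteq M$. So terminal cells never witness membership, and the identity holds on subsets of $M$. Since $\bot\notin G$, the kernel computation never leaves $2^M\cap 2^G$, and the absorbing state is irrelevant.

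Third, the kernels are equal: $F_G$ and $G\cap\Pre_{\Cell}$ are the same monotone map on $2^M$, so they have the same greatest fixed point. Alternatively, one can note that the descending chains of Theorem~\ref{thm:core-fixedpoint} agree term by term. For the certified sets, $C\in\CAct_G(m)$ iff $T(m,C)\subseteq K_G$. For a nonterminal cell this is $\Out(m,C)\subseteq K^{\Cell}_G$. A terminal cell is excluded on both sides: on the transition side it was deleted or sent to $\bot\notin K_G$, and on the cell side it is excluded by the convention that terminal cells are omitted unless marked absorbing. I will need to make this convention explicit in $\Cell_G$, since the displayed definition of $\Cell_G(m)$ would vacuously admit an empty outcome set. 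Finally, $\mathcal V^{\Cell}_G$ and $\mathcal V_G$ act on the same space $\ell_\infty(K_G)$. They minimize over the same finite index set with identical summands, because $\kappa=c$ and $\Out=T$ on certified cells, so they are equal as operators. Their fixed points therefore coincide by Theorem~\ref{thm:core-value}.

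The main obstacle is the terminal-cell bookkeeping just described. The displayed definition of $\Cell_G(m)$ lacks the nonemptiness clause that $\Pre_{\Cell}$ has, so the equality $\Cell_G=\CAct_G$ holds only under the stated omission convention. I would state this convention explicitly and check that it is consistent in both the deletion and absorbing variants.
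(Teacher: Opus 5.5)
Your proof is the paper's own argument: the predecessor identity read off the definitions, hence the same greatest fixed point, the same witnesses, and the same Bellman backups. Your terminal-cell bookkeeping genuinely sharpens it, since the paper's sketch never remarks that the displayed $\Cell_G(m)$ lacks the clause $\Out(m,C)\neq\varnothing$ and therefore needs the omission convention. One small caution: in your absorbing variant the state space is $M\sqcup\{\bot\}$, so the Kleene chains do not agree term by term (already the transition-side first approximant is all of $G$, including leaf moments whose cells are all terminal); only their limits agree, so rest the kernel equality on your fixed-point argument rather than on the ``alternatively'' remark.
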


\begin{proof}
The equality of predecessors is immediate from the definitions: a moment has a cell whose one-step outcomes all lie in $X$ exactly when the induced modal choice structure has an available choice whose transition set lies in $X$.  The greatest fixed points are therefore the greatest fixed points of the same monotone operator on the same finite powerset lattice.  Certified cells and certified choices are the same witnesses to the fixed-point equality, and the value backups minimize over the same finite family with the same costs and successor sets.
\end{proof}

\begin{proposition}[Transition structures unravel into cell frames]
\label{prop:transition-to-cell}
Every finite modal choice structure $\mathfrak M=(S,U,A,T,L,c)$ has a one-step branching-cell unraveling whose projected predecessor agrees with $\Pre_{\exists\Box}$.  At a finite path $\rho=s_0u_0s_1\cdots s_n$ ending in $s_n$, let $H_\rho$ be the set of infinite continuations compatible with $T$, and let $\Xi(\rho)$ partition $H_\rho$ by the next chosen symbol $u\in A(s_n)$.  The cell for $u$ contains exactly those continuations whose next symbol is $u$, and its projected outcomes are the extensions $\rho u s'$ with $s'\in T(s_n,u)$.  Projection to the last state sends the cellular predecessor back to $\Pre_{\exists\Box}$.
\end{proposition}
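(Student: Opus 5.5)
The plan is to build the unraveling explicitly as a finite-depth-indexed tree frame, check that it satisfies the definition of a finite branching choice-cell frame, and then verify the predecessor identity by a direct two-way inclusion.

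\textbf{Construction.} The moments are the finite paths $\rho=s_0u_0s_1\cdots s_n$ compatible with $T$, meaning $u_i\in A(s_i)$ and $s_{i+1}\in T(s_i,u_i)$. The immediate-successor relation is $\rho\prec\rho u s'$ with $s'\in T(s_n,u)$. This relation is acyclic because length strictly increases. Histories are the infinite compatible paths, and $H_\rho$ is the set of those extending $\rho$. Since $A(s)$ and $T(s,u)$ are always nonempty, every finite path extends to an infinite one by dependent choice, so $H_\rho\neq\varnothing$. The cells $C_u=\{h\in H_\rho: h \text{ continues } \rho \text{ with } u\}$ for $u\in A(s_n)$ are therefore nonempty, pairwise disjoint, and cover $H_\rho$, so $\Xi(\rho)$ is a partition. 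Labels and costs are set by $L(\rho)=L(s_n)$ and $\kappa(\rho,C_u)=c(s_n,u)$. One caveat must be flagged: the moment set is infinite. I would therefore read ``finite'' here as local finiteness, i.e.\ finitely many cells and outcomes per moment, or else truncate at depth $N\geq|S|$ with absorbing leaves. The predecessor claim needs only the one-step structure, so either reading works.

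\textbf{Outcomes.} Every $h\in C_u$ has the form $\rho u s'\cdots$ with $s'\in T(s_n,u)$. Conversely, each such $s'$ is realized by some history, again by extendibility. Hence $\Out(\rho,C_u)=\{\rho us':s'\in T(s_n,u)\}$, which is nonempty. In particular no cell is terminal.

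\textbf{Predecessor agreement.} Let $\pi(\rho)=s_n$ and, for $X\subseteq S$, take the lifted set $\pi^{-1}(X)$. Then $\rho\in\Pre_{\Cell}(\pi^{-1}X)$ if and only if some $u\in A(s_n)$ has $\Out(\rho,C_u)\subseteq\pi^{-1}X$. That holds if and only if $T(s_n,u)\subseteq X$, i.e.\ if and only if $s_n\in\Pre_{\exists\Box}(X)$. So $\Pre_{\Cell}\circ\pi^{-1}=\pi^{-1}\circ\Pre_{\exists\Box}$. Because $\pi^{-1}$ preserves intersections and the identity holds at every step, induction along the Kleene chain of Theorem~\ref{thm:core-fixedpoint} also gives $K^{\Cell}_{\pi^{-1}G}=\pi^{-1}K_G$. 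Within the statement itself, however, only the predecessor identity is claimed.

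The main obstacle is this finiteness mismatch together with the nonemptiness of the $H_\rho$. The remaining steps are definitional unfolding.
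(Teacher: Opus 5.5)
Your proposal is correct and follows the paper's proof: lift $X$ to $\widehat X=\{\rho:\mathrm{last}(\rho)\in X\}$, observe that $\rho\in\Pre_{\Cell}(\widehat X)$ if and only if some $u\in A(s_n)$ has $T(s_n,u)\subseteq X$, and push this identity through the Kleene approximants; beyond the paper, you verify the frame axioms (nonempty $H_\rho$, the partition, the outcome sets) and rightly flag that the unraveling has infinitely many moments, which the paper's definition of a \emph{finite} frame does not accommodate. Of your two readings, only the locally finite infinite tree carries the argument: after truncation at depth $N$ the leaf cells have empty outcome, so $\Pre_{\Cell}$ excludes the leaves and the predecessor identity fails there unless one adds an absorbing convention that the paper never formalizes, so your claim that ``either reading works'' should be dropped.
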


\begin{proof}
For a set $X\subseteq S$, lift it to paths by $\widehat X=\{\rho: \mathrm{last}(\rho)\in X\}$.  A path $\rho$ ending in $s$ belongs to $\Pre_{\Cell}(\widehat X)$ iff some cell determined by $u\in A(s)$ has all projected one-step extensions ending in $X$.  This is equivalent to $T(s,u)\subseteq X$, hence to $s\in\Pre_{\exists\Box}(X)$.  Induction on the Kleene approximants then gives equality of projected kernels.  The same argument applies to certified cells and value backups because costs and successor sets are copied from the original structure.
\end{proof}

\begin{theorem}[Internal choice representation theorem]
\label{thm:internal-choice-representation}
For every finite target $G$ and every finite modal choice structure, the layered semantics computed in transition form and the layered semantics computed on its branching-cell presentation agree after projection: satisfaction of base formulas, membership in the admissible-continuation kernel, certified choices/cells, value-refinement fixed points, greedy selector sets, quotient preservation, and residual certificates are invariant under the translation.
\end{theorem}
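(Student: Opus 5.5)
The plan is to reduce everything to the projection $\pi:\rho\mapsto\mathrm{last}(\rho)$ from the unraveling of Proposition~\ref{prop:transition-to-cell} and to show that $\pi$ is a value-refined map. Invariance of each layer then follows from the earlier core theorems.

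First, I will make the unraveling a modal choice structure in its own right. Its states are the finite paths $\rho$, with $L(\rho)=L(\mathrm{last}(\rho))$. The cells are the $u\in A(\mathrm{last}(\rho))$, with outcomes $\{\rho us':s'\in T(\mathrm{last}(\rho),u)\}$ and cost $\kappa(\rho,u)=c(\mathrm{last}(\rho),u)$. Because the unraveling is infinite, I will work with the union of $\mathfrak M$ and its unraveling, or with paths of bounded length plus a marked absorbing copy. The cleanest route is to check directly that the relation $\rho\approx\rho'$ iff $\mathrm{last}(\rho)=\mathrm{last}(\rho')$, taken on this disjoint union and extended by $\rho\approx s$ iff $\mathrm{last}(\rho)=s$, satisfies the clauses of a value-refined equivalence. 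Labels agree by construction. Each cell $u$ at $\rho$ is matched by the same $u$ at $s$, with equal cost. The successor class sets agree because $\rho us'\approx s'$.

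Second, I will verify the layers one at a time.

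\begin{itemize}
\item \emph{Base formulas.} These are preserved by the formula clause of Theorem~\ref{thm:core-quotient}.
\item \emph{Kernels.} Proposition~\ref{prop:transition-to-cell} gives $\Pre_{\Cell}(\widehat X)=\widehat{\Pre_{\exists\Box}(X)}$. By induction on the Kleene approximants of Theorem~\ref{thm:core-fixedpoint}, starting from $\widehat S$, every approximant of $K^{\Cell}_{\widehat G}$ equals $\widehat{X_n}$. Hence $K^{\Cell}_{\widehat G}=\widehat{K_G}$. This needs care because the path lattice is infinite: I will use that the $\widehat{X_n}$ chain stabilizes whenever the $X_n$ chain does, and that $\widehat{K_G}$ is a fixed point. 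Any post-fixed point $Y$ has its image $\pi(Y)$ post-fixed in $\mathfrak M$, so $Y\subseteq\widehat{K_G}$ and greatestness holds.
\item \emph{Certified cells.} These follow immediately: $\Cell_{\widehat G}(\rho)=\CAct_G(\mathrm{last}(\rho))$.
\item \emph{Values.} The function $J^\star_G\circ\pi$ is a fixed point of $\mathcal V^{\Cell}$, since the backups coincide term by term. Contraction, as in Theorem~\ref{thm:core-value} on $\ell_\infty$ of the path kernel, makes that fixed point unique. It follows that greedy sets correspond.
\item \emph{Residuals.} For any $J$ on $K_G$ we have $\mathcal V^{\Cell}(J\circ\pi)=(\mathcal V_GJ)\circ\pi$, so $\Res(J\circ\pi)=\Res_G(J)$. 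Theorem~\ref{thm:core-certificate} bounds transfer verbatim.
\end{itemize}

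Third, for quotient preservation, I will show that $R\mapsto\pi^{-1}R$ maps value-refined equivalences on $\mathfrak M$ to value-refined equivalences on the unraveling. Conversely, the coarsest one upstairs contains $\approx$ and therefore descends. The two greatest fixed points of Theorem~\ref{thm:core-quotient} then correspond under $\pi$.

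The main obstacle is the infinite, or unbounded, path presentation. Theorems~\ref{thm:core-fixedpoint} and~\ref{thm:core-quotient} are stated for finite structures. I will handle this first by arguing every layer through $\pi$-saturated sets and functions, which live in a lattice isomorphic to $2^S$, so all computations effectively occur on the finite image. If that fails, I will truncate the unraveling at depth $|S|$ and add absorbing copies of states.
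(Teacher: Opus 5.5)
Your proposal is correct and follows essentially the same layer-by-layer route as the paper's proof: predecessor equality on lifted sets from Proposition~\ref{prop:transition-to-cell}, induction on the Kleene approximants, term-by-term identity of the Bellman backups with uniqueness by contraction, and preservation of exactly the label, cost, and successor-class data that value-refined bisimulation inspects. Your extra care with the infinite path unraveling is tighter than the paper's appeal to finite stabilization and label copying, and it already suffices, so the truncation fallback is not needed; that care consists of projecting post-fixed points via $\pi$, using contraction on $\ell_\infty$ of the path kernel, and using the bisimulation $\approx$ for base formulas with $\Box,\Diamond$.
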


\begin{proof}
Base satisfaction is preserved because labels are copied by projection.  Proposition~\ref{prop:transition-to-cell} gives equality of the predecessor on every lifted set, so the kernel approximants agree by induction and the greatest fixed points agree by finite stabilization.  Certified witnesses are exactly the cells corresponding to the choices whose successor sets stay inside the kernel.  The Bellman/value transformer is the same functional after projection, hence has the same unique fixed point by contraction.  Value-refined bisimulation only inspects labels, costs, and successor classes; these are preserved by the construction.  Finally, the residual certificate is the norm of $J-\mathcal V_GJ$, and $\mathcal V_G$ is unchanged under projection.
\end{proof}

\begin{theorem}[Conservative retraction of the transition presentation]
\label{thm:conservative-retraction}
Let $\mathsf U$ send a transition-style modal choice structure to the one-step branching-cell unraveling of Proposition~\ref{prop:transition-to-cell}, and let $\mathsf P$ flatten a branching-cell frame by Proposition~\ref{prop:cell-to-transition}.  For every finite target $G$, the composite $\mathsf P\mathsf U$ preserves the predecessor operator, the Kleene approximants of $K_G$, the fixed point $K_G$, certified choices, value-refinement fixed points, value-refined quotients, and residual certificates exactly.  Conversely, for every branching-cell frame, $\mathsf U\mathsf P$ is a refinement whose projection preserves the same data.
\end{theorem}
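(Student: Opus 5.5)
The plan is to prove the two directions separately, composing the earlier translation results in each case.

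\emph{Forward direction} ($\mathsf P\mathsf U$ on a transition structure $\mathfrak M$). First compute $\mathsf P\mathsf U\mathfrak M$ explicitly. Its states are the finite paths $\rho$ of $\mathfrak M$ (restricted to a finite depth or quotiented by last state, so that finiteness is kept), its choices at $\rho$ are the cells indexed by $u\in A(\mathrm{last}(\rho))$, its transitions are $T(\rho,u)=\{\rho us':s'\in T(\mathrm{last}(\rho),u)\}$, and its costs are $c(\mathrm{last}(\rho),u)$. The key observation is that the last-state projection $\pi(\rho)=\mathrm{last}(\rho)$ is a functional bisimulation: labels, costs and successor images are copied exactly. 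Combining Proposition~\ref{prop:transition-to-cell} with Proposition~\ref{prop:cell-to-transition}, one gets $\Pre_{\exists\Box}^{\mathsf P\mathsf U}(\widehat X)=\widehat{\Pre_{\exists\Box}(X)}$ for every $X\subseteq S$, where $\widehat X=\pi^{-1}(X)$.

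Next, an induction on $n$ shows that the Kleene approximants satisfy $X_n^{\mathsf P\mathsf U}=\widehat{X_n}$. The base case is $\widehat S$, which is everything. For the step, $\widehat G\cap\Pre(\widehat{X_n})=\widehat{G\cap\Pre(X_n)}$. Finite stabilization, via Theorem~\ref{thm:core-fixedpoint}, then gives $K_{\widehat G}=\widehat{K_G}$. Certified choices agree because $T(\rho,u)\subseteq\widehat{K_G}$ holds iff $T(\mathrm{last}\rho,u)\subseteq K_G$.

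For values, observe that $J_G^\star\circ\pi$ is a fixed point of the lifted transformer, since the backups coincide termwise. Uniqueness from Theorem~\ref{thm:core-value} then gives $J^\star_{\widehat G}=J^\star_G\circ\pi$. The same substitution shows $\mathcal V(J\circ\pi)=(\mathcal VJ)\circ\pi$, so residuals are equal and, by Theorem~\ref{thm:core-certificate}, greedy sets match. For quotients, the kernel of $\pi$ is itself value-refined, so by Theorem~\ref{thm:core-quotient} the coarsest value-refined equivalence on $\mathsf P\mathsf U\mathfrak M$ is the $\pi$-pullback of the coarsest one on $\mathfrak M$. Showing that pullbacks and pushforwards of value-refined relations along $\pi$ remain value-refined is routine.

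\emph{Converse direction} ($\mathsf U\mathsf P$ on a cell frame $\mathfrak C$). Here $\mathsf P\mathfrak C$ has $S=M$ and $A=\Xi$ by Proposition~\ref{prop:cell-to-transition}. Then $\mathsf U\mathsf P\mathfrak C$ is the path unraveling of this structure, and its last-moment projection lands in $M$. Applying the forward argument to the transition structure $\mathsf P\mathfrak C$, and then Proposition~\ref{prop:cell-to-transition} once more, transports every datum back to the cellular operators $\Pre_{\Cell}$, $K^{\Cell}_G$, $\Cell_G$ and $\mathcal V^{\Cell}_G$. The word ``refinement'' means exactly that the projection is surjective and preserves this data, which is what has been shown. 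It is not an isomorphism, because paths duplicate moments.

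\emph{Main obstacle.} I expect the hard step to be making $\mathsf U$ land in finite structures, since the theorems cited assume finiteness. Histories $H_\rho$ are infinite continuations, and the path tree is infinite even when $S$ is finite. My first attempt is to note that every operator involved (predecessor, kernel, contraction, residual) is well defined on the bounded space $\ell_\infty$ over a countable state set, with monotonicity and Banach contraction still valid. Stabilization of the kernel chain is then obtained not from finiteness but from the identity $X_n^{\mathsf P\mathsf U}=\widehat{X_n}$ together with stabilization downstairs. If that fails, the fallback is to take $\mathsf U$ to the depth-one unraveling, whose paths have length at most one and whose outcomes are identified by last state. That structure is finite and isomorphic to $\mathfrak M$ after projection.
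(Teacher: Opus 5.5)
Your proposal is correct and follows the paper's route. The paper composes Propositions~\ref{prop:transition-to-cell} and~\ref{prop:cell-to-transition} and notes that outcome sets, costs and labels are copied; your last-state projection $\pi$ with $\Pre^{\mathsf P\mathsf U}_{\exists\Box}(\widehat X)=\widehat{\Pre_{\exists\Box}(X)}$, the induction on Kleene approximants, uniqueness applied to $J^\star_G\circ\pi$, and the pullback of the coarsest value-refined equivalence spell out exactly that argument. Your finiteness worry is legitimate (the unraveling has infinitely many moments, which the paper passes over), and your primary fix---transporting stabilization from $\mathfrak M$ and using Banach on $\ell_\infty$ over a countable set---is the right one. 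The alternatives you float should be dropped: truncating at finite depth creates terminal leaves that either empty the kernel or, if made absorbing, can admit paths whose last state lies outside $K_G$; and identifying outcomes by last state makes $\prec$ cyclic (every finite modal choice structure has a cycle), contrary to the frame definition.
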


\begin{proof}
The first statement follows by composing Propositions~\ref{prop:transition-to-cell} and~\ref{prop:cell-to-transition}: $\mathsf U$ replaces a choice by a cell of histories and $\mathsf P$ reads that cell back as a transition choice with the same outcome set and cost.  Thus the one-step predecessor is unchanged on every subset, and the whole descending Kleene chain is unchanged.  Certified choices are defined solely by membership of their outcome sets in the terminal fixed point, so they are unchanged as well.  The value transformer, quotient transformer, and residual map are functions of labels, costs, certified choices, and successor class sets; all of these are preserved by the composite.  For the converse direction, $\mathsf U\mathsf P$ may split histories more finely than the original cell frame, but projection identifies the split cells with their original outcome sets and costs.  All semantic objects named in the theorem are therefore projection-invariant.
\end{proof}

\begin{remark}[Semantic effect of cells]
The cell presentation does not add a second mathematics.  It relocates choice into the model.  The formula $[\exists]\Box\varphi$ says that the present moment has at least one cell whose immediate outcomes all satisfy $\varphi$.  This is stronger than ordinary possibility and weaker than historical necessity.  It is the exact modal position used later: enough internal structure to speak about certified choice, value order, and quotient preservation, without importing an external narrative.
\end{remark}

\begin{example}[Two loci at one moment]
Let $m$ be a moment with four histories $h_{00},h_{01},h_{10},h_{11}$, where the first bit records the successor reached by one locus and the second bit records a later refinement.  Let $\lambda$ partition $H_m$ into
\[
C_0=\{h_{00},h_{01}\},\qquad C_1=\{h_{10},h_{11}\},
\]
and let $\mu$ partition $H_m$ into
\[
D_0=\{h_{00},h_{10}\},\qquad D_1=\{h_{01},h_{11}\}.
\]
If a target $X$ contains exactly the successors occurring along $h_{00}$ and $h_{01}$, then $\lambda$ has a certified cell $C_0$ while $\mu$ has none.  If $X$ contains the successors occurring along $h_{00}$ and $h_{10}$, the situation is reversed.  The choice-bearing site is therefore not a decorative index: changing the locus changes which existential-universal modal step is certified, even though the set of histories is the same.
\end{example}

\section{Worked semantic discipline: unfolding, ranks, and non-commutation}
\label{sec:worked-discipline}

The preceding section gives the compressed theorem layer.  The present section gives the finite semantic mechanics that force the order
\[
\text{satisfaction}\to\text{admissible continuation}\to\text{value refinement}\to\text{certificate}.
\]
Each item below is a small proof obligation.  Removing the order makes one of the finite witnesses fail.

\subsection{Finite unfolding and survival rank}

Define the descending approximants
\[
D^0_G=S,
\qquad
D^{n+1}_G=G\cap\Pre_{\exists\Box}(D^n_G).
\]
Thus $D^1_G=G$ under the standing nonemptiness of choices and successors, while $D^{n+1}_G$ contains the states from which membership in $G$ can be maintained for one more certified step than $D^n_G$.

\begin{definition}[Depth-$n$ certified tree]
A depth-$n$ certified tree rooted at $s$ is a finite tree of height $n$ whose nodes are labelled by states, whose root is labelled by $s$, and such that every non-leaf node labelled by $x$ is also labelled with a choice $u_x\in A(x)$ and has exactly the successor labels in $T(x,u_x)$.  It is $G$-admissible when every node at depth $<n$ is labelled by a state in $G$.
\end{definition}

\begin{lemma}[Approximants are finite proof depths]
\label{lem:depth-tree}
For every $n\geq0$ and every state $s$,
\[
s\in D^n_G
\quad\Longleftrightarrow\quad
\text{there is a $G$-admissible depth-$n$ certified tree rooted at $s$}.
\]
\end{lemma}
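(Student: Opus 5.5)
The proof goes by induction on $n$, simultaneously for all states $s$, following the recursion $D^{n+1}_G=G\cap\Pre_{\exists\Box}(D^n_G)$. For the base case $n=0$ we have $D^0_G=S$. A depth-$0$ certified tree is a single leaf labelled $s$. Its $G$-admissibility condition quantifies over nodes at depth $<0$, so it is vacuous. Both sides therefore hold for every $s$.

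For the inductive step, assume the equivalence at depth $n$ for every state.

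\emph{Forward direction.} Suppose $s\in D^{n+1}_G$. Then $s\in G$, and there is $u\in A(s)$ with $T(s,u)\subseteq D^n_G$. For each $s'\in T(s,u)$, the induction hypothesis gives a $G$-admissible depth-$n$ tree $\tau_{s'}$ rooted at $s'$. Label the root with $s$ and $u_s=u$, and attach one copy of $\tau_{s'}$ for each $s'\in T(s,u)$. This gives a tree of height $n+1$ whose root has exactly the successor labels $T(s,u)$. Every node at depth $<n+1$ is either the root, which lies in $G$, or a node at depth $<n$ inside some $\tau_{s'}$, which lies in $G$ by hypothesis. So the new tree is $G$-admissible.

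\emph{Converse direction.} Take a $G$-admissible depth-$(n+1)$ tree rooted at $s$. Since $n+1\ge 1$, the root is a non-leaf node at depth $0<n+1$. Hence $s\in G$, and the root carries a choice $u\in A(s)$ whose children are labelled by exactly the states of $T(s,u)$. Each child subtree has height $n$ and inherits the certified-tree and $G$-admissibility conditions: depth $<n$ in the subtree is depth $<n+1$ in the whole tree. The induction hypothesis puts every child label in $D^n_G$, so $T(s,u)\subseteq D^n_G$ and $s\in G\cap\Pre_{\exists\Box}(D^n_G)=D^{n+1}_G$.

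The one step that needs care is fixing what ``height $n$'' means, so that the recursion is exact. I read a depth-$n$ certified tree as one in which every branch reaches depth exactly $n$: all nodes at depth $<n$ are non-leaves, and all leaves sit at depth $n$. This is the reading under which the converse goes through. If a node at depth $<n$ could be a leaf, the successor condition would fail there, because $T(x,u)$ is nonempty and a leaf has no children. The forward construction produces trees of exactly this uniform shape, so it is consistent with this reading.

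A second convention is that children carry the labels of $T(x,u_x)$ with one child per successor. Repeated labels would be harmless, since only the set of successor labels is tested.

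Finally, the base case rests on the standing assumption that $A(s)$ and every $T(s,u)$ are nonempty. This is what gives $D^1_G=G$, consistent with the lemma at $n=1$: a depth-$1$ tree exists at $s$ exactly when $s\in G$.
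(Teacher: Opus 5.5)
Your proof is correct and is the same induction on $n$ as the paper's: you graft the depth-$n$ trees of the successors under a root labelled by $s$ and $u$, and conversely you read the root choice off a depth-$(n+1)$ tree and apply the hypothesis to the child subtrees. Your uniform-depth convention is genuinely needed, since under a literal maximum-depth reading a leaf at depth $<n$ escapes the successor clause and the converse can fail, and it is exactly what the paper's converse tacitly assumes when it calls the child subtrees depth-$n$ trees. One small correction: nonemptiness of $T(s,u)$ is used in your forward step, to make the grafted tree have height $n+1$, not in the $n=0$ base case.
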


\begin{proof}
For $n=0$ the tree consists only of its root and the condition is empty, matching $D^0_G=S$.  Assume the statement for $n$.  A state $s$ belongs to $D^{n+1}_G$ precisely when $s\in G$ and there exists $u\in A(s)$ with $T(s,u)\subseteq D^n_G$.  By the induction hypothesis, each successor $s'\in T(s,u)$ has a depth-$n$ $G$-admissible certified tree.  Placing these trees below a root labelled $s$ and choice-labelled by $u$ gives a depth-$(n+1)$ tree.  Conversely, the first choice at the root of any depth-$(n+1)$ certified tree has all its successors as roots of depth-$n$ certified subtrees, so the induction hypothesis places all those successors in $D^n_G$, and the root belongs to $G\cap\Pre_{\exists\Box}(D^n_G)$.
\end{proof}

The finite tree view is the proof-theoretic reading of the greatest fixed point.  A state is in the kernel exactly when every finite demand for a certified continuation tree can be met; finiteness of $S$ turns this into the stabilized greatest fixed point.

\begin{definition}[Elimination rank]
For $s\in S$, define
\[
\rank_G(s)=\inf\{n\geq0:s\notin D^n_G\},
\]
with $\rank_G(s)=\infty$ when $s\in D^n_G$ for all $n$.  For $s\in G$ and $u\in A(s)$ define the successor rank margin
\[
\rank_G(s,u)=\min_{s'\in T(s,u)}\rank_G(s').
\]
\end{definition}

\begin{proposition}[Rank is the finite trace of the fixed point]
\label{prop:rank-trace}
A state belongs to $K_G$ if and only if $\rank_G(s)=\infty$.  Moreover, for $s\in G$ and $n\geq0$,
\[
s\in D^{n+1}_G
\quad\Longleftrightarrow\quad
\exists u\in A(s)\;\rank_G(s,u)>n.
\]
Consequently, a certified choice in the limit is exactly a choice whose successor rank is infinite.
\end{proposition}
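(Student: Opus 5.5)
The plan is to work directly from the descending chain $D^n_G$ and its monotonicity, then read off all three claims from the definition of $\rank_G$.

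First I will record that the chain is decreasing, $D^{n+1}_G\subseteq D^n_G$. This follows by induction from $D^1_G\subseteq S=D^0_G$ and the monotonicity of $X\mapsto G\cap\Pre_{\exists\Box}(X)$. Consequently $\rank_G(s)>n$ holds if and only if $s\in D^n_G$: the set $\{n:s\notin D^n_G\}$ is upward closed, so its infimum exceeds $n$ exactly when $s\in D^m_G$ for all $m\le n$, which is equivalent to $s\in D^n_G$. This equivalence is the only genuinely load-bearing observation, and the remaining steps are bookkeeping around it.

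For the first claim, Theorem~\ref{thm:core-fixedpoint} says the chain stabilizes at $K_G$ after finitely many steps, so $K_G=\bigcap_n D^n_G$. Hence $s\in K_G$ if and only if $s\in D^n_G$ for every $n$, which is exactly $\rank_G(s)=\infty$. The inclusion $K_G\subseteq D^n_G$ can also be seen directly: $K_G\subseteq S$, and applying the monotone operator to the fixed point $K_G$ preserves the inclusion at each stage.

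For the displayed equivalence with $s\in G$, I unfold $s\in D^{n+1}_G$ as the existence of some $u\in A(s)$ with $T(s,u)\subseteq D^n_G$. By the equivalence above, this means every $s'\in T(s,u)$ has $\rank_G(s')>n$. Since $T(s,u)$ is finite and nonempty, this is the same as $\min_{s'}\rank_G(s')>n$, that is, $\rank_G(s,u)>n$; finiteness is what makes the minimum attained, so no separate argument is needed for the case $\infty$.

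For the consequence, $u\in\CAct_G(s)$ means $T(s,u)\subseteq K_G$. By the first claim this is equivalent to every successor having infinite rank, and by finiteness of $T(s,u)$ that is equivalent to $\rank_G(s,u)=\infty$.

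I do not expect a real obstacle. The only points to state carefully are the monotonicity of the chain, which is what justifies reading $\rank_G(s)>n$ as $s\in D^n_G$, and the use of finiteness and nonemptiness of $T(s,u)$ in passing between the minimum and the universal quantifier.
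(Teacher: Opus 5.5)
Your proposal is correct and follows essentially the same route as the paper: stabilization of the descending chain gives $K_G=\bigcap_n D^n_G$ for the first claim, and unfolding $s\in D^{n+1}_G$ and rewriting $T(s,u)\subseteq D^n_G$ as a bound on successor ranks gives the displayed equivalence and then the limit statement. Your one addition is to state explicitly that the chain is decreasing, which is what justifies reading $\rank_G(s)>n$ as $s\in D^n_G$, a step the paper uses tacitly.
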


\begin{proof}
The first statement follows because the descending sequence stabilizes at $K_G$: surviving all approximants is the same as belonging to the intersection, and in a finite descending chain the intersection is the stable value.  For the second statement, $s\in D^{n+1}_G$ iff $s\in G$ and some $u$ has $T(s,u)\subseteq D^n_G$.  The latter inclusion is equivalent to every successor having rank greater than $n$, i.e. to the minimum successor rank being greater than $n$.  Letting $n$ range over all finite depths gives the limit statement.
\end{proof}

\subsection{Finite-horizon value as a second, later unfolding}

After $K_G$ and $\CAct_G$ have been fixed, define $J_0\equiv0$ and $J_{n+1}=\mathcal V_GJ_n$.  This is not another way to compute the kernel.  It is a value unfolding over the already certified choice graph.

\begin{lemma}[Finite-horizon value interpretation]
\label{lem:finite-horizon-value}
For each $n$, $J_n(s)$ is the least worst-compatible $n$-step discounted cost obtainable from $s$ when every choice at every step is restricted to $\CAct_G$.  More explicitly,
\[
J_n(s)=\inf_\sigma\sup_{s_0s_1\cdots s_n}
\sum_{k=0}^{n-1}\gamma^k c(s_k,\sigma(s_k)),
\]
where $s_0=s$, $\sigma(x)\in\CAct_G(x)$, and $s_{k+1}\in T(s_k,\sigma(s_k))$.
\end{lemma}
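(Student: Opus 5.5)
The plan is induction on $n$, showing that the Bellman recursion $J_{n+1}=\mathcal V_GJ_n$ coincides with the dynamic-programming recursion for the finite-horizon min–max game over certified choices. One interpretive point must be fixed first. Taken literally, a stationary selector $\sigma$ cannot reproduce a finite-horizon optimum, because the optimal choice at a state may depend on the number of remaining steps. I will therefore read $\sigma$ in the displayed formula as a (possibly history- or time-dependent) strategy $\sigma=(\sigma_0,\dots,\sigma_{n-1})$ with $\sigma_k(s_0\cdots s_k)\in\CAct_G(s_k)$. I will write $W_n(s)$ for the right-hand side under this reading.

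Two earlier facts keep every expression well defined. By Theorem~\ref{thm:core-fixedpoint}, $\CAct_G(x)$ is nonempty for every $x\in K_G$. Every certified choice also keeps all successors in $K_G$. Hence every path generated by a certified strategy from $s\in K_G$ stays in $K_G$, and all infima and suprema are over nonempty finite sets, so they are minima and maxima.

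The base case is $n=0$. There the sum is empty, so $W_0(s)=0=J_0(s)$.

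For the inductive step, assume $J_n=W_n$ on $K_G$. I then split a strategy for horizon $n+1$ into a first choice $u=\sigma_0(s)\in\CAct_G(s)$ and, for each $s_1\in T(s,u)$, a continuation strategy $\sigma^{s_1}$ of horizon $n$ started at $s_1$. Since the first step contributes $\gamma^0c(s,u)$, the cost of any path factors as
\[
\sum_{k=0}^{n}\gamma^k c(s_k,\cdot)=c(s,u)+\gamma\sum_{k=0}^{n-1}\gamma^k c(s_{k+1},\cdot).
\]
Taking the supremum over paths first over the continuation from $s_1$ and then over $s_1\in T(s,u)$ gives
\[
\sup=c(s,u)+\gamma\max_{s_1\in T(s,u)}\bigl(\text{horizon-}n\text{ worst cost of }\sigma^{s_1}\text{ from }s_1\bigr).
\]

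The main obstacle is the interchange of the infimum with this maximum. Because the continuation strategies $\sigma^{s_1}$ can be chosen independently for different $s_1$, which the history-dependent reading permits, we get
\[
\inf_{(\sigma^{s_1})}\max_{s_1}\bigl(\cdot\bigr)=\max_{s_1}\inf_{\sigma^{s_1}}\bigl(\cdot\bigr)=\max_{s_1\in T(s,u)}W_n(s_1).
\]
Justifying this requires only the following observation. The inequality $\geq$ always holds. For $\leq$, pick a minimizing continuation for each $s_1$ separately; finitely many choices suffice since all sets are finite. Minimizing over $u\in\CAct_G(s)$ then yields
\[
W_{n+1}(s)=\min_{u\in\CAct_G(s)}\Bigl(c(s,u)+\gamma\max_{s_1\in T(s,u)}W_n(s_1)\Bigr)=(\mathcal V_GJ_n)(s)=J_{n+1}(s),
\]
which completes the induction.

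Finally, I would remark that a Markov (time-dependent but memoryless) strategy attains the optimum: at step $k$, take the $\mathcal V_G$-greedy choice for $J_{n-k-1}$. So the infimum is a minimum. If the authors insist on stationary $\sigma$, the identity holds only as the limit statement $n\to\infty$, via Theorem~\ref{thm:core-value}, and I would flag this in the proof.
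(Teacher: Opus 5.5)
Your proof is correct and is essentially the paper's argument: induction on $n$, splitting off the first certified choice $u$, resolving the successor universally over $T(s,u)$, identifying the tail with $J_n$, and minimizing over $\CAct_G(s)$. You also make explicit the inf--max interchange that the paper's one-line step leaves implicit.

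Your reading of $\sigma$ as a time- or history-dependent strategy is not pedantry but necessary. The paper's phrase ``the remaining $n$-step tail has value $J_n$'' tacitly uses independent continuations, and with stationary selectors the displayed identity already fails at $n=2$: a unit-cost self-loop at $s$ versus a free exit to an absorbing state of cost $10$, with $\gamma=1/2$, gives $J_2(s)=1$ but a best stationary value of $3/2$.
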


\begin{proof}
For $n=0$ both sides are zero.  For the step from $n$ to $n+1$, the first certified choice $u$ contributes $c(s,u)$, the compatible successor is chosen universally from $T(s,u)$, and the remaining $n$-step tail has value $J_n$.  Taking the minimum over certified first choices gives exactly $\mathcal V_GJ_n$.
\end{proof}

\begin{corollary}[A posteriori and a priori reading of residuals]
\label{cor:residual-reading}
For the finite-horizon iterates,
\[
\|J_n-J_G^\star\|_\infty\leq \frac{\|J_{n+1}-J_n\|_\infty}{1-\gamma}
\quad\text{and}\quad
\|J_n-J_G^\star\|_\infty\leq \gamma^n\frac{\bar c}{1-\gamma}.
\]
The first bound is a certificate carried by the computation; the second is a universal clock bound known before the computation starts.
\end{corollary}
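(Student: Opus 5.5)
The first bound is an instance of Theorem~\ref{thm:core-certificate} applied to $J=J_n$. By definition $J_{n+1}=\mathcal V_GJ_n$, so the residual is
\[
\Res_G(J_n)=\|J_n-\mathcal V_GJ_n\|_\infty=\|J_{n+1}-J_n\|_\infty .
\]
Substituting this into the residual bound of Theorem~\ref{thm:core-certificate} gives the a posteriori inequality directly. Its ingredients are the triangle inequality through $\mathcal V_GJ_n$ and the $\gamma$-contraction of Theorem~\ref{thm:core-value}. The one point to record is that $J_n$ lives in $\ell_\infty(K_G)$, which holds because $\mathcal V_G$ maps bounded functions on $K_G$ to bounded functions on $K_G$.

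For the second bound I would write $J_n=\mathcal V_G^nJ_0$ with $J_0\equiv0$, and use the fixed-point identity $J_G^\star=\mathcal V_G^nJ_G^\star$. Iterating the contraction gives
\[
\|J_n-J_G^\star\|_\infty\le\gamma^n\|J_0-J_G^\star\|_\infty=\gamma^n\|J_G^\star\|_\infty .
\]
It then remains to show $0\le J_G^\star\le\bar c/(1-\gamma)$ pointwise on $K_G$. I would prove this by checking that the order interval $[0,\bar c/(1-\gamma)]$ of constant bounds is invariant under $\mathcal V_G$:
\begin{itemize}
\item if $0\le J\le\bar c/(1-\gamma)$, then each backup lies between $0$ and $\bar c+\gamma\bar c/(1-\gamma)=\bar c/(1-\gamma)$, since costs lie in $[0,\bar c]$ and the minimum is over the nonempty set $\CAct_G(s)$ (Theorem~\ref{thm:core-fixedpoint});
\item this interval is a closed subset of $\ell_\infty(K_G)$, so the Banach iteration started inside it converges to a limit inside it, and by uniqueness that limit is $J_G^\star$.
\end{itemize}
Alternatively, Lemma~\ref{lem:finite-horizon-value} gives $0\le J_n\le\bar c\sum_{k<n}\gamma^k$, and passing to the uniform limit yields the same bound on $J_G^\star$.

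The main obstacle is only the bookkeeping of this a priori bound on $\|J_G^\star\|_\infty$, in particular that nonnegativity of costs is what makes $\|J_G^\star\|_\infty$ rather than a two-sided estimate the right quantity. Everything else is a direct citation of the contraction and residual theorems.
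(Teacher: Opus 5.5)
Your proposal is correct and follows the paper's proof. The first bound is Theorem~\ref{thm:core-certificate} with $\Res_G(J_n)=\|J_{n+1}-J_n\|_\infty$, and the second is iterated contraction from $J_0\equiv0$ combined with $\|J_G^\star\|_\infty\le\bar c/(1-\gamma)$; the only difference is that you justify this last bound explicitly (via invariance of $[0,\bar c/(1-\gamma)]$ under $\mathcal V_G$, or via Lemma~\ref{lem:finite-horizon-value}), whereas the paper simply asserts it.
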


\begin{proof}
The first inequality is Theorem~\ref{thm:core-certificate} applied to $J_n$, since $\Res_G(J_n)=\|J_n-J_{n+1}\|_\infty$.  For the second, $\|J_G^\star\|_\infty\leq \bar c/(1-\gamma)$ and contraction gives $\|J_n-J_G^\star\|_\infty=\|\mathcal V_G^n0-\mathcal V_G^nJ_G^\star\|_\infty\leq\gamma^n\|J_G^\star\|_\infty$.
\end{proof}

\subsection{The layers do not commute}

The layer order is not a stylistic convention.  Each attempted swap below has a two- or three-state countermodel.

\begin{proposition}[Value before admissibility is unsound]
\label{prop:value-before-kernel}
There is a finite modal choice structure in which minimizing value over all choices selects a move that is not in any admissible continuation kernel, while minimizing after the kernel selects a different move.
\end{proposition}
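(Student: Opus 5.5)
The plan is to exhibit an explicit three-state countermodel and verify both minimizations by direct computation, using Theorem~\ref{thm:core-fixedpoint} for the kernel and Theorem~\ref{thm:core-value} for the post-kernel value.

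Take $S=\{s,g,b\}$ and target $G=\{s,g\}$, so $b\notin G$. At $s$ offer two choices: $A(s)=\{u_{\mathrm{bad}},u_{\mathrm{good}}\}$ with $T(s,u_{\mathrm{bad}})=\{b\}$, $c(s,u_{\mathrm{bad}})=0$, and $T(s,u_{\mathrm{good}})=\{g\}$, $c(s,u_{\mathrm{good}})=1$. At $g$ let $A(g)=\{w\}$ with $T(g,w)=\{g\}$ and $c(g,w)=0$. At $b$ let $A(b)=\{z\}$ with $T(b,z)=\{b\}$ and $c(b,z)=0$. All nonemptiness requirements of the definition hold, and $G$ is definable by a label $p\in L(s)\cap L(g)$ with $p\notin L(b)$.

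First compute the kernel via the Kleene chain of Theorem~\ref{thm:core-fixedpoint}:
\begin{itemize}
\item $X_0=S$;
\item $X_1=G=\{s,g\}$;
\item $X_2=G\cap\Pre_{\exists\Box}(\{s,g\})=\{s,g\}$, since $s$ has $u_{\mathrm{good}}$ and $g$ has $w$.
\end{itemize}
The chain is stable, so $K_G=\{s,g\}$. This gives $\CAct_G(s)=\{u_{\mathrm{good}}\}$, because $T(s,u_{\mathrm{bad}})=\{b\}\not\subseteq K_G$. Moreover $b$ lies in no admissible kernel for any target excluding $b$. Since any set $D$ with $D\subseteq G$ satisfies $D\subseteq K_G$ by the theorem, no admissible continuation passes through $u_{\mathrm{bad}}$, so $u_{\mathrm{bad}}$ is certified in no such domain.

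Next compute the unrestricted value, minimizing over all of $A$. The fixed point gives $J(b)=J(g)=0$. At $s$ it gives
\[
J(s)=\min\bigl(0+\gamma\cdot 0,\ 1+\gamma\cdot 0\bigr)=0,
\]
attained uniquely by $u_{\mathrm{bad}}$. So value-first minimization selects a move leaving $G$.

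The post-kernel value from Theorem~\ref{thm:core-value} gives $J_G^\star(g)=0$ and $J_G^\star(s)=1$, with greedy set $\{u_{\mathrm{good}}\}\neq\{u_{\mathrm{bad}}\}$.

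The only delicate point is wording. ``Not in any admissible continuation kernel'' should be read as: the selected choice is certified for no post-fixed point contained in $G$. This follows from Theorem~\ref{thm:canonical-completion}, since every such $D$ satisfies $D\subseteq K_G$ and $C(s)\subseteq\CAct_G(s)$. The computations themselves are routine.
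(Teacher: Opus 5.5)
Your proposal is correct and uses exactly the paper's countermodel: the three states $s,g,b$ with $G=\{s,g\}$, zero-cost self-loops at $g$ and $b$, and at $s$ a cost-$1$ move to $g$ against a cost-$0$ move to $b$, giving $K_G=\{s,g\}$ with $\CAct_G(s)$ containing only the move to $g$. Your additional steps are harmless refinements of the same argument: you compute the full unrestricted discounted fixed point rather than only comparing immediate costs, and you invoke Theorem~\ref{thm:canonical-completion} to make ``not in any admissible kernel'' precise.
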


\begin{proof}
Let $S=\{s,g,b\}$, let $G=\{s,g\}$, let $g$ and $b$ have zero-cost self-loops, and let $s$ have two choices
\[
T(s,a)=\{g\},\quad c(s,a)=1,
\qquad
T(s,d)=\{b\},\quad c(s,d)=0.
\]
The unrestricted one-step value prefers $d$, since it has lower immediate cost.  But $b\notin G$, so $d$ cannot witness $s\in G\cap\Pre_{\exists\Box}(G)$ and cannot belong to $\CAct_G(s)$.  The kernel is $K_G=\{s,g\}$ and the only certified choice at $s$ is $a$.  Thus value-first evaluation changes the admissible semantics.
\end{proof}

\begin{proposition}[Cost-blind quotienting is not value preserving]
\label{prop:costblind-quotient-breaks}
Label equivalence and ordinary modal bisimulation do not suffice for the value-refined semantics.
\end{proposition}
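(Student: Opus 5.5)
The plan is to exhibit a small finite countermodel: two states that agree on labels, are related by an ordinary modal bisimulation (one that matches choices by successor classes and ignores costs), and still have different values $J_G^\star$. Take $S=\{s,t,g\}$, $AP=\{p\}$, and $L(s)=L(t)=L(g)=\{p\}$, and let the target be $G=S$ (definable by $p$). The choices are
\[
A(s)=\{a\},\ T(s,a)=\{g\},\ c(s,a)=0,\qquad
A(t)=\{b\},\ T(t,b)=\{g\},\ c(t,b)=1,
\]
and $g$ carries a zero-cost self-loop $e$. Every choice stays inside $G$, so Theorem~\ref{thm:core-fixedpoint} gives $K_G=S$ and every choice is certified.

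The verification proceeds in three steps.

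\textbf{Step one.} Label equivalence relates $s$ and $t$, since both carry $\{p\}$.

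\textbf{Step two.} Let $R$ be the equivalence with classes $\{s,t\}$ and $\{g\}$. Then $R$ satisfies every clause of the value-refined definition except the cost equality. Labels agree, and the unique choices $a$ and $b$ have $[T(s,a)]_R=[T(t,b)]_R=\{\{g\}\}$, so $R$ is an ordinary (cost-blind) modal bisimulation for the choice structure. By the first part of the argument behind Theorem~\ref{thm:core-quotient}, which uses only labels and successor classes, $s$ and $t$ therefore satisfy the same positive $[\exists]\Box$/$\nu$ formulas and agree on kernel membership.

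\textbf{Step three.} Compute the values by solving the Bellman fixed point of Theorem~\ref{thm:core-value} directly. From the zero-cost loop we get $J_G^\star(g)=0$. Then $J_G^\star(s)=0+\gamma\cdot0=0$, while $J_G^\star(t)=1+\gamma\cdot 0=1$.

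So $s$ and $t$ are identified by label equivalence and by an ordinary modal bisimulation, yet their values differ. Consequently, any quotient built from either relation cannot carry a class-constant value function that agrees with $J_G^\star$. This also shows that the cost clause $c(s,u)=c(t,v)$ in the definition of value-refined bisimulation cannot be dropped.

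None of these steps is hard; the only delicate point is being precise about what \emph{ordinary modal bisimulation} means here. I will state it explicitly as the value-refined definition with the cost clause deleted. With that convention, the formula-preservation half of Theorem~\ref{thm:core-quotient} applies verbatim, and the failure is located entirely in the value layer.
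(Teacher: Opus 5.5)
Your countermodel is correct and is essentially the paper's argument: the paper takes two equal-label states with single self-loops of costs $0$ and $1$, identified by ordinary bisimulation but with values $0$ and $1/(1-\gamma)$. Routing both states to a shared zero-cost sink, as you do, changes nothing of substance. Your explicit definition of ordinary bisimulation as the value-refined one with the cost clause removed is a harmless sharpening, and the flattened Kripke bisimilarity $\approx_R$ also merges your $s$ and $t$.
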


\begin{proof}
Let $S=\{p,q\}$ with equal labels and one self-loop choice at each state.  Let $c(p,a)=0$ and $c(q,a)=1$.  Ordinary modal bisimulation identifies $p$ and $q$: both have the same label and the same one-step successor pattern.  With discount $\gamma\in(0,1)$, however,
\[
J^\star(p)=0,
\qquad
J^\star(q)=1+\gamma J^\star(q)=\frac1{1-\gamma}.
\]
A quotient that merges them has no well-defined exact value preserving both equations.  The cost clause in value-refined modal bisimulation is therefore forced by the semantics, not by presentation.
\end{proof}

\begin{proposition}[Residual smallness before certification is meaningless]
\label{prop:residual-before-certification}
A residual certificate for a value equation does not imply admissible continuation unless the equation is solved over the certified kernel.
\end{proposition}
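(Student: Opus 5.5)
The proposition is an existence claim, so I will exhibit a small countermodel in the style of Proposition~\ref{prop:value-before-kernel}. The idea is to solve the discounted value equation over \emph{all} choices, with no admissibility restriction, and find a function whose residual is zero while the selector it induces leaves the target immediately. Concretely, I take $S=\{s,g,b\}$ and $G=\{s,g\}$. The states $g$ and $b$ each get a single zero-cost self-loop. The state $s$ gets two choices:
\[
T(s,a)=\{g\},\ c(s,a)=1,\qquad T(s,d)=\{b\},\ c(s,d)=0.
\]
Let $\mathcal V$ be the unrestricted transformer, with the minimum taken over all of $A(x)$ on all of $S$. I set $\widetilde J(g)=\widetilde J(b)=0$ and $\widetilde J(s)=0$ and verify that $\mathcal V\widetilde J=\widetilde J$. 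At $s$ this reads $\min(1+\gamma\cdot0,\,0+\gamma\cdot 0)=0$, and the other two states are immediate. So $\|\widetilde J-\mathcal V\widetilde J\|_\infty=0$: the residual certificate is as small as possible, and by the contraction argument of Theorem~\ref{thm:core-certificate} applied to $\mathcal V$, the function $\widetilde J$ is the exact unrestricted fixed point.

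Next I read off the consequences. The unique greedy choice for $\widetilde J$ at $s$ is $d$, and $d$ sends $s$ to $b\notin G$. So the selector certified by the zero residual does not preserve $G$, let alone any admissible kernel, and the residual is silent about admissible continuation.

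For the contrast, I compute $K_G$ with Theorem~\ref{thm:core-fixedpoint}. The Kleene chain is $S$, then $G$, then $G$ again, since $g$ loops inside $G$ and $s$ reaches $g$ through $a$. This gives $K_G=\{s,g\}$ and $\CAct_G(s)=\{a\}$. Over the kernel, $J_G^\star(g)=0$ and $J_G^\star(s)=1$. The function $\widetilde J$ restricted to $K_G$ therefore has residual $\Res_G(\widetilde J)=|0-1|=1\neq0$. The same function thus has residual $0$ for the uncertified equation and residual $1$ for the certified one. This establishes that residual smallness certifies admissibility only when the equation is posed over $K_G$ with backups restricted to $\CAct_G$.

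The main obstacle is stating exactly what "does not imply admissible continuation" is supposed to mean, so that the counterexample genuinely refutes it. I would make it explicit: the claim refuted is that a residual-zero value function on the full state space has a greedy selector that keeps the system in $G$. It is also worth noting that $b$ can be made to have arbitrarily low value, so that no residual threshold whatsoever detects the violation.
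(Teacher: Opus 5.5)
Your proof is correct and uses the paper's countermodel (the three-state structure of Proposition~\ref{prop:value-before-kernel}), but you read the failure off a different object.

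The paper isolates the single zero-cost self-loop equation at $b$. That equation has the exact solution $J(b)=0$ with residual zero, yet $b\notin G$ and $b\notin K_G$. So a zero residual certifies only that whatever equation was solved is consistent, not that the state lies in the fixed point.

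You instead solve the unrestricted equation on all of $S$. You observe that its zero-residual solution $\widetilde J\equiv0$ has the unique greedy choice $d$ at $s$, which exits $G$. You then check that the same $\widetilde J$ has $\Res_G(\widetilde J)=1$ once backups are restricted to $\CAct_G$.

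The paper's version is more minimal and refutes the membership reading of ``admissible continuation.'' Yours refutes the selector reading. Through the computation $\Res_G(\widetilde J)=1$, it also makes the ``unless solved over the certified kernel'' clause explicit. Your $\widetilde J$ contains the paper's witness as well, since its equation at $b$ has residual zero.

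Two small points. Residual zero already means $\widetilde J=\mathcal V\widetilde J$, so no appeal to Theorem~\ref{thm:core-certificate} is needed. The closing remark about making the value at $b$ arbitrarily low is vacuous, because costs are nonnegative and $0$ is already the floor. What shows that no residual threshold can detect the violation is simply that the residual is exactly zero.
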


\begin{proof}
Use the previous three-state structure and restrict attention to the bad state $b$.  The self-loop value equation at $b$ with zero cost has the exact solution $J(b)=0$ and residual zero.  Nevertheless $b\notin G$ and $b\notin K_G$.  Hence residual zero certifies only value consistency for the equation being solved; it does not certify membership in the modal fixed point.  The residual layer must come after the admissible-continuation layer.
\end{proof}

\begin{theorem}[Non-commutation of the semantic architecture]
\label{thm:noncommutation}
The four layers cannot in general be permuted without changing the induced semantics.  In particular,
\[
\Val\circ\Viab\neq\Viab\circ\Val,
\qquad
\text{quotient}\circ\Val\neq\Val\circ\text{ordinary quotient},
\qquad
\Cert\circ\Viab\neq\Viab\circ\Cert
\]
when the informal compositions are instantiated by the finite constructions above.
\end{theorem}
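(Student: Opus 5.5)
The plan is to prove each of the three inequalities by exhibiting one of the finite countermodels already constructed in Propositions~\ref{prop:value-before-kernel}, \ref{prop:costblind-quotient-breaks}, and~\ref{prop:residual-before-certification}. Before that, I will fix the informal compositions precisely enough that "$\neq$" has a definite meaning. Take $\Viab$ to be the map sending $(\mathfrak M,G)$ to the restricted structure on $K_G$ with choice sets $\CAct_G$. Take $\Val$ to be the map that computes the Bellman fixed point together with its greedy selector set over whatever choice sets are currently present. Take $\text{quotient}$ to be the collapse along the relevant equivalence. Take $\Cert$ to be the acceptance predicate "$\Res(J)=0$, or $\Res(J)$ is below a threshold", evaluated on the current domain. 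Each composite then outputs a set of selected choices, a value function, or an accepted set of states. It suffices to find one structure where the two composites produce different outputs.

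\textbf{First inequality.} I use the three-state model $S=\{s,g,b\}$, $G=\{s,g\}$.
\begin{itemize}
\item $\Val\circ\Viab$ first restricts to $K_G=\{s,g\}$, where $\CAct_G(s)=\{a\}$. The greedy choice at $s$ is therefore $a$, with $J^\star(s)=1$.
\item $\Viab\circ\Val$ first computes unrestricted greedy choices, which select $d$ at $s$ with value $0$. It then applies the kernel filter. Either $d$ is discarded, leaving no surviving value-selected choice at $s$, or the recorded value $0$ is retained for $s\in K_G$.
\end{itemize}
In both readings the output differs from $(a,1)$, which is exactly Proposition~\ref{prop:value-before-kernel}.

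\textbf{Second inequality.} I use the two-state model of Proposition~\ref{prop:costblind-quotient-breaks}.
\begin{itemize}
\item The ordinary-bisimulation quotient merges $p$ and $q$ into one class with a single self-loop. Computing $\Val$ on it requires choosing a cost. Either choice yields a single value, which cannot lift to both $0$ and $1/(1-\gamma)$.
\item Computing $\Val$ first and then quotienting by the value-refined relation keeps $p$ and $q$ separate, by the cost clause.
\end{itemize}
So the two composites give different value functions on $S$. Theorem~\ref{thm:core-quotient} confirms that the second route is the correct one.

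\textbf{Third inequality.} I reuse the three-state model.
\begin{itemize}
\item $\Viab\circ\Cert$ certifies states by residual zero on the unrestricted equation. It accepts $b$, since $J(b)=0$ has zero residual, and $b$ survives because the filter is applied after acceptance.
\item $\Cert\circ\Viab$ only ever evaluates residuals on $K_G=\{s,g\}$, so $b$ is never accepted.
\end{itemize}
This is Proposition~\ref{prop:residual-before-certification}.

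The main obstacle is not computational, since every value involved is explicit. It is the formalization step: the statement speaks of "informal compositions", so the typing of each operator must be chosen so that both orders are defined and yield comparable outputs. This matters most for $\Viab\circ\Val$ and $\Viab\circ\Cert$, where $\Viab$ must act on a structure that already carries values or certificates. I will handle this by letting $\Viab$ act as a domain filter that leaves attached data untouched. With that choice, the discrepancies above are precisely the retained unsound value at $s$ and the retained accepted state $b$. I would state this convention explicitly at the start, so that each inequality becomes a direct evaluation on the cited countermodel.
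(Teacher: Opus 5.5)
Your proposal is correct and follows the paper's proof, which simply cites Propositions~\ref{prop:value-before-kernel}, \ref{prop:costblind-quotient-breaks}, and~\ref{prop:residual-before-certification} as witnesses for the three inequalities and leaves the compositions schematic; your explicit typing of $\Viab$, $\Val$, and $\Cert$ is an addition the paper does not make. Only the third case depends on your chosen convention: if $\Viab$ were instead read as discarding data outside $K_G$, the $b$-witness would disappear, but the same model still separates the two orders at $s$, since $J\equiv 0$ has zero unrestricted residual at $s$ while the residual of $\mathcal V_G$ there is $1$.
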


\begin{proof}
The first inequality is witnessed by Proposition~\ref{prop:value-before-kernel}; the second by Proposition~\ref{prop:costblind-quotient-breaks}; the third by Proposition~\ref{prop:residual-before-certification}.  The equations are schematic, but each proposition gives an exact finite structure and exact numerical values.  Therefore the fixed order used here is mathematically necessary for the intended semantics.
\end{proof}

\subsection{Clause-by-clause necessity of value-refined bisimulation}

The bisimulation used in the quotient theorem has three local clauses: label equality, cost equality, and equality of successor class sets for matched choices.  The following proposition records why none of them is cosmetic.

\begin{proposition}[Each local clause has a finite separating witness]
\label{prop:clause-necessity}
For the combined formula-kernel-value semantics, removing any one of the following requirements can identify states that must remain distinct:
\begin{enumerate}[label=(\roman*)]
\item label equality;
\item equality of matched choice costs;
\item equality of matched successor class sets.
\end{enumerate}
\end{proposition}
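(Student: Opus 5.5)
The plan is to give three small finite structures, one per clause. In each, the relation $R$ that identifies two states $p,q$ satisfies the two retained clauses but violates the dropped one. I then exhibit a component of the combined formula-kernel-value semantics that differs at $p$ and $q$. Throughout I use the full equivalence $R=S\times S$ restricted to the structure at hand, or the partition $\{\{p,q\}\}\cup\{\{x\}:x\neq p,q\}$. In every case I check the retained clauses directly from the definition of value-refined modal bisimulation.

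For (i), take $S=\{p,q\}$ with one zero-cost self-loop at each state, $L(p)=\{g\}$ and $L(q)=\varnothing$, and target $G=\{x: g\in L(x)\}$. Let $R$ be the relation identifying $p$ and $q$, so there is a single class. Costs match ($0=0$), and successor class sets match, since both equal the unique class. Only labels differ. Yet the base formula $g$ separates $p$ from $q$. In addition, Theorem~\ref{thm:core-fixedpoint} gives $K_G=\{p\}$, so kernel membership also separates them. For (ii), I reuse the witness of Proposition~\ref{prop:costblind-quotient-breaks}. There labels are equal and successor class sets are the single merged class, but $J^\star(p)=0\neq 1/(1-\gamma)=J^\star(q)$. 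The values follow from Theorem~\ref{thm:core-value} with $G=S$, so that $K_G=S$ and every choice is certified.

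For (iii), take $S=\{p,q,g,b\}$ with $L(p)=L(q)=L(g)=\{g\}$, $L(b)=\varnothing$, and $G$ the $g$-region. Each state has a single zero-cost choice, with $T(p,a)=\{g\}$, $T(q,a)=\{b\}$, and self-loops at $g$ and $b$. The relation $R$ identifies $p$ and $q$ and keeps $g$ and $b$ in singleton classes. Labels and costs match at $p,q$. At $g$ and $b$ the relation is the identity, so the clauses hold trivially there. The only failure is $[T(p,a)]_R=\{\{g\}\}\neq\{\{b\}\}=[T(q,a)]_R$. I will check that the relation is nonetheless an equivalence satisfying the other two clauses. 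By Theorem~\ref{thm:core-fixedpoint} we get $K_G=\{p,g\}$, so $p\in K_G$ and $q\notin K_G$. Equivalently, the positive formula $\nu Z.(g\wedge[\exists]\Box Z)$ holds at $p$ and fails at $q$.

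The main obstacle is stating precisely what ``must remain distinct'' means. I will phrase each witness as a violation of the conclusion of Theorem~\ref{thm:core-quotient}: the identified pair disagrees on a base formula, on kernel membership, or on $J_G^\star$. This shows that the weakened relation cannot enjoy the preservation property. A second point to handle is that in (iii) the separating datum is kernel membership, not value, because $J_G^\star$ is undefined at $q$. I will say explicitly that this still breaks the combined semantics.
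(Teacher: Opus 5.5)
Your proposal is correct, and it follows the paper's strategy of one small finite countermodel per clause. For (i) and (ii) your witnesses are the paper's own: its two one-state self-loop structures with different labels are exactly your two-state structure, and both you and the paper reuse Proposition~\ref{prop:costblind-quotient-breaks} for the cost clause.

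The only real difference is in (iii). You delete the successor-class clause outright and take disjoint successor sets $T(p,a)=\{g\}$, $T(q,a)=\{b\}$. You then separate the identified states by kernel membership, equivalently by $\nu Z.(g\wedge[\exists]\Box Z)$; they are in fact already separated by the one-step formula $[\exists]\Box g$. The paper instead weakens the clause to one-sided successor matching, uses nested successor sets $T(s,u)=\{x\}\subsetneq T(t,v)=\{x,y\}$, and separates by $[\exists]\Box G$.

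The paper's version aims at a sharper conclusion: the universal successor quantifier forces two-sided equality of class sets, not mere coverage of the source successors. With one choice per state, however, its witness satisfies the one-sided condition only in the forth direction. An equivalence relation must also satisfy the clause for the pair $(t,s)$, and there the successor $y$ has no partner in $\{x\}$. Adding a second choice at $s$ with successor set $\{x,y\}$ repairs this.

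Your witness proves exactly what the statement asserts, namely outright removal of the clause. You also verify every retained clause for a genuine equivalence relation in both directions, so your version carries no such ambiguity.
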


\begin{proof}
If label equality is removed, two one-state self-loop structures with different atomic labels are identified although the atomic formula separates them.  If cost equality is removed, the two-state self-loop structure of Proposition~\ref{prop:costblind-quotient-breaks} is identified although the value equations separate the states.  If equality of successor class sets is weakened to one-sided successor matching, use states $s,t$ with one choice each, $T(s,u)=\{x\}$ and $T(t,v)=\{x,y\}$, where $x\models G$ and $y\not\models G$.  The successor $x$ of $s$ is matched by a successor of $t$, but $s\models[\exists]\Box G$ while $t\not\models[\exists]\Box G$.  Hence the universal successor component of the choice modality forces equality of successor classes, not just existence of some matching successor.
\end{proof}

\section{A minimal reading path}
\label{sec:minimal-reading-path}

The core construction is independent of the boundary fragments.  It begins with the axiomatic core, which defines modal choice structures, the admissible kernel, certified choices, value refinement, and value-refined bisimulation.  The choice-cell section shows that the transition presentation is a conservative retraction of the model-internal cell presentation.  The non-commutation examples explain why the order of the layers cannot be changed.  The quotient and certificate theorems provide the preservation results needed by the main theorem.

This separation matters because several surrounding topics have larger languages.  Full ATL, strategy logics, epistemic temporal logics, probabilistic model checking, and undiscounted games all contain neighboring mechanisms.  The construction uses only the fragment needed for the displayed semantics: one-step cell ability, a greatest fixed point, discounted value over certified cells, cost-aware quotienting, and residual certificates.  The finite laboratory below is included to make that fragment testable without importing extra machinery.

\section{A detailed finite laboratory}
\label{sec:finite-laboratory}

This section gives a compact system whose computations exhibit the fixed-point chain, the value iteration, the residual certificate, and the quotient constraint in one place.  Let
\[
S=\{r,x,y,z,w,\bot\},
\qquad
G=\{r,x,y,z,w\},
\qquad
0<\gamma<1,
\]
and let all states satisfy the same propositional label except $\bot$, which violates the target label.  The choices are
\[
T(r,a)=\{x,y\},\quad T(r,b)=\{\bot\},
\]
\[
T(x,a)=\{z\},\quad T(x,b)=\{\bot\},
\qquad
T(y,a)=\{z,w\},\quad T(y,b)=\{y\},
\]
\[
T(z,a)=\{z\},\quad T(w,a)=\{w\},\quad T(\bot,a)=\{\bot\}.
\]
The costs are
\[
\begin{aligned}
c(r,a)&=4, & c(r,b)&=0, & c(x,a)&=2, & c(x,b)&=0,\\
c(y,a)&=3, & c(y,b)&=5, & c(z,a)&=1, & c(w,a)&=0,\\
c(\bot,a)&=0.&&&&&
\end{aligned}
\]
The commas in this display are semantic punctuation: they are not separate data structures.

The descending chain starts with $D_G^0=S$.  Since $\bot\notin G$, the first nontrivial stage is
\[
D_G^1=G=\{r,x,y,z,w\}.
\]
At the next stage, $r$ survives because of $a$, $x$ survives because of $a$, $y$ survives because both $a$ and $b$ remain inside $G$, and $z,w$ survive by self-loops.  Hence $D_G^2=G$.  The chain has stabilized, so $K_G=G$.  The certified choices are
\[
\CAct_G(r)=\{a\},\quad
\CAct_G(x)=\{a\},\quad
\CAct_G(y)=\{a,b\},\quad
\CAct_G(z)=\{a\},\quad
\CAct_G(w)=\{a\}.
\]
The choices $b$ at $r$ and $x$ are cheap but uncertified; they are removed before value is allowed to speak.

The terminal self-loop equations give
\[
J^\star(z)=1+\gamma J^\star(z)=\frac1{1-\gamma},
\qquad
J^\star(w)=0.
\]
At $x$ there is only the certified choice $a$, so
\[
J^\star(x)=2+\gamma J^\star(z)=2+\frac{\gamma}{1-\gamma}.
\]
At $y$ the two certified choices compete:
\[
Q(y,a)=3+\gamma\max(J^\star(z),J^\star(w))=3+\frac{\gamma}{1-\gamma},
\]
\[
Q(y,b)=5+\gamma J^\star(y).
\]
If $a$ is selected, then $J^\star(y)=3+\gamma/(1-\gamma)$.  If $b$ is selected, then $J^\star(y)=5/(1-\gamma)$.  Since
\[
3+\frac{\gamma}{1-\gamma}\leq\frac5{1-\gamma}
\quad\Longleftrightarrow\quad
3-2\gamma\leq5,
\]
choice $a$ is always no worse for $\gamma\in(0,1)$.  Therefore
\[
J^\star(y)=3+\frac{\gamma}{1-\gamma}.
\]
Finally,
\[
J^\star(r)=4+\gamma\max(J^\star(x),J^\star(y))
=4+\gamma\left(3+\frac{\gamma}{1-\gamma}\right),
\]
because $J^\star(y)-J^\star(x)=1$.  This calculation shows the semantic order in arithmetic form: the globally cheapest raw move at $r$ has cost zero, but it points to $\bot$ and never reaches the value comparison.

A residual certificate can be read off from the iterates.  With $J_0\equiv0$, the first two value iterates on the state order $(r,x,y,z,w)$ are
\[
J_1=(4,2,3,1,0),
\]
\[
J_2=\bigl(4+3\gamma,
2+\gamma,
3+\gamma,
1+\gamma,
0\bigr).
\]
Thus
\[
\|J_2-J_1\|_\infty=3\gamma,
\qquad
\|J_1-J^\star\|_\infty\leq\frac{3\gamma}{1-\gamma}.
\]
For $\gamma=1/2$, the exact values become
\[
J^\star=(6,3,4,2,0),
\]
and the residual bound after the second iterate gives
\[
\|J_2-J^\star\|_\infty\leq \frac{\|J_3-J_2\|_\infty}{1-\gamma}.
\]
The certificate is numerical but not merely numerical: it is meaningful only because $J_2$ was computed on the certified set $K_G$.

The same laboratory also forces the quotient clause.  If a fresh state $w'$ is added with the same label as $w$, the same self-loop, and zero cost, then $w$ and $w'$ merge under value-refined bisimulation.  If instead $c(w',a)=1$, ordinary modal bisimulation still sees the same labelled self-loop pattern, but the values become $0$ and $1/(1-\gamma)$; the value-refined quotient refuses the merge.  Thus the quotient theorem is not an abstract preservation slogan.  It is the exact compression allowed by the value equations.

\section{Classical foundations for value-refined modal semantics}

Kripke semantics supplies the relational interpretation of modal formulas \cite{kripke1963}.  Tarski's fixed-point theorem supplies the complete-lattice existence principle for the greatest admissible-continuation region \cite{tarski1955}.  Kozen's modal $\mu$-calculus supplies the fixed-point language in which such regions are definable \cite{kozen1983}.  Park and Hennessy--Milner supply the bisimulation-invariance template refined here by choices, costs, and residual certificates \cite{park1981,hennessy1985}.

Temporal and strategic logics give the closest logical neighbours.  Pnueli, Clarke--Emerson, Vardi--Wolper, and Emerson--Jutla provide the temporal, automata-theoretic, and game-theoretic basis of semantic construction \cite{pnueli1977,clarke1981,vardi1986,emerson1991}.  ATL and coalition logic give the exact reading of $[\exists]\Box$ as a one-step strategic ability operator \cite{alur2002,pauly2002}.  ATL$^*$ and strategy logic increase temporal and strategic expressiveness; the present construction stays in the invariance fragment and studies the value layer induced after the fixed-point region has been computed \cite{alur2002,chatterjee2010}.  Imperfect-information games motivate the information-state construction used for observation structures \cite{reif1984,cdhr2007}.

The quantitative layer is placed beside discounted game semantics and quantitative verification.  Bellman, Blackwell, Shapley, Puterman, and Bertsekas give the discounted dynamic-programming and game-value machinery \cite{bellman1957,blackwell1965,shapley1953,puterman1994,bertsekas2012}.  Iyengar gives a state-choice rectangular value-recursion setting that matches universal successor resolution \cite{iyengar2005}.  Probabilistic model checking and stochastic-game tools treat probability and reward objectives explicitly; here the baseline semantics is universal over compatible successors unless a probabilistic refinement is added \cite{baier2008,kwiatkowska2011,kwiatkowska2020}.

Permissive strategies and finite game semantics provide algorithmic context \cite{bernet2002,gradel2002}.  Their role here is semantic: they explain why the maximal fixed-point region should be computed before value ranking, and why a quotient must preserve formulas, certified cells, successor classes, and costs rather than labels alone.

\section{Standing hypotheses and boundary countermodels}

The state-level theorem is stated under hypotheses that match the state-explicit construction: finite branching, explicit finite-array values, discounting, cost-aware quotienting, and positive switching charge.  Partial observation is handled separately by the information-state theorem.  These assumptions are not cosmetic.  Removing any one of them changes the semantic problem.  Unrestricted search loses minimizing selectors \cite{wolpert1997}; arbitrary approximation can destroy contraction-based guarantees \cite{baird1995}; distributed semantic construction and incomplete-information games need information-state constructions rather than a single state-based selector \cite{pnueli1990,reif1984,cdhr2007}; finitary branching is the setting in which Hennessy--Milner equivalence and finite fixed-point iteration align \cite{hennessy1985}; and free switching admits infinite objective oscillation unless a positive switching charge or finite descent measure is present.

\begin{definition}[Boundary hypothesis]
A boundary hypothesis is a structural assumption whose removal changes at least one conclusion of the finite semantic construction theorem.
\end{definition}

\begin{theorem}[Boundary countermodels]
Each of the following removals admits a finite or elementary countermodel: compactness or finiteness of certified choices, full observation, discounting, positive switching cost, exact value representation, and cost-aware bisimulation.
\end{theorem}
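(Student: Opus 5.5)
The plan is to give six separate finite or elementary witnesses, one for each hypothesis. In each case I would name the conclusion of the finite construction that fails; this matches the definition of a boundary hypothesis. Wherever possible I will reuse the earlier small structures (the three-state structure of Proposition~\ref{prop:value-before-kernel} and the two-state self-loop of Proposition~\ref{prop:costblind-quotient-breaks}) so that each countermodel is visibly a one-clause perturbation of a model already analysed.

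\emph{Finiteness or compactness of certified choices.} I would take a single state $s\in G$ with a self-loop. Its certified choices are $u_n$, $n\geq1$, with $T(s,u_n)=\{s\}$ and cost $c(s,u_n)=1/n$. The Bellman infimum equals $0$ but is not attained, so no value-greedy selector exists. This refutes the selector clause of Theorem~\ref{thm:core-value}.

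\emph{Full observation.} I would use two states that share an observation and whose certified actions are disjoint. Each state alone has a certified choice. An observation-based selector, however, must pick the same action at both, so the state-based kernel overestimates what an observation-based selector can maintain.

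\emph{Discounting.} Setting $\gamma=1$ with a unit-cost self-loop gives $J=1+J$, which has no finite fixed point. For the zero-cost self-loop, every constant solves the equation, so contraction and uniqueness both fail.

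\emph{Exact value representation.} I would cite the Baird-style counterexample, in which linear projection composed with the Bellman operator is not a contraction. A minimal two-state version needs a feature vector $\phi=(1,2)$ and a projection whose iterates diverge. Here the residual bound of Theorem~\ref{thm:core-certificate} does not apply to the projected iterates.

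\emph{Cost-aware bisimulation.} This case is exactly Proposition~\ref{prop:costblind-quotient-breaks}.

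\emph{Positive switching cost.} This is the case I expect to be the main obstacle, because the switching layer has not yet been defined at this point of the paper. I would fix the minimal reading: a selection process that switches whenever the current objective ranks another certified choice strictly better. The countermodel uses two certified choices whose objective comparison alternates along a two-cycle. With zero switching charge the process oscillates forever. Any positive charge $\epsilon$ bounds the number of switches by (objective range)$/\epsilon$, and this descent measure is what the hypothesis supplies.

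I would state explicitly which conclusion fails in each case: non-attainment, kernel mismatch, nonuniqueness or divergence, loss of contraction, failure of the quotient value, and nontermination, respectively.
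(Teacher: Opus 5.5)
Your six-witness decomposition matches the paper's proof. Your cases for finiteness, full observation (a schematic form of the paper's $\{p,q,r\}$ example), discounting, and cost-aware bisimulation agree with the paper; your unit-cost loop at $\gamma=1$ even kills existence, not just uniqueness.

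The exact-value case, however, is not yet a countermodel. ``A projection whose iterates diverge'' is precisely what the witness has to supply, and $\phi=(1,2)$ alone does not determine it. To complete it, take states $1,2$ with one zero-cost choice, $T(1,a)=T(2,a)=\{2\}$ (so $J^\star=0$), and the unweighted least-squares projection $\Pi x=\tfrac{x_1+2x_2}{5}\phi$. Then $\mathcal V_G(\theta\phi)=(2\gamma\theta,2\gamma\theta)$ and $\Pi\mathcal V_G(\theta\phi)=\tfrac{6\gamma}{5}\theta\phi$, which diverges from any $\theta_0\neq0$ when $\gamma\in(5/6,1)$. Both the weighting and the threshold matter: weighting only state $2$ gives the convergent map $\theta\mapsto\gamma\theta$.

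The paper instead uses the sup-norm projection $P(x,y)=(x+y,0)$ with $\|P\|_\infty=2$ and $Bz=\tfrac34z$, so $\|PB\|_\infty=\tfrac32$. That witness is fully explicit, but it only breaks the contraction hypothesis: $(PB)^n=\gamma^nP\to0$, so the iterates still converge. Your completed version costs a little computation but shows genuine divergence of projected value iteration.

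For switching, the zero-charge oscillation suffices for the removal direction. The paper's witness is simpler still: two modes with identical values everywhere, so the sequence $1,2,1,2,\dots$ is tie-consistent at zero switching cost. Note also that the paper's hypothesis concerns changes of modal objective $G_i$, not changes of selected choice; an optimal stationary selector on a cycle changes its choice at every step harmlessly.

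More importantly, your claim that any charge $\epsilon>0$ bounds the number of switches by (objective range)$/\epsilon$ is false in your own model. If the alternating margins exceed $\epsilon$, the two-cycle keeps switching forever, because a state-dependent comparison gives no potential that descends at each switch. What a positive charge actually supplies is a budget bound: if total cost is at most $M$ and each switch costs at least $\lambda$, then at most $\lfloor M/\lambda\rfloor$ switches occur. Since the theorem asserts only the removal direction, drop or correct that sentence.
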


\begin{proof}
Separate witnesses are given.

For finiteness or compactness of the choice set,  Let $S=\{s\}$, let $G=S$, let $U=(0,1)$, let $A(s)=U$, let $T(s,u)=\{s\}$, and let $c(s,u)=u$.  The infimum of the one-step cost is $0$, but no choice attains it.  Hence a minimizing selector need not exist.

For full observation,  Let $S=\{p,q,r\}$, let $G=\{p,q\}$, and let the selector observe $p$ and $q$ as the same observation.  At $p$, choice $a$ returns to $p$ and choice $b$ goes to $r$.  At $q$, choice $b$ returns to $q$ and choice $a$ goes to $r$.  State-based preserves $G$ by selecting $a$ at $p$ and $b$ at $q$.  No observation-based selector can distinguish the two states, so any single selected choice fails at one of them.  This is exactly why full observation is the standing hypothesis of the quantitative development; the qualitative content of the partial-observation case is recovered separately by the information-state construction of the observation-layer section, where the right object is a selector on information states rather than on states, and the present countermodel is the information state $\{p,q\}$ at which no single choice is admissible for both members.

For discounting,  Let $S=\{s\}$, let $U=\{u\}$, let $T(s,u)=\{s\}$, let $c(s,u)=0$, and put $\gamma=1$.  The Bellman equation becomes $J(s)=J(s)$.  Every real value is a fixed point, so uniqueness fails.

For positive switching cost,  Let two modes have identical state value at every state and set the switching charge to $0$.  The mode sequence $1,2,1,2,\ldots$ has infinitely many switches and zero switching cost.  Finite switching cannot follow from bounded total cost alone.

For explicit finite-array values, replace the exact value space by an arbitrary projection.  On $\mathbb R^2$ with the sup norm, define $P(x,y)=(x+y,0)$.  Then $P^2=P$ and $\|P\|_\infty=2$.  If an exact Bellman map is the contraction $Bz=\gamma z$ with $\gamma=3/4$, then $PB$ has operator norm $3/2$ and is not a contraction.  Thus an approximation layer can destroy the fixed-point argument even when the exact operator contracts.

For cost-aware bisimulation, quotient only by labels and not by choice-refined costs.  Let two states $s$ and $t$ have the same label and the same self-loop transition, but let their unique certified choices have costs $0$ and $1$.  The states satisfy the same base modal formulas, but their discounted values are $0$ and $1/(1-\gamma)$.  Formula preservation alone does not preserve value.

Each witness removes exactly the hypothesis named above and breaks the corresponding conclusion.
\end{proof}

\begin{theorem}[Closure under the standing hypotheses]
Under finite state and choice sets, state-explicit presentation, explicit finite-array value functions, discount factor $\gamma\in(0,1)$, universal finite successor sets, value-refined modal bisimulation, and positive switching cost, the boundary countermodels above are excluded by the hypotheses used in the main semantic construction theorem.
\end{theorem}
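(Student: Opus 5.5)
The closure theorem is a bookkeeping claim: each of the six boundary countermodels violates exactly one standing hypothesis. So the plan is to go through them one at a time and point to the hypothesis that excludes each. I will pair each countermodel with the earlier result whose conclusion it would otherwise break, and check that the hypothesis is exactly what that result's proof uses.

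First, the non-compact choice set $U=(0,1)$ is excluded by finiteness of $U$ and $A(s)$. Then $\CAct_G(s)$ is a finite nonempty set (Theorem~\ref{thm:core-fixedpoint}), so the minimum in $\mathcal V_G$ is attained and a value-greedy selector exists. Second, the partial-observation countermodel is excluded by the state-explicit presentation, since selectors are functions of states. Theorem~\ref{thm:core-fixedpoint} then lets one choose $\sigma(p)=a$ and $\sigma(q)=b$ independently. Third, the case $\gamma=1$ is excluded by $\gamma\in(0,1)$. The contraction estimate of Theorem~\ref{thm:core-value} then applies and gives uniqueness of $J_G^\star$. Fourth, the projection example is excluded because value functions are exact finite arrays in $\ell_\infty(K_G)$. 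No operator $P$ is composed with $\mathcal V_G$, so the contraction constant is $\gamma$ itself, and Theorem~\ref{thm:core-certificate} holds verbatim. Fifth, the label-only quotient is excluded by requiring value-refined bisimulation. Its cost clause rules out merging the states with costs $0$ and $1$, and Theorem~\ref{thm:core-quotient} gives value preservation. Sixth, the zero switching charge is excluded by requiring a positive switching cost $\delta>0$.

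The step I expect to be the main obstacle is the switching case, because switching modes are not formalized in the displayed core. My plan there is to argue directly. Suppose total cost is bounded by $B<\infty$ and every switch costs at least $\delta>0$, with all other costs nonnegative since $c\in[0,\bar c]$. Then the number of switches is at most $B/\delta$, which excludes the oscillating sequence $1,2,1,2,\ldots$. If the switching cost is discounted, I will state the bound per finite horizon or against the undiscounted total, whichever the main construction theorem uses.

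Finally, I will add a remark that universal finite successor sets keep the $\max$ over $T(s,u)$ in $\mathcal V_G$ finite and attained. This is used in the contraction estimate. The conclusion is that each countermodel fails one listed hypothesis, so all six are excluded.
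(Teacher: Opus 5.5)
Your proposal is correct and follows the paper's proof essentially step for step. The paper likewise pairs each countermodel with the single hypothesis that excludes it: the finite selector lemma restores minimizers, state-explicit presentation makes state-based selectors well defined, discounting gives contraction factor $\gamma$, finite arrays on $\ell_\infty(V)$ avoid a norm-inflating projection, the cost clause of value-refined bisimulation blocks the bad merge, and a positive switching charge plus the switch accounting lemma bounds the number of switches. Your caveat about discounted switching charges is a fair refinement that the paper's one-line switching argument leaves unaddressed; it simply assumes each switch is charged at least $\lambda$ against the total budget, as in your direct argument.
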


\begin{proof}
Finite nonempty choice sets restore minimizers by the finite selector lemma.  State-explicit presentation makes a state-based selector a well-defined map from the actual state to a certified choice.  Explicit finite-array representation keeps value iteration inside $\ell_\infty(V)$ and avoids inserting a projection whose norm may exceed one.  Discounting gives the Bellman--Kripke map contraction factor $\gamma$.  Value-refined modal bisimulation preserves labels, transfer structure, and immediate costs, so quotienting does not merge states with different values.  Positive switching cost turns every mode change into a nonzero charge, so a bounded total switching budget gives the switch-count bound.  These are exactly the structural conditions invoked by the later fixed-point, game, quotient, and switching proofs.
\end{proof}

\section{Finite semantic lemmas below the fixed-point and Bellman layers}

The main structures rest on smaller facts about finite sets, minima, maxima, and relations.  These lemmas make the later fixed-point and Bellman arguments explicit instead of relying on implicit black-box steps.

\begin{lemma}[Finite selector lemma]
Let $X$ be a finite nonempty set and let $f:X\to\mathbb R$.  Then $\argmin_{x\in X}f(x)$ is nonempty.
\end{lemma}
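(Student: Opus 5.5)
The plan is a direct argument by induction on $|X|$, or equivalently by well-ordering a finite set of real numbers. Since $X$ is finite and nonempty, the image $f(X)=\{f(x):x\in X\}$ is a finite nonempty subset of $\mathbb R$. A finite nonempty subset of a totally ordered set has a least element, so I would set $m=\min f(X)$ and take any $x^\ast\in X$ with $f(x^\ast)=m$. Such an $x^\ast$ exists because $m$ lies in the image. Then $f(x^\ast)\leq f(x)$ for all $x\in X$, so $x^\ast\in\argmin_{x\in X}f(x)$.

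To keep the argument self-contained rather than citing the existence of a minimum, I would prove the existence of the least element of $f(X)$ by induction on $n=|X|$. When $n=1$, the unique element is the minimizer. For the inductive step, I would write $X=X'\cup\{x_0\}$ with $X'$ nonempty and $|X'|=n-1$. The induction hypothesis gives a minimizer $x'$ of $f$ on $X'$. I then compare $f(x')$ with $f(x_0)$ and take whichever is smaller, breaking ties arbitrarily. Totality of the order on $\mathbb R$ makes this comparison always decidable, and transitivity shows that the chosen point is below every value on $X$.

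There is no real obstacle here. The only point worth flagging is that both hypotheses are needed. Without nonemptiness, the argmin is trivially empty. Without finiteness, an infimum need not be attained, as in the boundary countermodel with $U=(0,1)$ and $c(s,u)=u$. This is exactly the role the lemma plays in the later Bellman backups over $\CAct_G(s)$: that set is finite, and it is nonempty by Theorem~\ref{thm:core-fixedpoint}.
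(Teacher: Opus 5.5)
Your proof is correct and follows essentially the paper's argument: pass to the finite nonempty image $f(X)\subseteq\mathbb R$, take its least element, and pick a preimage. Your induction on $|X|$ only spells out the fact that every finite nonempty subset of $\mathbb R$ has a least element, which the paper uses without proof.
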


\begin{proof}
Since $X$ is finite, the set $f(X)$ is a finite nonempty subset of $\mathbb R$.  Every finite nonempty subset of $\mathbb R$ has a least element.  Choose $x^\star\in X$ with $f(x^\star)$ equal to that least element.  Then $x^\star\in\argmin_{x\in X}f(x)$.
\end{proof}

\begin{lemma}[Maximum perturbation lemma]
Let $Y$ be a finite nonempty set and let $f,g:Y\to\mathbb R$.  Then
\[
\left|\max_{y\in Y}f(y)-\max_{y\in Y}g(y)\right|
\leq
\max_{y\in Y}|f(y)-g(y)|.
\]
\end{lemma}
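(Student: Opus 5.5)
The plan is to prove the inequality by comparing the two maxima directly, without any fixed-point or contraction machinery. Write $M_f=\max_{y\in Y}f(y)$, $M_g=\max_{y\in Y}g(y)$ and $\delta=\max_{y\in Y}|f(y)-g(y)|$. All three quantities exist because $Y$ is finite and nonempty; this is the same least/greatest-element fact used in the finite selector lemma. The argument establishes the two one-sided bounds $M_f-M_g\leq\delta$ and $M_g-M_f\leq\delta$ separately, and then combines them into the absolute-value statement.

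For the first bound, pick a maximizer $y_f\in Y$ with $f(y_f)=M_f$. By definition of $\delta$ we have $f(y_f)\leq g(y_f)+|f(y_f)-g(y_f)|\leq g(y_f)+\delta$. Since $g(y_f)\leq M_g$, this gives $M_f\leq M_g+\delta$. For the second bound, exchange the roles of $f$ and $g$: take a maximizer $y_g$ of $g$. The quantity $\delta$ is symmetric in $f$ and $g$, so the same chain yields $M_g\leq M_f+\delta$.

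Putting the two bounds together gives $-\delta\leq M_f-M_g\leq\delta$, which is exactly the claimed inequality. There is no real obstacle. The only point needing care is to evaluate at a maximizer of the larger side and then bound by the maximum of the other function, not to compare maximizers of $f$ and $g$ with each other. This is the step invoked in the proof of Theorem~\ref{thm:core-value}, with $Y=T(s,u)$.
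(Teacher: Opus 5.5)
Your proof is correct and essentially the paper's argument: both establish the one-sided bounds $M_f\leq M_g+\delta$ and $M_g\leq M_f+\delta$ and combine them. The only cosmetic difference is that the paper takes maxima over the pointwise inequality $f\leq g+\delta$, whereas you evaluate that inequality at a maximizer of $f$ (respectively $g$).
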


\begin{proof}
Let $\delta=\max_{y\in Y}|f(y)-g(y)|$.  For every $y\in Y$ we have $f(y)\leq g(y)+\delta$.  Taking maxima gives $\max_Y f\leq \max_Y g+\delta$.  Reversing the roles of $f$ and $g$ gives $\max_Y g\leq \max_Y f+\delta$.  The two inequalities imply the claim.
\end{proof}

\begin{lemma}[Minimum perturbation lemma]
Let $X$ be finite and nonempty.  If $a_x,b_x\in\mathbb R$ for each $x\in X$, then
\[
\left|\min_{x\in X}a_x-\min_{x\in X}b_x\right|
\leq
\max_{x\in X}|a_x-b_x|.
\]
\end{lemma}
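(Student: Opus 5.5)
The plan is to mirror the proof of the Maximum perturbation lemma, replacing maxima by minima. I would set
\[
\delta=\max_{x\in X}|a_x-b_x|,
\]
which is well defined because $X$ is finite and nonempty, so the maximum of the finite family $\{|a_x-b_x|\}$ is attained. By the finite selector lemma, both minima $\min_X a$ and $\min_X b$ are attained, say at $x_a$ and $x_b$ respectively.

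The key one-sided step is to compare each minimum with the other minimizer. For the minimizer $x_b$ of $b$ we have
\[
\min_{x\in X}a_x\leq a_{x_b}\leq b_{x_b}+\delta=\min_{x\in X}b_x+\delta .
\]
Symmetrically, using the minimizer $x_a$ of $a$,
\[
\min_{x\in X}b_x\leq b_{x_a}\leq a_{x_a}+\delta=\min_{x\in X}a_x+\delta .
\]
Together the two inequalities give the absolute-value bound.

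Alternatively, I could reduce to the Maximum perturbation lemma via $\min_X a=-\max_X(-a)$, since $|(-a_x)-(-b_x)|=|a_x-b_x|$. I would present the direct argument and mention this reduction as a remark.

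There is no real obstacle here. The only point needing care is that finiteness and nonemptiness of $X$ are what guarantee that the minima and $\delta$ exist. That is exactly the role this lemma plays in the contraction estimate of Theorem~\ref{thm:core-value}, where the minimum is taken over the nonempty finite set $\CAct_G(s)$.
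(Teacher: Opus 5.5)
Your proof is correct and is essentially the paper's argument. The paper sets $\epsilon=\max_{x\in X}|a_x-b_x|$, notes $a_x\leq b_x+\epsilon$ for every $x$, takes minima, and then swaps the roles of $a$ and $b$; evaluating at the explicit minimizers $x_b$ and $x_a$, as you do, is just the concrete justification of that ``take minima'' step.
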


\begin{proof}
Let $\epsilon=\max_{x\in X}|a_x-b_x|$.  For every $x$, $a_x\leq b_x+\epsilon$.  Taking minima gives $\min_X a\leq \min_X b+\epsilon$.  Reversing $a$ and $b$ gives $\min_X b\leq \min_X a+\epsilon$.  The desired bound follows.
\end{proof}

\begin{lemma}[Certified-choice restriction lemma]
Let $V=\Viab(G)$ and define
\[
A_V(s)=\{u\in A(s):T(s,u)\subseteq V\}.
\]
For every $s\in V$, the set $A_V(s)$ is nonempty.
\end{lemma}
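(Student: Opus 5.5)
The plan is to identify $\Viab(G)$ with the admissible-continuation kernel $K_G=\nu X.(G\cap\Pre_{\exists\Box}X)$ of Theorem~\ref{thm:core-fixedpoint}. The lemma then becomes the nonemptiness clause already stated there. To keep the argument self-contained, I will still reproduce the fixed-point step directly.

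The first step is to record that $V$ is a fixed point of $F_G$, that is,
\[
V=G\cap\Pre_{\exists\Box}(V).
\]
This holds by Tarski's theorem applied to the monotone map $F_G$ on the finite lattice $2^S$, or equivalently by stabilization of the descending Kleene chain $D^n_G$. In fact only the post-fixed inclusion $V\subseteq G\cap\Pre_{\exists\Box}(V)$ is needed. This inclusion holds for any greatest fixed point.

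The second step fixes $s\in V$. The inclusion places $s$ in $\Pre_{\exists\Box}(V)$. Unfolding the definition of $\Pre_{\exists\Box}$ then gives some $u\in A(s)$ with $T(s,u)\subseteq V$. This $u$ lies in $A_V(s)$ by the definition of $A_V$, so $A_V(s)\neq\varnothing$. Note also that $A_V(s)$ coincides with $\CAct_G(s)$, which ties the lemma to the certified-choice vocabulary.

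The only point needing care is the notation: $\Viab(G)$ has not been formally defined before this lemma. I would state explicitly that $\Viab(G)$ denotes $K_G$. This reading matches the use of $\Viab$ in Theorem~\ref{thm:noncommutation} and in the later development.

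If $\Viab(G)$ is instead meant as the largest subset of $G$ closed under some one-step continuation, the argument is unchanged. The characterization in Theorem~\ref{thm:core-fixedpoint} shows that this largest set equals $K_G$. Its defining closure property then directly supplies the witness $u$. No further obstacle is expected.
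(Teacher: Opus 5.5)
Your proof is correct and matches the paper's argument: both use the fixed-point equation $V=G\cap\Pre_{\exists\Box}(V)$ to place $s$ in $\Pre_{\exists\Box}(V)$, then read off a witness $u\in A(s)$ with $T(s,u)\subseteq V$. Your side remarks are also accurate: only the post-fixed inclusion is needed, and $\Viab(G)$ coincides with $K_G$, since the paper later defines it formally as $\nu X.F_G(X)$.
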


\begin{proof}
Since $V=G\cap\Pre_{\exists\Box}(V)$, every $s\in V$ belongs to $\Pre_{\exists\Box}(V)$.  By definition of admissible predecessor, there exists $u\in A(s)$ such that $T(s,u)\subseteq V$.  Thus $u\in A_V(s)$.
\end{proof}

\begin{lemma}[One-step admissible closure]
If $s\in V$ and $u\in A_V(s)$, then every successor $s'\in T(s,u)$ belongs to $V$.
\end{lemma}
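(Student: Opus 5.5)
The claim is immediate from the definition of $A_V$, so the plan is simply to unfold that definition. Fix $s\in V$ and $u\in A_V(s)$. By the defining condition of $A_V(s)$ in the certified-choice restriction lemma, membership $u\in A_V(s)$ means exactly that $u\in A(s)$ and $T(s,u)\subseteq V$. Then, for an arbitrary successor $s'\in T(s,u)$, the inclusion $T(s,u)\subseteq V$ gives $s'\in V$ directly. No fixed-point reasoning is needed at this step. The hypothesis $s\in V$ only guarantees, through the preceding lemma, that such a $u$ exists at all.

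The statement becomes useful when it is iterated, and I would add a short remark to that effect. I would identify $V=\Viab(G)$ with $K_G$, so that $A_V(s)=\CAct_G(s)$. Then induction on path length shows that any selector with $\sigma(x)\in A_V(x)$ keeps every compatible path inside $V\subseteq G$. This recovers the invariance clause of Theorem~\ref{thm:core-fixedpoint}.

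The only point that needs care is notational rather than mathematical. One must make sure that $\Viab(G)$ denotes the same greatest fixed point $K_G=\nu X.(G\cap\Pre_{\exists\Box}X)$ used throughout. Only then does the lemma apply to certified choices as defined earlier in the paper.
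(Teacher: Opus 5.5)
Your proposal is correct and takes the same approach as the paper, whose proof simply observes that the claim is the defining inclusion $T(s,u)\subseteq V$ for membership in $A_V(s)$. Your identification of $V=\Viab(G)$ with $K_G$ is also sound, since both are defined as $\nu X.F_G(X)$.
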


\begin{proof}
The claim is exactly the defining inclusion $T(s,u)\subseteq V$ for membership in $A_V(s)$.
\end{proof}

\begin{lemma}[Bellman elementary contraction]
Let $J,K:V\to\mathbb R$ and $0<\gamma<1$.  Then
\[
\|\mathcal B J-\mathcal B K\|_\infty
\leq
\gamma\|J-K\|_\infty.
\]
\end{lemma}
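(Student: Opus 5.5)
The plan is to argue pointwise and then take the supremum, using the two perturbation lemmas just established. Here $\mathcal B$ is read as the Bellman transformer on $V=\Viab(G)$, restricted to certified choices:
\[
(\mathcal B J)(s)=\min_{u\in A_V(s)}\Bigl(c(s,u)+\gamma\max_{s'\in T(s,u)}J(s')\Bigr),\qquad s\in V.
\]
This coincides with $\mathcal V_G$ under $V=K_G$. Before any estimate, I will check that the expression is well defined. By the certified-choice restriction lemma, $A_V(s)$ is nonempty. By one-step admissible closure, every successor $s'\in T(s,u)$ with $u\in A_V(s)$ lies in $V$, so $J(s')$ and $K(s')$ make sense. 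Finally, $T(s,u)$ is finite and nonempty, so the maximum exists, and the minimum exists by the finite selector lemma.

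Fix $s\in V$ and $u\in A_V(s)$, and write $Y=T(s,u)\subseteq V$. The maximum perturbation lemma applied to $J|_Y$ and $K|_Y$ gives
\[
\Bigl|\max_{Y}J-\max_{Y}K\Bigr|\le \max_{y\in Y}|J(y)-K(y)|\le\|J-K\|_\infty .
\]
Set $a_u=c(s,u)+\gamma\max_Y J$ and $b_u=c(s,u)+\gamma\max_Y K$. The costs cancel, so $|a_u-b_u|\le\gamma\|J-K\|_\infty$. Since the index set $A_V(s)$ is the same finite nonempty set for both $J$ and $K$, the minimum perturbation lemma then yields
\[
|(\mathcal BJ)(s)-(\mathcal BK)(s)|\le\max_{u\in A_V(s)}|a_u-b_u|\le\gamma\|J-K\|_\infty .
\]
Taking the supremum (here a maximum, since $V$ is finite) over $s\in V$ gives the claim.

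There is no real obstacle in this argument. The one point needing care is the well-definedness step. The minimum must range over the same nonempty index set for $J$ and $K$, which holds because $A_V(s)$ depends only on $V$ and not on the value function. The successor values must lie in the domain $V$, which the closure lemma guarantees. The hypothesis $\gamma<1$ is not used in the inequality itself; it only makes the bound a strict contraction.
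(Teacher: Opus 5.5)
Your proof is correct and follows the paper's argument: you bound each choice term $c(s,u)+\gamma\max_{T(s,u)}J$ against its $K$-counterpart by the maximum perturbation lemma, then apply the minimum perturbation lemma over the common finite nonempty set $A_V(s)$, then take the supremum over $s\in V$. Your explicit well-definedness check (nonemptiness of $A_V(s)$ and successors lying in $V$) is a small addition that the paper leaves implicit.
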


\begin{proof}
Fix $s\in V$.  For each $u\in A_V(s)$ define
\[
Q_J(s,u)=c(s,u)+\gamma\max_{s'\in T(s,u)}J(s').
\]
The maximum perturbation lemma gives
\[
|Q_J(s,u)-Q_K(s,u)|\leq \gamma\|J-K\|_\infty.
\]
The minimum perturbation lemma over the finite nonempty set $A_V(s)$ gives
\[
|\min_{u\in A_V(s)}Q_J(s,u)-\min_{u\in A_V(s)}Q_K(s,u)|
\leq
\gamma\|J-K\|_\infty.
\]
Taking the supremum over $s\in V$ proves the result.
\end{proof}

\begin{lemma}[Monotonicity of the Bellman--Kripke map]
Let $J,K:V\to\mathbb R$ with $J\leq K$ pointwise.  Then $\mathcal B_VJ\leq\mathcal B_VK$ pointwise, and the same holds for the selector operator $\mathcal B^\pi$ of any state-based selector $\pi$ valued in $A_V$.
\end{lemma}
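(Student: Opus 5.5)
The argument is a pointwise comparison that runs through each clause of the backup separately. Fix $s\in V$ and assume $J\leq K$ pointwise on $V$. For each certified choice $u\in A_V(s)$ we compare the two one-step quantities
\[
Q_J(s,u)=c(s,u)+\gamma\max_{s'\in T(s,u)}J(s'),
\qquad
Q_K(s,u)=c(s,u)+\gamma\max_{s'\in T(s,u)}K(s').
\]
The successor set $T(s,u)$ is finite, nonempty, and contained in $V$ by the one-step admissible closure lemma. So both maxima are attained and are evaluated only at points where $J$ and $K$ are defined. Let $s^\ast$ be a maximizer of $J$ on $T(s,u)$. Then
\[
\max_{T(s,u)}J=J(s^\ast)\leq K(s^\ast)\leq\max_{T(s,u)}K,
\]
so the maximum is monotone. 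Because $\gamma>0$ and the cost term $c(s,u)$ is the same on both sides, this gives $Q_J(s,u)\leq Q_K(s,u)$ for every $u\in A_V(s)$.

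Next we pass to the minimum over the finite nonempty set $A_V(s)$. Nonemptiness comes from the certified-choice restriction lemma, and attainment from the finite selector lemma. Take $u^\ast\in\argmin_{u\in A_V(s)}Q_K(s,u)$. Then
\[
(\mathcal B_VJ)(s)\leq Q_J(s,u^\ast)\leq Q_K(s,u^\ast)=(\mathcal B_VK)(s).
\]
Since $s\in V$ was arbitrary, this proves $\mathcal B_VJ\leq\mathcal B_VK$ pointwise.

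For a state-based selector $\pi$ with $\pi(s)\in A_V(s)$, the selector operator is $(\mathcal B^\pi J)(s)=Q_J(s,\pi(s))$, with no minimization. Monotonicity therefore follows directly from the inequality $Q_J(s,\pi(s))\leq Q_K(s,\pi(s))$ already established.

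No step here is a real obstacle. The only point needing care is well-definedness: every successor at which the maxima are evaluated must lie in $V$, which is exactly why the backup is restricted to certified choices in $A_V$ (the one-step closure lemma). The sign condition $\gamma\geq0$ is also needed, and it is part of the standing assumption $0<\gamma<1$.
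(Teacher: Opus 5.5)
Your proof is correct and follows essentially the same route as the paper: fix $s$ and $u\in A_V(s)$, use monotonicity of the inner maximum over $T(s,u)$, take the minimum over the common index set $A_V(s)$, and handle $\mathcal B^\pi$ by replacing the minimization with the single choice $\pi(s)$. Your extra remarks on well-definedness ($T(s,u)\subseteq V$) and on the sign of $\gamma$ are accurate; the paper leaves them implicit.
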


\begin{proof}
Fix $s\in V$ and $u\in A_V(s)$.  Since $J\leq K$, every successor value satisfies $J(s')\leq K(s')$, so $\max_{s'\in T(s,u)}J(s')\leq\max_{s'\in T(s,u)}K(s')$ and therefore $c(s,u)+\gamma\max_{T(s,u)}J\leq c(s,u)+\gamma\max_{T(s,u)}K$.  Taking the minimum over the common index set $A_V(s)$ preserves the inequality, which is $(\mathcal B_VJ)(s)\leq(\mathcal B_VK)(s)$.  For $\mathcal B^\pi$ the single choice $\pi(s)$ replaces the minimization and the same monotonicity of the inner maximum gives the claim.
\end{proof}

\begin{lemma}[Greedy domination]
For every state-based selector $\pi$ valued in $A_V$ and every $J:V\to\mathbb R$,
\[
\mathcal B_VJ\leq\mathcal B^\pi J
\]
pointwise, with equality at $s$ if and only if $\pi(s)$ attains the minimum defining $(\mathcal B_VJ)(s)$.
\end{lemma}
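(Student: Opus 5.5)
The plan is to unfold both operators at a fixed state $s\in V$ and compare them pointwise. The Bellman--Kripke map and the selector operator are written with the same one-step quantity:
\[
(\mathcal B_VJ)(s)=\min_{u\in A_V(s)}Q_J(s,u),\qquad
(\mathcal B^\pi J)(s)=Q_J(s,\pi(s)),
\]
where $Q_J(s,u)=c(s,u)+\gamma\max_{s'\in T(s,u)}J(s')$ is exactly the quantity from the Bellman elementary contraction lemma. Two earlier facts make these expressions well defined. The certified-choice restriction lemma gives $A_V(s)\neq\varnothing$, and the standing nonemptiness and finiteness of $T(s,u)$ makes the inner maximum well defined. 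The finite selector lemma then guarantees that the minimum is attained.

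The inequality is immediate. Since $\pi$ is valued in $A_V$, we have $\pi(s)\in A_V(s)$. A minimum over a finite set is bounded above by its value at any member, so
\[
\min_{u\in A_V(s)}Q_J(s,u)\leq Q_J(s,\pi(s)).
\]
This gives $(\mathcal B_VJ)(s)\leq(\mathcal B^\pi J)(s)$. Because $s$ was arbitrary, the inequality holds pointwise on $V$.

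For the equality clause, note that equality at $s$ means exactly $Q_J(s,\pi(s))=\min_{u\in A_V(s)}Q_J(s,u)$. This is the definition of $\pi(s)\in\argmin_{u\in A_V(s)}Q_J(s,u)$, that is, of $\pi(s)$ attaining the minimum. Both directions are therefore tautological once the two operators are written in terms of the same $Q_J$.

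No step is a real obstacle. The only point needing care is to use the same index set $A_V(s)$ in $\mathcal B_V$ as the one in which $\pi$ takes values. If $\pi$ were permitted uncertified choices, the comparison would fail, as Proposition~\ref{prop:value-before-kernel} illustrates. So I would state explicitly that the hypothesis ``$\pi$ valued in $A_V$'' is what places $\pi(s)$ inside the minimization domain.
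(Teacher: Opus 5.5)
Your proof is correct and is essentially the paper's argument: unfold both operators at $s$, use $\pi(s)\in A_V(s)$ to bound the minimum by its value at $\pi(s)$, and read equality as the definition of $\pi(s)$ being a minimizer. Your closing remark about uncertified choices is not needed for the lemma; if $\pi(s)\notin A_V(s)$, then $\mathcal B^\pi J$ would not even be defined for $J$ on $V$, since successors could leave $V$.
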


\begin{proof}
By definition $(\mathcal B_VJ)(s)=\min_{u\in A_V(s)}[c(s,u)+\gamma\max_{T(s,u)}J]$ and $\pi(s)\in A_V(s)$ is one admissible index, so the minimum is at most the value at $\pi(s)$, which is $(\mathcal B^\pi J)(s)$.  The minimum equals this value exactly when $\pi(s)$ is a minimizer.
\end{proof}

\begin{lemma}[Sub- and supersolution sandwich]
Let $J^\star$ be the fixed point of $\mathcal B_V$.  If $J\leq\mathcal B_VJ$ then $J\leq J^\star$, and if $J\geq\mathcal B_VJ$ then $J\geq J^\star$.
\end{lemma}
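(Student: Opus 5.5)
The plan is to combine the monotonicity lemma for $\mathcal B_V$ with the contraction property of $\mathcal B_V$ (Bellman elementary contraction, equivalently Theorem~\ref{thm:core-value}). The two claims are symmetric, so I will prove the subsolution case in full and obtain the supersolution case by reversing every inequality.

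Assume $J\leq\mathcal B_VJ$ pointwise. First I will show by induction that the iterates $J_n=\mathcal B_V^nJ$ form a pointwise nondecreasing sequence, $J\leq \mathcal B_VJ\leq\mathcal B_V^2J\leq\cdots$. The base case $J_0=J\leq J_1$ is exactly the hypothesis. For the inductive step, if $J_n\leq J_{n+1}$, then the monotonicity lemma gives $\mathcal B_VJ_n\leq\mathcal B_VJ_{n+1}$, which says $J_{n+1}\leq J_{n+2}$. Hence $J\leq\mathcal B_V^nJ$ for every $n$.

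Second, the contraction lemma gives $\|\mathcal B_V^nJ-J^\star\|_\infty\leq\gamma^n\|J-J^\star\|_\infty\to0$, using $J^\star=\mathcal B_V^nJ^\star$. So $\mathcal B_V^nJ\to J^\star$ uniformly, and in particular pointwise on the finite set $V$. Passing to the limit in $J(s)\leq(\mathcal B_V^nJ)(s)$, which is a non-strict inequality and so survives limits, gives $J(s)\leq J^\star(s)$ for every $s\in V$.

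The supersolution case $J\geq\mathcal B_VJ$ is identical: monotonicity gives a nonincreasing chain $J\geq\mathcal B_V^nJ$, and the limit yields $J\geq J^\star$.

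There is essentially no obstacle. The only point that needs care is using monotonicity of $\mathcal B_V$ itself, not just of the inner maximum, to propagate the one-step inequality along the iterates. Everything else is continuity of the non-strict order under uniform limits, guaranteed by Banach's theorem. I will state explicitly that $J$ is real-valued on the finite set $V$, so that $\|J-J^\star\|_\infty<\infty$ and the contraction estimate applies.
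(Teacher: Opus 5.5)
Your proof is correct and follows essentially the same route as the paper: it uses monotonicity of $\mathcal B_V$ to get $J\leq\mathcal B_V^nJ$ for all $n$, then passes to the uniform limit $\mathcal B_V^nJ\to J^\star$ supplied by the contraction property, with the supersolution case handled by symmetry.
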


\begin{proof}
Suppose $J\leq\mathcal B_VJ$.  Applying the monotone map $\mathcal B_V$ repeatedly preserves the inequality, so $J\leq\mathcal B_VJ\leq\mathcal B_V^2J\leq\cdots\leq\mathcal B_V^nJ$ for all $n$.  Since $\mathcal B_V^nJ\to J^\star$ uniformly by the contraction property, the limit gives $J\leq J^\star$.  The supersolution case is symmetric.
\end{proof}

\begin{lemma}[Class saturation lemma]
Let $\sim$ be an equivalence relation on $S$ such that whenever $s\sim t$, for every $u\in A(s)$ there is $v\in A(t)$ with $\{[x]_\sim:x\in T(s,u)\}=\{[x]_\sim:x\in T(t,v)\}$.  If $X$ is a union of equivalence classes, then $\Pre_{\exists\Box}(X)$ is also a union of equivalence classes.
\end{lemma}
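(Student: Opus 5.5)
The plan is to show that $\Pre_{\exists\Box}(X)$ is closed under $\sim$: whenever $s\sim t$ and $s\in\Pre_{\exists\Box}(X)$, also $t\in\Pre_{\exists\Box}(X)$. A set closed under an equivalence relation is exactly a union of its classes, since it then contains the full class of each of its members. So this closure property is all that needs proving.

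Fix $s\sim t$ with $s\in\Pre_{\exists\Box}(X)$. By the definition of the admissible predecessor there is $u\in A(s)$ with $T(s,u)\subseteq X$. The hypothesis on $\sim$ supplies $v\in A(t)$ with
\[
\{[x]_\sim:x\in T(s,u)\}=\{[x]_\sim:x\in T(t,v)\}.
\]
Now take any $y\in T(t,v)$. Its class $[y]_\sim$ lies in the right-hand set, hence in the left-hand set, so $[y]_\sim=[x]_\sim$ for some $x\in T(s,u)\subseteq X$. Because $X$ is a union of classes and $x\in X$, the whole class $[x]_\sim$ lies in $X$, and therefore $y\in X$. This gives $T(t,v)\subseteq X$, so $t\in\Pre_{\exists\Box}(X)$.

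There is no genuine obstacle. The one point needing care is the direction of the class-set equality. The inclusion $T(t,v)\subseteq X$ uses only the containment from right to left, that every successor class of $t$ under $v$ is met by $T(s,u)$. This universal component is exactly the one Proposition~\ref{prop:clause-necessity}(iii) shows cannot be dropped. The lemma's hypothesis is stated only from $s$ to $t$, but since $\sim$ is symmetric, applying it to the pair $(t,s)$ covers the reverse direction. The argument above therefore needs no separate symmetric case.
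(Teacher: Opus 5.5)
Your proof is correct and follows the paper's argument essentially verbatim. You take the witness $u$ with $T(s,u)\subseteq X$, transfer it to the matched $v$ through the class-set equality, and use that $X$ is a union of $\sim$-classes to conclude $T(t,v)\subseteq X$; your added remark that closure under $\sim$ is the same as being a union of classes spells out a step the paper leaves implicit.
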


\begin{proof}
Assume $s\sim t$ and $s\in\Pre_{\exists\Box}(X)$.  Then some $u\in A(s)$ satisfies $T(s,u)\subseteq X$.  By hypothesis there is $v\in A(t)$ with $\{[x]_\sim:x\in T(t,v)\}=\{[x]_\sim:x\in T(s,u)\}$.  Because $X$ is a union of $\sim$-classes and $T(s,u)\subseteq X$, every class met by $T(s,u)$ is contained in $X$; by the class equality every class met by $T(t,v)$ is also contained in $X$.  Hence $T(t,v)\subseteq X$, so $t\in\Pre_{\exists\Box}(X)$.
\end{proof}

\begin{lemma}[Switch accounting lemma]
Let every switch have cost at least $\lambda>0$.  If total switching cost is at most $M$, then the number of switches is at most $\lfloor M/\lambda\rfloor$.
\end{lemma}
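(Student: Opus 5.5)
The plan is to argue by contradiction and count. Let $N$ be the number of switches along the given run, and let the individual switching charges be $\lambda_1,\dots,\lambda_N$. By hypothesis each $\lambda_i\geq\lambda>0$. The total switching cost is $\sum_{i=1}^N\lambda_i$, which is at most $M$. If the run has infinitely many switches, first restrict attention to any finite prefix containing $N$ switches. The partial sum over that prefix is still at most $M$, because all charges are nonnegative, so the argument below applies to every such prefix. In particular, it will show that no infinite switching sequence is possible.

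The key step is the lower bound
\[
N\lambda\leq\sum_{i=1}^N\lambda_i\leq M.
\]
It follows by summing the termwise inequality $\lambda_i\geq\lambda$. Dividing by $\lambda>0$ gives $N\leq M/\lambda$. Since $N$ is a nonnegative integer, we get $N\leq\lfloor M/\lambda\rfloor$, because the floor is the greatest integer not exceeding $M/\lambda$. For the infinite case, any prefix with $\lfloor M/\lambda\rfloor+1$ switches would already cost more than $M$, which is a contradiction. So the switch count is finite and bounded as claimed.

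The only point needing care is the infinite-switch case, which I expect to be the main (mild) obstacle. There one must avoid manipulating a divergent series directly. The prefix argument above handles it, using only nonnegativity of the charges and monotonicity of partial sums. The rest is the elementary floor inequality. This is exactly the conclusion that the positive-switching-cost countermodel in the boundary theorem shows cannot hold when $\lambda=0$.
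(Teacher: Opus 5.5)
Your proof is correct and is the same counting argument as the paper: $N$ switches cost at least $N\lambda\le M$, so $N\le M/\lambda$, and integrality gives the floor bound. Your prefix argument for the infinite-switch case is correct and covers a point the paper leaves implicit.
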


\begin{proof}
If $N$ switches occur, their cost is at least $N\lambda$.  The bound $N\lambda\leq M$ gives $N\leq M/\lambda$.  Integrality gives the floor bound.
\end{proof}

\section{Axioms: modal choice transition structures}

The construction starts from finite objects familiar from model checking, modal logic, and strategic semantic construction: a state set, a choice alphabet, an admissibility relation, a transition correspondence, a labelling map, an observation interface, and a stage cost.  Finiteness fixes the discrete setting in which fixed-point iteration, quotient refinement, strategy extraction, information state construction, and discounted dynamic programming can be stated without measurability or compactness side conditions.  The word \emph{choice} refers to the existential component of the predecessor operator; the semantic object is a labelled structure equipped with certified choices.

\begin{definition}[Modal choice structure with observations]
A modal choice structure with observations is a tuple
\[
\mathcal T=(S,U,A,T,L,c),
\]
where $S$ and $U$ are finite nonempty sets, $A:S\to 2^U\setminus\{\varnothing\}$ assigns certified choices, $T(s,u)\subseteq S$ is nonempty for $u\in A(s)$, $L:S\to 2^{\mathsf P}$ labels states by propositions, and $c:S\times U\to \mathbb R_{\geq 0}$ is defined on admissible pairs.
\end{definition}

\begin{definition}[Selector path]
A selector path from $s_0$ is a sequence
\[
(s_0,u_0,s_1,u_1,s_2,\ldots)
\]
such that $u_t\in A(s_t)$ and $s_{t+1}\in T(s_t,u_t)$ for all $t\geq 0$.
\end{definition}

A stationary selector under full observation is a function $\pi:S\to U$ satisfying $\pi(s)\in A(s)$ whenever it is used.  A history selector is a function whose input is a finite state-choice history.  The core justification theorem uses stationary selectors because finite invariance games are memoryless at the fixed-point layer.  History selectors return only when bounded deliberation is discussed.

\begin{assumption}[Universal successor convention]\label{asm:adv}
After a choice $u$ is selected at state $s$, every successor in $T(s,u)$ is treated as observationally possible unless a later risk envelope explicitly changes the evaluation rule.
\end{assumption}

The convention matches the box modality: after the selector selects a choice, all successors compatible with that choice must be accepted.  It is also the move order of the discounted successor game of Definition~\ref{def:game}, in which the universal successor is selected by an resolver; that the resolver needs no memory is established in Theorem~\ref{thm:gamevalue}.  A probabilistic variant would replace the universal successor clause by expectations or risk measures and would no longer be the same theorem.

\begin{definition}[One-step game arena]\label{def:arena}
The one-step game arena associated with $\mathcal T$ is the turn-based arena in which the selector first selects $u\in A(s)$ and a universal resolver then selects $s'\in T(s,u)$.  The induced choice-refined post operator is
\[
\Post(s,u)=T(s,u).
\]
\end{definition}

This arena is the finite-game reading of the choice-refined modality used below.  It is the one-step game form on which alternating-time and coalition modalities are interpreted \cite{alur2002,pauly2002}: a selector selects a cell and a successor resolver chooses one compatible continuation.  In the branching-cell presentation of Section~
ef{sec:choice-cell-semantics}, the same clause says that some cell of histories has only successors inside the target set.

\section{Modal syntax and certified choice}

Let $\mathsf P$ be a finite set of atomic propositions.  The base modal language is
\[
\varphi ::= \top\mid p\mid \neg p\mid (\varphi\wedge\psi)\mid (\varphi\vee\psi)\mid \Box\varphi\mid \Diamond\varphi.
\]
This language observes the flattened successor relation.  The certified-choice layer adds the one-step modality $[\exists]\Box$, read as ``there is a certified choice whose every possible successor satisfies the continuation formula.''  For greatest fixed-point claims the positive certified-choice fragment is
\[
\theta ::= \varphi\mid Z\mid (\theta\wedge\theta)\mid(\theta\vee\theta)\mid[\exists]\Box\theta\mid \nu Z.\theta,
\]
where every free occurrence of the bound variable is positive.  This is the fragment needed for the admissible-continuation kernel formula; richer temporal objectives can be compiled through the product construction in Appendix~C.

For a valuation $\eta$ assigning subsets of $S$ to fixed-point variables, write $\llbracket\theta\rrbracket_\eta\subseteq S$ for the usual denotation: atoms are interpreted by $L$, Boolean connectives by set operations, $[\exists]\Box$ by $\Pre_{\exists\Box}$, and
\[
\llbracket\nu Z.\theta\rrbracket_\eta=\nu X.\llbracket\theta\rrbracket_{\eta[Z:=X]}.
\]
Because the fragment is positive in $Z$, every functional $X\mapsto\llbracket\theta\rrbracket_{\eta[Z:=X]}$ is monotone on the finite lattice $2^S$.

\begin{lemma}[Monotonicity of the positive choice-refined fragment]
If $X\subseteq Y$ and the variable $Z$ occurs only positively in $\theta$, then
\[
\llbracket\theta\rrbracket_{\eta[Z:=X]}
\subseteq
\llbracket\theta\rrbracket_{\eta[Z:=Y]}.
\]
Consequently every greatest or least fixed point used in the choice-refined invariance fragment is well defined and is reached by finite Kleene iteration.
\end{lemma}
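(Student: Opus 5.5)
The plan is a structural induction on $\theta$, proving the stronger statement for all positive-fragment formulas and all valuations $\eta$ simultaneously. For each subformula $\theta'$ and every pair of valuations $\eta\le\eta'$, I will show $\llbracket\theta'\rrbracket_\eta\subseteq\llbracket\theta'\rrbracket_{\eta'}$ whenever every variable assigned differently occurs only positively in $\theta'$. Here $\eta\le\eta'$ means $\eta(Y)\subseteq\eta'(Y)$ for each variable $Y$. The original claim is the special case $\eta[Z:=X]\le\eta[Z:=Y]$.

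The base cases are the following.
\begin{itemize}
\item For a base-language formula $\varphi$ (including $\neg p$, since negation only touches atoms), the denotation is independent of the valuation, so the inclusion is an equality.
\item For a variable $Z$, the denotation is $\eta(Z)\subseteq\eta'(Z)$ by assumption.
\end{itemize}

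The inductive cases for $\wedge$ and $\vee$ follow because intersection and union are monotone in each argument. For $[\exists]\Box\theta'$ I need monotonicity of $\Pre_{\exists\Box}$. If $X\subseteq Y$ and $s\in\Pre_{\exists\Box}(X)$, the witnessing $u$ has $T(s,u)\subseteq X\subseteq Y$, so $s\in\Pre_{\exists\Box}(Y)$. Composing this with the induction hypothesis gives the case.

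The binder case $\nu W.\theta'$ is the only delicate step. First, by the induction hypothesis applied with $W$ varied, each functional $F_\eta(X)=\llbracket\theta'\rrbracket_{\eta[W:=X]}$ is monotone. By Tarski's theorem it therefore has a greatest fixed point, which is $\bigcup\{X: X\subseteq F_\eta(X)\}$. Second, $\eta\le\eta'$ gives $F_\eta(X)\subseteq F_{\eta'}(X)$ pointwise, again by the induction hypothesis on $\theta'$. This holds when $W=Z$ too, since then $Z$ is rebound and the outer assignment is irrelevant. Every post-fixed point of $F_\eta$ is then post-fixed for $F_{\eta'}$, so $\nu F_\eta\subseteq\nu F_{\eta'}$. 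The main obstacle is bookkeeping. Positivity must be read as "positive for every free variable that changes," and the induction must quantify over all valuations so that the binder case has the hypothesis it needs. Once the statement is strengthened this way, the argument is routine.

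For the consequence, I will argue as in Theorem~\ref{thm:core-fixedpoint}. A monotone map on the finite lattice $2^S$ yields a descending chain $S\supseteq F(S)\supseteq F^2(S)\supseteq\cdots$, proved by induction from $F(S)\subseteq S$ and monotonicity. Each strict step removes at least one state, so the chain stabilizes after at most $|S|$ strict steps. Its limit $X^\ast$ satisfies $F(X^\ast)=X^\ast$. Moreover, every fixed point $Y$ satisfies $Y\subseteq F^n(S)$ for all $n$, by induction, so $X^\ast$ is the greatest fixed point. The dual argument, starting the ascending chain from $\varnothing$, handles least fixed points if any are used.
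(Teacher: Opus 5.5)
Your proof is correct and takes the same route as the paper: structural induction on $\theta$ using monotonicity of $\Pre_{\exists\Box}$, followed by finite Kleene iteration on $2^S$. The paper's proof only asserts the binder step. Your version makes it explicit by strengthening the induction to pointwise-ordered valuations and comparing post-fixed points in the $\nu W.\theta'$ case.
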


\begin{proof}
The proof is by structural induction on $\theta$.  Atomic formulas and constants do not depend on $Z$.  The variable case is exactly $X\subseteq Y$.  Conjunction and disjunction preserve inclusion.  For the choice-refined modality, if $s\in\Pre_{\exists\Box}(A)$ and $A\subseteq B$, then the same witnessing choice has $T(s,u)\subseteq A\subseteq B$, so $s\in\Pre_{\exists\Box}(B)$.  Thus $[\exists]\Box$ is monotone.  The fixed-point clauses are then monotone functionals on $2^S$, and the finite powerset lattice has no infinite strictly ascending or descending chain, so Kleene iteration reaches the corresponding fixed point in finitely many rounds.
\end{proof}

\begin{definition}[Plain modal satisfaction]
For $s\in S$, satisfaction is defined by
\[
s\models \top,
\qquad
s\models p \iff p\in L(s),
\]
with Boolean clauses as usual, and
\[
s\models \Box\varphi \iff \forall u\in A(s)\ \forall s'\in T(s,u),\ s'\models\varphi,
\]
\[
s\models \Diamond\varphi \iff \exists u\in A(s)\ \exists s'\in T(s,u),\ s'\models\varphi.
\]
\end{definition}

The ordinary box is a universal statement over choices and successors.  It is too strong for semantic construction because the semantics requires one certified choice to be selected.  The next operator fixes the quantifier order.

\begin{definition}[Certified necessity]\label{def:cn}
The certified necessity operator is defined by
\[
s\models [\exists]\Box\varphi
\iff
\exists u\in A(s)\ \forall s'\in T(s,u),\ s'\models\varphi.
\]
\end{definition}

The order of quantifiers is the semantic content of certified choice capacity: existential choice by the selector is followed by universal checking over the continuations still possible after that choice.  Reversing the order would describe a different ability notion.

\begin{remark}[Identity with the one-step coalition modality]\label{rem:atl}
On the one-step game arena of Definition~\ref{def:arena}, in which the selector moves first and the environment resolves the successor, $[\exists]\Box\varphi$ is the alternating-time and coalition-logic one-step modality for the singleton selector coalition,
\[
[\exists]\Box\varphi \;\equiv\; \langle\!\langle C\rangle\!\rangle\bigcirc\varphi,
\]
the assertion that the selector can ensure $\varphi$ to hold after one joint move \cite{alur2002,pauly2002}.  The existential-then-universal quantifier order is the strategic-power semantics of $\langle\!\langle\cdot\rangle\!\rangle\bigcirc$, and its set transformer is the admissible predecessor $\mathrm{CPre}$ of strategic modal semantics.  The notation $[\exists]\Box$ keeps this quantifier order visible inside the fixed-point and Bellman equations used below.
\end{remark}

\begin{definition}[Formula region]
The region induced by a formula $\varphi$ is
\[
\Sat(\varphi)=\{s\in S:s\models\varphi\}.
\]
\end{definition}

\begin{lemma}[Semantic clause for choice-refined necessity]
For every $X\subseteq S$, interpreting a formula variable $Z$ as $X$ gives
\[
\Sat([\exists]\Box Z)=\Pre_{\exists\Box}(X),
\]
where $\Pre_{\exists\Box}(X)=\{s:\exists u\in A(s),\ T(s,u)\subseteq X\}$.
\end{lemma}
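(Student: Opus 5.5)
The plan is to unfold the satisfaction clause for certified necessity (Definition~\ref{def:cn}) under the valuation $Z:=X$ and compare it, quantifier by quantifier, with the defining set of $\Pre_{\exists\Box}(X)$. The argument is a double inclusion at a fixed state $s$, using only the denotation convention of the positive fragment: a fixed-point variable is interpreted by its assigned set, so $s'\models Z$ means exactly $s'\in X$.

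First, let $s\in\Sat([\exists]\Box Z)$. By Definition~\ref{def:cn} there is a witness $u\in A(s)$ such that every $s'\in T(s,u)$ satisfies $Z$. Under the valuation this says $s'\in X$ for every such $s'$, that is, $T(s,u)\subseteq X$. Hence $s\in\Pre_{\exists\Box}(X)$.

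Conversely, let $s\in\Pre_{\exists\Box}(X)$. Fix a witness $u\in A(s)$ with $T(s,u)\subseteq X$. Then every $s'\in T(s,u)$ lies in $X$, so $s'\models Z$. The same $u$ therefore witnesses the existential-universal clause, and $s\models[\exists]\Box Z$.

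The only point requiring care is bookkeeping between the two readings of $[\exists]\Box$. The pointwise relation $\models$ is defined in Definition~\ref{def:cn}, while the denotational reading $\llbracket\cdot\rrbracket_\eta$ interprets $[\exists]\Box$ by $\Pre_{\exists\Box}$. I would state explicitly that $\Sat$ of a formula with a free variable means $\llbracket\cdot\rrbracket_{\eta[Z:=X]}$, and that on the variable case the two readings coincide by definition. Once this is said, the lemma also confirms that the pointwise and denotational semantics agree on the modal step. I expect no real obstacle; the proof amounts to matching quantifier order ($\exists u$ then $\forall s'$) on both sides.
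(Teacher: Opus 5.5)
Your proposal is correct and matches the paper's proof, which likewise unfolds Definition~\ref{def:cn} under the valuation $Z\mapsto X$ and observes that ``some $u\in A(s)$ with $s'\in X$ for every $s'\in T(s,u)$'' is exactly the defining condition $T(s,u)\subseteq X$ of $\Pre_{\exists\Box}(X)$. Your double inclusion with the same witness $u$, and your remark that $\Sat$ of a formula with a free variable is read through $\llbracket\cdot\rrbracket_{\eta[Z:=X]}$, spell out the bookkeeping that the paper leaves implicit.
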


\begin{proof}
By Definition~\ref{def:cn}, $s\models[\exists]\Box Z$ under the valuation $Z\mapsto X$ exactly when some certified choice $u$ satisfies $s'\in X$ for every $s'\in T(s,u)$.  This is precisely the defining condition $T(s,u)\subseteq X$ for membership in $\Pre_{\exists\Box}(X)$.
\end{proof}

\begin{proposition}[Certified-choice reading of the one-step modality]\label{prop:certified-reading}
For every continuation set $X\subseteq S$, membership $s\in\Pre_{\exists\Box}(X)$ is equivalent to the existence of a choice that the selector can justify by the proof obligation
\[
\forall s'\in T(s,u),\quad s'\in X.
\]
The proof object is local: it names the choice and checks every possible successor of that choice.  The same clause is the one-step ATL ability clause and the CPre clause; the semantic content here is that this clause becomes the admissibility test for all later value and observational indistinguishability layers.
\end{proposition}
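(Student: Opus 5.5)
The claim is essentially definitional, so I would argue by unfolding the definition of $\Pre_{\exists\Box}$ and then reading the resulting condition as a finite proof object.

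\textbf{Step 1: the equivalence.} By definition, $s\in\Pre_{\exists\Box}(X)$ holds exactly when there is $u\in A(s)$ with $T(s,u)\subseteq X$. The inclusion $T(s,u)\subseteq X$ is literally the universally quantified obligation $\forall s'\in T(s,u),\ s'\in X$. So membership in $\Pre_{\exists\Box}(X)$ holds iff some choice $u$ discharges that obligation. By the preceding lemma on the semantic clause for choice-refined necessity, this is also $s\models[\exists]\Box Z$ under the valuation $Z\mapsto X$. That gives the first sentence of the statement.

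\textbf{Step 2: locality of the proof object.} I would define the certificate as the pair $(u,\,T(s,u))$. Checking it requires only two things: verifying $u\in A(s)$, and testing each of the finitely many $s'\in T(s,u)$ for membership in $X$. Finiteness and nonemptiness of $T(s,u)$, from the definition of a modal choice structure, make this a terminating check. The check inspects only $s$, the choice $u$, and the one-step outcome set, so the certificate is local. This is the depth-one case of the certified trees of Lemma~\ref{lem:depth-tree}.

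\textbf{Step 3: identification with ATL and CPre.} Here I would cite Remark~\ref{rem:atl}. On the arena of Definition~\ref{def:arena}, the singleton-coalition next-step clause $\langle\!\langle C\rangle\!\rangle\bigcirc$ has the same existential-then-universal form as the clause above. The CPre set transformer is the same map $X\mapsto\{s:\exists u\ \forall s'\in T(s,u),\ s'\in X\}$.

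\textbf{Step 4: role as admissibility test.} I would point out that $F_G=G\cap\Pre_{\exists\Box}$ is built from exactly this clause. Theorem~\ref{thm:core-fixedpoint} turns it into $K_G$ and into the sets $\CAct_G$. Theorems~\ref{thm:core-value} and~\ref{thm:canonical-completion} then restrict all value backups to these witnesses.

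The main obstacle is not mathematical. It is pinning down what ``justify'' and ``proof object'' mean. I would fix them as the explicit pair in Step 2, so the equivalence becomes exact rather than informal.
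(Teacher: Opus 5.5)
Your proposal is correct and follows the paper's own argument. The equivalence is just the unfolding of the definition of $\Pre_{\exists\Box}$, the displayed universal successor check is the local witness, and the remaining claims only observe that later layers use choices carrying such witnesses. One small quibble: calling the certificate the depth-one case of Lemma~\ref{lem:depth-tree} is only a structural analogy, since that lemma's $G$-admissibility constrains the root and never tests the leaves against $X$, but nothing in your argument depends on it.
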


\begin{proof}
The equivalence is the definition of $\Pre_{\exists\Box}$.  The proof obligation displayed above is exactly the universal successor check that witnesses membership.  Since later layers restrict their choices to those with such witnesses, the same one-step proof is reused as a certificate of certified choice.
\end{proof}

A formula region can be true without being justifiably maintainable.  The next section turns one-step certifiability into an invariant admissible region by a greatest fixed point.

\section{The admissible-continuation kernel as a greatest fixed point}

The following construction transforms a symbolic target commitment into the subset from which the commitment can be maintained forever by choices that carry a local successor proof.

\begin{definition}[Admissible predecessor]
For $X\subseteq S$, define
\[
\Pre_{\exists\Box}(X)=\{s\in S:\exists u\in A(s)\text{ such that }T(s,u)\subseteq X\}.
\]
\end{definition}

\begin{definition}[Admissible continuation transformer]
For $G\subseteq S$, define
\[
F_G(X)=G\cap \Pre_{\exists\Box}(X).
\]
\end{definition}

\begin{definition}[Admissible-continuation kernel]
The admissible-continuation kernel, also written by the underlying admissible continuation operator, is
\[
\Viab(G)=\nu X.F_G(X),
\]
where $\nu$ denotes the greatest fixed point.
\end{definition}

The order-theoretic existence of this greatest fixed point is supported by Tarski's lattice theorem \cite{tarski1955}.  The transformer $F_G$ is the admissible-predecessor operator of modal semantic construction, and $\Viab(G)$ is the maximal choice-refined-invariant subset of $G$: the largest set on which the selector can indefinitely keep the state, computed by the descending admissible-predecessor iteration that is standard and in finite game semantics.  The same construction is admissible-continuation computation on a finite labelled structure.

\begin{lemma}[Monotonicity of the admissible continuation transformer]
If $X\subseteq Y$, then $F_G(X)\subseteq F_G(Y)$.
\end{lemma}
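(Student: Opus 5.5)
The plan is to reduce the claim to monotonicity of $\Pre_{\exists\Box}$, since $F_G$ is an intersection of that operator with the fixed set $G$, and intersection with a fixed set preserves inclusion.

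\emph{Step 1: monotonicity of $\Pre_{\exists\Box}$.} Assume $X\subseteq Y$ and let $s\in\Pre_{\exists\Box}(X)$. By the definition of the admissible predecessor there is a witness $u\in A(s)$ with $T(s,u)\subseteq X$. The same $u$ gives $T(s,u)\subseteq Y$ by transitivity of inclusion, so $s\in\Pre_{\exists\Box}(Y)$. Hence $\Pre_{\exists\Box}(X)\subseteq\Pre_{\exists\Box}(Y)$.

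\emph{Step 2: intersecting with $G$.} Intersect both sides of the inclusion from Step 1 with $G$. This gives
\[
F_G(X)=G\cap\Pre_{\exists\Box}(X)\subseteq G\cap\Pre_{\exists\Box}(Y)=F_G(Y).
\]

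\emph{Main obstacle.} There is no real obstacle. The only point to check is that the argument reuses the witnessing choice $u$ unchanged. This works because the quantifier structure of $\Pre_{\exists\Box}$ is ``exists a choice, then for all successors'', and the universal part only becomes easier to satisfy when the target set grows. The same observation already appears in the modal case of the monotonicity lemma for the positive choice-refined fragment, so I could instead cite that lemma with $\theta=\varphi_G\wedge[\exists]\Box Z$. I will nevertheless give the direct two-line argument above for self-containment.
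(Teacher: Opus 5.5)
Your proof is correct and is essentially the paper's argument: the paper likewise takes $s\in F_G(X)$, keeps the witnessing choice $u$ with $T(s,u)\subseteq X\subseteq Y$, and concludes $s\in G\cap\Pre_{\exists\Box}(Y)$, only without first isolating monotonicity of $\Pre_{\exists\Box}$ as a separate step. Your direct argument is also the better choice over the alternative citation you mention, since that route would need $G$ to be definable by a formula $\varphi_G$, whereas the lemma holds for arbitrary $G\subseteq S$.
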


\begin{proof}
Let $s\in F_G(X)$.  Then $s\in G$ and there exists $u\in A(s)$ such that $T(s,u)\subseteq X$.  Since $X\subseteq Y$, the same choice satisfies $T(s,u)\subseteq Y$.  Hence $s\in G\cap \Pre_{\exists\Box}(Y)$.  Therefore $s\in F_G(Y)$.
\end{proof}

\begin{lemma}[Descending Kleene sequence]
Let $X_0=S$ and $X_{n+1}=F_G(X_n)$.  Since $S$ is finite, there exists $N\leq |S|$ such that $X_N=X_{N+1}$.
\end{lemma}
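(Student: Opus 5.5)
The plan is to show that the sequence is descending and then use finiteness of $S$ to force it to stop. First, the sequence is descending: $X_1=F_G(S)\subseteq S=X_0$ holds trivially. Assuming $X_{n+1}\subseteq X_n$, the monotonicity lemma for the admissible continuation transformer gives $X_{n+2}=F_G(X_{n+1})\subseteq F_G(X_n)=X_{n+1}$. Induction on $n$ therefore yields the chain
\[
S=X_0\supseteq X_1\supseteq X_2\supseteq\cdots.
\]

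Next I will show that once the chain stops strictly descending, it is constant from then on. If $X_N=X_{N+1}$ for some $N$, applying $F_G$ to both sides gives $X_{N+1}=X_{N+2}$, and induction gives $X_n=X_N$ for all $n\geq N$. This means every step before the first equality is a strict inclusion $X_{n+1}\subsetneq X_n$.

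The counting step is the main point. Let $N$ be the least index with $X_N=X_{N+1}$, and suppose for contradiction that no such $N\leq|S|$ exists. Then the inclusions $X_{n+1}\subsetneq X_n$ are strict for $n=0,\dots,|S|$. Each strict inclusion lowers the cardinality by at least one, so $|X_{|S|+1}|\leq|S|-(|S|+1)<0$, which is impossible. Hence some $N\leq|S|$ satisfies $X_N=X_{N+1}$.

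I do not expect a real obstacle here. The only care needed is the off-by-one bookkeeping: $|S|$ strict removals take the chain from $X_0$ to $X_{|S|}=\varnothing$ at worst, and the next application of $F_G$ to $\varnothing$ returns $\varnothing$, because every state has nonempty $A(s)$ and nonempty successor sets. So equality is reached with $N\leq|S|$, matching the bound in Theorem~\ref{thm:core-fixedpoint}. If useful, I would also note that $X_N$ is then the greatest fixed point, since any fixed point $Y$ satisfies $Y\subseteq X_n$ for all $n$ by induction; the lemma itself, however, only asks for stabilization.
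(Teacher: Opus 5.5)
Your proof is correct and follows the same route as the paper: the chain descends because $F_G$ is monotone and starts at the top element $S$, and finiteness of $S$ bounds the number of strict removals, which your contradiction argument (that $|S|+1$ strict inclusions would force $|X_{|S|+1}|<0$) makes explicit. The closing remark about $F_G(\varnothing)=\varnothing$ is harmless but unnecessary, since descent alone already gives $X_{|S|+1}\subseteq X_{|S|}$, and nonemptiness of $A(s)$ plays no role there.
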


\begin{proof}
The sequence is descending because $F_G(X)\subseteq G\subseteq S$ after the first step and monotonicity preserves inclusion.  A strictly descending sequence of subsets of a finite set has length at most $|S|$.  Hence the sequence stabilizes at some $N\leq |S|$.  At stabilization, $X_N=F_G(X_N)$.
\end{proof}

\begin{proposition}[Finite model-checking and selector extraction]\label{prop:extract}
Let $X_0=S$ and $X_{n+1}=F_G(X_n)$, and let $N$ be the first index with $X_N=X_{N+1}$.  Then $X_N=\Viab(G)$.  Moreover, for every $s\in X_N$ any stored witness $u_s\in A(s)$ with $T(s,u_s)\subseteq X_N$ defines a memoryless selector that keeps all plays inside $G$.
\end{proposition}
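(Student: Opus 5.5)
The plan is to split the statement into two parts: first that the stabilized iterate $X_N$ equals $\Viab(G)$, and then that any stored witness gives a selector keeping all plays inside $G$. For the first part, I will show that $X_N$ is both a fixed point of $F_G$ and above every post-fixed point. By the Descending Kleene sequence lemma, $X_N=F_G(X_N)$, so $X_N$ is a fixed point. Since $\Viab(G)$ is the greatest fixed point, this gives $X_N\subseteq\Viab(G)$.

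For the reverse inclusion, I will prove by induction on $n$ that $\Viab(G)\subseteq X_n$ for every $n$. The base case holds because $X_0=S$. For the inductive step, monotonicity of $F_G$ (the Monotonicity lemma) gives
\[
\Viab(G)=F_G(\Viab(G))\subseteq F_G(X_n)=X_{n+1}.
\]
Taking $n=N$ yields $\Viab(G)\subseteq X_N$, so $X_N=\Viab(G)$. I could instead cite Theorem~\ref{thm:core-fixedpoint} directly, since $\Viab(G)$ and $K_G$ are the same object. However, the self-contained induction is short and matches the descending-sequence lemma, so I will give it.

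For selector extraction, a witness $u_s$ exists for each $s\in X_N$ because $X_N=G\cap\Pre_{\exists\Box}(X_N)$; this is also the Certified-choice restriction lemma with $V=X_N$. I will define $\pi(s)=u_s$ on $X_N$ and extend $\pi$ arbitrarily within $A(s)$ off $X_N$. For any selector path $(s_0,u_0,s_1,\dots)$ that starts in $X_N$ and follows $\pi$, I will show by induction on $t$ that $s_t\in X_N$. If $s_t\in X_N$, then $u_t=u_{s_t}$ and $s_{t+1}\in T(s_t,u_{s_t})\subseteq X_N$; this is the One-step admissible closure lemma. Since $X_N\subseteq G$, every play remains in $G$. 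The selector is memoryless by construction, because it depends only on the current state.

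I do not expect a real obstacle. The one point needing care is the quantification in ``keeps all plays inside $G$'': the claim holds for plays starting in $X_N$, and it holds for every compatible successor resolution. Universality over $T(s,u_s)$ is exactly what the inclusion $T(s,u_s)\subseteq X_N$ delivers, so no assumption on how the resolver chooses is needed.
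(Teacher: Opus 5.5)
Your proposal is correct and is essentially the paper's argument: stabilization makes $X_N$ a fixed point, and an induction along the descending chain places the greatest fixed point below every $X_n$. The selector part also matches, with the stored witnesses keeping plays in $X_N\subseteq G$ by induction on time; the only difference is that the paper runs the chain induction for an arbitrary post-fixed point $Y\subseteq F_G(Y)$, which gives greatestness without first invoking Tarski for the existence of $\Viab(G)$, whereas you run it for $\Viab(G)$ itself.
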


\begin{proof}
By the descending Kleene lemma, the sequence stabilizes at $X_N=F_G(X_N)$; hence $X_N$ is a fixed point.  Since the sequence starts at the top element $S$ and $F_G$ is monotone, every post-fixed point $Y\subseteq F_G(Y)$ is contained in every $X_n$ by induction: $Y\subseteq X_0$, and $Y\subseteq F_G(Y)\subseteq F_G(X_n)=X_{n+1}$.  Thus $Y\subseteq X_N$, so $X_N$ is the greatest fixed point.  The witness condition is exactly the fixed-point equality $X_N=G\cap\Pre_{\exists\Box}(X_N)$.  If a play starts in $X_N$ and the selector chooses $u_s$ at each current state $s$, every successor remains in $X_N$; induction gives invariance, and $X_N\subseteq G$ gives invariance.
\end{proof}

\begin{theorem}[Admissible continuation semantic construction theorem]
A state $s$ belongs to $\Viab(G)$ if and only if there exists a state-based selector $\pi$ such that every path generated by $\pi$ from $s$ remains in $G$ forever.
\end{theorem}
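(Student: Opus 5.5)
The plan is to prove the two implications separately, using Proposition~\ref{prop:extract} for the forward direction and the post-fixed-point characterization of the greatest fixed point (already used in Theorem~\ref{thm:core-fixedpoint}) for the converse.

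\emph{Forward direction.} Let $s\in\Viab(G)$. By the fixed-point equality $\Viab(G)=G\cap\Pre_{\exists\Box}(\Viab(G))$ and the certified-choice restriction lemma, every $x\in\Viab(G)$ has a choice $u_x\in A(x)$ with $T(x,u_x)\subseteq\Viab(G)$. By the finite selector lemma (or simply finiteness of $A(x)$), fix one such $u_x$ for each $x$, and extend arbitrarily outside $\Viab(G)$ to obtain a state-based selector $\pi$. I then show by induction on $t$ that every selector path $(s_0,u_0,s_1,\dots)$ generated by $\pi$ from $s_0=s$ satisfies $s_t\in\Viab(G)$. The base case is the hypothesis, and the step is the one-step admissible closure lemma applied to $s_{t+1}\in T(s_t,\pi(s_t))$. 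Since $\Viab(G)\subseteq G$, the whole path stays in $G$.

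\emph{Converse.} Suppose $\pi$ is a state-based selector such that every $\pi$-path from $s$ stays in $G$. Let $Y$ be the set of states reachable from $s$ along $\pi$-paths, with $s\in Y$. The claim is that $Y$ is post-fixed, i.e.\ $Y\subseteq F_G(Y)$.
\begin{itemize}
\item $Y\subseteq G$, because each $y\in Y$ lies on some $\pi$-path from $s$. A finite prefix reaching $y$ extends to an infinite path, since $A$ and $T$ are nonempty.
\item $T(y,\pi(y))\subseteq Y$, since successors of reachable states under $\pi$ are again reachable. Hence $y\in\Pre_{\exists\Box}(Y)$.
\end{itemize}
Therefore $Y\subseteq F_G(Y)$. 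By the argument in Proposition~\ref{prop:extract}, every post-fixed point is contained in every Kleene approximant, so $Y\subseteq\Viab(G)$ and in particular $s\in\Viab(G)$.

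The main subtlety is bookkeeping rather than depth. In the converse, the hypothesis quantifies over all paths from $s$ only, not from every state, which is why I restrict to the reachable set $Y$ instead of using all states where $\pi$ is defined. I also need the extension of finite prefixes to infinite paths, which uses the standing nonemptiness of $A(s)$ and $T(s,u)$. The forward direction needs care that $\pi$ is a total function, which is why it is extended arbitrarily outside $\Viab(G)$; its values there are irrelevant because paths never leave the kernel.
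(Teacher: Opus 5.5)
Your proof is correct and follows the paper's approach. The forward direction is the same fixed-point witness selector with induction, and the converse likewise exhibits a post-fixed point of $F_G$ containing $s$ and invokes greatest-fixed-point maximality. The only difference is the choice of post-fixed point, and either works: you use the $\pi$-reachable set from $s$, which needs the extension of finite prefixes via nonemptiness, while the paper uses the set $W$ of all states from which some state-based selector succeeds, which needs the observation that suffixes of winning paths are winning.
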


\begin{proof}
Let $V=\Viab(G)$.  Since $V=F_G(V)$, each $x\in V$ lies in $G$ and admits a choice $u_x\in A(x)$ satisfying $T(x,u_x)\subseteq V$.  Define $\pi(x)=u_x$ for $x\in V$ and define it arbitrarily outside $V$.  If $s_0\in V$ and $s_{t+1}\in T(s_t,\pi(s_t))$, then $s_{t+1}\in V$ whenever $s_t\in V$.  By induction, every state in the path remains in $V$.  Since $V\subseteq G$, the path remains in $G$ forever.

Conversely, suppose a state-based selector keeps every possible path from $s$ in $G$.  Let $W$ be the set of states from which the same property is possible under some state-based selector.  For every $x\in W$, there exists a choice $u\in A(x)$ such that every successor lies again in $W$.  Also $x\in G$.  Thus $W\subseteq F_G(W)$.  By monotonicity and greatest-fixed-point maximality, $W\subseteq \Viab(G)$.  Hence $s\in \Viab(G)$.
\end{proof}

\begin{remark}
The theorem says that truth at a state is weaker than durable truth under choice.  A state may satisfy a formula and still be outside $\Viab(\Sat(\varphi))$.  This separates satisfaction from certifiability: truth is local, while admissible continuation is a strategic invariant.
\end{remark}

\section{The certified-choice kernel as an invariance game}

The same construction has a turn-based game reading that connects it to the semantic construction literature.  Two players act on $(S,U,A,T)$.  At state $s$ the \emph{selector} chooses $u\in A(s)$, and then the \emph{resolver} chooses a successor $s'\in T(s,u)$.  A play is an infinite choice-refined path.  The selector's objective is the invariance condition that every visited state lies in $G$.

\begin{definition}[Certified region]
A state $s$ is \emph{selector-admissible} for $G$ if the selector has a strategy ensuring that every resulting play stays in $G$ forever.  The admissible region $\mathrm{Win}(G)$ is the set of selector-admissible states.
\end{definition}

\begin{theorem}[Kernel equals admissible region, with memoryless strategies]
$\mathrm{Win}(G)=\Viab(G)$, and whenever $s\in\Viab(G)$ the selector satisfies the invariant with the memoryless state-based strategy $\pi(x)\in A_V(x)$.  Dually, $S\setminus\Viab(G)$ is the resolver's region, equal to the least fixed point of the dual attractor
\begin{align*}
\Attr(Y)&=\bigl(S\setminus G\bigr)\cup
\bigl\{s\in S:\forall u\in A(s)\ \exists s'\in T(s,u)\text{ with }s'\in Y\bigr\},\\
S\setminus\Viab(G)&=\mu Y.\Attr(Y).
\end{align*}
\end{theorem}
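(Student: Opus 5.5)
The proof has three parts: $\mathrm{Win}(G)=\Viab(G)$ via memoryless strategies, then a Kleene-chain complement argument for the attractor, and finally the identification of the attractor with the resolver's region.

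\emph{Winning region.} For $\Viab(G)\subseteq\mathrm{Win}(G)$, I invoke Proposition~\ref{prop:extract}. Any stored witness $\pi(x)\in A_V(x)$ for $x\in V=\Viab(G)$ is a memoryless strategy. By the one-step admissible closure lemma, every play consistent with $\pi$ stays in $V\subseteq G$. This already gives the memoryless clause.

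For the converse $\mathrm{Win}(G)\subseteq\Viab(G)$, I cannot simply cite the earlier semantic construction theorem, because that theorem quantifies over state-based selectors, whereas $\mathrm{Win}$ allows history-dependent strategies. Instead I show directly that $W=\mathrm{Win}(G)$ is a post-fixed point of $F_G$. If $s\in W$ with winning strategy $\sigma$, then $s\in G$, since the play starts at $s$. Let $u=\sigma(s)$. For every $s'\in T(s,u)$, the residual strategy $\sigma_{s u s'}$, defined by $h\mapsto\sigma(s\,u\,h)$, wins from $s'$. Hence $T(s,u)\subseteq W$, so $W\subseteq F_G(W)$. Tarski maximality then gives $W\subseteq\Viab(G)$.

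\emph{Attractor characterization.} I first check that $\Attr$ is monotone, which holds because the universal-existential clause is upward closed in $Y$. I then show by induction that the ascending chain $Y_0=\varnothing$, $Y_{n+1}=\Attr(Y_n)$ is the complement of the descending chain $X_0=S$, $X_{n+1}=F_G(X_n)$, that is, $Y_n=S\setminus X_n$. The induction step is a De Morgan computation:
\[
S\setminus\bigl(G\cap\Pre_{\exists\Box}(X)\bigr)=(S\setminus G)\cup\{s:\forall u\in A(s)\ T(s,u)\not\subseteq X\},
\]
and $T(s,u)\not\subseteq X$ is exactly $\exists s'\in T(s,u)$ with $s'\in S\setminus X$. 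So $\Attr(S\setminus X)=S\setminus F_G(X)$ for every $X$.

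Both chains stabilize within $|S|$ steps by the descending Kleene lemma and its dual. Since complementation is an order-reversing bijection that conjugates $F_G$ to $\Attr$, fixed points correspond, and the greatest fixed point of $F_G$ maps to the least fixed point of $\Attr$. This gives $S\setminus\Viab(G)=\mu Y.\Attr(Y)$.

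\emph{Resolver's region.} It remains to say that this set is the resolver's region, meaning the resolver can force a visit outside $G$ from every state in it. I use the rank function of Proposition~\ref{prop:rank-trace}, or equivalently the attractor stage index. At a state of finite attractor rank $n+1$, either the state is already outside $G$, or every selector choice admits a successor of rank at most $n$. The resolver plays that successor, and the rank strictly decreases until the play leaves $G$. This is also a memoryless strategy. Combined with the first part, this shows the two regions partition $S$, which is determinacy.

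\emph{Main obstacle.} I expect the main obstacle to be the step $\mathrm{Win}\subseteq\Viab$ for history-dependent strategies, specifically making the residual-strategy argument precise. The attractor part is a routine duality computation once the complement identity is verified.
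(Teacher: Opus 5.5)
Your proof is correct and follows essentially the same route as the paper. Both argue the forward inclusion by using stored kernel witnesses as a memoryless selector, the converse by showing the winning region is a post-fixed point of $F_G$ and invoking greatest-fixed-point maximality, and the attractor identity by the De Morgan conjugation $\Attr(S\setminus X)=S\setminus F_G(X)$, which exchanges least and greatest fixed points. Your residual-strategy argument and your rank-decreasing resolver strategy make explicit two points the paper leaves implicit: the converse must cover history-dependent selector strategies (the converse the paper cites is stated only for state-based selectors), and the resolver actually forces an exit from $G$ on the complement, which is what justifies calling it the resolver's region and concluding determinacy.
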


\begin{proof}
The inclusion $\Viab(G)\subseteq\mathrm{Win}(G)$, witnessed by a memoryless strategy, is the forward direction of the admissible-kernel theorem.  The reverse inclusion is its converse direction: the set of admissible states is a post-fixed point of $F_G$, hence contained in the greatest fixed point $\Viab(G)$.  For the dual, observe that $\Attr$ is the De~Morgan dual of $F_G$ under complementation.  A state belongs to $\Attr(Y)$ exactly when it is outside $G$, or every certified choice admits an universal successor already in $Y$; complementing both sides turns this into the statement that a state survives in $F_G(S\setminus Y)$.  Complementation exchanges the least and greatest fixed points of dual monotone maps, so $\mu Y.\Attr(Y)=S\setminus\nu X.F_G(X)=S\setminus\Viab(G)$.  Both computations are constructive: the descending kernel iteration and the ascending attractor iteration each terminate within $|S|$ rounds and partition $S$ into the two admissible regions, so the game is determined.
\end{proof}

\begin{remark}
This is the invariance fragment of the determinacy and finite-memory theory for $\omega$-regular games \cite{buchi1969,emerson1991,gradel2002}.  Both the memorylessness of invariance strategies and the kernel/attractor duality used in the proof are textbook facts: the admissible region is the greatest fixed point of the admissible predecessor and its complement is the least fixed point of the dual dual attractor, the standard backward computation for finite strategic modalities.  Invariance objectives need no memory.  Richer temporal objectives reduce to invariance through the automaton product of Appendix~C, which enlarges the state space but does not change the present fixed-point reasoning.
\end{remark}

\section{Observation structures and information-state value}\label{sec:information state}

The full-information theory takes the current state as the semantic point.  This section lifts the construction to partial observation.  The state is replaced by an information state, the admissible-continuation kernel becomes a greatest fixed point on information states, and the quantitative layer becomes a universal discounted Bellman equation on the information state game.  The full-information theory reappears as the singleton instance.

\subsection{Observation and knowledge}

\begin{definition}[Observation structure]
An \emph{observation structure} on the arena is a surjection $\obs:S\to O$ onto a finite observation set $O$.  The induced indistinguishability relation is
\[
s\approx t \quad\Longleftrightarrow\quad \obs(s)=\obs(t).
\]
Its equivalence classes are the observation fibres $\obs^{-1}(o)$.
\end{definition}

The selector acts on observations and observation histories.  In the memoryless information state presentation used below, the sufficient statistic for such a history is the current set of states still possible.

\begin{definition}[Knowledge modality]
Extend the modal language with a unary operator $K$ by
\[
s\models K\varphi\quad\Longleftrightarrow\quad t\models\varphi\ \text{for every }t\approx s.
\]
Thus $\Sat(K\varphi)$ is the union of the observation fibres contained in $\Sat(\varphi)$.
\end{definition}

\begin{proposition}[Knowledge is $S5$]\label{prop:s5}
The operator $K$ validates
\[
K\varphi\to\varphi,
\qquad
K\varphi\to KK\varphi,
\qquad
\neg K\varphi\to K\neg K\varphi.
\]
\end{proposition}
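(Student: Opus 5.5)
The plan is to derive all three schemata from the single fact that $\approx$ is an equivalence relation. By the definition of the observation structure, $s\approx t$ holds iff $\obs(s)=\obs(t)$, and equality of images under a function is reflexive, symmetric and transitive. The rest is the standard correspondence between the S5 axioms and an equivalence accessibility relation. I will phrase each step in terms of the regions $\Sat(\cdot)$ and use the observation already recorded after the definition of $K$: $\Sat(K\varphi)$ is the union of the observation fibres contained in $\Sat(\varphi)$. In particular, $\Sat(K\varphi)$ is always a union of fibres.

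\emph{Axiom T}, $K\varphi\to\varphi$. Suppose $s\models K\varphi$. Reflexivity gives $s\approx s$, so instantiating the universal clause at $t=s$ yields $s\models\varphi$.

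\emph{Axiom 4}, $K\varphi\to KK\varphi$. Suppose $s\models K\varphi$ and take any $t\approx s$. For every $r\approx t$, transitivity (together with symmetry to orient the pair) gives $r\approx s$, hence $r\models\varphi$. So $t\models K\varphi$ for every $t\approx s$, which is $s\models KK\varphi$. In fibre language this is immediate: $\Sat(K\varphi)$ is a union of fibres, and a union of fibres is fixed by the operator sending a set to the union of the fibres it contains.

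\emph{Axiom 5}, $\neg K\varphi\to K\neg K\varphi$. Suppose $s\not\models K\varphi$, so some $r\approx s$ has $r\not\models\varphi$. Take any $t\approx s$. By symmetry and transitivity $r\approx t$, so $t\not\models K\varphi$; this gives $s\models K\neg K\varphi$. Equivalently, the complement of a union of fibres is again a union of fibres, so $\Sat(\neg K\varphi)$ is fibre-saturated and therefore fixed by $K$.

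There is no substantive obstacle. The one point to check is that negation is applied correctly. The base language only has negation on atoms, so I will read $\neg$ in the axioms as semantic complement of $\Sat$, i.e.\ classical evaluation at a point. The argument above uses only that reading and never needs $\neg\varphi$ to be a formula of the positive fragment.
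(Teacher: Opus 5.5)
Your proposal is correct and follows essentially the same route as the paper: both derive the three schemata from $\approx$ being an equivalence relation, with reflexivity giving T, transitivity giving 4, and symmetry plus transitivity showing that failure of $K\varphi$ spreads across the whole fibre for 5. Your fibre-saturation rephrasing and your remark on reading $\neg$ semantically are harmless additions the paper leaves implicit.
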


\begin{proof}
The relation $\approx$ is reflexive, symmetric, and transitive.  Reflexivity gives truth.  Transitivity gives positive introspection.  If $K\varphi$ fails at $s$, some $r\approx s$ violates $\varphi$; every $t\approx s$ has $r\approx t$, so $K\varphi$ fails throughout the fibre, giving negative introspection.
\end{proof}

\subsection{Information-state dynamics}

\begin{definition}[Information state, admissibility, and update]\label{def:information state}
A \emph{information state} is a nonempty subset $b\subseteq S$ contained in a single observation fibre.  Let
\[
A(b)=\bigcap_{s\in b}A(s)
\]
be the choices available at every state the selector considers possible.  For $u\in A(b)$ and $o'\in O$, define
\[
\bpost(b,u,o')=\{\,s'\in T(s,u):s\in b,\ \obs(s')=o'\,\}.
\]
Empty successor information states are ignored, and
\[
\Post_B(b,u)=\{\,\bpost(b,u,o'):o'\in O,\ \bpost(b,u,o')\neq\varnothing\,\}
\]
is the finite set of possible next information states.
\end{definition}

\begin{lemma}[Information-state soundness]\label{lem:information statesound}
If $s_0\in b_0$ and the selector repeatedly selects $u_t\in A(b_t)$ and updates $b_{t+1}=\bpost(b_t,u_t,\obs(s_{t+1}))$, then $s_t\in b_t$ for every $t$.
\end{lemma}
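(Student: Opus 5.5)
The plan is induction on $t$, tracking the invariant $s_t\in b_t$ together with the side condition that $b_t$ is a genuine information state: nonempty and contained in a single observation fibre. I would first make explicit the implicit hypotheses of the statement. The initial information state $b_0$ is assumed to be an information state, and the observed run satisfies $s_{t+1}\in T(s_t,u_t)$, since it is a selector path. Choices are drawn from $A(b_t)$, so $u_t$ is actually available at the true state, which is what makes $s_{t+1}$ well defined.

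The base case $t=0$ is the hypothesis $s_0\in b_0$. For the inductive step, assume $s_t\in b_t$ and $u_t\in A(b_t)$. By Definition~\ref{def:information state}, $A(b_t)=\bigcap_{s\in b_t}A(s)$, so in particular $u_t\in A(s_t)$. Hence $T(s_t,u_t)$ is defined and nonempty, and the successor $s_{t+1}$ is chosen in it. Now unfold the update:
\[
b_{t+1}=\bpost\bigl(b_t,u_t,\obs(s_{t+1})\bigr)=\{s'\in T(s,u_t):s\in b_t,\ \obs(s')=\obs(s_{t+1})\}.
\]
Take $s=s_t\in b_t$ and $s'=s_{t+1}\in T(s_t,u_t)$. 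The observation condition $\obs(s_{t+1})=\obs(s_{t+1})$ holds trivially, so $s_{t+1}\in b_{t+1}$.

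To keep the induction well-typed, I would also check that $b_{t+1}$ is again an information state. It is nonempty because it contains $s_{t+1}$. It lies in the single fibre $\obs^{-1}(\obs(s_{t+1}))$ by construction. This also shows that the update never produces one of the empty successor sets that Definition~\ref{def:information state} discards, so $b_{t+1}\in\Post_B(b_t,u_t)$.

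The only delicate point is this typing issue rather than the membership itself. The statement quantifies over selections $u_t\in A(b_t)$, and for that to make sense at every stage, $b_t$ must remain a nonempty single-fibre set. Once the invariant is strengthened to carry this property, the argument closes by a single induction on $t$.
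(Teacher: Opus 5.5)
Your proof is correct and follows the paper's argument: induction on $t$, with the step read off directly from the definition of the update, using the pair $s_t\in b_t$, $s_{t+1}\in T(s_t,u_t)$ as the witness. Your extra check that each $b_{t+1}$ is nonempty and lies in a single observation fibre (so that $A(b_{t+1})$ makes sense at the next stage) is a harmless strengthening that the paper leaves implicit.
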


\begin{proof}
The induction step is immediate from the definition of $\bpost$: if $s_t\in b_t$ and $s_{t+1}\in T(s_t,u_t)$ with observation $\obs(s_{t+1})$, then $s_{t+1}$ is one of the states collected in the updated information state.
\end{proof}

\subsection{Information-state admissible-continuation kernel}

Let $\Bel_G=\{b\subseteq G:b\neq\varnothing\text{ and }b\text{ lies in one observation fibre}\}$.

\begin{definition}[Observational admissible predecessor and kernel]\label{def:obskernel}
For $\mathcal B\subseteq\Bel_G$, define
\[
F^{\obs}_{G}(\mathcal B)=\{\,b\in\Bel_G: \exists u\in A(b)\ \text{such that}\ \Post_B(b,u)\subseteq\mathcal B\,\}.
\]
The \emph{observational admissible-continuation kernel} is
\[
\Viab_{\obs}(G)=\nu\mathcal B.\,F^{\obs}_{G}(\mathcal B).
\]
For $W=\Viab_{\obs}(G)$ define the certified information state-choice set
\[
A_W(b)=\{u\in A(b):\Post_B(b,u)\subseteq W\}.
\]
\end{definition}

\begin{theorem}[Observation-based invariance semantic construction]\label{thm:obs-invariance}
The set $W=\Viab_{\obs}(G)$ is exactly the set of information states from which an observation-based selector can keep the true state in $G$ forever against all successor choices.  Moreover $A_W$ is the largest observation-based justification domain map.
\end{theorem}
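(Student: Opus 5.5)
The proof will mirror the full-information admissible-continuation theorem, lifted to the finite lattice $2^{\Bel_G}$. Write $W=\Viab_{\obs}(G)$. First, $F^{\obs}_G$ is monotone: a witness $u$ with $\Post_B(b,u)\subseteq\mathcal B$ also witnesses $\Post_B(b,u)\subseteq\mathcal B'$ whenever $\mathcal B\subseteq\mathcal B'$. By Tarski, and by finiteness of $\Bel_G$ for the Kleene chain, $W$ exists and satisfies $W=F^{\obs}_G(W)$. From this equality, every $b\in W$ lies in $\Bel_G$, so $b\subseteq G$, and $A_W(b)\neq\varnothing$, exactly as in the certified-choice restriction lemma.

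For \emph{soundness}, define an information-state selector by $\pi(b)\in A_W(b)$ for $b\in W$. This is observation-based, because the information state is computed from the initial information state, the selected choices, and the observed outcomes only. Start the play from $b_0\in W$ with the true state $s_0\in b_0$, and update along the play as in Lemma~\ref{lem:information statesound}. The next information state $b_{t+1}=\bpost(b_t,u_t,\obs(s_{t+1}))$ is nonempty, since it contains $s_{t+1}$, so it belongs to $\Post_B(b_t,u_t)\subseteq W$. By induction $b_t\in W$ for all $t$, and Lemma~\ref{lem:information statesound} then gives $s_t\in b_t\subseteq G$. The choice $u_t$ is available at the true state because $u_t\in A(b_t)=\bigcap_{s\in b_t}A(s)$.

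For \emph{completeness}, let $W'$ be the set of information states $b\in\Bel_G$ from which some observation-based selector keeps every compatible true state in $G$ forever. I will show $W'\subseteq F^{\obs}_G(W')$; greatestness then gives $W'\subseteq W$. Take $b\in W'$ with winning selector $\sigma$, and let $u$ be its first choice. This choice must lie in $A(b)$, since otherwise it is unavailable at some true state in $b$. Fix an observation $o'$ with $b'=\bpost(b,u,o')\neq\varnothing$, and consider the residual selector $\sigma$ after the history $(\obs(b),u,o')$. Every $s'\in b'$ is reachable as a true state consistent with that history, so every play continuing from any $s'\in b'$ is a suffix of a play $\sigma$ wins. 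Hence the residual selector wins from $b'$, which gives $b'\subseteq G$ and $b'\in W'$. Therefore $\Post_B(b,u)\subseteq W'$, as required.

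The main obstacle is this completeness step. The first issue is stating precisely what "observation-based selector" means, namely a function of observation histories. The second is checking that the residual strategy depends only on the observation history and not on which $s\in b$ was the true state; the information state is exactly the set of true states consistent with that history, which is what makes this go through. Once $W'\subseteq W$ and soundness are in hand, the claim about the justification domain follows from the same post-fixed-point argument as Theorem~\ref{thm:canonical-completion}. Suppose $(\mathcal D,C)$ has $C(b)\subseteq A(b)$ nonempty and $\Post_B(b,u)\subseteq\mathcal D$ for every $u\in C(b)$. Then $\mathcal D\subseteq F^{\obs}_G(\mathcal D)$, so $\mathcal D\subseteq W$, and $C(b)\subseteq A_W(b)$. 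Finally, $A_W$ itself is such a map by the fixed-point equality, so it is the largest one.
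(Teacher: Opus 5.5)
Your proof is correct. The soundness half coincides with the paper's: a selector valued in $A_W$, the fixed-point equation keeping $b_t\in W$, and the information-state soundness lemma keeping the true state in $b_t\subseteq G$.

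The completeness half takes a different route. The paper argues contrapositively. For $b\notin W$ it uses the stage at which $b$ drops out of the descending Kleene chain to build a finite response tree, i.e.\ a resolver strategy that, by induction on this elimination rank, drives every observation-based selector to a violation. It then reads off maximality of $A_W$ semantically: a choice outside $A_W(b)$ has a successor information state outside $W$, from which no strategy wins. You instead show directly that the set $W'$ of winnable information states is a post-fixed point of $F^{\obs}_G$, using the residual strategy after the observation history $(\obs(b),u,o')$, and conclude by greatestness. This is exactly the pattern of the converse in the paper's full-information kernel theorem.

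Your route is shorter and avoids rank bookkeeping. It also makes explicit a point the paper leaves tacit: $\bpost(b,u,o')$ is precisely the set of true states consistent with the observation history, so residual strategies remain observation-based and must win from every member of $b'$. The paper's route yields more: one explicit environment strategy that defeats all selectors within rank-many steps. That is a determinacy statement for the information-state game, not merely the failure of each selector separately.

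On maximality, you prove the structural version: every self-closed pair $(\mathcal D,C)$ sits inside $(W,A_W)$, which needs no completeness. The paper reads ``largest'' semantically, as in Proposition~\ref{prop:filter-maximality}: largest among constraints that preserve $G$ forever. Your residual-strategy argument already bridges to that reading. Any choice $u$ used at $b$ by a winning selector satisfies $\Post_B(b,u)\subseteq W'\subseteq W$, i.e.\ $u\in A_W(b)$.
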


\begin{proof}
The map $F_G^{\obs}$ is monotone on the finite lattice of information state sets, so its descending Kleene iteration reaches the greatest fixed point.  If $b\in W$, the fixed-point equation gives a choice $u\in A_W(b)$; after any successor and observation, the next information state lies again in $W$.  Information-state soundness then implies that the true state remains in $G$ by induction.  Conversely, if $b\notin W$, the rank at which $b$ is removed from the descending iteration gives a finite universal response tree showing that every certified choice has some possible successor information state outside the previous candidate; induction on this rank gives an resolver that forces loss of the invariant.  Maximal permissiveness follows because any choice not in $A_W(b)$ has a possible successor information state outside $W$, where no observation-based strategy can guarantee the invariant.
\end{proof}

\subsection{Information-state universal discounted value}

The quantitative lift uses a universal interpretation of both hidden state and successor observation.  For $b\in W$ and $u\in A_W(b)$ set
\[
\hat c(b,u)=\max_{s\in b}c(s,u).
\]
This is the universal immediate cost of taking one observation-based choice when the actual state is only known to lie in $b$.

\begin{definition}[Information-state Bellman operator]\label{def:information statebellman}
For $J:W\to\mathbb R$, define
\[
(\mathcal B_W^{\obs}J)(b)=
\min_{u\in A_W(b)}\left(\hat c(b,u)+\gamma\max_{b'\in\Post_B(b,u)}J(b')\right).
\]
\end{definition}

\begin{theorem}[Information-state admissible value]\label{thm:information state-value}
Assume $W=\Viab_{\obs}(G)$ is nonempty and $\gamma\in(0,1)$.  Then $\mathcal B_W^{\obs}$ is a $\gamma$-contraction on $\ell_\infty(W)$.  It has a unique fixed point $J^{\star}_{\obs}$, and every selector
\[
\pi_{\obs}(b)\in\argmin_{u\in A_W(b)}\left(\hat c(b,u)+\gamma\max_{b'\in\Post_B(b,u)}J^{\star}_{\obs}(b')\right)
\]
is observation-admissible and optimal among all stationary observation-based selectors that preserve $G$.
\end{theorem}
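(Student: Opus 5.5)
The plan follows the full-observation proof of Theorem~\ref{thm:core-value}, transplanted to the finite information-state arena. It has three steps: contraction, then admissibility and optimality of the greedy selector, then the comparison with arbitrary stationary observation-based selectors.

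\emph{Step 1: the operator is well defined and contracts.} For each $b\in W$ the set $A_W(b)$ is nonempty by the fixed-point equality $W=F^{\obs}_G(W)$; this is the information-state version of the certified-choice restriction lemma. The set $\Post_B(b,u)$ is nonempty and finite, because $b\neq\varnothing$, $T(s,u)\neq\varnothing$, and $O$ is finite. The cost $\hat c(b,u)$ is a finite maximum, so $\mathcal B^{\obs}_W$ maps $\ell_\infty(W)$ into itself. For fixed $b,u$, the maximum perturbation lemma bounds the change of $\max_{b'\in\Post_B(b,u)}J(b')$ by $\|J-H\|_\infty$. The minimum perturbation lemma over $A_W(b)$ then gives the factor $\gamma$. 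Banach's theorem yields the unique fixed point $J^\star_{\obs}$, and the finite selector lemma makes the argmin defining $\pi_{\obs}$ nonempty.

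\emph{Step 2: the greedy selector is admissible and optimal among admissible selectors.} Since $\pi_{\obs}(b)\in A_W(b)$, every next information state lies in $W$. Theorem~\ref{thm:obs-invariance} together with Lemma~\ref{lem:information statesound} then shows that $\pi_{\obs}$ keeps the true state in $G$. For optimality, fix any stationary observation-based selector $\pi$ valued in $A_W$ and define its evaluation operator $\mathcal B^\pi$ by deleting the minimum. This operator is a monotone $\gamma$-contraction, and $\mathcal B^{\obs}_W J\le \mathcal B^\pi J$ pointwise by greedy domination. The sandwich argument then gives $J^\star_{\obs}\le J^\pi$: iterate $\mathcal B^\pi$ from $J^\star_{\obs}$ and use monotonicity. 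Equality holds for $\pi_{\obs}$, because $\mathcal B^{\pi_{\obs}}J^\star_{\obs}=J^\star_{\obs}$, and uniqueness of the fixed point of $\mathcal B^{\pi_{\obs}}$ identifies its value with $J^\star_{\obs}$. Here $J^\pi$ is the worst-case discounted cost, defined as the fixed point of $\mathcal B^\pi$. Unfolding $\mathcal B^\pi$ as in Lemma~\ref{lem:finite-horizon-value} shows it coincides with the worst-case sum over hidden initial states and successor resolutions. This is the step where the universal reading of $\hat c$ as a maximum over $b$ must be used.

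\emph{Step 3: selectors that preserve $G$ but are not a priori valued in $A_W$.} I expect this to be the main obstacle. Take such a selector $\pi$ and a reachable information state $b$ at which $\pi(b)\notin A_W(b)$. Then some successor $b'\in\Post_B(b,\pi(b))$ lies outside $W$. By the maximal-permissiveness clause of Theorem~\ref{thm:obs-invariance}, no observation-based selector can keep the invariant from $b'$, so $\pi$ does not preserve $G$ from $b$. Hence on all information states reachable under $\pi$ from $W$, the selector takes values in $A_W$. Its values off the reachable set are irrelevant, and Step~2 applies. The delicate points are two: preservation must be stated from every information state in $W$, and reachability must be handled so that redefining $\pi$ off its reachable set changes neither its preservation property nor its value.
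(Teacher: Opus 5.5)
Your Steps 1 and 2 are the paper's own argument. Contraction comes from the maximum and minimum perturbation lemmas, and Banach's theorem gives the fixed point. Admissibility comes from Theorem~\ref{thm:obs-invariance}, and optimality from greedy domination plus monotone iteration of the selector operator starting from the subsolution $J^\star_{\obs}$. Your Step 3 correctly makes explicit something the paper leaves tacit. The paper's proof compares $J^\star_{\obs}$ only with \emph{certified} observation selectors. The passage from ``preserves $G$'' to ``valued in $A_W$ on every reachable information state'' is exactly the maximal-permissiveness clause of Theorem~\ref{thm:obs-invariance}, applied inductively along reachable information states as you do.

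There is, however, one false step, and it is the one you single out as essential. You claim that unfolding $\mathcal B^\pi$ shows its fixed point $J^\pi$ coincides with the worst-case discounted sum over hidden initial states and successor resolutions. This does not hold. The operator charges $\hat c(b,u)=\max_{s\in b}c(s,u)$ and, separately, $\gamma\max_{b'\in\Post_B(b,u)}J(b')$. The state attaining the first maximum need not be the one leading to the worst successor information state. By induction one only gets $J^\pi(b)\ge$ the pathwise worst case, with strict inequality in general; the two agree only on singletons, which is why Lemma~\ref{lem:finite-horizon-value} has no analogue here.

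Here is a concrete example. Let $b=\{s_1,s_2\}$ be an observation fibre with choices $u,v$:
\begin{itemize}
\item under $u$: $c(s_1,u)=1$, $c(s_2,u)=0$, $T(s_1,u)=\{x\}$, $T(s_2,u)=\{y\}$;
\item under $v$: $c(s_i,v)=3/2$ and $T(s_i,v)=\{z\}$ for $i=1,2$.
\end{itemize}
Let $x,y,z$ be absorbing with loop costs $0,1,0$ and pairwise distinct observations, $G=S$, and $\gamma=1/2$.
\begin{itemize}
\item Then $J^\star_{\obs}(\{y\})=2$.
\item The information-state backups at $b$ are $1+\tfrac12\cdot 2=2$ for $u$ and $3/2$ for $v$, so the greedy selector picks $v$.
\item The pathwise worst case under $u$ is $\max\{1+0,\ 0+\tfrac12\cdot 2\}=1<3/2$.
\end{itemize}
So under your pathwise reading the theorem is false, not merely unproved. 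It is true, and the paper proves it, only with $J^\pi$ \emph{defined} as the fixed point of the information-state selector operator. That is the value in the information-state arena, where the resolver may choose the hidden state for the cost term and the successor information state independently.

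To fix the proposal, delete the identification sentence and keep that definition. Then $\hat c$ enters only as the common immediate-cost term of $\mathcal B^{\obs}_W$ and $\mathcal B^\pi$, and the rest of your argument, including Step 3, goes through unchanged.
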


\begin{proof}
For each $b\in W$, nonemptiness of $A_W(b)$ follows from the fixed-point equation.  Let $J,K\in\ell_\infty(W)$.  For a fixed choice $u$, the maximum perturbation lemma gives
\[
\left|\max_{b'\in\Post_B(b,u)}J(b')-\max_{b'\in\Post_B(b,u)}K(b')\right|\leq\|J-K\|_\infty.
\]
Multiplication by $\gamma$ and the minimum perturbation lemma give
\[
\|\mathcal B_W^{\obs}J-\mathcal B_W^{\obs}K\|_\infty\leq\gamma\|J-K\|_\infty.
\]
Banach's fixed-point theorem gives the unique fixed point.  A minimizing selector uses only choices in $A_W$, hence is admissible by Theorem~\ref{thm:obs-invariance}.  Selector evaluation for any stationary certified observation selector is the same contraction with the minimum removed.  The pointwise Bellman optimality inequality gives no larger value for $J_{\obs}^{\star}$ than any certified observation selector, and a minimizing selector attains it.
\end{proof}

\begin{theorem}[Perfect-information collapse]\label{thm:collapse}
If $\obs=\mathrm{id}_S$, then every reachable information state is a singleton, $K\varphi\equiv\varphi$, $\Viab_{\obs}(G)=\{\{s\}:s\in\Viab(G)\}$, and the information state Bellman operator is conjugate to the state Bellman operator by $s\mapsto\{s\}$.
\end{theorem}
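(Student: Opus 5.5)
The plan is to verify the four claims of Theorem~\ref{thm:collapse} in order, since each one feeds the next. Throughout, take $\obs=\mathrm{id}_S$, so every observation fibre is a singleton $\{s\}$.

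\emph{Singletons and knowledge.} An information state is, by Definition~\ref{def:information state}, a nonempty subset of one fibre. Every information state, reachable or not, is therefore a singleton $\{s\}$, and $\Bel_G=\{\{s\}:s\in G\}$. For such a state, $A(\{s\})=A(s)$. For $u\in A(s)$ the update is $\bpost(\{s\},u,o')=T(s,u)\cap\{o'\}$, so $\Post_B(\{s\},u)=\{\{s'\}:s'\in T(s,u)\}$. The knowledge clause quantifies over $t\approx s$, which now means $t=s$, so $K\varphi\equiv\varphi$ immediately.

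\emph{Kernels.} Write $\iota(s)=\{s\}$, and extend $\iota$ to subsets by $\iota(X)=\{\{s\}:s\in X\}$. The key identity I plan to show is
\[
F^{\obs}_G(\iota(X))=\iota(F_G(X))\qquad(X\subseteq S).
\]
The left side consists of those $\{s\}$ with $s\in G$ for which some $u\in A(s)$ has $\{\{s'\}:s'\in T(s,u)\}\subseteq\iota(X)$. That inclusion says exactly $T(s,u)\subseteq X$, which is the condition defining the right side. Every family $\mathcal B\subseteq\Bel_G$ has the form $\iota(X)$ with $X\subseteq G$, and $\iota$ is an order isomorphism onto its image. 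The descending Kleene chains therefore correspond under $\iota$, starting from $\Bel_G=\iota(G)=\iota(F_G(S))$. The one index shift does not matter, since $F_G(S)\subseteq G$ forces $X_1=G$ on the state side, and stabilization gives $\Viab_{\obs}(G)=\iota(\Viab(G))$. The only step needing care is this index alignment: the information-state lattice is $2^{\Bel_G}$ rather than $2^S$. I would handle it by noting that greatest fixed points are preserved by order isomorphisms, and that both greatest fixed points lie below $G$ (respectively $\iota(G)$), so no separate chain comparison is needed.

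\emph{Bellman conjugacy.} Put $W=\iota(V)$ with $V=\Viab(G)$. The same computation gives $A_W(\{s\})=A_V(s)=\CAct_G(s)$, and the information-state cost collapses to $\hat c(\{s\},u)=c(s,u)$. Define the isometric isomorphism $\Phi:\ell_\infty(W)\to\ell_\infty(V)$ by $(\Phi J)(s)=J(\{s\})$. Substituting into Definition~\ref{def:information statebellman}, the maximum over $\Post_B(\{s\},u)$ becomes the maximum over $s'\in T(s,u)$ of $J(\{s'\})$. Hence
\[
\Phi\,\mathcal B^{\obs}_W=\mathcal V_G\,\Phi .
\]
By uniqueness of fixed points (Theorems~\ref{thm:core-value} and~\ref{thm:information state-value}), it follows that $J^\star_{\obs}(\{s\})=J^\star_G(s)$, and that the greedy selectors correspond as well.
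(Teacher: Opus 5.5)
Your proposal is correct and follows essentially the paper's own argument. Both transport $F^{\obs}_G$, $A_W$, $\hat c$, and the successor maxima along the singleton embedding $s\mapsto\{s\}$ and conclude that the fixed points and Bellman equations coincide; your version makes explicit what the paper leaves implicit, namely the intertwining identities $F^{\obs}_G\circ\iota=\iota\circ F_G$ and $\Phi\,\mathcal B^{\obs}_W=\mathcal V_G\,\Phi$ and the one-step index shift between the two Kleene chains.
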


\begin{proof}
Identity observations make indistinguishability equality, so the knowledge clause reduces to ordinary truth.  The update of a singleton $\{s\}$ under a realised successor $s'$ is the singleton $\{s'\}$.  Transporting $F_G^{\obs}$ along $s\mapsto\{s\}$ gives exactly $F_G$.  The same transport sends $A_W(\{s\})$ to $A_V(s)$, $\hat c(\{s\},u)$ to $c(s,u)$, and successor-information state maxima to successor-state maxima.  Therefore the two fixed points and the two Bellman equations are identical under the singleton embedding.
\end{proof}

\section{Observational indistinguishability of certified choice}\label{sec:observational indistinguishability}

Observational indistinguishability is an observational property of an already certified selector.  It does not introduce a forbidden objective or a hidden layer.  It means that the public trace can certify preservation of the selector's modal commitments while leaving the internal choice rule underdetermined among choices that are equivalent for the public interface, information-state evolution, and value equation.

\begin{definition}[Certified-choice notation]
For an information state kernel $W=\Viab_{\obs}(G)$, write
\[
\Just(G)=W,
\qquad
\JAct_W(b)=A_W(b)=\{u\in A(b):\Post_B(b,u)\subseteq W\}.
\]
A choice in $\JAct_W(b)$ is \emph{certified at $b$}.
\end{definition}

\begin{definition}[Public choice interface and local observational indistinguishability]
Let $\kappa:U\to C$ be a finite public choice interface.  Two certified choices $u,v\in\JAct_W(b)$ are locally observation-equivalent at $b$, written $u\equiv_b^{\Opq} v$, when
\[
\kappa(u)=\kappa(v),
\qquad
\Post_B(b,u)=\Post_B(b,v),
\qquad
\hat c(b,u)=\hat c(b,v).
\]
A pair of stationary observation selectors $\pi,\rho:W\to U$ is observation-equivalent on a reachable set $R\subseteq W$ when $\pi(b)\equiv_b^{\Opq}\rho(b)$ for every $b\in R$.
\end{definition}

\begin{theorem}[Observational indistinguishability under certified choice]\label{thm:observational indistinguishability}
Let $W=\Viab_{\obs}(G)$ and let $R\subseteq W$ be closed under the successor-information state sets of two stationary selectors $\pi$ and $\rho$.  Suppose $\pi$ and $\rho$ are observation-equivalent on $R$.  Then, from every initial information state $b_0\in R$:
\begin{enumerate}[label=(\roman*)]
\item both selectors preserve the target commitment $G$ for all possible hidden states and successor observations;
\item the sets of public traces $(\obs(b_t),\kappa(u_t))_{t\geq0}$ generated by $\pi$ and $\rho$ coincide;
\item the universal discounted values induced by $\pi$ and $\rho$ are identical on $R$.
\end{enumerate}
Consequently an observer of public observations and public choice classes can verify coherence of the semantic commitments but cannot distinguish the internal selector used inside an observation-equivalence class.
\end{theorem}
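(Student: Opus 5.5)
The plan is to prove the three clauses in order, each resting on one of the three components of $\equiv^{\Opq}_b$: successor equality for (i), interface and successor equality for (ii), and cost plus successor equality for (iii).

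For (i), both $\pi$ and $\rho$ take values in $\JAct_W(b)=A_W(b)$ on $R$, since observation equivalence is defined only between certified choices. Hence each is a certified observation selector on $R\subseteq W$. Theorem~\ref{thm:obs-invariance} and the one-step closure $\Post_B(b,u)\subseteq W$ then give, by induction on $t$, that every generated information state stays in $W\subseteq\Bel_G$. By Lemma~\ref{lem:information statesound}, the true state $s_t\in b_t\subseteq G$ for every hidden initial state and every successor resolution.

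For (ii), I would argue that the two selectors generate the same set of information-state runs from $b_0$. Since $\Post_B(b,\pi(b))=\Post_B(b,\rho(b))$ at each $b\in R$, induction on length shows that a finite sequence $b_0b_1\cdots b_n$ is a run under $\pi$ if and only if it is a run under $\rho$. The key detail is the induction step: if $b_0\cdots b_n$ is a common run, then $b_n\in R$ by closure of $R$ under both successor sets. So the set of admissible next states $b_{n+1}$ is the same for both selectors. Each public trace entry $(\obs(b_t),\kappa(u_t))$ is determined by $b_t$ together with $\kappa(\pi(b_t))=\kappa(\rho(b_t))$. Since $\obs$ is constant on a fibre, $\obs(b_t)$ is well defined. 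The trace sets are therefore images of one run set under maps that agree, and so they coincide. Infinite traces follow because they are determined by their finite prefixes.

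For (iii), I would write the selector-evaluation operators on $\ell_\infty(R)$ as
\[
(\mathcal B^\pi J)(b)=\hat c(b,\pi(b))+\gamma\max_{b'\in\Post_B(b,\pi(b))}J(b'),
\]
and similarly for $\rho$. These are well defined on $R$ because $R$ is closed under both successor sets. Since $\hat c(b,\pi(b))=\hat c(b,\rho(b))$ and the successor sets agree, $\mathcal B^\pi=\mathcal B^\rho$ as operators. Each is a $\gamma$-contraction, by the same argument as in Theorem~\ref{thm:information state-value} with the minimum removed. Banach's fixed-point theorem then gives equal unique fixed points, which are the induced universal discounted values. The concluding sentence of the statement is then a restatement of (i) and (ii): the observer sees only public traces, which coincide, while commitment preservation holds for both selectors.

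I expect the main obstacle to be (ii). It requires that the induction stays inside $R$, and that the public trace is a function of the information-state run together with the chosen interface class, not of the hidden state. The closure hypothesis on $R$ for both selectors is exactly what keeps the induction inside $R$.
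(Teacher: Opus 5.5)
Your proposal is correct and follows the paper's proof essentially step for step. Clause (i) comes from $\pi(b),\rho(b)\in\JAct_W(b)$ via Theorem~\ref{thm:obs-invariance}, (ii) from induction on time using equal $\kappa$-classes, equal $\Post_B$-sets, and closure of $R$, and (iii) from the fact that the two selector Bellman operators coincide on $\ell_\infty(R)$, so their unique contraction fixed points agree. One detail to tighten: invoke closure of $R$ already in (i), because certification of the chosen action is only guaranteed while $b_t\in R$, not on all of $W$.
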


\begin{proof}
Because $\pi(b),\rho(b)\in\JAct_W(b)$ for every $b\in R$, Theorem~\ref{thm:obs-invariance} gives preservation of $G$ for both selectors.  For public traces, argue by induction on time.  At $b_t$, observation equivalence gives the same public choice class, $\kappa(\pi(b_t))=\kappa(\rho(b_t))$, and the same successor-information state set, $\Post_B(b_t,\pi(b_t))=\Post_B(b_t,\rho(b_t))$.  Thus the same collection of next observations and next information states is possible under both selectors.  Closure of $R$ lets the induction continue.  For values, the selector Bellman equations for $\pi$ and $\rho$ have the same immediate cost and the same successor-information state maximum at every $b\in R$; hence the two selector operators are identical on $\ell_\infty(R)$.  Their unique discounted fixed points are therefore identical.
\end{proof}

\begin{corollary}[Boundary-margin and tie-breaking observational indistinguishability]\label{cor:margin-observational indistinguishability}
Any internal tie-breaker, boundary-margin potential, data-derived candidate ranking, or deterministic implementation detail that changes the selected choice only inside the classes $\equiv_b^{\Opq}$ is observationally equivalent in the sense of Theorem~\ref{thm:observational indistinguishability}.  Such a mechanism may affect the selector's internal deliberation trace, but it cannot affect public commitment preservation, reachable information state sets, or discounted value.
\end{corollary}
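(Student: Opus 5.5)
The corollary should follow directly from Theorem~\ref{thm:observational indistinguishability}: exhibit any such mechanism as a pair of stationary observation selectors to which that theorem applies. Fix the reachable set. Let $\pi$ be the selector produced by some reference tie-breaker and $\rho$ the selector produced by the modified mechanism, both on $W=\Viab_{\obs}(G)$. The hypothesis is that $\rho(b)\equiv_b^{\Opq}\pi(b)$ for every $b$. This is read pointwise on all of $W$, or at least on the information states the mechanism can actually visit. Since $\equiv_b^{\Opq}$ is defined only between elements of $\JAct_W(b)$, both selectors take certified values.

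The first step is to produce a set $R$ closed under both selectors. I will take $R$ to be the set of information states reachable from $b_0$ under $\pi$, meaning the least set containing $b_0$ and closed under $b\mapsto\Post_B(b,\pi(b))$. Membership in $W$ is preserved, because each $\pi(b)\in A_W(b)$ forces $\Post_B(b,\pi(b))\subseteq W$. The clause $\Post_B(b,u)=\Post_B(b,v)$ in the definition of $\equiv_b^{\Opq}$ then gives $\Post_B(b,\rho(b))=\Post_B(b,\pi(b))\subseteq R$, so $R$ is also $\rho$-closed. This is where I expect the only real care to be needed. The closure hypothesis of the theorem is not automatic for an arbitrary $R$, and it is exactly the successor-set clause of local equivalence that makes the $\pi$-reachable set serve for both selectors.

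With $R$ in hand, the theorem applies, and its three clauses give the three claims of the corollary. Clause (i) yields commitment preservation. Clause (ii) yields identical public traces. For reachable information state sets, the same induction on time shows that the time-$t$ sets of possible information states coincide, since they are generated by identical $\Post_B$ values. Clause (iii) yields equal discounted values on $R$.

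Finally, I will note that the mechanism's internal deliberation trace, such as rankings, margins or implementation state, enters none of $\kappa$, $\Post_B$ or $\hat c$. It is therefore not constrained by the argument. This is consistent with the remark that it may differ while the public data do not.
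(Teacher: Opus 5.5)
Your proposal is correct and follows the paper's route. The paper's proof simply notes that such mechanisms select representatives from the same $\equiv_b^{\Opq}$-classes, so Theorem~\ref{thm:observational indistinguishability} applies directly; your construction of the $\pi$-reachable set $R$, together with your check that the $\Post_B$ clause makes it $\rho$-closed, supplies the closure hypothesis that the paper leaves implicit.
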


\begin{proof}
The listed mechanisms select representatives from the same local observation-equivalence classes.  Theorem~\ref{thm:observational indistinguishability} applies directly.
\end{proof}

\section{Certified-choice-restricted value refinement}

Once the admissible region is known, value refinement can be performed without allowing the selector to exit that region.  This is the main separation between logical admissibility and numerical preference.

Let $V=\Viab(G)$.  Define the certified choice set by
\[
A_V(s)=\{u\in A(s):T(s,u)\subseteq V\}.
\]
For $s\in V$, the set $A_V(s)$ is nonempty by the fixed-point property of $V$.

\begin{remark}[Hard invariance versus expected-cost constraints]\label{rem:cmdp}
Restricting value refinement to $A_V$ is a constraint on the certified selector class, which places this construction beside two value refinement-side traditions.  In expected-value constrained models the constraint usually bounds an expected cumulative quantity and feasibility may be witnessed by averaging.  Here the constraint is pathwise: every compatible continuation must remain inside the precomputed kernel $V$.  The certified choice map $A_V$ is therefore not a preference heuristic but the syntactic witness set of the greatest fixed point.  The distinction maintained throughout is that the invariance layer is qualitative and exact, while only the value layer is quantitative and approximable.
\end{remark}

\begin{assumption}[Discounting]
A discount factor $\gamma\in(0,1)$ is fixed.
\end{assumption}

Discounted dynamic programming provides the contraction argument used here \cite{blackwell1965}.  The finite-state Bellman recursion follows the dynamic programming principle introduced by Bellman \cite{bellman1957}.

\begin{definition}[Bellman--Kripke operator]
For $J\in \ell_\infty(V)$, define
\[
(\mathcal B_VJ)(s)=\min_{u\in A_V(s)}\left[c(s,u)+\gamma\max_{s'\in T(s,u)}J(s')\right].
\]
\end{definition}

The maximum is the value counterpart of modal necessity.  The minimum is the selector's value refinement over certified choices.  The operator is therefore a Bellman operator whose admissible set is created by a modal fixed point.

\begin{theorem}[Unique admissible value]
The operator $\mathcal B_V$ has a unique fixed point $J^\star\in \ell_\infty(V)$, and value iteration $J_{n+1}=\mathcal B_VJ_n$ converges uniformly to $J^\star$ for every $J_0\in \ell_\infty(V)$.
\end{theorem}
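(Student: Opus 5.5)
The proof is a direct application of Banach's fixed-point theorem on the space $\ell_\infty(V)$ with the sup norm. I will proceed in four steps, and I expect none of them to be a genuine obstacle.

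\textbf{Step 1: $\mathcal B_V$ is a well-defined self-map of $\ell_\infty(V)$.} Let $V=\Viab(G)$.
\begin{itemize}
\item By the certified-choice restriction lemma, $A_V(s)$ is nonempty for every $s\in V$.
\item $A_V(s)$ is finite because $U$ is finite, and every $T(s,u)$ is finite and nonempty. So each inner maximum and each outer minimum is over a finite nonempty set, and is attained by the finite selector lemma.
\item For $u\in A_V(s)$, the one-step admissible closure lemma gives $T(s,u)\subseteq V$. So the expression $J(s')$ only evaluates $J$ on its domain $V$.
\item Boundedness: costs lie in $[0,\bar c]$, so $\|\mathcal B_VJ\|_\infty\le\bar c+\gamma\|J\|_\infty<\infty$.
\end{itemize}
Hence $\mathcal B_VJ\in\ell_\infty(V)$. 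Since $V$ is finite, $\ell_\infty(V)\cong\mathbb R^{|V|}$ with the sup norm, which is complete. The case $V=\varnothing$ is trivial: the space is the single empty function.

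\textbf{Step 2: contraction.} The Bellman elementary contraction lemma gives
\[
\|\mathcal B_VJ-\mathcal B_VK\|_\infty\le\gamma\|J-K\|_\infty
\]
for all $J,K$, with $\gamma\in(0,1)$ fixed by the discounting assumption. This lemma itself rests on the maximum and minimum perturbation lemmas.

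\textbf{Step 3: Banach's theorem.} Banach's fixed-point theorem now gives existence and uniqueness of $J^\star$.

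\textbf{Step 4: uniform convergence.} Iterating the contraction gives
\[
\|J_n-J^\star\|_\infty=\|\mathcal B_V^nJ_0-\mathcal B_V^nJ^\star\|_\infty\le\gamma^n\|J_0-J^\star\|_\infty\to0
\]
for every starting point $J_0$. Sup-norm convergence is exactly uniform convergence on $V$, which finishes the proof.

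The only point needing care is Step 1. One must confirm that $\mathcal B_V$ never reads $J$ outside $V$ and never minimizes over an empty set. Both facts come from the fixed-point equality $V=G\cap\Pre_{\exists\Box}(V)$, through the two restriction lemmas. This is the point where the admissibility layer is used before the value layer. Everything after that is the standard discounted dynamic-programming argument, as in Theorem~\ref{thm:core-value}, now stated in $\Viab$ notation.
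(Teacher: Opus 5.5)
Your proof is correct and follows essentially the paper's argument: nonemptiness of $A_V(s)$ from the restriction lemma, a $\gamma$-contraction from the maximum and minimum perturbation lemmas (which you invoke in packaged form as the Bellman elementary contraction lemma), completeness of the finite-dimensional space $\ell_\infty(V)$, and Banach's theorem. You also spell out details the paper leaves implicit: that $\mathcal B_V$ evaluates $J$ only on $V$, the explicit $\gamma^n$ rate behind uniform convergence, and the empty-kernel case.
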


\begin{proof}
For $s\in V$, the choice restriction lemma gives $A_V(s)\neq\varnothing$.  For fixed $s$ and $u\in A_V(s)$, the maximum perturbation lemma gives
\[
\left|\max_{s'\in T(s,u)}J(s')-\max_{s'\in T(s,u)}K(s')\right|\leq \|J-K\|_\infty.
\]
After multiplication by $\gamma$, the same bound holds for the successor-cost term.  Applying the minimum perturbation lemma over the finite nonempty set $A_V(s)$ yields
\[
| (\mathcal B_VJ)(s)-(\mathcal B_VK)(s)|\leq \gamma\|J-K\|_\infty.
\]
Taking the supremum over $s\in V$ proves that $\mathcal B_V$ is a contraction on $\ell_\infty(V)$.  The space $\ell_\infty(V)$ is complete because $V$ is finite.  The Banach contraction theorem therefore gives a unique fixed point $J^\star$ and convergence of the iterates from every initial function.  The convergence is uniform because the norm is the sup norm.
\end{proof}

\begin{definition}[Optimal certified selector]
A state-based selector $\pi^\star:V\to U$ is optimal admissible if
\[
\pi^\star(s)\in \argmin_{u\in A_V(s)}\left[c(s,u)+\gamma\max_{s'\in T(s,u)}J^\star(s')\right]
\]
for every $s\in V$.
\end{definition}

\begin{corollary}[Admissible optimality]
Every optimal certified selector preserves $G$ forever and minimizes the universal discounted cost among all selectors that preserve $G$ forever.
\end{corollary}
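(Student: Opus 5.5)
The corollary has two parts, invariance and optimality, and I will prove them separately from results already in place. Fix the unique fixed point $J^\star$ of $\mathcal B_V$ from the unique admissible value theorem, and let $\pi^\star$ be an optimal certified selector. Such selectors exist by the finite selector lemma, because each $A_V(s)$ is finite and nonempty by the certified-choice restriction lemma.

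\textbf{Invariance.} For every $s\in V$ we have $\pi^\star(s)\in A_V(s)$, so by the one-step admissible closure lemma every successor of $s$ under $\pi^\star$ stays in $V$. Induction along selector paths keeps every play inside $V\subseteq G$, exactly as in the forward direction of the admissible continuation semantic construction theorem. So $\pi^\star$ preserves $G$ forever.

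\textbf{Optimality.} First I identify the selectors that preserve $G$ forever. Let $\pi$ be such a selector and $s_0\in V$. The set $W$ of states visited by $\pi$-paths from $s_0$ satisfies $W\subseteq G$ and $W\subseteq F_G(W)$. By greatest-fixed-point maximality, $W\subseteq V$. Hence $T(x,\pi(x))\subseteq W\subseteq V$ for every visited $x$, so $\pi$ takes values in $A_V$ on the states that matter.

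For such a $\pi$, its worst-case discounted cost $J^\pi$ is the unique fixed point of the selector operator $\mathcal B^\pi$, which is a $\gamma$-contraction by the Bellman elementary contraction argument with the minimum removed. Restrict $\mathcal B^\pi$ to the $\pi$-reachable closed set if $\pi$ is defined only there.

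Greedy domination gives $\mathcal B_VJ^\pi\leq\mathcal B^\pi J^\pi=J^\pi$. So $J^\pi$ is a supersolution, and the sub- and supersolution sandwich lemma gives $J^\star\leq J^\pi$. For $\pi^\star$, greedy domination holds with equality at $J^\star$, so $\mathcal B^{\pi^\star}J^\star=\mathcal B_VJ^\star=J^\star$. Uniqueness of the fixed point of $\mathcal B^{\pi^\star}$ then gives $J^{\pi^\star}=J^\star$. Therefore $\pi^\star$ attains the minimum over all $G$-preserving stationary selectors.

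\textbf{Main obstacle.} The delicate step is justifying that the path cost $\sup_{\text{paths}}\sum_k\gamma^kc(s_k,\pi(s_k))$ really equals the fixed point of $\mathcal B^\pi$. I would argue this through the finite-horizon interpretation of Lemma~\ref{lem:finite-horizon-value}, specialized to a single selector, and then pass to the limit using boundedness of costs by $\bar c$ and the $\gamma^n\bar c/(1-\gamma)$ tail bound of Corollary~\ref{cor:residual-reading}.

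If history-dependent selectors are also meant, I would handle them by comparing the $n$-step value of a history selector with the iterate $\mathcal B_V^n0$. That iterate is a lower bound by the inf-sup reading in Lemma~\ref{lem:finite-horizon-value}, where choices at each step range over $A_V$ and are justified as above. Letting $n\to\infty$ finishes the argument.
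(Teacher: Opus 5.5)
Your argument is the paper's: invariance from the one-step admissible closure lemma plus induction, and optimality from $\mathcal B_V\le\mathcal B^\pi$ with monotonicity giving $J^\star\le J^\pi$, together with uniqueness of the fixed point of $\mathcal B^{\pi^\star}$ giving $J^{\pi^\star}=J^\star$ (you use the sandwich lemma where the paper iterates monotonically from $0$). Your post-fixed-point step, showing that every $G$-preserving selector chooses in $A_V$ on its reachable states, is a genuine improvement, since the paper only compares against the certified class $\Pi_V$ and leaves that reduction implicit; to make $\mathcal B_VJ^\pi$ well defined, extend such a $\pi$ off its reachable set by arbitrary certified choices instead of restricting $\mathcal B^\pi$, which leaves $J^\pi$ unchanged on the reachable set because that set is closed under $\pi$-successors.
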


\begin{proof}
Let $\pi^\star$ be a Bellman-minimizing selector.  Since $\pi^\star(s)\in A_V(s)$ for every $s\in V$, the one-step admissible closure lemma keeps every possible successor inside $V$.  Induction over time gives $s_t\in V\subseteq G$ for every path generated from $V$.

For optimality, let $\Pi_V$ be the class of state-based selectors whose choices lie in $A_V$.  For $\pi\in\Pi_V$, define
\[
(\mathcal B^\pi J)(s)=c(s,\pi(s))+\gamma\max_{s'\in T(s,\pi(s))}J(s').
\]
The same maximum perturbation argument makes $\mathcal B^\pi$ a $\gamma$-contraction, so it has a unique fixed point $J^\pi$.  Because $\mathcal B_VJ\leq\mathcal B^\pi J$ pointwise for every $J$, monotone iteration from the zero function gives $J^\star\leq J^\pi$ for every certified selector $\pi$.  If $\pi=\pi^\star$, then $\mathcal B^{\pi^\star}J^\star=\mathcal B_VJ^\star=J^\star$, so uniqueness of the selector-evaluation fixed point gives $J^{\pi^\star}=J^\star$.  Hence $\pi^\star$ is admissible and no certified selector has a smaller universal discounted value.
\end{proof}

\subsection{The admissible value as a universal resolver value}

The operator $\mathcal B_V$ is the simultaneous minimum over certified choices of $c+\gamma$ times a \emph{maximum} over successors.  This is the dynamic-programming operator of a discounted zero-sum game, and the universal successor convention of Assumption~\ref{asm:adv} is the move structure of that game made precise.  This subsection identifies $J^\star$ as the game value and shows that the resolver, not only the selector, has a memoryless optimal response; the memorylessness was previously only suggested by the box convention.

\begin{definition}[Discounted successor game]\label{def:game}
On the admissible region $V$ with certified choice sets $A_V$, the \emph{discounted successor game} is played as follows.  At a state $s_t\in V$ the selector chooses $u_t\in A_V(s_t)$ and the resolver then chooses $s_{t+1}\in T(s_t,u_t)$.  A \emph{selector strategy} $\sigma$ maps every finite history ending in a state to an admissible choice; an \emph{resolver strategy} $\xi$ maps every finite history ending in a state--choice pair to an admissible successor.  Given $\sigma,\xi$ and a start $s\in V$ the play is determined, and the payoff is the discounted cost
\[
\mathcal K(s;\sigma,\xi)=\sum_{t\geq 0}\gamma^{t}\,c(s_t,u_t).
\]
The \emph{upper} and \emph{lower} values are
\[
\overline V(s)=\inf_{\sigma}\sup_{\xi}\mathcal K(s;\sigma,\xi),
\qquad
\underline V(s)=\sup_{\xi}\inf_{\sigma}\mathcal K(s;\sigma,\xi).
\]
\end{definition}

\begin{remark}[State-choice rectangularity and the universal-resolver reading]\label{rem:rect}
The resolver's admissible choices factor through the pair $(s,u)$: after the choice is fixed, the successor ranges over $T(s,u)$ independently of the history and of the choices made at other states.  This is exactly \emph{state-choice} (sa-) rectangular uncertainty.  The inner maximum $\max_{s'\in T(s,u)}J(s')$ is then the universal one-step Bellman backup against the support set $T(s,u)$, and $\mathcal B_V$ is the universal-successor value operator for a Markov decision process with sa-rectangular transition ambiguity \cite{iyengar2005}.  Rectangularity is the structural reason the operator contracts and an optimal stationary response exists on each side; it specializes the discounted zero-sum stochastic game of Shapley \cite{shapley1953} to the case of deterministically chosen, universally resolved successors.  Without rectangularity the worst case could not be taken state-choice by state-choice, and the resolver's optimal response could require memory.
\end{remark}

\begin{lemma}[Best response against a fixed memoryless opponent]\label{lem:bestresponse}
Fix a memoryless selector $\pi\in\Pi_V$ and define $U(s)=\sup_{\xi}\mathcal K(s;\pi,\xi)$.  Then $U$ is the unique fixed point of $z\mapsto c(s,\pi(s))+\gamma\max_{s'\in T(s,\pi(s))}z(s')$, and the supremum is attained by the memoryless resolver $\xi^\dagger(s)\in\argmax_{s'\in T(s,\pi(s))}U(s')$.  Symmetrically, fix a memoryless resolver $\xi$ given by a map $(s,u)\mapsto\xi(s,u)\in T(s,u)$ and define $W(s)=\inf_{\sigma}\mathcal K(s;\sigma,\xi)$.  Then $W$ is the unique fixed point of $z\mapsto\min_{u\in A_V(s)}\bigl[c(s,u)+\gamma\,z(\xi(s,u))\bigr]$, and the infimum is attained by a memoryless selector.
\end{lemma}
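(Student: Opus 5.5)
The plan is to treat the two halves separately, since once one player is frozen to a memoryless rule the game collapses to a one-player discounted problem on the finite set $V$.

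\emph{Selector half.} Fix $\pi\in\Pi_V$ and write $\mathcal B^\pi z(s)=c(s,\pi(s))+\gamma\max_{s'\in T(s,\pi(s))}z(s')$.
\begin{enumerate}[label=(\arabic*)]
\item By the maximum perturbation lemma, $\mathcal B^\pi$ is a $\gamma$-contraction on $\ell_\infty(V)$, so it has a unique fixed point $z^\pi$. Because costs lie in $[0,\bar c]$, every payoff $\mathcal K(s;\pi,\xi)$ is a convergent series bounded by $\bar c/(1-\gamma)$, and $U$ is finite.
\item \emph{Upper bound $U\le z^\pi$.} For any resolver $\xi$, possibly history-dependent, split the payoff after $n$ steps:
\[
\mathcal K(s;\pi,\xi)=\sum_{t<n}\gamma^t c(s_t,\pi(s_t))+\gamma^n(\text{tail}).
\]
The finite prefix is at most the $n$-step worst-case cost. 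By the same induction as Lemma~\ref{lem:finite-horizon-value}, with the selector frozen, that worst-case cost equals $((\mathcal B^\pi)^n0)(s)$. The tail is at most $\gamma^n\bar c/(1-\gamma)$. Letting $n\to\infty$ gives $\sup_\xi\mathcal K\le z^\pi(s)$.
\item \emph{Attainment.} Define the memoryless resolver $\xi^\dagger(s)\in\argmax_{s'\in T(s,\pi(s))}z^\pi(s')$, which exists by the finite selector lemma. Along the play generated by $(\pi,\xi^\dagger)$ we have
\[
z^\pi(s_t)=c(s_t,\pi(s_t))+\gamma z^\pi(s_{t+1}).
\]
Telescoping $n$ times gives $z^\pi(s)=\sum_{t<n}\gamma^t c(s_t,\pi(s_t))+\gamma^n z^\pi(s_n)$. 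The remainder vanishes because $z^\pi$ is bounded, so $\mathcal K(s;\pi,\xi^\dagger)=z^\pi(s)$.
\item Steps (2) and (3) together give $U=z^\pi$. This identifies $U$ as the fixed point, and the argmax is taken with respect to $U$ itself, as the statement requires.
\end{enumerate}

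\emph{Resolver half.} Fix a memoryless $\xi:(s,u)\mapsto\xi(s,u)\in T(s,u)$. The system is now deterministic on $V$, since the next state is $\xi(s,u)$. Write
\[
\mathcal B_\xi z(s)=\min_{u\in A_V(s)}\bigl[c(s,u)+\gamma z(\xi(s,u))\bigr].
\]
\begin{enumerate}[label=(\arabic*)]
\setcounter{enumi}{4}
\item $\mathcal B_\xi$ is a $\gamma$-contraction by the minimum perturbation lemma, with $A_V(s)\ne\varnothing$ by the certified-choice restriction lemma. Let $z_\xi$ be its fixed point.
\item \emph{Lower bound.} For any history-dependent $\sigma$, induction on the horizon shows that the $n$-step prefix cost is at least $(\mathcal B_\xi^n0)(s)$. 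Since costs are nonnegative and $\mathcal B_\xi^n0\to z_\xi$, we get $\inf_\sigma\mathcal K\ge z_\xi$.
\item \emph{Attainment.} The memoryless $\sigma^\dagger(s)\in\argmin_u[c(s,u)+\gamma z_\xi(\xi(s,u))]$ satisfies the same telescoping identity as in step (3), so it achieves $z_\xi$. Hence $W=z_\xi$, and the infimum is attained.
\end{enumerate}

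The main obstacle is the comparison in steps (2) and (6) against \emph{history-dependent} opponents. A pointwise comparison of Bellman operators only handles stationary strategies, so the bound has to be carried through finite-horizon truncation. The plan is to prove it by backward induction on $n$: the value of the $n$-step game against the frozen opponent is exactly $(\mathcal B^\pi)^n0$, respectively $\mathcal B_\xi^n0$. The key point is that history-dependence cannot help, because the continuation after the first step is again an $(n-1)$-step problem from the realised state with the same frozen opponent. This is where the state-choice rectangularity of Remark~\ref{rem:rect} is used. The uniform geometric tail bound from bounded costs then passes the result to the infinite horizon.
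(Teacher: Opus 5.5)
Your argument is correct, but it is organized differently from the paper's.

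The paper works with the supremum directly. It splits a play into its first move and its continuation, and argues that after the first successor the resolver may again play to its own supremum. This gives the optimality equation $U(s)=c(s,\pi(s))+\gamma\max_{s'\in T(s,\pi(s))}U(s')$. Contraction then gives uniqueness. Attainment follows by observing that the payoff of the memoryless maximizer $\xi^\dagger$ satisfies the same equation.

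You reverse the order. You start from the fixed point $z^\pi$ of the contraction and sandwich $U$ against it:
\begin{itemize}
\item $U\le z^\pi$ comes from finite-horizon truncation: the worst $n$-step prefix is $(\mathcal B^\pi)^n0$, and the tail is at most $\gamma^n\bar c/(1-\gamma)$.
\item $U\ge z^\pi$ comes from telescoping the fixed-point identity along the $(\pi,\xi^\dagger)$ play.
\end{itemize}
In the resolver half, nonnegativity of costs replaces the tail bound.

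The paper's route is shorter and presents the Bellman equation for $U$ as the primary fact. It does leave two points implicit, however. First, its first-step decomposition, done carefully against history-dependent opponents, requires shifting and concatenating strategies. Second, its attainment step tacitly uses uniqueness for the linear equation $P(s)=c(s,\pi(s))+\gamma P(\xi^\dagger(s))$, which $U$ satisfies because $\xi^\dagger$ picks maximizers.

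Your route avoids both. A frozen memoryless opponent makes every play a single compatible path, so the comparison with $(\mathcal B^\pi)^n0$ (respectively $\mathcal B_\xi^n0$) holds immediately for arbitrary history-dependent strategies. The telescoping identity then makes attainment explicit rather than asserted.
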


\begin{proof}
Costs are nonnegative and bounded by $\bar c=\max_{s,u}c(s,u)$, so every payoff lies in $[0,\bar c/(1-\gamma)]$ and the values $U,W$ are finite.  Against the fixed memoryless $\pi$ the resolver faces a one-player maximization with deterministic per-step decision set $T(s,\pi(s))$.  Splitting a play into its first move and its continuation, and using that after the first successor the resolver may again play to its own supremum, gives the dynamic-programming optimality equation
\[
U(s)=c(s,\pi(s))+\gamma\sup_{s'\in T(s,\pi(s))}U(s')
  =c(s,\pi(s))+\gamma\max_{s'\in T(s,\pi(s))}U(s'),
\]
where the supremum is a maximum because $T(s,\pi(s))$ is finite and nonempty.  Thus $U$ is a fixed point of the displayed map, which is a $\gamma$-contraction in $\|\cdot\|_\infty$ by the maximum perturbation lemma; uniqueness follows.  Selecting at each state a maximizing successor defines the memoryless $\xi^\dagger$, whose induced payoff satisfies the same equation and therefore equals $U$.  The selector side is identical with $\min$ over $A_V(s)$ in place of $\max$ over successors, using the minimum perturbation lemma for the contraction and the finite selector lemma for attainment.
\end{proof}

\begin{theorem}[Game value and memoryless saddle point]\label{thm:gamevalue}
On every nonempty admissible region $V$ the discounted successor game has a value equal to the Bellman--Kripke fixed point: $\overline V=\underline V=J^\star$.  Moreover the pair $(\pi^\star,\xi^\star)$ is a saddle point in memoryless strategies, where $\pi^\star$ is any Bellman-minimizing selector and $\xi^\star(s,u)\in\argmax_{s'\in T(s,u)}J^\star(s')$ is the state-choice successor-maximizer.  In particular the resolver's optimal response is memoryless.
\end{theorem}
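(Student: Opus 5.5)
The plan is the standard sandwich argument for discounted zero-sum games, built on Lemma~\ref{lem:bestresponse}. First I fix a Bellman-minimizing selector $\pi^\star$. Then I define the memoryless resolver $\xi^\star(s,u)\in\argmax_{s'\in T(s,u)}J^\star(s')$, which exists by finiteness and nonemptiness of $T(s,u)$. The target is the chain
\[
\overline V\leq \sup_{\xi}\mathcal K(\cdot;\pi^\star,\xi)=J^\star=\inf_{\sigma}\mathcal K(\cdot;\sigma,\xi^\star)\leq\underline V .
\]
Combined with the trivial inequality $\underline V\leq\overline V$ (a sup-inf never exceeds an inf-sup), this gives $\overline V=\underline V=J^\star$. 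The same chain shows that $(\pi^\star,\xi^\star)$ is a saddle point: $\mathcal K(s;\pi^\star,\xi)\leq J^\star(s)\leq\mathcal K(s;\sigma,\xi^\star)$ for all $\sigma,\xi$, with equality at $(\pi^\star,\xi^\star)$.

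For the left equality, apply the first half of Lemma~\ref{lem:bestresponse} with $\pi=\pi^\star$. It identifies $\sup_\xi\mathcal K(\cdot;\pi^\star,\xi)$ as the unique fixed point of $\mathcal B^{\pi^\star}$. Because $\pi^\star$ is greedy for $J^\star$, we have $\mathcal B^{\pi^\star}J^\star=\mathcal B_VJ^\star=J^\star$, so uniqueness gives that this fixed point equals $J^\star$. The same lemma also says the supremum is attained by a memoryless resolver.

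For the right equality, apply the second half of Lemma~\ref{lem:bestresponse} with $\xi=\xi^\star$. It identifies $W=\inf_\sigma\mathcal K(\cdot;\sigma,\xi^\star)$ as the unique fixed point of
\[
z\mapsto\min_{u\in A_V(s)}\bigl[c(s,u)+\gamma z(\xi^\star(s,u))\bigr].
\]
By the definition of $\xi^\star$, $J^\star(\xi^\star(s,u))=\max_{s'\in T(s,u)}J^\star(s')$. Substituting $z=J^\star$ therefore turns this map into exactly $\mathcal B_VJ^\star=J^\star$. Hence $J^\star$ is a fixed point, and uniqueness gives $W=J^\star$. The lemma again provides a memoryless minimizer, and $\pi^\star$ is one of them because it attains the minimum at $z=J^\star$. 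This also yields $\mathcal K(\cdot;\pi^\star,\xi^\star)=J^\star$.

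The main obstacle is the step hidden inside Lemma~\ref{lem:bestresponse}: the best responses in the definitions of $\overline V$ and $\underline V$ range over \emph{history-dependent} strategies. One must check that the dynamic-programming equation for the one-player value against a fixed memoryless opponent really holds over that full class. This is the standard principle-of-optimality argument, and I will take it as given by the lemma. If it needs reinforcing, I would compare $\mathcal K$ for an arbitrary history strategy with $J^\star$ by bounding the $n$-step truncation by induction using $\mathcal B^{\pi^\star}$ (respectively the $\xi^\star$-operator). The tail is at most $\gamma^n\bar c/(1-\gamma)$, which tends to $0$. The memorylessness of the resolver's optimal response then follows directly, since $\xi^\star$ is memoryless by construction.
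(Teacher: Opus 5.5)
Your proposal is correct and follows the paper's proof essentially step for step: both apply Lemma~\ref{lem:bestresponse} to the memoryless selector $\pi^\star$ and to the memoryless resolver $\xi^\star$, identify each one-sided best-response value with $J^\star$ by uniqueness of the corresponding contraction fixed point, and close the sandwich with $\underline V\leq\overline V$ to obtain the value and the saddle-point inequalities. You flag that best responses range over history-dependent strategies; the paper likewise delegates that point to the lemma, and your truncation argument with tail $\gamma^n\bar c/(1-\gamma)$ is a sound way to make it explicit.
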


\begin{proof}
Apply Lemma~\ref{lem:bestresponse} to the memoryless selector $\pi^\star$.  Its best-response value $U$ is the unique fixed point of $z\mapsto c(s,\pi^\star(s))+\gamma\max_{s'\in T(s,\pi^\star(s))}z(s')$.  Because $\pi^\star$ is Bellman greedy, evaluating $\mathcal B_V$ at $J^\star$ uses precisely the choice $\pi^\star(s)$ at each $s$, so $J^\star$ satisfies that same equation; by uniqueness $U=J^\star$.  Hence $\sup_{\xi}\mathcal K(s;\pi^\star,\xi)=J^\star(s)$, and since $\pi^\star$ is one certified selector strategy,
\[
\overline V(s)=\inf_\sigma\sup_\xi\mathcal K(s;\sigma,\xi)\leq\sup_\xi\mathcal K(s;\pi^\star,\xi)=J^\star(s).
\]
Now apply the lemma to the memoryless resolver $\xi^\star$.  Its best-response value $W$ is the unique fixed point of $z\mapsto\min_{u\in A_V(s)}\bigl[c(s,u)+\gamma\,z(\xi^\star(s,u))\bigr]$.  Evaluating at $z=J^\star$ and using $J^\star(\xi^\star(s,u))=\max_{s'\in T(s,u)}J^\star(s')$ gives
\[
\min_{u\in A_V(s)}\Bigl[c(s,u)+\gamma\,J^\star(\xi^\star(s,u))\Bigr]
=\min_{u\in A_V(s)}\Bigl[c(s,u)+\gamma\max_{s'\in T(s,u)}J^\star(s')\Bigr]
=(\mathcal B_VJ^\star)(s)=J^\star(s),
\]
so $J^\star$ is the fixed point and $W=J^\star$.  Hence $\inf_\sigma\mathcal K(s;\sigma,\xi^\star)=J^\star(s)$, and since $\xi^\star$ is one admissible resolver strategy,
\[
\underline V(s)=\sup_\xi\inf_\sigma\mathcal K(s;\sigma,\xi)\geq\inf_\sigma\mathcal K(s;\sigma,\xi^\star)=J^\star(s).
\]
The general inequality $\underline V\leq\overline V$ then forces $J^\star\leq\underline V\leq\overline V\leq J^\star$, so all four are equal and the game has value $J^\star$.  Equality of the one-sided best-response values to the common value $J^\star$ is exactly the saddle-point condition $\mathcal K(s;\pi^\star,\xi)\leq J^\star(s)\leq\mathcal K(s;\sigma,\xi^\star)$ for all $\sigma,\xi$; both $\pi^\star$ and $\xi^\star$ are memoryless, and $\xi^\star$ depends only on the pair $(s,u)$, which is the rectangular form anticipated in Remark~\ref{rem:rect}.
\end{proof}

\begin{remark}
The corollary on admissible optimality and this theorem are the two halves of the same value.  The corollary fixes the resolver implicitly through the universal successor and minimizes; the theorem exhibits the resolver as a player and shows that an optimal resolver needs no memory, so the universal successor used throughout is realized by a stationary environment.  Nothing in the value-refinement layer therefore relies on the environment being more than state-choice rectangular.
\end{remark}

\section{Selector iteration over certified choices}

Value iteration approaches $J^\star$ asymptotically.  Selector iteration reaches an optimal selector after finitely many strict improvements because the set of admissible state-based selectors is finite.  To make the finite argument exact, improvement uses a proper tie convention: if the current choice is already greedy at a state, the choice is retained at that state; otherwise a strictly better greedy choice is selected.

\begin{definition}[Admissible state-based selector and its value]
A state-based selector $\pi:V\to U$ is \emph{admissible} if $\pi(s)\in A_V(s)$ for every $s\in V$.  Its value $J^\pi\in\ell_\infty(V)$ is the unique fixed point of the selector operator
\[
(\mathcal B^\pi J)(s)=c(s,\pi(s))+\gamma\max_{s'\in T(s,\pi(s))}J(s').
\]
\end{definition}

\begin{lemma}[Selector evaluation is well posed]
For every admissible $\pi$, the operator $\mathcal B^\pi$ is a $\gamma$-contraction on $\ell_\infty(V)$, so $J^\pi$ exists, is unique, and selector evaluation $J_{n+1}=\mathcal B^\pi J_n$ converges uniformly to $J^\pi$ from every start.
\end{lemma}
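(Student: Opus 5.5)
The plan is to reduce the statement to the Banach fixed-point theorem on the finite-dimensional space $\ell_\infty(V)$, reusing the argument already given for the unique admissible value. The only change is that the minimum over $A_V(s)$ is replaced by the single choice $\pi(s)$.

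The first step checks that $\mathcal B^\pi$ is well defined. Because $\pi$ is admissible, $\pi(s)\in A_V(s)\subseteq A(s)$ for every $s\in V$. Hence $T(s,\pi(s))$ is finite, nonempty, and contained in $V$ by the one-step admissible closure lemma. So the maximum $\max_{s'\in T(s,\pi(s))}J(s')$ exists and only evaluates $J$ on $V$, which means $\mathcal B^\pi$ maps $\ell_\infty(V)$ into itself. The costs are bounded by $\bar c$, but this is not even needed, since $V$ is finite.

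The second step proves the contraction estimate pointwise. Fix $s\in V$ and $J,K\in\ell_\infty(V)$. The cost terms $c(s,\pi(s))$ cancel. The maximum perturbation lemma, applied on the finite nonempty set $Y=T(s,\pi(s))$, gives
\[
|(\mathcal B^\pi J)(s)-(\mathcal B^\pi K)(s)|
=\gamma\Bigl|\max_{Y}J-\max_{Y}K\Bigr|
\leq\gamma\max_{s'\in Y}|J(s')-K(s')|
\leq\gamma\|J-K\|_\infty .
\]
Taking the supremum over $s\in V$ shows that $\mathcal B^\pi$ is a $\gamma$-contraction, with $\gamma<1$ by the discounting assumption.

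The third step invokes completeness. The space $\ell_\infty(V)$ is a finite-dimensional normed space, so it is complete. Banach's theorem then yields a unique fixed point $J^\pi$ and convergence of $J_{n+1}=\mathcal B^\pi J_n$ from every $J_0$. The convergence is uniform because the norm is the sup norm, and the rate is $\|J_n-J^\pi\|_\infty\leq\gamma^n\|J_0-J^\pi\|_\infty$.

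No step is a real obstacle. The only point needing care is the first: the successor sets must lie inside $V$ so that the operator is defined on functions on $V$ alone. Admissibility of $\pi$, through $A_V$, is exactly what guarantees this.
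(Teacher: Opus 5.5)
Your proposal is correct and follows essentially the same route as the paper: the paper's proof also applies the maximum perturbation lemma to get a $1$-Lipschitz inner maximum, multiplies by $\gamma$, and concludes by completeness of $\ell_\infty(V)$ and Banach's theorem. Your extra check that $T(s,\pi(s))\subseteq V$, so that $\mathcal B^\pi$ maps $\ell_\infty(V)$ into itself, is a harmless and welcome addition that the paper leaves implicit.
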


\begin{proof}
The inner maximum over $T(s,\pi(s))$ is $1$-Lipschitz in the sup norm by the maximum perturbation lemma.  Multiplication by $\gamma$ gives Lipschitz constant $\gamma$.  Completeness of $\ell_\infty(V)$ and Banach's contraction theorem give existence, uniqueness, and uniform convergence.
\end{proof}

\begin{definition}[Proper selector improvement]
Given a certified selector $\pi$ with value $J^\pi$, define the greedy set
\[
G_\pi(s)=\argmin_{u\in A_V(s)}\Bigl[c(s,u)+\gamma\max_{s'\in T(s,u)}J^\pi(s')\Bigr].
\]
A proper improvement $\pi'$ satisfies $\pi'(s)=\pi(s)$ whenever $\pi(s)\in G_\pi(s)$, and otherwise chooses any element of $G_\pi(s)$.  The finite selector lemma makes $G_\pi(s)$ nonempty, so $\pi'$ exists and is admissible.
\end{definition}

\begin{theorem}[Monotone improvement]
For every admissible $\pi$ and proper improvement $\pi'$, $J^{\pi'}\leq J^\pi$ pointwise.  If $\pi'$ differs from $\pi$ at some state, then $\mathcal B_VJ^\pi<J^\pi$ at at least one state and $J^{\pi'}\neq J^\pi$.  If $\pi'=\pi$, then $\pi$ is greedy, $J^\pi=J^\star$, and $\pi$ is optimal among certified selectors.
\end{theorem}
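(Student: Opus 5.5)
The plan is to run the standard policy-improvement argument, built from the monotonicity, greedy-domination and sandwich lemmas already proved for $\mathcal B_V$ and $\mathcal B^\pi$.

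\emph{Pointwise improvement.} By definition of a proper improvement, $\pi'(s)\in G_\pi(s)$ at every state: either $\pi(s)$ was already greedy and is retained, or a greedy choice replaces it. Hence
\[
\mathcal B^{\pi'}J^\pi=\mathcal B_VJ^\pi\leq\mathcal B^{\pi}J^\pi=J^\pi
\]
pointwise, using greedy domination for the inequality. Since $\mathcal B^{\pi'}$ is monotone (the monotonicity lemma for selector operators), iterating gives
\[
J^\pi\geq\mathcal B^{\pi'}J^\pi\geq(\mathcal B^{\pi'})^2J^\pi\geq\cdots,
\]
and the iterates converge uniformly to $J^{\pi'}$ because $\mathcal B^{\pi'}$ is a $\gamma$-contraction (selector evaluation is well posed). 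This yields $J^{\pi'}\leq J^\pi$. It is the supersolution half of the sandwich lemma, applied to $\mathcal B^{\pi'}$ instead of $\mathcal B_V$.

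\emph{Strictness when $\pi'\neq\pi$.} Suppose $\pi'(s)\neq\pi(s)$ at some $s$. The tie convention then forces $\pi(s)\notin G_\pi(s)$, so
\[
(\mathcal B_VJ^\pi)(s)<(\mathcal B^\pi J^\pi)(s)=J^\pi(s),
\]
with strictness coming from the equality case of greedy domination. Evaluating at $s$, we get $J^{\pi'}(s)\leq(\mathcal B^{\pi'}J^\pi)(s)=(\mathcal B_VJ^\pi)(s)<J^\pi(s)$, where the first inequality is the monotone chain above. Hence $J^{\pi'}\neq J^\pi$. This step is where the proper tie convention does its work: without it, a switch between equally greedy choices would change the selector without changing the value, and the claim would fail.

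\emph{Fixed point of improvement.} If $\pi'=\pi$, then $\pi(s)\in G_\pi(s)$ for every $s$, so $\mathcal B_VJ^\pi=\mathcal B^\pi J^\pi=J^\pi$. Uniqueness of the fixed point of $\mathcal B_V$ (the unique admissible value theorem) gives $J^\pi=J^\star$. Optimality among certified selectors then follows from the admissible optimality corollary, since $J^\star\leq J^\rho$ for every $\rho\in\Pi_V$.

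I expect no real obstacle. The only delicate point is to pass to the limit in the monotone chain using uniform contraction convergence rather than assuming it directly, and to use the tie convention precisely to obtain strictness.
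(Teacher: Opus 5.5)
Your proposal is correct and follows the paper's proof. You use the same three steps: the greedy-domination supersolution argument $\mathcal B^{\pi'}J^\pi=\mathcal B_VJ^\pi\le J^\pi$ followed by monotone iteration to get $J^{\pi'}\le J^\pi$; the tie convention to force $\pi(s)\notin G_\pi(s)$ at a changed state; and uniqueness of the Bellman--Kripke fixed point when $\pi'=\pi$.

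There are two minor variations. You obtain strictness directly from $J^{\pi'}(s)\le(\mathcal B^{\pi'}J^\pi)(s)<J^\pi(s)$, which also pins the strict decrease to the changed state, whereas the paper argues by contradiction that $J^\pi$ cannot be a fixed point of $\mathcal B^{\pi'}$. You also make explicit the final appeal to the admissible-optimality corollary, which the paper's proof leaves implicit.
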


\begin{proof}
By greedy domination,
\[
\mathcal B^{\pi'}J^\pi=\mathcal B_VJ^\pi\leq\mathcal B^\pi J^\pi=J^\pi.
\]
Thus $J^\pi$ is a supersolution for $\mathcal B^{\pi'}$.  Monotonicity of $\mathcal B^{\pi'}$ makes $(\mathcal B^{\pi'})^nJ^\pi$ nonincreasing, and selector-evaluation convergence sends this sequence to $J^{\pi'}$.  Hence $J^{\pi'}\leq J^\pi$.

If $\pi'$ differs from $\pi$, then at some state $s$ the previously selected choice was not in the greedy set.  Therefore the minimum defining $(\mathcal B_VJ^\pi)(s)$ is strictly smaller than the previous selector value $(\mathcal B^\pi J^\pi)(s)=J^\pi(s)$.  If nevertheless $J^{\pi'}=J^\pi$, then $J^\pi$ would be a fixed point of $\mathcal B^{\pi'}$, so
\[
J^\pi=\mathcal B^{\pi'}J^\pi=\mathcal B_VJ^\pi,
\]
contradicting strict inequality at $s$.  Thus the value changes strictly in at least one component.

If $\pi'=\pi$, then $\pi(s)\in G_\pi(s)$ for all $s$, so $\mathcal B_VJ^\pi=\mathcal B^\pi J^\pi=J^\pi$.  The Bellman--Kripke fixed point is unique, hence $J^\pi=J^\star$.
\end{proof}

\begin{theorem}[Finite termination of proper selector iteration]
Starting from any certified selector and alternating exact selector evaluation with proper selector improvement, the iteration reaches an optimal certified selector after at most $\prod_{s\in V}|A_V(s)|$ improvement steps.
\end{theorem}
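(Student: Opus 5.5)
The plan is to combine the Monotone improvement theorem with a no-revisit argument on the finite set $\Pi_V$ of admissible state-based selectors.

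\textbf{Step 1: the search space is finite.} Since $V$ is finite and each $A_V(s)$ is finite and nonempty (Certified-choice restriction lemma), the set $\Pi_V$ has exactly $\prod_{s\in V}|A_V(s)|$ elements. Proper improvement sends an element of $\Pi_V$ to an element of $\Pi_V$, because $G_\pi(s)\subseteq A_V(s)$. So the iteration $\pi_0,\pi_1,\pi_2,\ldots$ never leaves $\Pi_V$.

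\textbf{Step 2: values are monotone.} By the Monotone improvement theorem, $J^{\pi_{k+1}}\le J^{\pi_k}$ pointwise. Moreover, whenever $\pi_{k+1}\neq\pi_k$ the theorem also gives $J^{\pi_{k+1}}\neq J^{\pi_k}$.

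\textbf{Step 3: no selector is revisited (the main obstacle).} The sequence of values is nonincreasing in the pointwise partial order. It strictly decreases in that order (pointwise $\le$ and not equal) at every step where the selector changes. Now suppose $\pi_i=\pi_j$ for some $i<j$ at which a change occurs between them. Then $J^{\pi_i}=J^{\pi_j}$, but by monotonicity
\[
J^{\pi_j}\le J^{\pi_{i+1}}\lneq J^{\pi_i},
\]
which is a contradiction. So selectors before stabilization are pairwise distinct. One subtlety must be checked: pointwise $\le$ combined with $\neq$ really is a strict partial order. This holds, so antisymmetry rules out cycles. I would also note that $J^\pi$ depends only on $\pi$, by uniqueness of the selector fixed point, so equal selectors have equal values.

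\textbf{Step 4: count the steps.} A sequence of pairwise distinct elements of $\Pi_V$ has at most $|\Pi_V|$ terms. Hence at most $|\Pi_V|-1\le\prod_{s\in V}|A_V(s)|$ changing improvement steps can occur before some step returns $\pi_{k+1}=\pi_k$. At that point the last clause of the Monotone improvement theorem applies: $\pi_k$ is greedy, $J^{\pi_k}=J^\star$, and $\pi_k$ is optimal among certified selectors. Together with the Admissible optimality corollary, this shows it is an optimal certified selector.

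\textbf{Remark on the tie convention.} The proper tie convention is what makes $\pi'=\pi$ equivalent to greediness. Without it, the iteration could cycle among tied greedy choices while the value stays constant, and the termination argument of Step 3 would fail.
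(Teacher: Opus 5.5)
Your proof is correct and follows the paper's argument. It uses finiteness of the set of certified selectors, the monotone-improvement theorem ($J^{\pi_{k+1}}\le J^{\pi_k}$ with a strict value change whenever the selector changes), a no-revisit argument, and the clause that greediness at stabilization gives $J^{\pi}=J^\star$ and optimality. Your Step~3 is slightly more careful than the paper's one-line version, because you make explicit that the chain $J^{\pi_j}\le J^{\pi_{i+1}}\lneq J^{\pi_i}$, combined with antisymmetry of the pointwise order, is what rules out a return to an earlier selector.
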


\begin{proof}
There are at most $\prod_{s\in V}|A_V(s)|$ admissible state-based selectors.  Until a greedy selector is reached, the previous theorem gives a new selector with a different value and $J^{\pi_{k+1}}\leq J^{\pi_k}$.  A previously visited selector cannot reappear because it would have the same value as before, contradicting the strict value change that occurs at every nongreedy improvement.  Therefore the iteration visits pairwise distinct selectors until it reaches a greedy one.  Finiteness forces this to happen within the stated bound.  Greediness gives $J^\pi=J^\star$, and the admissible-optimality corollary gives optimality among certified selectors.
\end{proof}

\begin{remark}
The tie convention is not cosmetic.  Without it, two value-equivalent greedy selectors can be swapped indefinitely by an arbitrary argmin selector.  Proper improvement turns selector iteration into a finite descent over certified selectors rather than a nondeterministic walk over equal minimizers.
\end{remark}

\section{Local descent and its failure mode}

The following example isolates the simplest failure of local value refinement.  It is small enough to be checked by hand, but it captures a structural error that persists in larger systems.

\begin{example}[Cheap exit]
Let $S=\{s,a,b\}$ and $G=\{s,a\}$.  Let $u_0$ and $u_1$ be admissible at $s$.  Define $T(s,u_0)=\{a\}$ and $T(s,u_1)=\{b\}$.  Let $a$ have an admissible self-loop and let $b\notin G$.  Set $c(s,u_1)<c(s,u_0)$.
\end{example}

\begin{proposition}[Local descent can destroy admissible continuation]
There exists a choice-refined labelled transition structure for which a locally cheapest choice exits $G$ in one step, even though $\Viab(G)$ is nonempty.
\end{proposition}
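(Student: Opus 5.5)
The plan is to instantiate the Cheap exit example and verify three things: a locally cheapest choice leaves $G$ at once, the kernel $\Viab(G)$ is nonempty, and the cheap choice is not certified. Concretely, take $S=\{s,a,b\}$, $G=\{s,a\}$, $A(s)=\{u_0,u_1\}$, $T(s,u_0)=\{a\}$, $T(s,u_1)=\{b\}$, and $A(a)=\{u_0\}$ with $T(a,u_0)=\{a\}$. To respect the standing nonemptiness of choices and successors, give $b$ a single self-loop $A(b)=\{u_0\}$, $T(b,u_0)=\{b\}$. For the costs, set $c(s,u_1)=0$, $c(s,u_0)=1$, and give every self-loop cost $0$; all costs then lie in $[0,\bar c]$ with $\bar c=1$. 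This is the same structure used in Proposition~\ref{prop:value-before-kernel}, renamed.

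Next I would compute $\Viab(G)$ using the descending Kleene sequence lemma. We have $X_0=S$ and $X_1=G\cap\Pre_{\exists\Box}(S)=G=\{s,a\}$. For $X_2=G\cap\Pre_{\exists\Box}(\{s,a\})$: the state $a$ survives through its self-loop, and $s$ survives through $u_0$ because $T(s,u_0)=\{a\}\subseteq X_1$. So $X_2=X_1$, and Proposition~\ref{prop:extract} gives $\Viab(G)=\{s,a\}\neq\varnothing$. The certified choice set at $s$ is then $A_V(s)=\{u_0\}$, because $T(s,u_1)=\{b\}\not\subseteq V$.

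Finally I would compare the two choices. The one-step local cost comparison $c(s,u_1)<c(s,u_0)$ picks $u_1$ as the unique locally cheapest choice at $s$. Its only successor is $b\notin G$, so any selector that uses it leaves $G$ after one step. The same conclusion holds if "locally cheapest" is read through a zero-initialized one-step lookahead $c(s,u)+\gamma\max_{T(s,u)}J_0$, since both successors contribute $0$ at $J_0\equiv 0$. By contrast, the admissible continuation semantic construction theorem says that $s\in\Viab(G)$ has a selector, namely $u_0$ at $s$ and the self-loop at $a$, that keeps every path in $G$ forever. Together these give the proposition.

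I expect the main obstacle to be fixing what "locally cheapest" means, not any of the calculations. I would adopt the immediate-cost reading and add the remark that the zero-initialized lookahead gives the same answer, so the conclusion holds under either reading.
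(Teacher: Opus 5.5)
Your proposal is correct and follows essentially the paper's argument: instantiate the cheap-exit example, note that the strictly cheaper choice $u_1$ sends $s$ to $b\notin G$, and show $s,a\in\Viab(G)$ via the self-loop at $a$ and the choice $u_0$. Your explicit Kleene computation, and the self-loop you add at $b$ to meet the nonemptiness requirement the example leaves implicit, simply spell out the paper's shorter reasoning in more detail.
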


\begin{proof}
Use the structure in the preceding example.  The locally cheapest choice at $s$ is $u_1$, and this choice sends the system to $b\notin G$.  The admissible choice $u_0$ sends the system to $a$.  Since $a$ can remain in $G$ forever, $a\in\Viab(G)$.  Since $s$ has a choice into $\Viab(G)$ and $s\in G$, also $s\in\Viab(G)$.  Therefore local descent fails while admissible choice succeeds.
\end{proof}

The proposition proves that a one-step objective cannot be used as a stopping or invariance criterion.  The correct criterion is membership in a fixed point plus optimality among choices that preserve that fixed point.

\section{Value-refined modal bisimulation and coarsest quotients}
\label{sec:value-refined-bisimulation}

The quotient theorem is the main preservation result.  Its relation is not merely a conservative strengthening of ordinary bisimulation; it is the coarsest local equivalence compatible with the combined semantics used here.  Labels are needed for modal truth, successor-class equality is needed for $[\exists]\Box$, certified-choice matching is needed for the fixed-point kernel, and cost equality is needed for value refinement.  Each clause is forced by a two-state or three-state countermodel.

\begin{definition}[Local value-refinement transformer]
For an equivalence relation $R$ on $S$, let $\Gamma(R)$ relate $s$ and $t$ exactly when $L(s)=L(t)$ and the following back-and-forth condition holds: for every $u\in A(s)$ there is $v\in A(t)$ with $c(s,u)=c(t,v)$ and $[T(s,u)]_R=[T(t,v)]_R$, and symmetrically from $t$ to $s$.
\end{definition}

\begin{proposition}[Greatest value-refined equivalence]
The operator $\Gamma$ is monotone on the finite lattice of equivalence relations ordered by refinement after equivalence closure.  Its greatest post-fixed point is the coarsest value-refined modal bisimulation on $\mathfrak M$.
\end{proposition}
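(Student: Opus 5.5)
We prove monotonicity of $\Gamma$, then obtain the coarsest value-refined bisimulation as the limit of the descending iteration $R_0=S\times S$, $R_{n+1}=\Gamma(R_n)$, exactly as for the kernel in Theorem~\ref{thm:core-fixedpoint}.

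\emph{Monotonicity.} Suppose $R\subseteq R'$ are equivalence relations. Each $R'$-class is a union of $R$-classes. Hence $[X]_R=[Y]_R$ implies $[X]_{R'}=[Y]_{R'}$, because an $R'$-class meets $X$ iff it contains an $R$-class meeting $X$. Label and cost clauses do not depend on $R$. So the same matching witnesses $v$ show $\Gamma(R)\subseteq\Gamma(R')$.

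\emph{$\Gamma$ preserves equivalence relations.} We check that $\Gamma(R)$ is itself an equivalence relation whenever $R$ is one, so no closure step is needed.
\begin{itemize}[label=--]
\item Reflexivity: take $v=u$.
\item Symmetry: the back-and-forth clause is symmetric by definition.
\item Transitivity: compose witnesses. If $s\,\Gamma(R)\,t$ and $t\,\Gamma(R)\,r$, then for $u\in A(s)$ pick $v\in A(t)$ and then $w\in A(r)$. Costs and class-sets agree by transitivity of equality.
\end{itemize}
Thus $\Gamma$ is a monotone self-map of the finite lattice of equivalence relations ordered by inclusion. The "after equivalence closure" wording only matters for the join, which is the closure of the union.

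\emph{Existence and identification.} The chain $R_0\supseteq R_1\supseteq\cdots$ is descending by monotonicity, since $R_1\subseteq R_0$. It stabilizes after at most $|S|^2$ strict steps at a fixed point $R_\infty$. As in Proposition~\ref{prop:extract}, any post-fixed point $R\subseteq\Gamma(R)$ satisfies $R\subseteq R_n$ for all $n$ by induction: $R\subseteq R_0$, and $R\subseteq\Gamma(R)\subseteq\Gamma(R_n)=R_{n+1}$. Hence $R_\infty$ is the greatest post-fixed point, which is also the greatest fixed point by Tarski.

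It remains to compare with the definition of value-refined bisimulation. An equivalence $R$ is value-refined exactly when $R\subseteq\Gamma(R)$, because the defining clauses of value-refined bisimulation coincide with the clauses of $\Gamma$ evaluated at $R$. So the value-refined equivalences are exactly the post-fixed points. $R_\infty$ is one of them and contains all others, so it is the coarsest value-refined modal bisimulation. It also equals the equivalence closure of the union of all of them, matching Theorem~\ref{thm:core-quotient}.

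\emph{Main obstacle.} The step needing care is transitivity of $\Gamma(R)$. Without it, one must define the lattice order "after equivalence closure". Then one would have to check that closure commutes appropriately with $\Gamma$, and in particular that the closure of a post-fixed relation is still post-fixed. I would avoid that check by proving closure of $\Gamma$ on equivalences directly, as above. This works because the class-set equality $[\cdot]_R$ is itself an equivalence on successor sets.
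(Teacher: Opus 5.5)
Your proof is correct and follows essentially the same route as the paper. You get monotonicity from the fact that equal successor $R$-class sets yield equal successor $R'$-class sets, then run the descending iteration, and every value-refined bisimulation, being a post-fixed point, is contained in the limit. Your explicit check that $\Gamma$ sends equivalence relations to equivalence relations (so no closure step is needed) and your start from $S\times S$ rather than label equality are harmless refinements; the first makes precise what the paper leaves to the phrase ``after equivalence closure''.
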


\begin{proof}
If $R$ refines $R'$, then equality of successor $R$-classes implies equality of successor $R'$-classes after quotienting, so the local matching test is monotone in the direction needed for the descending partition-refinement computation.  Starting from equality of labels and iterating the failure of the matching tests removes pairs until a post-fixed relation is reached.  Finiteness gives termination.  Any value-refined bisimulation satisfies the local clauses and is therefore a post-fixed point of $\Gamma$; hence it is contained in the greatest post-fixed point.  Conversely, the greatest post-fixed point satisfies the clauses by construction and is a value-refined modal bisimulation.
\end{proof}

State spaces often contain redundant states.  Quotienting is useful only when it preserves both logical truth and value-refined value.  Plain graph equivalence is not enough because choices and costs matter.

\begin{definition}[Value-refined modal bisimulation]
For an equivalence relation $\sim$ on $S$ and a set $X\subseteq S$, write $[X]_\sim=\{[x]_\sim:x\in X\}$ for the collection of $\sim$-classes that $X$ meets.  An equivalence relation $\sim$ on $S$ is a \emph{value-refined modal bisimulation} if $s\sim t$ implies $L(s)=L(t)$ and the following \emph{matching condition} holds in both directions.  For every $u\in A(s)$ there exists $v\in A(t)$ with
\[
c(s,u)=c(t,v)
\qquad\text{and}\qquad
[T(s,u)]_\sim=[T(t,v)]_\sim,
\]
and symmetrically, for every $v\in A(t)$ there exists $u\in A(s)$ with the same two equalities.
\end{definition}

The logical part of this definition is standard in the bisimulation tradition \cite{park1981}.  The cost and choice clauses are added so that value refinement is preserved, not merely formulas.

\begin{remark}[Why one-directional successor matching is too weak]
The matching condition requires that the matched choices reach \emph{the same set of successor classes}, not merely that every source successor has some related target successor.  One-directional matching ($\forall s'\,\exists t'$) preserves $\Box$ and $\Diamond$, but it does \emph{not} preserve the choice-forcing operator $[\exists]\Box$ and therefore does not preserve admissible continuation.  Witness: let $s$ have a single choice $u$ with $T(s,u)=\{s'\}$ and $s'\models\varphi$, so $s\models[\exists]\Box\varphi$; let $t$ have one cost-matched choice $v$ with $T(t,v)=\{t_1,t_2\}$, where $t_1\sim s'$ but $t_2\not\sim s'$ and $t_2\not\models\varphi$.  Every source successor is then matched, since $s'\sim t_1$, yet $t\not\models[\exists]\Box\varphi$ because $t_2$ violates $\varphi$.  The class equality $[T(s,u)]_\sim=[T(t,v)]_\sim$ fails for this pair, so it is correctly excluded.  Equality of successor-class sets is the choice-refined analogue of alternating bisimulation \cite{alur1998}; it is exactly the discipline that an existential choice quantifier followed by a universal successor quantifier requires.
\end{remark}

\begin{lemma}[Modal invariance]
If $s\sim t$, then $s\models\varphi$ if and only if $t\models\varphi$ for every formula $\varphi$ of the base modal language.
\end{lemma}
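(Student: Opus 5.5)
We argue by structural induction on $\varphi$ in the base modal language, proving the biconditional for all pairs $s\sim t$ simultaneously. The base cases are immediate. The case $\top$ is trivial. For $p$ and $\neg p$ the clause $L(s)=L(t)$ in the definition of value-refined modal bisimulation gives $p\in L(s)$ iff $p\in L(t)$. Conjunction and disjunction follow directly from the induction hypothesis applied to the immediate subformulas at the same pair $(s,t)$.

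The only substantive cases are $\Box\varphi$ and $\Diamond\varphi$. First we record a consequence of the induction hypothesis: $\Sat(\varphi)$ is a union of $\sim$-classes. Indeed, if $x\models\varphi$ and $x\sim y$, then $y\models\varphi$. Hence for any $X\subseteq S$, whether $X\subseteq\Sat(\varphi)$, and whether $X\cap\Sat(\varphi)\neq\varnothing$, depends only on the class set $[X]_\sim$.

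For $\Box\varphi$, suppose $s\models\Box\varphi$ and take any $v\in A(t)$ and $t'\in T(t,v)$. By the matching condition in the direction from $t$ to $s$, there is $u\in A(s)$ with $[T(s,u)]_\sim=[T(t,v)]_\sim$. So some $s'\in T(s,u)$ satisfies $s'\sim t'$. Since $s\models\Box\varphi$, we have $s'\models\varphi$, and the induction hypothesis gives $t'\models\varphi$. Thus $t\models\Box\varphi$.

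For $\Diamond\varphi$, suppose a witness $u\in A(s)$, $s'\in T(s,u)$ with $s'\models\varphi$. The forward matching supplies $v\in A(t)$ with equal class sets, so the class of $s'$ is met by some $t'\in T(t,v)$, and the induction hypothesis gives $t'\models\varphi$. In both modal cases the converse direction is the same argument with $s$ and $t$ exchanged, which is legitimate because the matching condition is required in both directions and $\sim$ is symmetric.

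The cost equality $c(s,u)=c(t,v)$ plays no role here; only the label and successor-class clauses are used. The one point needing care is that the induction hypothesis must be applied to successor pairs $(s',t')$ rather than to $(s,t)$. This is why the induction is on formulas, uniformly over all related pairs, rather than on states. I expect no real obstacle beyond stating this uniformity explicitly. Note also that plain $\Box$ quantifies universally over both choices and successors. For that reason one-directional class matching would in fact already suffice for this lemma. The full class-set equality is needed only later, for $[\exists]\Box$, as the preceding remark explains.
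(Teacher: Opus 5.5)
Your proposal is correct and takes essentially the same approach as the paper: structural induction, with the label clause handling atoms, the $t$-to-$s$ matching clause handling $\Box$, the forward clause handling $\Diamond$, and the induction hypothesis applied to related successor pairs. Your side remark that one-directional successor matching would already suffice for the base language also agrees with the paper's later remark that full class-set equality is needed only for $[\exists]\Box$.
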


\begin{proof}
The proof is by induction on formula structure.  Atomic formulas are preserved because $L(s)=L(t)$.  Boolean connectives follow from the induction hypothesis.  For $\Box\varphi$, assume $s\models\Box\varphi$ and let $v\in A(t)$.  By the matching condition there is $u\in A(s)$ with $[T(s,u)]_\sim=[T(t,v)]_\sim$.  Fix any $t'\in T(t,v)$; its class is met by $T(s,u)$, so some $s'\in T(s,u)$ has $s'\sim t'$.  Since $s\models\Box\varphi$ we have $s'\models\varphi$, hence $t'\models\varphi$ by induction.  As $t'\in T(t,v)$ was arbitrary and $v$ was arbitrary, $t\models\Box\varphi$; the reverse direction is symmetric.  For $\Diamond\varphi$, a witness $u\in A(s)$ and $s'\in T(s,u)$ with $s'\models\varphi$ yields, via $[T(s,u)]_\sim=[T(t,v)]_\sim$ for the matched $v\in A(t)$, some $t'\in T(t,v)$ with $t'\sim s'$, so $t'\models\varphi$ by induction and $t\models\Diamond\varphi$.
\end{proof}

\begin{theorem}[Value-refined $\mu$-calculus invariance]\label{thm:choice-refined-mu-invariance}
If $s\sim t$, then $s$ and $t$ satisfy the same formulas of the positive choice-refined $\mu$-calculus fragment generated by $[\exists]\Box$ and $\nu$.
\end{theorem}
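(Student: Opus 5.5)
The plan is to prove a stronger, valuation-relative statement by structural induction on $\theta$. Call a valuation $\eta$ \emph{$\sim$-saturated} when every $\eta(Z)$ is a union of $\sim$-classes. The claim to establish is: for every formula $\theta$ of the positive fragment and every $\sim$-saturated valuation $\eta$, the denotation $\llbracket\theta\rrbracket_\eta$ is again a union of $\sim$-classes. The theorem is the case of closed formulas (where $\eta$ is irrelevant), since saturation of $\llbracket\theta\rrbracket$ says precisely that $s\sim t$ implies $s\in\llbracket\theta\rrbracket\iff t\in\llbracket\theta\rrbracket$.

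The base cases and the easy steps go as follows.
\begin{enumerate}[label=(\roman*)]
\item If $\theta$ is a base-language formula $\varphi$, its denotation is $\Sat(\varphi)$, which is saturated by the Modal invariance lemma.
\item A variable $Z$ denotes $\eta(Z)$, which is saturated by hypothesis.
\item Conjunction and disjunction are intersection and union, and unions of classes are closed under both.
\item For $[\exists]\Box\theta$, the denotation is $\Pre_{\exists\Box}(\llbracket\theta\rrbracket_\eta)$. The induction hypothesis makes the argument saturated, so the Class saturation lemma applies. Its hypothesis is exactly the successor-class matching clause of value-refined modal bisimulation, with the cost clause simply ignored.
\end{enumerate}

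The fixed-point step $\nu Z.\theta$ is the one needing care. I will use the descending Kleene chain $X_0=S$ and $X_{n+1}=\llbracket\theta\rrbracket_{\eta[Z:=X_n]}$. By the Monotonicity lemma for the positive fragment, this chain stabilises at $\nu X.\llbracket\theta\rrbracket_{\eta[Z:=X]}$ after finitely many steps. I then argue by an inner induction on $n$. First, $X_0=S$ is saturated. If $X_n$ is saturated, then $\eta[Z:=X_n]$ is a saturated valuation, so the outer induction hypothesis, applied to the body $\theta$ with this modified valuation, makes $X_{n+1}$ saturated. The stable value is one of the $X_n$, so it is saturated.

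This is why the induction has to be stated for all saturated valuations rather than for closed formulas only: the body of a $\nu$ has $Z$ free. The main obstacle is keeping that generalisation explicit, and confirming that finite stabilisation makes an inner induction over finitely many approximants sufficient, with no limit argument. Nested fixed points then cause no further difficulty, because the outer hypothesis is universally quantified over saturated valuations.
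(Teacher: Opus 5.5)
Your proof is correct and follows essentially the same route as the paper. The paper also handles the $[\exists]\Box$ step by successor-class matching, and the $\nu$ step by showing that every descending Kleene approximant from $S$ is a union of $\sim$-classes. Your explicit induction over all $\sim$-saturated valuations is a useful tightening: the paper's appeal to ``the previous paragraph'' for the body $\Phi(X)=\llbracket\theta\rrbracket_{\eta[Z:=X]}$ tacitly needs exactly that generalisation to cover free variables and nested fixed points.
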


\begin{proof}
For formulas without fixed points, use structural induction.  The atomic and Boolean cases are as in the modal-invariance lemma.  For $[\exists]\Box\theta$, suppose $s\models[\exists]\Box\theta$.  Choose $u\in A(s)$ such that every $s'\in T(s,u)$ satisfies $\theta$.  Value-refined modal bisimulation gives a matched $v\in A(t)$ with the same successor-class set.  For every $t'\in T(t,v)$ there is $s'\in T(s,u)$ with $s'\sim t'$, and the induction hypothesis gives $t'\models\theta$.  Hence $t\models[\exists]\Box\theta$; the converse is symmetric.

For a greatest fixed point $\nu Z.\theta$, let $\Phi(X)=\llbracket\theta\rrbracket_{\eta[Z:=X]}$.  By the previous paragraph, whenever $X$ is a union of $\sim$-classes, so is $\Phi(X)$.  The descending Kleene sequence for $\nu Z.\theta$ starts at $S$, a union of classes, and therefore remains class-saturated until it stabilizes.  Its limit is the denotation of the fixed point and is class-saturated.  Thus equivalent states either both satisfy the fixed-point formula or both fail it.
\end{proof}

\begin{theorem}[Admissible continuation quotient invariance]
Assume $G=\Sat(\varphi)$ for a modal formula $\varphi$.  If $s\sim t$, then $s\in \Viab(G)$ if and only if $t\in \Viab(G)$.
\end{theorem}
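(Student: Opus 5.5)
The plan is to show that $\Viab(G)$ is a union of $\sim$-classes by running the descending Kleene chain of $F_G$ and checking class saturation at every stage. First, $G=\Sat(\varphi)$ is a union of $\sim$-classes: by the modal-invariance lemma, if $s\sim t$ then $s\models\varphi$ iff $t\models\varphi$.

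Second, $\Pre_{\exists\Box}$ maps class-saturated sets to class-saturated sets. A value-refined modal bisimulation satisfies the matching hypothesis of the class saturation lemma, which only needs equality of successor-class sets and ignores costs. So that lemma applies directly.

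Then I would induct along the chain $X_0=S$, $X_{n+1}=F_G(X_n)=G\cap\Pre_{\exists\Box}(X_n)$. The set $X_0=S$ is trivially saturated. If $X_n$ is saturated, then $\Pre_{\exists\Box}(X_n)$ is saturated, and so is its intersection with the saturated set $G$, since intersections of unions of classes are unions of classes. By the descending Kleene lemma and Proposition~\ref{prop:extract}, the chain stabilizes at some $X_N=\Viab(G)$ with $N\leq|S|$. Hence $\Viab(G)$ is a union of $\sim$-classes, which is exactly the claim that $s\in\Viab(G)$ iff $t\in\Viab(G)$ whenever $s\sim t$.

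An alternative is to apply Theorem~\ref{thm:choice-refined-mu-invariance} to the formula $\nu Z.(\varphi\wedge[\exists]\Box Z)$. Its denotation is $\nu X.(\Sat(\varphi)\cap\Pre_{\exists\Box}X)=\Viab(G)$, by the semantic clause lemma for choice-refined necessity. I expect the only delicate point to be checking that the class saturation lemma's hypothesis really is implied by the bisimulation, in particular that the matching is stated for every $u\in A(s)$ and not only for certified choices. The definition gives this, so no obstacle is expected.
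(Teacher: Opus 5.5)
Your proposal is correct and follows the paper's own proof: modal invariance makes $G$ class-saturated, and the class saturation lemma keeps $F_G$ class-saturated, since it needs only the successor-class clause of the bisimulation. The descending Kleene chain from $S$ therefore stabilizes at a class-saturated $\Viab(G)$, and your alternative via Theorem~\ref{thm:choice-refined-mu-invariance} applied to $\nu Z.(\varphi\wedge[\exists]\Box Z)$ is also valid, being the same saturation argument at the formula level.
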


\begin{proof}
By modal invariance, $s\in G$ if and only if $t\in G$.  Let $X$ be any union of equivalence classes.  By the class saturation lemma, $\Pre_{\exists\Box}(X)$ is again a union of equivalence classes, and therefore so is $F_G(X)=G\cap\Pre_{\exists\Box}(X)$.  Starting from $S$, the descending sequence $X_{n+1}=F_G(X_n)$ remains class-saturated at every step.  Its finite limit is $\Viab(G)$.  Hence membership in $\Viab(G)$ is constant on equivalence classes.
\end{proof}

\begin{lemma}[Admissible matching closure]\label{lem:viablematch}
Assume $\sim$ is a value-refined modal bisimulation and $G$ is modal-definable, so that $V=\Viab(G)$ is a union of $\sim$-classes.  If $s\sim t$ with $s,t\in V$ and $u\in A_V(s)$, then the matched choice $v\in A(t)$ with $c(t,v)=c(s,u)$ and $[T(t,v)]_\sim=[T(s,u)]_\sim$ satisfies $v\in A_V(t)$.  Consequently the matching condition restricts to the certified choice sets: every $u\in A_V(s)$ is matched by some $v\in A_V(t)$ with equal cost and equal successor-class set, and symmetrically.
\end{lemma}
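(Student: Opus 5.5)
The argument is a direct unpacking of the matching condition combined with the class-saturation of the kernel, which the statement takes as given. It rests on the admissible continuation quotient invariance theorem proved just above.

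Fix $s\sim t$ in $V$ and $u\in A_V(s)$, so that $T(s,u)\subseteq V$. The definition of value-refined modal bisimulation supplies some $v\in A(t)$ with $c(t,v)=c(s,u)$ and $[T(t,v)]_\sim=[T(s,u)]_\sim$. It remains to show $v\in A_V(t)$, that is, $T(t,v)\subseteq V$.

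Take an arbitrary $t'\in T(t,v)$. Its class $[t']_\sim$ lies in $[T(t,v)]_\sim=[T(s,u)]_\sim$, so some $s'\in T(s,u)$ satisfies $s'\sim t'$. Since $s'\in V$ and $V$ is a union of $\sim$-classes, the whole class of $s'$ lies in $V$, and hence $t'\in V$. This gives $T(t,v)\subseteq V$, and therefore $v\in A_V(t)$. The step is the same one used in the class saturation lemma, with the saturated set $X$ specialized to $V$.

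For the consequence, the argument above already shows that every $u\in A_V(s)$ is matched by some $v\in A_V(t)$ with equal cost and equal successor-class set. The symmetric clause follows by exchanging the roles of $s$ and $t$. This is legitimate because $\sim$ is symmetric, both states lie in $V$, and the matching condition in the definition holds in both directions.

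The only point needing care is the hypothesis that $V$ is class-saturated. I would state explicitly that this is exactly the admissible continuation quotient invariance theorem, which applies because $G=\Sat(\varphi)$ with $\varphi$ modal-definable. With that in hand, every remaining step is definitional and no further obstacle is expected.
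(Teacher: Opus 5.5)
Your proposal is correct and follows essentially the same route as the paper: you take the matched choice from the bisimulation clause and use the class-saturation of $V$ to transfer $T(s,u)\subseteq V$ to $T(t,v)\subseteq V$ via equality of successor-class sets, then exchange $s$ and $t$ for the symmetric clause. Your element-wise phrasing (pick $t'\in T(t,v)$, find $s'\in T(s,u)$ with $s'\sim t'$) is just the pointwise form of the paper's class-level statement.
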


\begin{proof}
Since $u\in A_V(s)$ we have $T(s,u)\subseteq V$.  Because $V$ is a union of $\sim$-classes, every class met by $T(s,u)$ is contained in $V$.  The matched choice $v$ satisfies $[T(t,v)]_\sim=[T(s,u)]_\sim$, so $T(t,v)$ meets exactly those same classes, each contained in $V$; hence $T(t,v)\subseteq V$, that is $v\in A_V(t)$.  The symmetric statement follows by exchanging the roles of $s$ and $t$, and equal costs are inherited from the matching condition.  Thus the correspondence of matched choices used in the quotient argument is between $A_V(s)$ and $A_V(t)$, not merely between $A(s)$ and $A(t)$.
\end{proof}

\begin{theorem}[Optimal value quotient invariance]
Assume $G$ is modal-definable and $\sim$ is a value-refined modal bisimulation.  If $s\sim t$ and $s,t\in\Viab(G)$, then $J^\star(s)=J^\star(t)$.
\end{theorem}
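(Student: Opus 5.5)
The plan is to show that $J^\star$ is constant on $\sim$-classes inside $V=\Viab(G)$. The route is to prove that the class-constant functions form a closed subspace that $\mathcal B_V$ maps into itself; uniqueness of the Banach fixed point then forces $J^\star$ into that subspace. Let $\mathcal C\subseteq\ell_\infty(V)$ be the set of functions $J$ with $J(s)=J(t)$ whenever $s\sim t$ and $s,t\in V$. By admissible continuation quotient invariance, $V$ is a union of $\sim$-classes, so $\mathcal C$ is well defined. It is a closed subset of the complete space $\ell_\infty(V)$, since pointwise equalities are preserved under uniform limits.

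The key step is that $\mathcal B_V(\mathcal C)\subseteq\mathcal C$. Fix $J\in\mathcal C$ and $s\sim t$ in $V$. Lemma~\ref{lem:viablematch} gives, for every $u\in A_V(s)$, a matched $v\in A_V(t)$ with $c(t,v)=c(s,u)$ and $[T(t,v)]_\sim=[T(s,u)]_\sim$. Since $J$ is class-constant and $T(s,u)\subseteq V$, the quantity $\max_{s'\in T(s,u)}J(s')$ depends only on the set of classes met. Both maxima therefore range over the same finite set of class values and coincide. The backup of $u$ at $s$ thus equals the backup of $v$ at $t$, and so
\[
(\mathcal B_VJ)(t)\le (\mathcal B_VJ)(s).
\]
The symmetric half of the lemma gives the reverse inequality, so $\mathcal B_VJ\in\mathcal C$.

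Now start value iteration from $J_0\equiv0\in\mathcal C$. Every iterate stays in $\mathcal C$. By the unique admissible value theorem the iterates converge uniformly to $J^\star$, and closedness of $\mathcal C$ gives $J^\star\in\mathcal C$, which is the claim. Equivalently, $\mathcal B_V$ restricted to $\mathcal C$ is a $\gamma$-contraction on a complete space, and its fixed point must coincide with the unique global one.

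The main obstacle is the matching step. It must use certified choices on both sides, so that the minima in $\mathcal B_V$ run over $A_V(s)$ and $A_V(t)$ rather than over $A(s)$ and $A(t)$. Lemma~\ref{lem:viablematch} supplies exactly this, and it depends on $V$ being class-saturated, which in turn needs $G$ to be modal-definable. The remaining ingredient is that the maximum over a successor set depends only on its class set, and this holds because $J$ is constant on each class met.
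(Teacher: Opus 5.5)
Your proof is correct and follows the paper's own argument essentially step for step. Class-saturation of $\Viab(G)$ together with Lemma~\ref{lem:viablematch} shows that $\mathcal B_V$ maps class-constant functions to class-constant functions; value iteration from $J_0\equiv 0$ then gives a class-constant uniform limit $J^\star$. Your two-inequality comparison of the minima and your explicit closedness remark simply spell out steps the paper states more briefly.
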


\begin{theorem}[Coarsest local equivalence for logical-value quotienting]\label{thm:ccb-coarsest}
Let $\equiv$ be an equivalence relation on a finite modal choice structure.  The following are equivalent.
\begin{enumerate}[label=(\alph*)]
\item $\equiv$ is a value-refined modal bisimulation.
\item The quotient by $\equiv$ admits a local class-level semantics in which atomic labels, choice-refined one-step modalities, admissible continuation kernels, and discounted admissible Bellman operators are all well defined and agree with the concrete semantics after pullback.
\end{enumerate}
Consequently value-refined modal bisimulation is the weakest local quotient discipline that preserves the combined logic-value semantics without storing hidden representatives or state-indexed side information.
\end{theorem}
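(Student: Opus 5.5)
The plan is to prove the two implications separately, with (a)$\Rightarrow$(b) assembled from the invariance results already established, and (b)$\Rightarrow$(a) obtained by reading the matching clauses off the requirement that a \emph{local} class-level semantics exist.

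For (a)$\Rightarrow$(b), define the quotient structure on classes $[s]$ by $L([s])=L(s)$. The class-level choices at $[s]$ are the pairs $(c(s,u),[T(s,u)]_\equiv)$ for $u\in A(s)$. The class-level successor sets are those sets of classes. The matching condition of value-refined modal bisimulation says exactly that this family of pairs is independent of the representative $s$, so the quotient is well defined without choosing representatives. Next, the pullback of $\Pre_{\exists\Box}$ on the quotient equals the concrete $\Pre_{\exists\Box}$ on class-saturated sets; this is the class saturation lemma together with the observation that $T(s,u)\subseteq X$ iff $[T(s,u)]_\equiv\subseteq X/{\equiv}$ for saturated $X$. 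Consequently the Kleene approximants of $K_G$ agree by induction, and so do the kernels. By the admissible matching closure lemma, the certified choice sets also correspond. The quotient Bellman operator uses the same costs and the same class maxima. It is therefore a $\gamma$-contraction, and its unique fixed point, pulled back, is a fixed point of $\mathcal V_G$ on $K_G$; uniqueness (Theorem~\ref{thm:core-value}) identifies it with $J_G^\star$. This also yields the pending optimal-value quotient invariance. Labels and the one-step modality are handled by the modal-invariance lemma and Theorem~\ref{thm:choice-refined-mu-invariance}.

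For (b)$\Rightarrow$(a), I would interpret ``local class-level semantics'' precisely. There must be a structure on $S/{\equiv}$ with labels, a set of class-level choices at each class carrying a cost and a set of successor classes, and the pullback of each operator must agree with the concrete one \emph{for every} choice of target, cost-free test and test set. Label equality then follows from the agreement of atomic labels. For successor classes, suppose $s\equiv t$ and $u\in A(s)$. Test the one-step modality on the saturated set $X=\bigcup[T(s,u)]_\equiv$. Since $s\in\Pre_{\exists\Box}(X)$, we get $t\in\Pre_{\exists\Box}(X)$, so some $v$ has $[T(t,v)]\subseteq[T(s,u)]$. To force equality and cost agreement simultaneously, I would use the Bellman clause with test value functions. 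The requirement that the class-level Bellman operator, applied to an arbitrary class-constant $J$, agree with the concrete backup at $s$ and $t$ means the multisets of ``(cost, successor-class set)'' pairs must generate the same min-of-$(c+\gamma\max J)$ functional for all $J$. Choosing $J$ as large indicator-like functions on class families separates different successor-class sets and different costs. Here the certified-domain issue must be avoided by taking the target $G=S$, so that $K_G=S$ and all choices are certified.

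The main obstacle is this last separation step. Minimization hides dominated choices: a choice of $s$ with high cost might never be the argmin for any $J$. The equality of functionals $J\mapsto\min_u(c_u+\gamma\max_{[T(s,u)]}J)$ alone therefore does not recover every pair. I would first try to resolve this by reading ``local class-level semantics'' as requiring a class-level choice set together with the \emph{greedy sets} and the one-step data per choice (as listed in the paper's preservation package), so that each concrete choice must be represented by a class-level choice with its cost and successor class set. If that reading is too strong for the statement, I would fall back on the clause-necessity witnesses of Proposition~\ref{prop:clause-necessity} to argue that dropping any clause yields a non-well-defined quotient operator. Finally, I would conclude coarsest-ness from the greatest-fixed-point characterization of $\Gamma$.
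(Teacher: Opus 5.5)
Your proposal is correct under the per-choice reading you adopt first, and it follows the paper's route. The forward direction assembles label equality, class saturation of $\Pre_{\exists\Box}$ and of the kernel approximants, admissible matching closure, and contraction uniqueness; the converse reads the matched cost and successor-class data off the requirement that each concrete choice be represented by a class-level choice with well-defined cost and successor-class set, which is exactly what the paper's proof assumes when it speaks of ``a local quotient choice available at the class of $s$.'' Drop the fallback, though: under the weaker reading, in which only the operators must agree, (b)$\Rightarrow$(a) is actually false, so the per-choice reading is forced and no appeal to the clause-necessity witnesses can replace it. For a counterexample, give $s$ a second choice with the same successor set as its first but higher cost, and give $t$ only the cheap one; every label, one-step modality, kernel, Bellman backup and even greedy set is then class-defined, yet the bisimulation clause fails.
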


\begin{proof}
Assume first that $\equiv$ is a value-refined modal bisimulation.  Label equality makes atomic truth class-defined.  The class equality condition
\[
[T(s,u)]_{\equiv}=[T(t,v)]_{\equiv}
\]
for matched choices makes the truth of $[\exists]\Box\varphi$ depend only on the current class and on the truth set of $\varphi$ as a union of classes.  Structural induction and fixed-point induction then give choice-refined modal invariance.  The admissible continuation iteration is built only from this choice-refined predecessor, so it is class-defined.  Equal immediate costs and equal successor-class sets make each matched pair of certified choices contribute the same Bellman quantity.  Thus the quotient Bellman operator is well defined and its pullback is the concrete operator on class-constant values.

Conversely suppose a local quotient semantics with the stated pullback property exists.  Atomic formulas are class-defined, hence related states have identical labels.  Fix related states $s\equiv t$ and a choice $u\in A(s)$.  A local quotient choice available at the class of $s$ must have a well-defined immediate cost; otherwise two representatives of the same class would induce different one-step values for the one-state target with discount arbitrary in $(0,1)$.  It must also have a well-defined successor-class set; otherwise the formula $[\exists]\Box\psi_Y$, where $\psi_Y$ is a finite characteristic formula for the disputed union of successor classes $Y$, would have different truth values after pullback.  Therefore there is a choice $v\in A(t)$ with the same cost and the same successor-class set.  The same argument with $s$ and $t$ exchanged gives the symmetric matching condition.  These are exactly the clauses of value-refined modal bisimulation.
\end{proof}

\begin{proof}
Let $V=\Viab(G)$.  By the previous theorem, $V$ is a union of equivalence classes.  If a function $J:V\to\mathbb R$ is constant on equivalence classes, then $\mathcal B_VJ$ is also constant on equivalence classes.  Indeed, by the admissible matching closure lemma every $u\in A_V(s)$ is matched by some $v\in A_V(t)$ with $c(s,u)=c(t,v)$ and $[T(s,u)]_\sim=[T(t,v)]_\sim$; the successor sets meet exactly the same classes, on which the class-constant $J$ takes equal values, so the two choices give equal successor maxima and equal one-step costs.  The matching is onto $A_V(t)$ and, symmetrically, onto $A_V(s)$, so the two minimizations over $A_V(s)$ and $A_V(t)$ range over value-identical alternatives and agree.  Hence $\mathcal B_V$ preserves class-constancy and genuinely restricts to the quotient.  Starting value iteration from the zero function gives a sequence of class-constant functions, and the uniform limit $J^\star$ is class-constant.  Thus $J^\star(s)=J^\star(t)$ whenever $s\sim t$.
\end{proof}

\begin{remark}
The quotient theorem gives an admissible compression rule.  A compression that ignores costs may preserve modal truth but change the optimizer.  A compression that ignores labels may preserve values but change formulas.  Value-refined modal bisimulation is the combined discipline used here: labels preserve formulas, successor-class matching preserves the choice-refined modality, and costs preserve the Bellman comparison.
\end{remark}

\subsection{Logical characterization and a \texorpdfstring{$\mu$}{mu}-calculus identity}

The base modal language sees a transition selector only through its \emph{flattened} successor relation
\[
s\mathrel{R}s' \iff \exists u\in A(s),\ s'\in T(s,u).
\]
Under $R$ the operators $\Box$ and $\Diamond$ are the ordinary Kripke modalities: $s\models\Box\varphi$ iff every $R$-successor of $s$ satisfies $\varphi$, and $s\models\Diamond\varphi$ iff some $R$-successor does.  Write $s\equiv t$ when $s$ and $t$ satisfy exactly the same base formulas, and let $\approx_R$ be the largest bisimulation on the Kripke structure $(S,R,L)$: the largest relation with $s\approx_R t\Rightarrow L(s)=L(t)$ and the usual forth and back conditions along $R$.

The following is the standard finite Hennessy--Milner property \cite{hennessy1985,blackburn2001}.

\begin{theorem}[Hennessy--Milner characterization]
On the finite structure $(S,R,L)$, logical equivalence and bisimilarity coincide: $s\equiv t$ if and only if $s\approx_R t$.
\end{theorem}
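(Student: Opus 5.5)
The statement is the classical finite Hennessy--Milner theorem, and I would prove the two inclusions separately, using only that $S$ is finite.

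\emph{Soundness} ($s\approx_R t\Rightarrow s\equiv t$). Induct on the structure of base formulas and prove that every formula is invariant along any bisimulation $Z$. Atoms $p$ and $\neg p$ are handled by $L(s)=L(t)$, and $\top$, $\wedge$, $\vee$ follow directly from the induction hypothesis. For $\Box\varphi$, suppose $s\models\Box\varphi$ and take any $R$-successor $t'$ of $t$. The back condition gives an $R$-successor $s'$ of $s$ with $s'\,Z\,t'$. Then $s'\models\varphi$, so the induction hypothesis gives $t'\models\varphi$. The case $\Diamond\varphi$ is dual and uses the forth condition. By symmetry of the clauses, $s$ and $t$ satisfy the same formulas.

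\emph{Completeness} ($s\equiv t\Rightarrow s\approx_R t$). I would show that $\equiv$ is itself a bisimulation, so it is contained in the largest one. Label equality holds because base formulas contain every $p$ and $\neg p$. For the forth condition, suppose $s\equiv t$ and $sRs'$, and assume toward a contradiction that no $R$-successor $t'$ of $t$ satisfies $t'\equiv s'$. The set $\{t':tRt'\}$ is finite because $S$ is finite.
\begin{itemize}
\item If this set is empty, then $t\models\Box\bot$, where $\bot$ can be written as $p\wedge\neg p$ if $\mathsf P\neq\varnothing$. But $s\not\models\Box\bot$ because $s$ has the successor $s'$, which contradicts $s\equiv t$.
\item Otherwise, for each successor $t'_i$ choose a formula $\psi_i$ with $s'\models\psi_i$ and $t'_i\not\models\psi_i$. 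The language has only $\neg p$ and no general negation, so I choose the separating formula in the direction needed. This is possible because the language is closed under duals: the syntactic dual $\varphi^d$, obtained by swapping $\wedge/\vee$, $\Box/\Diamond$ and $p/\neg p$, defines the complement of $\Sat(\varphi)$. So if some formula separates $s'$ from $t'_i$ in either direction, its dual separates them in the other.
\item Put $\psi=\bigwedge_i\psi_i$. Then $s\models\Diamond\psi$ while $t\not\models\Diamond\psi$, which contradicts $s\equiv t$.
\end{itemize}
The back condition follows by the symmetric argument.

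The main obstacle is the restricted, negation-normal syntax. I must check by a short structural induction that $\varphi^d$ defines exactly $S\setminus\Sat(\varphi)$. I must also handle $\bot$ when $\mathsf P$ is empty: in that case I would note that $\Box\Diamond\top$-style formulas, or simply the absence of successors, can be handled, or I would assume every $A(s)$ is nonempty. That assumption makes $R$ serial, so the empty-successor case cannot occur. Once these points are settled, the standard finite-image argument goes through verbatim.
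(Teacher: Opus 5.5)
Your proof is correct and is the paper's argument: routine induction gives $\approx_R\subseteq\equiv$, and $\equiv$ is shown to be a bisimulation via the finite-image formula $\Diamond\bigwedge_i\psi_i$. Your duality step makes explicit a point the paper passes over, namely that in the negation-normal syntax $\psi_i$ can be chosen true at $s'$ and false at $t_i$. For this you need $\top^d:=p\wedge\neg p$; if $\mathsf P=\varnothing$, every formula is valid on the paper's serial structures and the claim is trivial. Your separate empty-successor case is unnecessary: the paper reads the empty conjunction as $\top$, so $\Diamond\top$ separates $s$ from $t$ whatever $\mathsf P$ is.
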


\begin{proof}
Bisimilar states satisfy the same formulas by a routine induction on $\varphi$, just as in the modal-invariance argument but for the single relation $R$; this gives $\approx_R\,\subseteq\,\equiv$.  For the converse we show that $\equiv$ is itself a bisimulation.  Labels agree by the atomic clauses.  For the forth condition, suppose $s\equiv t$ and $sRs'$ while no $R$-successor of $t$ is $\equiv$-related to $s'$.  Since $S$ is finite, $t$ has finitely many $R$-successors $t_1,\dots,t_k$, and for each $i$ there is a formula $\psi_i$ with $s'\models\psi_i$ and $t_i\not\models\psi_i$.  Then $s\models\Diamond\bigwedge_{i\le k}\psi_i$ whereas $t\not\models\Diamond\bigwedge_{i\le k}\psi_i$, contradicting $s\equiv t$.  Hence some $R$-successor $t'$ of $t$ satisfies $s'\equiv t'$.  The back condition is symmetric, so $\equiv$ is a bisimulation and $\equiv\,\subseteq\,\approx_R$.  (The empty conjunction is read as the constant $\top$, so the argument also covers states with no successors.)
\end{proof}

\begin{remark}[The base language cannot see value]
Bisimilarity $\approx_R$ collapses choice and is strictly coarser than value-refined modal bisimulation: it ignores both costs and the quantifier order that separates $[\exists]\Box$ from $\Box$.  The states $p$ and $p'$ of the bisimulation-quotient example below are bisimilar under the cost-free relation, yet differ from a third state in optimal value once costs are attached, so $\equiv$ alone preserves neither $[\exists]\Box$ nor $J^\star$.  Value-refined modal bisimulation is the proper refinement: $\sim\ \subseteq\ \approx_R$, with strict inclusion whenever costs separate $R$-bisimilar states.
\end{remark}

\begin{theorem}[Fixed-point identity for the admissible continuation kernel]
For every base formula $\varphi$,
\[
\Viab(\Sat(\varphi))=\Sat\!\big(\nu Z.(\varphi\wedge[\exists]\Box Z)\big),
\]
where the right-hand side is interpreted in the choice-cell necessity $\mu$-calculus over $(S,A,T,L)$.
\end{theorem}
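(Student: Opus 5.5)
The plan is to unfold the fixed-point denotation on the right-hand side and check that it is literally the same greatest fixed point that defines $\Viab(\Sat(\varphi))$. Write $G=\Sat(\varphi)$. By the denotational clause for $\nu$ given after the positive fragment,
\[
\llbracket\nu Z.(\varphi\wedge[\exists]\Box Z)\rrbracket=\nu X.\,\llbracket\varphi\wedge[\exists]\Box Z\rrbracket_{[Z:=X]}.
\]
So the whole proof reduces to identifying the functional $\Phi(X)=\llbracket\varphi\wedge[\exists]\Box Z\rrbracket_{[Z:=X]}$ with $F_G$.

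First, $\varphi$ contains no occurrence of $Z$, since it is a base formula. Its denotation is therefore the constant set $\Sat(\varphi)=G$, independent of $X$. Here I will note that base-language satisfaction and the denotation $\llbracket\cdot\rrbracket$ agree on base formulas, because atoms are interpreted by $L$ in both readings.

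Second, conjunction is interpreted as intersection. The semantic clause for choice-refined necessity gives $\llbracket[\exists]\Box Z\rrbracket_{[Z:=X]}=\Pre_{\exists\Box}(X)$. Hence
\[
\Phi(X)=G\cap\Pre_{\exists\Box}(X)=F_G(X)
\]
for every $X\subseteq S$, so $\Phi$ and $F_G$ are the same map on $2^S$.

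Third, equal monotone maps have equal greatest fixed points. The monotonicity lemma for the positive fragment guarantees that both fixed points exist and are reached by descending Kleene iteration from $S$. This gives $\Sat(\nu Z.\,\cdots)=\nu X.F_G(X)=\Viab(G)$.

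As an optional cross-check, the descending Kleene sequences of $\Phi$ and $F_G$ coincide term by term. Consequently Proposition~\ref{prop:extract} yields the same stabilization index and the same witnesses, so the extracted certified choices agree as well.

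The main obstacle is only a bookkeeping one: ensuring that $\Box$ and $\Diamond$ occurring inside $\varphi$ are read on the flattened relation in both sides, so that $\Sat(\varphi)$ means the same set in both places. Once that interpretive point is fixed, the identity is definitional.
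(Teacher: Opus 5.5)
Your proposal is correct and follows the paper's proof essentially step for step. Both identify the functional denoted by $\varphi\wedge[\exists]\Box Z$ with $F_G$, using the clause that $[\exists]\Box Z$ under $Z:=X$ denotes $\Pre_{\exists\Box}(X)$, and then read off equality of the greatest fixed points; the only cosmetic difference is that the paper cites Knaster--Tarski for existence, whereas you cite the positive-fragment monotonicity lemma with Kleene iteration.
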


\begin{proof}
Fix $G=\Sat(\varphi)$.  By Definition~\ref{def:cn}, when the bound variable $Z$ is interpreted as a set $X\subseteq S$ the formula $[\exists]\Box Z$ denotes exactly $\Pre_{\exists\Box}(X)$.  Hence the monotone functional $X\mapsto\Sat(\varphi)\cap\Pre_{\exists\Box}(X)$ associated with $\varphi\wedge[\exists]\Box Z$ is precisely the admissible continuation transformer $F_G$.  By the standard semantics of the $\mu$-calculus, the denotation of $\nu Z.(\varphi\wedge[\exists]\Box Z)$ is the greatest fixed point of this functional on the powerset lattice $(2^S,\subseteq)$ \cite{kozen1983}, which exists and is unique by the Knaster--Tarski theorem because $F_G$ is monotone \cite{tarski1955}.  By definition that greatest fixed point is $\nu X.F_G(X)=\Viab(G)$, so the two regions coincide.
\end{proof}

\begin{remark}[The identity is the alternating-time invariance formula]\label{rem:atlident}
The identity places admissible continuation semantic construction inside a single greatest-fixed-point formula: invariance under choice is the extension of one $\mu$-calculus sentence whose only modality is choice-cell necessity.  By Remark~\ref{rem:atl} the inner modality is the one-step coalition operator, so
\[
\nu Z.(\varphi\wedge[\exists]\Box Z)\;=\;\nu Z.\bigl(\varphi\wedge\langle\!\langle C\rangle\!\rangle\bigcirc Z\bigr)\;\equiv\;\langle\!\langle\mathrm{ctrl}\rangle\!\rangle\Box\varphi,
\]
which is exactly the alternating-time invariance formula ``the selector can maintain $\varphi$ forever,'' written through the standard $\mu$-calculus encoding of alternating-time temporal logic over the game arena \cite{alur2002,kozen1983}.  This identifies the descending kernel computation with the invariance fragment of the alternating-time/$\mu$-calculus translation and fixes the formula whose extension is used by the value layer.
\end{remark}

\section{Value-refined modal bisimulation metrics}
\label{sec:bisimulation-metric}

The quotient theory of Section~\ref{sec:value-refined-bisimulation} is exact: two states are merged only when their labels, certified costs, and successor classes match on the nose.  Exact matching answers the qualitative question---which compressions preserve formulas, the kernel, and the optimal value---but it is brittle.  Two states whose costs differ by an arbitrarily small amount are kept apart, even though their optimal values are then almost equal.  This section grades the same quotient layer by a pseudometric whose zero set is the exact equivalence and with respect to which the optimal value is Lipschitz.  The construction does not reorder the layers: the metric is defined on the kernel $V=\Viab(G)$, over the certified choices $A_V$, after the value structure is available, so it occupies exactly the position of value-refined modal bisimulation between value refinement and certification.

Throughout, $\gamma\in(0,1)$, costs lie in $[0,\bar c]$, and
\[
D \;=\; \frac{\bar c}{1-\gamma}
\]
is the diameter of the value range: every value iterate and the fixed point $J^\star$ take values in $[0,D]$.  Let $\mathcal P_V$ be the set of pseudometrics $d$ on $V$ with $d(s,t)\le D$ for all $s,t\in V$.  Ordered pointwise it is a complete lattice, and under the supremum norm $\|d\|_\infty=\max_{s,t\in V}d(s,t)$ it is a complete metric space.

\begin{definition}[Hausdorff lift]\label{def:hausdorff-lift}
For $d\in\mathcal P_V$ and nonempty $X,Y\subseteq V$, the Hausdorff lift of $d$ is
\[
H_d(X,Y)=\max\Bigl\{\ \sup_{x\in X}\inf_{y\in Y}d(x,y),\quad \sup_{y\in Y}\inf_{x\in X}d(x,y)\ \Bigr\}.
\]
\end{definition}

Since $u\in A_V(s)$ forces $T(s,u)\subseteq V$, the lift is always evaluated on subsets of $V$.  The Hausdorff lift is the metric counterpart of equality of successor classes: when $d$ is the discrete metric of an equivalence $\sim$ (zero within classes, $D$ across), $H_d(X,Y)=0$ holds exactly when $[X]_\sim=[Y]_\sim$.

\begin{definition}[Choice gap and distance transformer]\label{def:distance-transformer}
For $s,t\in V$, $u\in A_V(s)$, and $v\in A_V(t)$, the choice gap is
\[
\delta_d(s,u;t,v)=\bigl|c(s,u)-c(t,v)\bigr|+\gamma\,H_d\bigl(T(s,u),T(t,v)\bigr).
\]
The distance transformer $\mathcal D_V:\mathcal P_V\to\mathcal P_V$ is
\[
(\mathcal D_V d)(s,t)=
\begin{cases}
D, & L(s)\neq L(t),\\[4pt]
\displaystyle\max\Bigl\{\ \sup_{u\in A_V(s)}\inf_{v\in A_V(t)}\delta_d(s,u;t,v),\ \ \sup_{v\in A_V(t)}\inf_{u\in A_V(s)}\delta_d(s,u;t,v)\ \Bigr\}, & L(s)=L(t).
\end{cases}
\]
The suprema and infima are over finite nonempty sets, hence attained.
\end{definition}

The transformer mirrors the Bellman--Kripke recursion: the immediate term is the cost discrepancy, undiscounted; the continuation term is the discounted lifted distance between successor sets; and the back-and-forth maximum is the choice-matching discipline of value-refined modal bisimulation, now read quantitatively.  The label clause keeps states with different atomic truth at the maximal distance $D$, which is what lets the value bound below hold without separate case analysis.

\begin{proposition}[Well-posed distance fixed point]\label{prop:metric-fixed-point}
$\mathcal D_V$ maps $\mathcal P_V$ into $\mathcal P_V$ and is a $\gamma$-contraction in $\|\cdot\|_\infty$.  Hence it has a unique fixed point $d^\star\in\mathcal P_V$, and the iterates $d_0\in\mathcal P_V$, $d_{n+1}=\mathcal D_V d_n$ converge uniformly to $d^\star$ from every start.
\end{proposition}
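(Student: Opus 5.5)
The proof splits into three parts: $\mathcal D_V$ maps $\mathcal P_V$ into itself, $\mathcal D_V$ is a $\gamma$-contraction, and Banach's theorem applies on the complete space $(\mathcal P_V,\|\cdot\|_\infty)$.

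\textbf{Invariance of $\mathcal P_V$.} Fix $d\in\mathcal P_V$ and set $d'=\mathcal D_Vd$. Three properties must be checked.

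First, the bound $d'\le D$. On label-distinct pairs $d'=D$. On label-equal pairs every choice gap satisfies $\delta_d\le\bar c+\gamma D=D$, because costs lie in $[0,\bar c]$ and $H_d\le D$ when $d\le D$. Maxima and minima of quantities at most $D$ stay at most $D$.

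Second, symmetry. The label clause is symmetric, and the back-and-forth maximum is symmetric because $\delta_d(s,u;t,v)=\delta_d(t,v;s,u)$; this uses symmetry of $d$ and of $H_d$.

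Third, $d'(s,s)=0$ and the triangle inequality.
\begin{itemize}
\item For $d'(s,s)=0$, match each $u$ with itself: $H_d(X,X)=0$ since $d(x,x)=0$.
\item For the triangle inequality, first prove that $H_d$ satisfies the triangle inequality on nonempty finite subsets of $V$. This is the standard Hausdorff argument: pass through a nearest point in the middle set.
\item Then $\delta_d(s,u;r,w)\le\delta_d(s,u;t,v)+\delta_d(t,v;r,w)$ by the triangle inequality for $|\cdot|$ on costs and for $H_d$.
\item Given $u\in A_V(s)$, pick $v\in A_V(t)$ attaining the inner infimum for the pair $(s,t)$, then $w\in A_V(r)$ attaining it for $(t,r)$. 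This gives $\inf_w\delta(s,u;r,w)\le d'(s,t)+d'(t,r)$, and symmetrically in the other direction.
\item Cases with differing labels need separate handling. If $L(s)\ne L(r)$, then some label differs along the path $s,t,r$, so the right side already contains a term equal to $D$, and $d'(s,r)=D\le D$. If $L(s)=L(r)\ne L(t)$, the right side is at least $D\ge d'(s,r)$ by the bound above.
\end{itemize}

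\textbf{Contraction.} Take $d,e\in\mathcal P_V$ and let $\epsilon=\|d-e\|_\infty$.

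First, $|H_d(X,Y)-H_e(X,Y)|\le\epsilon$. This follows by applying the Maximum and Minimum perturbation lemmas to the inner infima, outer suprema and the final max; all sets involved are finite and nonempty. Hence $|\delta_d-\delta_e|\le\gamma\epsilon$, since the cost term does not depend on $d$.

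Applying the same perturbation lemmas to the inf/sup/max structure of $\mathcal D_V$ gives $|(\mathcal D_Vd)(s,t)-(\mathcal D_Ve)(s,t)|\le\gamma\epsilon$ on label-equal pairs. On label-distinct pairs the difference is $0$.

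\textbf{Completeness and conclusion.} $\mathcal P_V$ is a closed subset of the finite-dimensional space $\mathbb R^{V\times V}$, since pseudometric axioms and the bound $D$ are preserved under pointwise limits, so it is complete. Banach's fixed-point theorem then gives the unique fixed point $d^\star$ and uniform convergence of the iterates from any $d_0\in\mathcal P_V$.

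The main obstacle is the triangle inequality for $\mathcal D_Vd$. It needs the triangle inequality for $H_d$ and careful composition of the attained infima across the two matchings, plus the label case split. The contraction step is routine given the perturbation lemmas.
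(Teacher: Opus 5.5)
Your proposal is correct and follows the paper's proof essentially step for step. The shared steps are the range bound via $\bar c+\gamma D=D$, symmetry and $d'(s,s)=0$ by self-matching, the label case split for the triangle inequality, the equal-label case via the triangle inequality for $\delta_d$, and contraction through the perturbation lemmas followed by Banach's theorem. In the equal-label case you compose attained infima directly, whereas the paper cites the fact that the Hausdorff lift of the pseudometric $\delta_d$ on choices is again a pseudometric. Your closedness argument for completeness of $\mathcal P_V$ spells out a point the paper only asserts.
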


\begin{proof}
\emph{Range.}  The matching term is bounded by $\max_{u,v}\delta_d(s,u;t,v)\le\bar c+\gamma D=\bar c+\gamma\bar c/(1-\gamma)=\bar c/(1-\gamma)=D$, and the label clause equals $D$; thus $\mathcal D_V d\le D$.  Symmetry of $\mathcal D_V d$ is immediate, and $(\mathcal D_V d)(s,s)=0$ by choosing $v=u$ in each direction, which makes $\delta_d=0$.  For the triangle inequality fix $s,t,w\in V$.  If $L(s)\neq L(w)$, then $L(s)=L(t)$ and $L(t)=L(w)$ cannot both hold, since they would force $L(s)=L(w)$; hence at least one of the pairs $\{s,t\},\{t,w\}$ has unequal labels, the right-hand side is at least $D$, and the left-hand side is at most $D$.  If $L(s)=L(w)$ but $L(t)$ differs from this common label, both $\{s,t\}$ and $\{t,w\}$ have unequal labels, so the right-hand side is at least $2D$.  In the remaining case all three labels are equal.  Then $\delta_d$ is a pseudometric on the disjoint union $\bigsqcup_{x\in V}A_V(x)$: it is nonnegative, vanishes on a choice paired with itself, is symmetric, and satisfies $\delta_d(s,u;w,x)\le\delta_d(s,u;t,y)+\delta_d(t,y;w,x)$ because the cost discrepancy and the Hausdorff lift each obey the triangle inequality and a nonnegative sum of pseudometrics is a pseudometric.  The equal-label value of $\mathcal D_V d$ is exactly the Hausdorff distance induced by $\delta_d$ between the finite sets $A_V(s)$ and $A_V(t)$, and the Hausdorff distance of a pseudometric is a pseudometric; the triangle inequality follows.  Therefore $\mathcal D_V d\in\mathcal P_V$.

\emph{Contraction.}  The only dependence of $\mathcal D_V d$ on $d$ is through $\gamma H_d$ in the equal-label branch; the label branch is constant.  By the maximum and minimum perturbation lemmas, $|H_d(X,Y)-H_{d'}(X,Y)|\le\|d-d'\|_\infty$ for all $X,Y$, and the outer suprema, infima, and maximum are nonexpansive in the sup norm.  Hence $|(\mathcal D_V d)(s,t)-(\mathcal D_V d')(s,t)|\le\gamma\|d-d'\|_\infty$ for every pair, so $\|\mathcal D_V d-\mathcal D_V d'\|_\infty\le\gamma\|d-d'\|_\infty$.  As $\mathcal P_V$ is complete, Banach's fixed-point theorem gives the unique $d^\star$ and uniform convergence of the iterates.
\end{proof}

\begin{definition}[Value-refined modal bisimulation metric]\label{def:vr-metric}
The value-refined modal bisimulation metric of $M$ on $V$ is the fixed point $d^\star=\mathcal D_V d^\star$ of Proposition~\ref{prop:metric-fixed-point}.
\end{definition}

The metric is a quantitative refinement of the exact equivalence rather than a new layer: its zero set is precisely the value-refined modal bisimulation, and it is computed only over certified choices.

\begin{theorem}[Zero set is the exact bisimulation]\label{thm:metric-kernel}
Let $\sim$ be the coarsest value-refined modal bisimulation; on $V=\Viab(G)$ its certified restriction matches choices in $A_V$ by the admissible matching closure lemma.  Then for all $s,t\in V$,
\[
d^\star(s,t)=0 \iff s\sim t.
\]
Consequently $d^\star$ grades the quotient of Section~\ref{sec:value-refined-bisimulation} and collapses to it at distance zero.
\end{theorem}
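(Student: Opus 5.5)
The plan is to prove the two inclusions by the standard coinductive route for bisimulation metrics. Throughout, $\sim$ is read as the statement's \emph{certified restriction}: the value-refined matching clauses are taken over the certified choice sets $A_V$ on the kernel $V$, which is the structure on which $\mathcal D_V$ is defined. The Admissible matching closure lemma (Lemma~\ref{lem:viablematch}) links this to the coarsest bisimulation on $S$. It shows that the restriction to $V$ of the coarsest value-refined modal bisimulation on $S$ satisfies the matching clauses over $A_V$. Hence it is contained in the coarsest certified bisimulation on $(V,A_V)$, and the direction ``$s\sim t\Rightarrow d^\star(s,t)=0$'' also holds for the unrestricted relation. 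I also assume $D>0$, i.e.\ $\bar c>0$. If $\bar c=0$, the label clause assigns distance $0$, so the zero set cannot separate labels; the label constant must then be replaced by any positive number, and the argument below is unchanged.

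\emph{($\Leftarrow$) Zero on $\sim$.} Let $\mathcal Z=\{d\in\mathcal P_V: d(s,t)=0 \text{ whenever } s\sim t\}$. This set is nonempty (it contains $0$) and closed in $\|\cdot\|_\infty$. I will show $\mathcal D_V(\mathcal Z)\subseteq\mathcal Z$.

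Take $d\in\mathcal Z$ and $s\sim t$. The labels agree, so the matching branch of $\mathcal D_V$ applies. For $u\in A_V(s)$, Lemma~\ref{lem:viablematch} gives $v\in A_V(t)$ with $c(s,u)=c(t,v)$ and $[T(s,u)]_\sim=[T(t,v)]_\sim$. Every $x\in T(s,u)$ then has some $y\in T(t,v)$ with $x\sim y$, so $d(x,y)=0$, and symmetrically. Hence $H_d(T(s,u),T(t,v))=0$ and $\delta_d(s,u;t,v)=0$. The symmetric direction gives the same, so $(\mathcal D_Vd)(s,t)=0$.

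Iterating from $d_0=0$ and using the uniform convergence of Proposition~\ref{prop:metric-fixed-point} together with closedness of $\mathcal Z$ gives $d^\star\in\mathcal Z$.

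\emph{($\Rightarrow$) The zero set is a bisimulation.} Let $E=\{(s,t):d^\star(s,t)=0\}$. Since $d^\star$ is a pseudometric, $E$ is an equivalence relation. I will check that $E$ satisfies the value-refined matching clauses over $A_V$, so that $E$ is contained in the coarsest such relation, namely $\sim$.

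If $L(s)\neq L(t)$ then $d^\star(s,t)=D>0$, so $E$ preserves labels. If $(s,t)\in E$, the fixed-point equation $d^\star=\mathcal D_Vd^\star$ forces the matching maximum to vanish. Because the infima are over finite sets and therefore attained, each $u\in A_V(s)$ has a $v\in A_V(t)$ with $\delta_{d^\star}(s,u;t,v)=0$.

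This yields $c(s,u)=c(t,v)$ together with $H_{d^\star}(T(s,u),T(t,v))=0$ (using $\gamma>0$). The Hausdorff suprema and infima are again attained, so every successor on one side has a $d^\star$-zero, i.e.\ $E$-related, partner on the other. This is exactly $[T(s,u)]_E=[T(t,v)]_E$. The symmetric clause is identical, so $E$ is a value-refined bisimulation of the certified structure and $E\subseteq{\sim}$.

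The main obstacle is the identification of which relation $\sim$ denotes on $V$, not either inclusion. The zero set $E$ only sees certified choices. A state of $V$ may have an uncertified choice leaving $G$ that another state lacks, and then the bisimulation on all of $S$ separates the two while $d^\star$ does not. So the ($\Rightarrow$) direction must be stated for the certified restriction, as the theorem's wording indicates. The ($\Leftarrow$) direction, via Lemma~\ref{lem:viablematch}, holds for both relations. I will make this explicit in the write-up, so that the equivalence is proved for the certified relation named in the statement, with the unrestricted relation shown to lie inside it.
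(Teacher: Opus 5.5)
Your proof is correct and follows the paper's argument essentially step for step. For ($\Leftarrow$) you use a nonempty, closed, $\mathcal D_V$-invariant set $\mathcal Z$ of pseudometrics vanishing on $\sim$, and for ($\Rightarrow$) you observe that the zero set of $d^\star$ satisfies the matching clauses over $A_V$ and so lies in the coarsest bisimulation. Your two caveats are genuine and make explicit what the paper's proof tacitly assumes when it concludes containment ``by coarsest-ness'' and asserts that the label branch equals $D>0$: first, $\sim$ must be read as the coarsest bisimulation of the certified structure $(V,A_V)$, since an uncertified exit present at only one of two states separates them in the bisimulation on all of $S$ while $d^\star$ ignores it; second, the label clause needs $D>0$.
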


\begin{proof}
$(\Leftarrow)$  Let $\mathcal Z=\{d\in\mathcal P_V:\ d(s,t)=0\ \text{whenever}\ s\sim t\}$.  This set is nonempty (the discrete metric of $\sim$ scaled to $D$ lies in it), closed in $\|\cdot\|_\infty$, and $\mathcal D_V$-invariant: if $d\in\mathcal Z$ and $s\sim t$, then $L(s)=L(t)$ and, for each $u\in A_V(s)$, the bisimulation supplies $v\in A_V(t)$ with $c(s,u)=c(t,v)$ and $[T(s,u)]_\sim=[T(t,v)]_\sim$.  Since $d$ vanishes within $\sim$-classes, every successor on one side is at $d$-distance $0$ from some successor on the other, so $H_d(T(s,u),T(t,v))=0$ and $\delta_d(s,u;t,v)=0$, whence both directional suprema vanish and $(\mathcal D_V d)(s,t)=0$.  A contraction restricted to a nonempty closed invariant set has its unique fixed point inside that set, so $d^\star\in\mathcal Z$; that is, $s\sim t$ implies $d^\star(s,t)=0$.

$(\Rightarrow)$  The zero set $Z^\star=\{(s,t)\in V\times V:\ d^\star(s,t)=0\}$ of a pseudometric is an equivalence relation.  From $d^\star=\mathcal D_V d^\star$, $d^\star(s,t)=0$ forces $L(s)=L(t)$, since otherwise the value is $D>0$, and, for each $u\in A_V(s)$, the existence of $v\in A_V(t)$ with $\delta_{d^\star}(s,u;t,v)=0$, that is $c(s,u)=c(t,v)$ and $H_{d^\star}(T(s,u),T(t,v))=0$.  The vanishing Hausdorff lift says every successor of one choice is $d^\star$-distance $0$ from---hence $Z^\star$-equivalent to---some successor of the matched choice, so $[T(s,u)]_{Z^\star}=[T(t,v)]_{Z^\star}$; the symmetric direction holds by the symmetric branch.  Thus $Z^\star$ satisfies the clauses of value-refined modal bisimulation over $A_V$, so it is one such bisimulation and $Z^\star\subseteq\sim$ by coarsest-ness.  With $(\Leftarrow)$ giving $\sim\subseteq Z^\star$, the two relations coincide.
\end{proof}

\begin{theorem}[Optimal value is $1$-Lipschitz in the metric]\label{thm:metric-lipschitz}
For all $s,t\in V$,
\[
\bigl|J^\star(s)-J^\star(t)\bigr|\le d^\star(s,t).
\]
In particular $d^\star(s,t)=0$ implies $J^\star(s)=J^\star(t)$, which recovers the optimal-value quotient invariance of Section~\ref{sec:value-refined-bisimulation}.
\end{theorem}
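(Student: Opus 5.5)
The plan is to show by induction along value iteration that every iterate is $1$-Lipschitz with respect to $d^\star$, and then pass to the limit. Start value iteration from $J_0\equiv0$ on $V$; every iterate takes values in $[0,D]$ (costs lie in $[0,\bar c]$ and $\bar c+\gamma D=D$), so $J_0$ is trivially $1$-Lipschitz for $d^\star$. Since $J_n\to J^\star$ uniformly by the unique admissible value theorem, and the inequality $|J(s)-J(t)|\le d^\star(s,t)$ is preserved under pointwise limits, it suffices to prove the following invariance statement. If $J:V\to[0,D]$ satisfies $|J(x)-J(y)|\le d^\star(x,y)$ for all $x,y\in V$, then $\mathcal B_VJ$ has the same property.

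To prove the invariance step, fix $s,t\in V$ and split by labels.

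\emph{Unequal labels.} Here $d^\star(s,t)=(\mathcal D_Vd^\star)(s,t)=D$. Both $(\mathcal B_VJ)(s)$ and $(\mathcal B_VJ)(t)$ lie in $[0,D]$, so their difference is at most $D$.

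\emph{Equal labels.} First bound successor maxima through the Hausdorff lift. For $u\in A_V(s)$ and $v\in A_V(t)$, every $x\in T(s,u)$ has some $y\in T(t,v)$ with $d^\star(x,y)\le H_{d^\star}(T(s,u),T(t,v))$, because the infimum over a finite set is attained. Hence $J(x)\le J(y)+H_{d^\star}\le\max_{T(t,v)}J+H_{d^\star}$, and by symmetry
\[
\Bigl|\max_{T(s,u)}J-\max_{T(t,v)}J\Bigr|\le H_{d^\star}\bigl(T(s,u),T(t,v)\bigr).
\]
This is the metric analogue of the maximum perturbation lemma. Consequently the one-step backups $Q_J(s,u)=c(s,u)+\gamma\max_{T(s,u)}J$ satisfy $|Q_J(s,u)-Q_J(t,v)|\le\delta_{d^\star}(s,u;t,v)$.

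Next handle the minimization over certified choices, which I expect to be the only step needing care. Let $u^\ast$ attain $(\mathcal B_VJ)(s)$; it exists by the finite selector lemma. By the second directional term of the fixed-point equation $d^\star=\mathcal D_Vd^\star$, every $v\in A_V(t)$ has some $u\in A_V(s)$ with $\delta_{d^\star}(s,u;t,v)\le d^\star(s,t)$. Symmetrically, every $u$ has some $v$ with the same bound. Take $v^\ast$ attaining $(\mathcal B_VJ)(t)$ and a matched $u$. Then
\[
(\mathcal B_VJ)(s)\le Q_J(s,u)\le Q_J(t,v^\ast)+d^\star(s,t)=(\mathcal B_VJ)(t)+d^\star(s,t).
\]
Exchanging the roles of $s$ and $t$ gives the reverse inequality, so $|(\mathcal B_VJ)(s)-(\mathcal B_VJ)(t)|\le d^\star(s,t)$. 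The point to check is that both directional suprema in $\mathcal D_V$ are used, one for each inequality; this mirrors how the back-and-forth clause is used in the exact quotient theorem.

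Combining the two cases gives the invariance, and the induction with the uniform limit yields $|J^\star(s)-J^\star(t)|\le d^\star(s,t)$. The final clause then follows immediately: if $d^\star(s,t)=0$, the inequality forces $J^\star(s)=J^\star(t)$. By Theorem~\ref{thm:metric-kernel}, $d^\star(s,t)=0$ holds exactly when $s\sim t$, so this recovers the optimal value quotient invariance.
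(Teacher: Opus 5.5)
Your proposal is correct and follows essentially the same route as the paper's proof. Both run induction along value iteration from $J_0\equiv0$, bound successor maxima by the Hausdorff lift, settle unequal labels via the bound $D$, and take the minimizer on one side with a choice matched through the fixed-point equation $d^\star=\mathcal D_Vd^\star$, then pass to the uniform limit.
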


\begin{proof}
Let $J_0\equiv0$ and $J_{n+1}=\mathcal B_V J_n$; by Theorem~\ref{thm:core-value} the iterates converge uniformly to $J^\star$, and each $J_n$ takes values in $[0,D]$.  We prove $|J_n(s)-J_n(t)|\le d^\star(s,t)$ for all $n$ and all $s,t\in V$ by induction, then pass to the limit.

The induction hypothesis at level $n$ yields a Hausdorff bound on the inner maxima: if $f:V\to\mathbb R$ satisfies $|f(x)-f(y)|\le d^\star(x,y)$ for all $x,y$, then for nonempty $X,Y\subseteq V$,
\[
\Bigl|\max_{x\in X}f(x)-\max_{y\in Y}f(y)\Bigr|\le H_{d^\star}(X,Y).
\]
Indeed, let $x^\ast$ attain $\max_X f$ and pick $y\in Y$ with $d^\star(x^\ast,y)\le\sup_{x\in X}\inf_{y'\in Y}d^\star(x,y')\le H_{d^\star}(X,Y)$; then $f(x^\ast)\le f(y)+d^\star(x^\ast,y)\le\max_Y f+H_{d^\star}(X,Y)$, and the symmetric estimate completes the bound.

\emph{Base.}  $J_0\equiv0$ gives $|J_0(s)-J_0(t)|=0\le d^\star(s,t)$.

\emph{Step.}  Assume the bound at level $n$.  Fix $s,t\in V$.  If $L(s)\neq L(t)$ then $d^\star(s,t)=D\ge\|J_{n+1}\|_\infty\ge|J_{n+1}(s)-J_{n+1}(t)|$.  If $L(s)=L(t)$, it suffices by symmetry to bound $J_{n+1}(s)-J_{n+1}(t)$.  Let $v\in A_V(t)$ attain $J_{n+1}(t)=\min_{v'}\bigl[c(t,v')+\gamma\max_{T(t,v')}J_n\bigr]$.  The direction $\sup_{v'}\inf_{u'}$ of $\mathcal D_V$ evaluated at the fixed point supplies $u\in A_V(s)$ with $\delta_{d^\star}(s,u;t,v)\le(\mathcal D_V d^\star)(s,t)=d^\star(s,t)$.  Then
\begin{align*}
J_{n+1}(s)
&\le c(s,u)+\gamma\max_{s'\in T(s,u)}J_n(s')\\
&\le c(t,v)+\bigl|c(s,u)-c(t,v)\bigr|+\gamma\Bigl(\max_{t'\in T(t,v)}J_n(t')+H_{d^\star}\bigl(T(s,u),T(t,v)\bigr)\Bigr)\\
&=\Bigl[c(t,v)+\gamma\max_{t'\in T(t,v)}J_n(t')\Bigr]+\delta_{d^\star}(s,u;t,v)\\
&\le J_{n+1}(t)+d^\star(s,t),
\end{align*}
using the Hausdorff bound with $f=J_n$ in the second line and minimality of $v$ in the last.  The symmetric direction gives $J_{n+1}(t)-J_{n+1}(s)\le d^\star(s,t)$.  Letting $n\to\infty$ yields $|J^\star(s)-J^\star(t)|\le d^\star(s,t)$.
\end{proof}

\begin{corollary}[Graded quotient and approximate value preservation]\label{cor:eps-quotient}
Let $\varepsilon\ge0$ and let $R\subseteq V\times V$ be any relation with $d^\star(s,t)\le\varepsilon$ whenever $sRt$.  Then $|J^\star(s)-J^\star(t)|\le\varepsilon$ for every such pair; the optimal value descends to the $\varepsilon$-quotient with uniform error at most $\varepsilon$.  The exact value-preserving quotient of Section~\ref{sec:value-refined-bisimulation} is the case $\varepsilon=0$, and merging states within a $d^\star$-ball of radius $\varepsilon$ perturbs the recoverable optimal value by at most $\varepsilon$.
\end{corollary}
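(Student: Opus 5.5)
The corollary follows directly from Theorem~\ref{thm:metric-lipschitz}. I will apply that theorem pairwise and then read off the quotient statements.

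First, fix $\varepsilon\ge0$ and a relation $R\subseteq V\times V$ with $d^\star(s,t)\le\varepsilon$ whenever $sRt$. For any such pair, Theorem~\ref{thm:metric-lipschitz} gives
\[
|J^\star(s)-J^\star(t)|\le d^\star(s,t)\le\varepsilon.
\]
The bound is uniform because $\varepsilon$ does not depend on the pair. Note that $R$ need not be transitive. The claim is pairwise, so I will not pass to an equivalence closure. Doing so could accumulate error along chains, and the statement avoids this by quantifying only over pairs in $R$.

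Second, I make the phrase "descends to the $\varepsilon$-quotient" precise. For any block $B$ of a partition whose members are pairwise within $d^\star$-distance $\varepsilon$, any choice of representative value, for instance $\hat J(B)=J^\star(b)$ for some fixed $b\in B$, satisfies $|\hat J(B)-J^\star(s)|\le\varepsilon$ for every $s\in B$. For merging inside a $d^\star$-ball of radius $\varepsilon$ around a centre $c$, I take the centre's value. Each merged state $s$ then satisfies $d^\star(s,c)\le\varepsilon$, so its recovered value is off by at most $\varepsilon$.

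Third, I treat the case $\varepsilon=0$. Here $R$ is contained in the zero set of $d^\star$, which by Theorem~\ref{thm:metric-kernel} is exactly the coarsest value-refined modal bisimulation $\sim$. So the bound becomes the exact value invariance of Section~\ref{sec:value-refined-bisimulation}.

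I do not expect a genuine obstacle. The only care needed is interpretive: the corollary should be read pairwise (or per ball relative to its centre), not for the transitive closure. Everything else is a direct application of the Lipschitz theorem and the zero-set theorem.
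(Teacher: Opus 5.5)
Your proposal is correct and takes the same route as the paper, whose entire proof is the remark that the corollary is immediate from Theorem~\ref{thm:metric-lipschitz}. Your pairwise application, the centre-of-ball reading of the merging claim, and the appeal to Theorem~\ref{thm:metric-kernel} for the $\varepsilon=0$ case are simply a more explicit unpacking of that one-line argument.
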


\begin{proof}
Immediate from Theorem~\ref{thm:metric-lipschitz}.
\end{proof}

\begin{example}[Tightness on the cost-separated pair]\label{ex:metric-tight}
Reprise the cost-blind-merge structure: $S=\{p,p',q\}$ with one proposition true everywhere, $T(p,a)=T(p',a)=\{q\}$, $T(q,a)=\{q\}$, $c(p,a)=c(p',a)=2$, $c(q,a)=0$, and $\gamma=1/2$, so $\bar c=2$ and $D=\bar c/(1-\gamma)=4$.  The single choices and equal labels give $d^\star(p,p')=0$, confirming that the metric's zero set merges $p$ and $p'$.  For the cost-separated pair,
\[
\delta_{d^\star}(p,a;q,a)=|2-0|+\tfrac12 H_{d^\star}(\{q\},\{q\})=2+\tfrac12\,d^\star(q,q)=2,
\]
and since $d^\star(q,q)=0$ the fixed-point equation gives $d^\star(p,q)=2$.  The optimal values are $J^\star(p)=2$ and $J^\star(q)=0$, so $|J^\star(p)-J^\star(q)|=2=d^\star(p,q)$ and the Lipschitz bound is tight.  The cost-blind bisimilarity $\approx_R$ that wrongly merges all three states corresponds to collapsing this distance-$2$ pair to distance $0$, which is exactly the inexact compression the bound forbids.
\end{example}

\begin{remark}[What the lift fixes and what stays a modeling choice]\label{rem:metric-modeling}
Two features of the construction are forced by the semantics rather than chosen.  First, the lift is Hausdorff because compatible successors are resolved universally: the value layer takes the worst compatible successor, and the worst-case set distance is the Hausdorff distance.  The probabilistic refinement of Section~\ref{sec:risk-dial}, in which the successor is resolved by a distribution at a dial setting between nominal and universal, is exactly where a Kantorovich (Wasserstein) lift would replace the Hausdorff lift, recovering the Markov-decision-process bisimulation-metric form of Ferns--Panangaden--Precup \cite{ferns2004,desharnais2004,vanbreugel2005}.  Second, the relative weight of the two terms---coefficient $1$ on the cost discrepancy and $\gamma$ on the lifted continuation---is not a free parameter: it is the unique scaling for which $J^\star$ is $1$-Lipschitz, and any other weighting would either break the bound or loosen it.  The one genuinely application-dependent quantity, deliberately left unfixed here, is the rule that assigns representative labels, costs, and transitions to the classes of an $\varepsilon$-quotient when an aggregated structure is actually built: that aggregation scheme is a parameter of the model, not of the semantics, so it is not pinned to any particular numeric choice in this construction.
\end{remark}

\begin{remark}[The metric refines the quotient layer without reordering it]\label{rem:metric-layer}
The metric lives at the quotient layer and nowhere else.  It is defined on $V=\Viab(G)$, after admissibility, and over the certified choices $A_V$, after value refinement is available, so it cannot license a transition outside the kernel; $d^\star=0$ is the exact endpoint of Section~\ref{sec:value-refined-bisimulation}.  The graded quotient therefore sits at the same position in the order modal satisfaction $\to$ admissible continuation $\to$ value refinement $\to$ certificate, between value and certificate, and the non-commutation witnesses of Section~\ref{sec:worked-discipline} apply unchanged.  What the metric adds is purely quantitative: a bound, by the distance itself, on the optimal-value error charged by an inexact compression, with the exact value-preserving quotient as the distance-zero case.
\end{remark}

\section{Certified stopping certificates}

A selector can be tempted to stop when the value is small.  This is unsound because a low value is an value refinement statement, not a proof that a terminal condition has been reached.  The semantics therefore separates terminal certificates from values.

\begin{definition}[Certificate predicate]
A certificate predicate is a set $C\subseteq V$ such that membership in $C$ authorizes termination.  If $C=\Sat(\chi)\cap V$ for some formula $\chi$, the certificate is called modal.
\end{definition}

\begin{definition}[Sound stopping rule]
A stopping rule is a function $\tau:S^{\mathbb N}\to\mathbb N\cup\{\infty\}$.  It is sound for $C$ if $\tau(s_0,s_1,\ldots)=t<\infty$ implies $s_t\in C$.
\end{definition}

\begin{proposition}[Small value does not imply certification]
There exists an admissible system, a value function, and a certificate set $C$ such that a state has minimal value but is not certified.
\end{proposition}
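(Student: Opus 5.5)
The plan is to give an explicit finite counterexample in which the optimal certified value is minimal at a state that lies outside a modal certificate set. The intended witness is the smallest structure that makes value and certification come apart. Take $S=\{s,t\}$, $G=S$, an atomic proposition $\mathsf{done}$ with $\mathsf{done}\in L(t)$ and $\mathsf{done}\notin L(s)$, and a single choice $a$ at each state. Set $T(s,a)=\{s\}$ with $c(s,a)=0$, and $T(t,a)=\{t\}$ with $c(t,a)=1$. The certificate predicate is the modal certificate $C=\Sat(\mathsf{done})\cap V$.

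The steps, in order, are as follows.

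First, by the descending Kleene lemma, $X_1=G=S$ and $X_2=S$. Both self-loops stay inside $S$, so $V=\Viab(G)=S$ and $A_V(s)=A_V(t)=\{a\}$. This makes the system admissible, with a nonempty kernel and certified choices at every state.

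Second, the value function should be the exact fixed point $J^\star$ of $\mathcal B_V$, rather than an arbitrary function, so that the witness cannot be dismissed as an approximation artifact. The self-loop equations give $J^\star(s)=0+\gamma J^\star(s)$, hence $J^\star(s)=0$. They also give $J^\star(t)=1+\gamma J^\star(t)$, hence $J^\star(t)=1/(1-\gamma)$. By uniqueness from the unique admissible value theorem these are the values, so $s$ attains the global minimum $0$ of $J^\star$ over $V$.

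Third, $s\notin C$ because $\mathsf{done}\notin L(s)$, while $C=\{t\}\neq\varnothing$. So the minimal-value state is not certified, and the certified state has strictly larger value.

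As a remark, any stopping rule of the form ``stop when $J^\star(s_t)\le\min J^\star$'' stops at time $0$ from $s$ and is therefore unsound for $C$.

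The only real obstacle is avoiding a degenerate witness. One might object that $C$ is arbitrary; using a modal certificate $\Sat(\chi)\cap V$ meets that objection. One might also object that the value is not the Bellman value; using $J^\star$ itself meets that one. Both hold in the construction. The same phenomenon can also be cited from Proposition~\ref{prop:residual-before-certification}, which shows residual zero without certification, but the two-state model above is self-contained and makes the separation explicit.
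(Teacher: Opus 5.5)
Your construction is correct and is essentially the paper's own witness: two admissible states with self-loops of cost $0$ and $1$, exact values $0$ and $1/(1-\gamma)$, and a certificate set consisting only of the costlier state. The one difference is that you add a label so that $C=\Sat(\mathsf{done})\cap V$ is a modal certificate, which the paper does not bother to do but which costs nothing.
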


\begin{proof}
Let $S=\{x,y\}$ and let both states be admissible.  Let $C=\{y\}$.  Give $x$ a self-loop with zero cost and give $y$ a self-loop with cost one.  Under discounting, the value at $x$ is zero and the value at $y$ is $1/(1-\gamma)$.  Thus $x$ has smaller value.  Since $x\notin C$, minimal value does not imply certification.
\end{proof}

The proposition shows what a certificate must not be reduced to; the rest of the section shows what it positively is.  Reaching a certificate while never leaving the justifiably maintainable region is a \emph{reach-while-stay} (reach--avoid) objective, the least-fixpoint companion of the greatest-fixpoint admissible-continuation kernel, and it is justifiably maintainable on a region computed by the same admissible predecessor used for the kernel.

\subsection{Reach-while-stay: the certified objective dual to invariance}

Fix a certificate set $C\subseteq V=\Viab(G)$.  Say a selector \emph{achieves reach-while-stay from $s$} if it has a strategy forcing every resulting play to remain in $G$ until it reaches $C$ and to reach $C$ after finitely many steps.

\begin{definition}[Reach-while-stay region]
The \emph{reach-while-stay region} of $C$ inside $G$ is the least fixed point
\[
\mathrm{RWS}_G(C)=\mu Y.\bigl(C\cup\bigl(G\cap\Pre_{\exists\Box}(Y)\bigr)\bigr).
\]
\end{definition}

The operator is monotone on $(2^{S},\subseteq)$, so the least fixed point exists and is reached by the ascending iteration $Y_0=\varnothing$, $Y_{k+1}=C\cup(G\cap\Pre_{\exists\Box}(Y_k))$.  Define the \emph{rank} of a state by $\rk(s)=\min\{k:s\in Y_k\}$ when $s\in\mathrm{RWS}_G(C)$.

\begin{theorem}[Reach-while-stay certification]\label{thm:rws}
Let $C\subseteq V=\Viab(G)$.  Then
\begin{enumerate}
\item $\mathrm{RWS}_G(C)\subseteq V$;
\item a memoryless selector achieves reach-while-stay from every $s\in\mathrm{RWS}_G(C)$, reaching $C$ within $\rk(s)\leq|\mathrm{RWS}_G(C)|$ steps while keeping the play in $G$;
\item conversely, a selector achieves reach-while-stay from $s$ only if $s\in\mathrm{RWS}_G(C)$.
\end{enumerate}
Hence the stopping rule ``halt on first entry to $C$'' is simultaneously sound for $C$ and live exactly on $\mathrm{RWS}_G(C)$, and nowhere else.
\end{theorem}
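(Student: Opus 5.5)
The plan is to work with the ascending Kleene chain $Y_0=\varnothing$, $Y_{k+1}=C\cup(G\cap\Pre_{\exists\Box}(Y_k))$ and to prove all three items by induction on the rank $\rk$. The monotonicity lemma for the admissible continuation transformer carries over verbatim to $Y\mapsto C\cup F_G(Y)$. The chain is therefore ascending and stabilizes after at most $|S|$ strict inclusions. Its limit is $\mathrm{RWS}_G(C)$, and every member has a finite rank.

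\emph{Item 1.} I will show by induction on $k$ that $Y_k\subseteq V$. The base case $Y_0=\varnothing$ is trivial. For the step, $C\subseteq V$ holds by hypothesis. Monotonicity of $F_G$ and the induction hypothesis give $G\cap\Pre_{\exists\Box}(Y_k)=F_G(Y_k)\subseteq F_G(V)=V$. An alternative is to note that $V$ is a prefixed point of the RWS operator, since $C\cup F_G(V)=V$, so the least fixed point lies below it. I will mention both routes, since the second is one line.

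\emph{Item 2.} Define a memoryless selector on $\mathrm{RWS}_G(C)\setminus C$. For a state $s$ of rank $k+1$ that is not in $C$, the state lies in $G\cap\Pre_{\exists\Box}(Y_k)$, and I pick a witness $u_s$ with $T(s,u_s)\subseteq Y_k$. Off this set, for instance on $C$ itself, I use any certified choice from $A_V$, which is nonempty by the certified-choice restriction lemma and Item 1. Every compatible successor of $s$ then has rank at most $k$, so rank strictly decreases along every play until the play enters $C$. Entry to $C$ therefore happens within $\rk(s)$ steps. Every state visited before entry is in $G$, either because it lies in a $G\cap\Pre_{\exists\Box}$ stage or because it is the entry point in $C\subseteq V\subseteq G$. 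Ranks are distinct stages of a strictly increasing chain inside $\mathrm{RWS}_G(C)$, which gives $\rk(s)\le|\mathrm{RWS}_G(C)|$. If the play is to continue after entry, switching to an $A_V$ selector keeps it in $V\subseteq G$ by Proposition~\ref{prop:extract}.

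\emph{Item 3.} This is the step I expect to be the main obstacle, because the strategy may use history and I need a finite bound. Suppose a strategy $\sigma$ achieves reach-while-stay from $s$. Consider the tree of $\sigma$-consistent plays truncated at first entry to $C$. Every branch is finite and the tree is finitely branching because $T$ is finite, so K\"onig's lemma makes the tree finite. I will then prove by induction on subtree height $h$ that the root state of any such subtree, with $\sigma$ restricted to that history, lies in $Y_{h+1}$. At a leaf, the state is in $C\subseteq Y_1$. At an internal node, the state $x$ is in $G$ (the play stays in $G$ before reaching $C$), and the chosen $u$ has every successor rooting a subtree of smaller height. By induction those successors lie in $Y_h$, so $x\in G\cap\Pre_{\exists\Box}(Y_h)\subseteq Y_{h+1}$. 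Hence $s\in\mathrm{RWS}_G(C)$.

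The final sentence of the theorem then follows. Soundness of ``halt on first entry to $C$'' is immediate from the definition of the rule. Liveness on $\mathrm{RWS}_G(C)$ is Item 2, and failure outside it is Item 3, since some resolver can then avoid $C$ forever or force an exit from $G$.
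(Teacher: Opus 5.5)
Your proposal is correct and follows the paper's proof essentially step by step: the induction $Y_k\subseteq V$ for item~1, the rank-decreasing memoryless witness selector for item~2, and a K\"onig's-lemma bound followed by induction on the depth of the finite tree of consistent plays for item~3. Your induction on subtree height, with $\sigma$ restricted to the current history, is slightly more careful than the paper's induction on the uniform step bound $N$, because it makes explicit the passage to residual strategies that the paper leaves implicit.
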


\begin{proof}
For (1), argue by induction that $Y_k\subseteq V$.  The base case is $Y_0=\varnothing\subseteq V$.  Assume $Y_k\subseteq V$.  Then $C\subseteq V$ by hypothesis, and any $s\in G\cap\Pre_{\exists\Box}(Y_k)$ has some $u\in A(s)$ with $T(s,u)\subseteq Y_k\subseteq V$, so $s\in G$ and $s\in\Pre_{\exists\Box}(V)$; since $V=G\cap\Pre_{\exists\Box}(V)$ is a fixed point this gives $s\in V$.  Thus $Y_{k+1}\subseteq V$ and the limit $\mathrm{RWS}_G(C)\subseteq V$.

For (2), define a memoryless choice $\rho$ on $\mathrm{RWS}_G(C)\setminus C$: a state $s$ there has $\rk(s)=k\geq1$ and $s\in G\cap\Pre_{\exists\Box}(Y_{k-1})$, so choose $\rho(s)=u$ with $T(s,u)\subseteq Y_{k-1}$.  Every successor then has rank at most $k-1$, so the rank strictly decreases each step until the play enters $C$; this happens within $\rk(s)$ steps, and $\rk(s)\leq|\mathrm{RWS}_G(C)|$ because distinct ranks index distinct nonempty differences of the ascending chain.  Each visited state before $C$ lies in $Y_k\setminus C\subseteq G$, and $C\subseteq V\subseteq G$, so the entire prefix stays in $G$.  This is reach-while-stay by a memoryless selector.

For (3), suppose a selector achieves reach-while-stay from $s$.  Consider the universal number of steps to reach $C$ over all universal successor choices; finiteness on every play and finite branching give, by König's lemma, a uniform bound $N$.  Show by induction on $N$ that $s\in\mathrm{RWS}_G(C)$.  If $N=0$ the play is already in $C\subseteq\mathrm{RWS}_G(C)$.  If $N\geq1$ then $s\in G$ (the play has not left $G$) and the selector's first move $u$ satisfies $T(s,u)\subseteq\{s':\text{universal distance from }s'\leq N-1\}$, which by the induction hypothesis is contained in $\mathrm{RWS}_G(C)$; hence $s\in G\cap\Pre_{\exists\Box}(\mathrm{RWS}_G(C))\subseteq\mathrm{RWS}_G(C)$ by the fixed-point property.  The final equivalence combines (2) and (3): liveness of ``halt on entry to $C$'' is exactly reachability of $C$ in finite time, and soundness is exactly that halts occur in $C$.
\end{proof}

\begin{remark}[Persistent certification is a co-B\"uchi nesting]
If the requirement is strengthened from reaching $C$ once to eventually remaining in a certified-and-admissible set forever --- a co-B\"uchi condition --- the justifiably maintainable region is the nested fixpoint
\[
\nu Z.\,\mu Y.\Bigl(\bigl(C\cap G\cap\Pre_{\exists\Box}(Z)\bigr)\ \cup\ \bigl(G\cap\Pre_{\exists\Box}(Y)\bigr)\Bigr),
\]
whose outer greatest fixed point selects states from which the selector can return to the certified set, and whose inner least fixed point forces each such return in finitely many admissible steps.  This is the point at which the temporal-objective product of Appendix~C is the appropriate construction: a co-B\"uchi or general $\omega$-regular certification objective is compiled into the arena, after which the present alternation of least and greatest admissible-predecessor fixpoints computes its justifiably maintainable region, and the discounted layer is then placed on that region exactly as before.
\end{remark}

\begin{theorem}[Certificate soundness theorem]
Let $V=\Viab(G)$ and let $C\subseteq V$.  If a selector only stops at states in $C$ and otherwise selects choices in $A_V$, then every finite stop is certified and every nonstopped prefix remains in $G$.
\end{theorem}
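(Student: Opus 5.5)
The plan is to argue along an arbitrary play generated by such a selector, using only the defining properties of $V=\Viab(G)$, the certified choice sets $A_V$, and the inclusion $C\subseteq V\subseteq G$. I fix an initial state $s_0$. I would read the hypothesis as saying that $s_0\in V$ (the statement implicitly starts the selector on the kernel). Then I take any play $(s_0,u_0,s_1,u_1,\ldots)$ together with a stopping time $\tau$ with the following property: for every $t<\tau$ the selector has not stopped, so $u_t\in A_V(s_t)$ and $s_{t+1}\in T(s_t,u_t)$; and if $\tau<\infty$ then $s_\tau\in C$.

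The first step is an induction on $t\le\tau$ (with $t<\infty$) showing $s_t\in V$. The base case is $s_0\in V$. For the step, suppose $t<\tau$ and $s_t\in V$. Then $u_t\in A_V(s_t)$, and the one-step admissible closure lemma gives $T(s_t,u_t)\subseteq V$, so $s_{t+1}\in V$. Since $V=F_G(V)\subseteq G$, every state of a nonstopped prefix $s_0,\dots,s_t$ with $t<\tau$ lies in $G$. This is the second claim.

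The first claim is immediate from the stopping hypothesis. If $\tau<\infty$, the selector stops only at states of $C$, so $s_\tau\in C$. Hence the stopping rule is sound for $C$ in the sense of the definition of sound stopping rules. Moreover $s_\tau\in C\subseteq V\subseteq G$, so the terminal state also lies in the kernel. The whole closed prefix $s_0,\dots,s_\tau$ therefore remains in $G$.

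The only delicate point is interpretive rather than technical: the statement must start from a state of $V$ and must only invoke $A_V$ at states known to lie in $V$. The induction is set up to guarantee exactly this, since $A_V(s_t)$ is only used once $s_t\in V$ has been established. If the start state were allowed outside $V$, I would restrict the claim to plays starting in $V$, or note that $A_V(s_0)$ is nonempty only there by the certified-choice restriction lemma. Beyond that, no fixed-point argument is needed; the result is the invariance part of Theorem~\ref{thm:core-fixedpoint} combined with the definition of a sound stopping rule.
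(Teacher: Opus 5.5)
Your proof is correct and follows the paper's argument: the first claim is read off directly from the stopping hypothesis, and the second is the induction over time keeping the state in $V$ via choices in $A_V$, followed by $V\subseteq G$, with the same implicit assumption $s_0\in V$ that the paper's induction also needs. One small caveat on your closing aside: the certified-choice restriction lemma only gives nonemptiness of $A_V(s)$ for $s\in V$, not emptiness outside $V$, so drop that remark and keep $s_0\in V$ as an explicit hypothesis.
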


\begin{proof}
The first claim follows directly from the definition of the stopping rule.  For the second claim, suppose the selector has not stopped before time $t$.  At every earlier time it selected a choice in $A_V$.  By induction over time, the state remains in $V$.  Since $V\subseteq G$, the nonstopped prefix remains in $G$.
\end{proof}

This theorem formalizes the difference between progress and completion.  A value function may guide progress, but a certificate must justify completion.

\section{Switching among modal objectives}

Many choice-refined systems have several acceptable regions or several formulas.  Without a switching discipline, a selector can oscillate between objectives.  This is modeled by augmenting the state with a mode.

Let $G_1,\ldots,G_m\subseteq S$ be target regions with admissible kernels $V_i=\Viab(G_i)$.  Let $J_i^\star$ be the admissible value associated with $G_i$ on $V_i$.  Let $i_t\in\{1,\ldots,m\}$ be the active mode.

\begin{definition}[Switching value]
For a switching penalty $\lambda>0$, define
\[
W(s,i)=\min_{j:s\in V_j}\left[J_j^\star(s)+\lambda\mathbf 1_{j\neq i}\right].
\]
\end{definition}

The penalty term is a discrete hysteresis term.  Hysteresis is standard in choice as a way to prevent chattering, and here it prevents objective drift.

\begin{definition}[Mode margin]
A state $s$ has margin $\Delta>0$ for mode $i$ if $s\in V_i$ and
\[
J_i^\star(s)+\Delta\leq J_j^\star(s)
\]
for every $j\neq i$ with $s\in V_j$.
\end{definition}

\begin{theorem}[No-switch under margin]
If $s$ has margin $\Delta>0$ for mode $i$ and $\lambda>0$, then switching from $i$ to any $j\neq i$ is strictly worse at $s$.
\end{theorem}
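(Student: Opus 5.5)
The plan is to argue directly from the definitions of the switching value $W$ and the mode margin. I read ``switching from $i$ to $j$ is strictly worse at $s$'' as a comparison of the two candidate terms inside the minimum defining $W(s,i)$. On one side is the term for $j$, namely $J_j^\star(s)+\lambda\mathbf 1_{j\neq i}=J_j^\star(s)+\lambda$. On the other is the term for staying, namely $J_i^\star(s)+\lambda\mathbf 1_{i\neq i}=J_i^\star(s)$.

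The first step is to check that both candidates are admissible. The margin hypothesis gives $s\in V_i$, so the staying term actually occurs in the minimum. For a mode $j$ with $s\notin V_j$, the index $j$ is not in the range of the minimum at all. I treat that case as vacuous, since a switch to $j$ would leave the admissible kernel and is already excluded by the layer order. It can also be read as having value $+\infty$.

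For $j\neq i$ with $s\in V_j$, the margin inequality $J_i^\star(s)+\Delta\leq J_j^\star(s)$ gives the chain
\[
J_i^\star(s)<J_i^\star(s)+\Delta\leq J_j^\star(s)<J_j^\star(s)+\lambda .
\]
The first strict inequality uses $\Delta>0$, and the last uses $\lambda>0$. So the staying term is strictly smaller than every switching term. Consequently $W(s,i)=J_i^\star(s)$, and the argmin is uniquely $\{i\}$.

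The only real obstacle is interpretive rather than technical: I have to fix what ``strictly worse'' means. I would settle on the one-step comparison of terms in $W(s,i)$ and note that it needs only one of the two positive quantities $\Delta$ or $\lambda$. I would then add a remark that the slack is actually $\Delta+\lambda$. This slack is what later feeds the switch accounting lemma, so the result can be quoted there.
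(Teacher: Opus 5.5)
Your proof is correct and follows the paper's argument: for each feasible $j\neq i$ the margin inequality gives $J_j^\star(s)+\lambda\geq J_i^\star(s)+\Delta+\lambda>J_i^\star(s)$. The paper obtains strictness in one step from $\Delta+\lambda>0$, whereas you chain the two strict inequalities separately, and your closing aside is off only in that the finite switching bound relies solely on $\lambda>0$ and the budget $M$, not on the margin.
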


\begin{proof}
For each feasible $j\neq i$, the margin condition gives $J_j^\star(s)\geq J_i^\star(s)+\Delta$.  Therefore
\[
J_j^\star(s)+\lambda\geq J_i^\star(s)+\Delta+\lambda.
\]
Since $\Delta+
\lambda>0$, the right-hand side is strictly larger than $J_i^\star(s)$.  Thus staying in mode $i$ is strictly better.
\end{proof}

\begin{theorem}[Finite switching bound]
Assume every switch costs at least $\lambda>0$ and the realized total discounted cost is bounded above by $M<\infty$.  Then the number of switches is at most $\lfloor M/\lambda\rfloor$.
\end{theorem}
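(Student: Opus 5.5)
The plan is to reduce the statement to the switch accounting lemma proved earlier. First we make precise what ``every switch costs at least $\lambda$'' means in the realized cost. The realized total cost of a play is the sum of stage costs and switching charges, and both are nonnegative. If the switching charges are discounted, the charge for a switch at time $t$ is at least $\lambda$ times the factor attached to it. In that case the hypothesis should be read as a lower bound of $\lambda$ on each switching contribution as it actually enters the total. We therefore adopt this reading explicitly: each of the $N$ switch events contributes a term at least $\lambda$ to the realized total.

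Given that reading, the steps are as follows.

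\textbf{Decompose the cost.} Split the realized total into the state-cost part and the switching part. The state-cost part is nonnegative because $c\ge 0$, so the switching part is at most $M$.

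\textbf{Apply the switch accounting lemma.} Let $N$ be the number of switches and suppose first that $N$ is finite. The switching part is at least $N\lambda$, so $N\lambda\le M$. The switch accounting lemma then gives $N\le\lfloor M/\lambda\rfloor$.

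\textbf{Exclude infinitely many switches.} Any finite set of $k$ switches already contributes at least $k\lambda$. If there were infinitely many, this would hold for every $k$, forcing $k\lambda\le M$ for all $k$, which is impossible since $\lambda>0$ and $M<\infty$. So $N$ is finite and the bound applies.

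The main obstacle is the interaction with discounting. If a switch at time $t$ is charged only $\gamma^t\lambda$, the claim fails: infinitely many switches would have total charge at most $\lambda/(1-\gamma)$. I would handle this by stating, as the hypothesis literally says, that the per-switch charge counted in $M$ is at least $\lambda$, meaning the switching penalty is undiscounted or the budget is counted per event. I would also remark that under a discounted reading one obtains only a bound on the number of switches within a horizon $T$, namely $N_T\le\lfloor M\gamma^{-T}/\lambda\rfloor$. This contrasts with the zero-charge countermodel in the boundary section, which shows that positivity of $\lambda$ cannot be dropped.
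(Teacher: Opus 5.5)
Your proof is correct and takes the same route as the paper: both bound the switching component below by $N\lambda$ and above by $M$, using nonnegativity of the state costs, and then take the integer floor. The paper also implicitly reads the hypothesis as each switch contributing at least $\lambda$ to the realized total; your explicit exclusion of infinitely many switches and your caveat that discounted switch charges $\gamma^t\lambda$ would break the claim are valid clarifications that the paper leaves unstated.
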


\begin{proof}
If a path contains $N$ switches, the switching component alone contributes at least $N\lambda$.  Since the total cost is at most $M$, we have $N\lambda\leq M$.  Dividing by $\lambda$ gives $N\leq M/\lambda$.  Since $N$ is an integer, $N\leq \lfloor M/\lambda\rfloor$.
\end{proof}

The theorem shows that commitment is not a psychological assumption.  It is a mathematical consequence of positive switching cost plus bounded budget.

\begin{remark}[Relation to switched-systems theory]
Positive switching charge gives a finite descent measure for objective changes.  The construction runs that discipline on top of the admissible continuation/value separation: switching is between precomputed admissible kernels and is scored by the admissible values $J_i^\star$, so each mode change is both admissible by construction and accounted against the same discounted budget as the ordinary value cost.
\end{remark}

\section{Bounded deliberation and residual risk}

A transition selector often cannot compute the infinite-horizon value exactly before acting.  The question is how to act when only a bounded number of Bellman updates has been performed.  This section gives a clean residual certificate.

Let $J_0=0$ and $J_{n+1}=\mathcal B_VJ_n$.  Define the residual
\[
r_n=\|J_{n+1}-J_n\|_\infty.
\]

\begin{lemma}[A posteriori value error]
For all $n\geq0$,
\[
\|J_n-J^\star\|_\infty\leq \frac{r_n}{1-\gamma}.
\]
\end{lemma}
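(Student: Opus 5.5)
The bound follows from the contraction property of $\mathcal B_V$ and the fixed-point identity $J^\star=\mathcal B_VJ^\star$, by the same triangle-inequality argument used for Theorem~\ref{thm:core-certificate}. We read the residual as the one-step displacement of the iterate, $r_n=\|J_{n+1}-J_n\|_\infty=\|\mathcal B_VJ_n-J_n\|_\infty$, which is exactly $\Res_G(J_n)$ with $\mathcal B_V$ in place of $\mathcal V_G$. The two operators coincide on $V=\Viab(G)=K_G$ with $A_V=\CAct_G$, so the lemma is in fact the specialization of Theorem~\ref{thm:core-certificate} to $J=J_n$, as already recorded in Corollary~\ref{cor:residual-reading}. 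For self-containment we give the short direct argument instead of only citing.

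Insert the next iterate and split:
\[
\|J_n-J^\star\|_\infty\le\|J_n-J_{n+1}\|_\infty+\|J_{n+1}-J^\star\|_\infty .
\]
Then rewrite $J_{n+1}-J^\star=\mathcal B_VJ_n-\mathcal B_VJ^\star$ using the fixed-point equation, and apply the Bellman elementary contraction lemma. This bounds the second term by $\gamma\|J_n-J^\star\|_\infty$, giving
\[
\|J_n-J^\star\|_\infty\le r_n+\gamma\|J_n-J^\star\|_\infty .
\]
Since every quantity here is finite, we may subtract $\gamma\|J_n-J^\star\|_\infty$ from both sides and divide by $1-\gamma>0$, which yields the claim. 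Finiteness holds because $V$ is finite and $J_n,J^\star\in\ell_\infty(V)$, with values in $[0,\bar c/(1-\gamma)]$ since $J_0=0$ and $\mathcal B_V$ preserves that range.

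There is no real obstacle. The only points to check are that $J_n$ and $J^\star$ live on the same domain $V$, so the sup norm compares like with like, and that $A_V(s)\neq\varnothing$ on $V$; the latter is supplied by the certified-choice restriction lemma and is what makes $\mathcal B_V$ well defined there. The rearrangement step relies on $\gamma<1$ from the discounting assumption. This is exactly the hypothesis whose removal was shown to break uniqueness in the boundary countermodels, so we state explicitly where it is used.
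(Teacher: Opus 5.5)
Your proof is correct, but it is not the argument the paper gives for this lemma. You pass through the single next iterate, use the fixed-point identity $J^\star=\mathcal B_VJ^\star$ with one application of contraction to get $\|J_n-J^\star\|_\infty\le r_n+\gamma\|J_n-J^\star\|_\infty$, and then absorb the last term. This is exactly the paper's proof of Theorem~\ref{thm:core-certificate}. Your identification $\mathcal B_V=\mathcal V_G$ (with $A_V=\CAct_G$) is what justifies reading the lemma as the special case recorded in Corollary~\ref{cor:residual-reading}.

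The paper's proof of this lemma instead telescopes $J_m-J_n=\sum_{k=n}^{m-1}(J_{k+1}-J_k)$. It bounds each increment by $\gamma^{k-n}r_n$ through repeated contraction along the orbit, and sums the geometric series to get $\|J_m-J_n\|_\infty\le r_n/(1-\gamma)$ for every $m>n$. It then lets $m\to\infty$ using $J_m\to J^\star$.

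The trade-off is this. The telescoping route identifies $J^\star$ only as the limit of value iteration; it is the Cauchy estimate inside Banach's theorem. As a by-product it bounds the distance from $J_n$ to every later iterate, which is the relevant form when the iteration is stopped partway. Your route needs only that $J^\star$ is a fixed point and that the norm is finite, so the absorption is legitimate. It never uses convergence of the iterates, and it applies verbatim to an arbitrary $J\in\ell_\infty(V)$, not just to points on the orbit from $J_0$. In this finite, discounted setting both sets of prerequisites are supplied by the unique-admissible-value theorem, so either argument is complete.
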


\begin{proof}
For $m>n$, write the telescoping sum
\[
J_m-J_n=\sum_{k=n}^{m-1}(J_{k+1}-J_k).
\]
Taking norms gives
\[
\|J_m-J_n\|_\infty\leq \sum_{k=n}^{m-1}\|J_{k+1}-J_k\|_\infty.
\]
Since $J_{k+1}=\mathcal B_VJ_k$ and $\mathcal B_V$ is a $\gamma$-contraction,
\[
\|J_{k+1}-J_k\|_\infty\leq \gamma^{k-n}\|J_{n+1}-J_n\|_\infty=\gamma^{k-n}r_n.
\]
Thus
\[
\|J_m-J_n\|_\infty\leq r_n\sum_{k=n}^{m-1}\gamma^{k-n}
\leq \frac{r_n}{1-\gamma}.
\]
Letting $m\to\infty$ and using $J_m\to J^\star$ proves the estimate.
\end{proof}

\begin{lemma}[A priori geometric rate]
For all $n\geq 0$,
\[
\|J_n-J^\star\|_\infty\leq \gamma^n\,\|J_0-J^\star\|_\infty\leq \frac{\gamma^n r_0}{1-\gamma}.
\]
\end{lemma}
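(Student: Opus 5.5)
The plan has two inequalities, proved in order. For the first, $\|J_n-J^\star\|_\infty\le\gamma^n\|J_0-J^\star\|_\infty$, use the fixed-point identity $J^\star=\mathcal B_VJ^\star$ together with $J_n=\mathcal B_V^nJ_0$. Then $\|J_n-J^\star\|_\infty=\|\mathcal B_V^nJ_0-\mathcal B_V^nJ^\star\|_\infty$. Induction on $n$ with the Bellman elementary contraction lemma gives the factor $\gamma^n$; each step peels off one application of $\mathcal B_V$ and one factor $\gamma$. This part is routine.

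For the second inequality, $\|J_0-J^\star\|_\infty\le r_0/(1-\gamma)$, note that it is exactly the preceding a posteriori value error lemma specialized to $n=0$. Recall that $r_0=\|J_1-J_0\|_\infty$. Equivalently, one can rerun the triangle-inequality argument of Theorem~\ref{thm:core-certificate}:
\[
\|J_0-J^\star\|_\infty\le\|J_0-\mathcal B_VJ_0\|_\infty+\|\mathcal B_VJ_0-\mathcal B_VJ^\star\|_\infty\le r_0+\gamma\|J_0-J^\star\|_\infty,
\]
and then rearrange using $1-\gamma>0$. This uses only that $\|J_0-J^\star\|_\infty$ is finite, which holds because $V$ is finite and $J^\star\in\ell_\infty(V)$.

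Chaining the two inequalities gives the claim. Nothing here is a real obstacle; the only point to state explicitly is that the same contraction constant $\gamma$ of $\mathcal B_V$ on $\ell_\infty(V)$ controls both steps. One can also remark that with $J_0=0$ we have $r_0=\|\mathcal B_V0\|_\infty\le\bar c$, which recovers the clock bound of Corollary~\ref{cor:residual-reading}.
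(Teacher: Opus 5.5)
Your proposal is correct and follows the paper's proof: the first inequality comes from $J^\star=\mathcal B_VJ^\star$ and $n$ applications of the $\gamma$-contraction, and the second is the a posteriori error lemma at $n=0$, multiplied by $\gamma^n$. The alternative triangle-inequality derivation and the remark $r_0\le\bar c$ for $J_0=0$ are valid but not needed.
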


\begin{proof}
Because $J^\star=\mathcal B_VJ^\star$ and $\mathcal B_V$ is a $\gamma$-contraction in $\|\cdot\|_\infty$,
\[
\|J_n-J^\star\|_\infty=\|\mathcal B_V^{\,n}J_0-\mathcal B_V^{\,n}J^\star\|_\infty\leq\gamma^n\|J_0-J^\star\|_\infty.
\]
The a posteriori lemma at $n=0$ gives $\|J_0-J^\star\|_\infty\leq r_0/(1-\gamma)$, which yields the second inequality.  The a posteriori bound is computable from two successive iterates, while this a priori bound exposes the geometric rate in closed form.
\end{proof}

\begin{lemma}[Monotone convergence and two-sided enclosure]
If $J_0\leq\mathcal B_VJ_0$ pointwise, then the iterates increase monotonically, $J_0\leq J_1\leq\cdots$, and $J_n\uparrow J^\star$.  If $J_0\geq\mathcal B_VJ_0$, then $J_0\geq J_1\geq\cdots$ and $J_n\downarrow J^\star$.  Consequently, if $\underline J_0\leq\mathcal B_V\underline J_0$ and $\overline J_0\geq\mathcal B_V\overline J_0$, the iterates $\underline J_n,\overline J_n$ satisfy
\[
\underline J_n\leq J^\star\leq\overline J_n
\qquad\text{and}\qquad
\|\overline J_n-\underline J_n\|_\infty\leq\gamma^n\,\|\overline J_0-\underline J_0\|_\infty,
\]
so the enclosing interval contracts at the geometric rate $\gamma^n$.
\end{lemma}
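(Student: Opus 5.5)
The argument rests on two properties of $\mathcal B_V$ already established: the monotonicity lemma for the Bellman--Kripke map and the $\gamma$-contraction in the sup norm, which gives uniform convergence of value iteration to $J^\star$ from any start.

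\emph{Monotone case.} Suppose $J_0\le\mathcal B_VJ_0$, i.e.\ $J_0\le J_1$. Applying the monotone map $\mathcal B_V$ to $J_n\le J_{n+1}$ gives $J_{n+1}\le J_{n+2}$, so by induction the sequence is pointwise nondecreasing. Uniform convergence to $J^\star$ then shows the limit of this increasing sequence is $J^\star$, so $J_n\uparrow J^\star$. The supersolution case $J_0\ge\mathcal B_VJ_0$ is identical with the inequalities reversed and gives $J_n\downarrow J^\star$. Equivalently, the sub- and supersolution sandwich lemma applied to each $J_n$ gives $J_n\le J^\star$ (respectively $J_n\ge J^\star$) directly.

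\emph{Enclosure.} Applying the first part to $\underline J_0$ and to $\overline J_0$ separately yields $\underline J_n\le J^\star\le\overline J_n$ for every $n$. For the width bound, note that $\underline J_n=\mathcal B_V^{\,n}\underline J_0$ and $\overline J_n=\mathcal B_V^{\,n}\overline J_0$. Iterating the contraction gives
\[
\|\overline J_n-\underline J_n\|_\infty\le\gamma^n\|\overline J_0-\underline J_0\|_\infty .
\]
This bound does not need any ordering hypothesis; it is just the $n$-fold contraction estimate.

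\emph{Main subtlety.} The only point needing care is that monotonicity of the sequence is propagated through $\mathcal B_V$ itself, not through the limit. Once the chain $J_n\le J_{n+1}$ is established, the order-theoretic limit of the monotone sequence and the metric limit from Banach's theorem must coincide, because both are pointwise limits on the finite set $V$.
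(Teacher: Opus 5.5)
Your proof is correct and follows the paper's argument: you propagate $J_n\le J_{n+1}$ by induction using monotonicity of $\mathcal B_V$, obtain the enclosure from the two monotone cases, and get the width bound from the $n$-fold contraction estimate. The one small difference is that you identify the limit directly from uniform convergence of value iteration. The paper instead first bounds the sequence by $J^\star$ via the sandwich lemma, then invokes monotone convergence, continuity of $\mathcal B_V$, and uniqueness of the fixed point; your version is slightly more direct, since the sandwich lemma itself already relies on that convergence.
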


\begin{proof}
Assume $J_0\leq\mathcal B_VJ_0=J_1$.  The Bellman--Kripke map is monotone, so applying it to $J_0\leq J_1$ gives $J_1\leq J_2$, and by induction $J_n\leq J_{n+1}$; the sequence is nondecreasing.  The sub- and supersolution sandwich gives $J_0\leq J^\star$, and applying the monotone map to $J_n\leq J^\star$ together with $\mathcal B_VJ^\star=J^\star$ yields $J_{n+1}\leq J^\star$, so the increasing sequence is bounded above by $J^\star$.  On the finite state space it therefore converges, and since $\mathcal B_V$ is continuous its limit is a fixed point, hence equal to the unique fixed point $J^\star$.  The supersolution case is dual.  For the enclosure, $\underline J_n\leq J^\star\leq\overline J_n$ holds at every $n$ by the two monotone cases, and
\[
\|\overline J_n-\underline J_n\|_\infty=\|\mathcal B_V^{\,n}\overline J_0-\mathcal B_V^{\,n}\underline J_0\|_\infty\leq\gamma^n\|\overline J_0-\underline J_0\|_\infty
\]
by contraction.
\end{proof}

\begin{definition}[Certified approximate choice]
A choice $u\in A_V(s)$ is $\varepsilon$-certified at depth $n$ if
\[
c(s,u)+\gamma\max_{s'\in T(s,u)}J_n(s')
\leq
\min_{v\in A_V(s)}\left[c(s,v)+\gamma\max_{s'\in T(s,v)}J_n(s')\right]+\varepsilon.
\]
\end{definition}

\begin{theorem}[Bounded-deliberation invariance and regret]
If $u$ is $\varepsilon$-certified at state $s\in V$ using $J_n$, then $u$ is admissible and its one-step optimality gap relative to $J^\star$ is at most $\varepsilon+2\gamma r_n/(1-\gamma)$.
\end{theorem}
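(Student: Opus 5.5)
Admissibility is immediate: an $\varepsilon$-certified choice is by definition an element of $A_V(s)$, so $T(s,u)\subseteq V$ and the one-step admissible closure lemma keeps every successor in $V\subseteq G$. The substance is the regret bound. I will define the exact and approximate backups
\[
Q^\star(s,v)=c(s,v)+\gamma\max_{s'\in T(s,v)}J^\star(s'),\qquad
Q_n(s,v)=c(s,v)+\gamma\max_{s'\in T(s,v)}J_n(s'),
\]
for $v\in A_V(s)$. The one-step optimality gap of $u$ is then $Q^\star(s,u)-\min_{v\in A_V(s)}Q^\star(s,v)=Q^\star(s,u)-J^\star(s)$.

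The first step is a uniform perturbation estimate. By the maximum perturbation lemma, for every $v\in A_V(s)$,
\[
|Q^\star(s,v)-Q_n(s,v)|\le\gamma\|J_n-J^\star\|_\infty\le\frac{\gamma r_n}{1-\gamma},
\]
using the a posteriori value error lemma. By the minimum perturbation lemma over the finite nonempty set $A_V(s)$, the same bound holds between $\min_vQ^\star(s,v)$ and $\min_vQ_n(s,v)$.

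The second step chains three inequalities. First, $Q^\star(s,u)\le Q_n(s,u)+\gamma r_n/(1-\gamma)$. Second, $Q_n(s,u)\le\min_vQ_n(s,v)+\varepsilon$ by $\varepsilon$-certification. Third, $\min_vQ_n(s,v)\le\min_vQ^\star(s,v)+\gamma r_n/(1-\gamma)$. Summing gives
\[
Q^\star(s,u)-J^\star(s)\le\varepsilon+\frac{2\gamma r_n}{1-\gamma},
\]
where $\min_vQ^\star(s,v)=(\mathcal B_VJ^\star)(s)=J^\star(s)$ because $J^\star$ is the fixed point.

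There is no real obstacle. The only point needing care is to state which notion of gap is bounded: the one-step gap measured against $J^\star$ backups, not the full regret of a selector that plays $u$ forever. The factor $\gamma$ comes from perturbing only the continuation term, and the factor $2$ comes from perturbing both the chosen backup and the minimum.
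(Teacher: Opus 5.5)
Your proof is correct and follows the paper's argument: bound each backup perturbation $|Q_n(s,v)-Q^\star(s,v)|$ by $\gamma r_n/(1-\gamma)$ via the maximum perturbation and a posteriori lemmas, then chain this with $\varepsilon$-certification. The only cosmetic difference is that the paper compares against a fixed minimizer $v^\star$ of $Q^\star(s,\cdot)$ where you invoke the minimum perturbation lemma, which yields the same inequality.
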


\begin{proof}
Invariance follows from $u\in A_V(s)$.  Let
\[
Q_n(s,u)=c(s,u)+\gamma\max_{s'\in T(s,u)}J_n(s')
\]
and define $Q^\star$ analogously with $J^\star$.  The residual lemma gives
\[
|Q_n(s,u)-Q^\star(s,u)|\leq \gamma r_n/(1-\gamma)
\]
for every certified choice $u$.  Let $v^\star$ minimize $Q^\star(s,v)$.  Since $u$ is $\varepsilon$-certified for $Q_n$, we have $Q_n(s,u)\leq Q_n(s,v^\star)+\varepsilon$.  Combining the two error bounds gives
\[
Q^\star(s,u)\leq Q^\star(s,v^\star)+\varepsilon+2\gamma r_n/(1-\gamma).
\]
This is the claimed gap.
\end{proof}

The theorem permits a selector to act before exact convergence, but it forces the choice to remain inside the certified choice set.  Approximation is allowed in value refinement.  Approximation is not allowed in the logical invariance constraint.

\section{Semantic construction pipeline}\label{sec:pipeline}

The finite theory induces two concrete semantic construction procedures.  The full-information procedure computes a state admissible-continuation kernel, a largest certified domain, an optional quotient, and a discounted universal value.  The partial-observation procedure first constructs the reachable information state game and then runs the same kernel-and-value pattern on information states.

\begin{definition}[Admissible-continuation domain map]
Given $V\subseteq S$, the admissible-continuation constraint is
\[
M_V(s)=A_V(s)=\{u\in A(s):T(s,u)\subseteq V\}.
\]
\end{definition}

\begin{algorithm}[H]
\caption{Full-information value-refined modal semantics}
\begin{algorithmic}[1]
\Require finite structure $(S,U,A,T,L,c)$, target formula $\varphi$, discount $\gamma\in(0,1)$, tolerance $\varepsilon>0$
\Ensure admissible region $V$, largest certified domain $M_V$, value $J$, certified greedy selector $\pi$
\State compute $G\gets\Sat(\varphi)$ by finite model checking
\State $X\gets S$
\Repeat
\State $X_{prev}\gets X$
\State $X\gets G\cap\{s\in S: \exists u\in A(s)\text{ with }T(s,u)\subseteq X_{prev}\}$
\Until{$X=X_{prev}$}
\State $V\gets X$
\For{$s\in V$}
\State $M_V(s)\gets\{u\in A(s):T(s,u)\subseteq V\}$
\EndFor
\State initialize $J_0\gets 0$ on $V$ and $k\gets0$
\Repeat
\State $J_{k+1}(s)\gets\min_{u\in M_V(s)}\bigl(c(s,u)+\gamma\max_{s'\in T(s,u)}J_k(s')\bigr)$ for all $s\in V$
\State $k\gets k+1$
\Until{$\|J_k-J_{k-1}\|_\infty\leq(1-\gamma)\varepsilon/(2\gamma)$}
\State choose $\pi(s)\in\argmin_{u\in M_V(s)}\bigl(c(s,u)+\gamma\max_{s'\in T(s,u)}J_k(s')\bigr)$
\State \Return $V,M_V,J_k,\pi$
\end{algorithmic}
\end{algorithm}

The input is a finite modal choice structure and a modal target.  The output is not merely a selector; it is the region on which the specification is realizable, the largest admissible set of choices on that region, and a value-ranked certified selector.  The stopping test uses the standard residual bound for a $\gamma$-contraction.

\begin{algorithm}[H]
\caption{Information-state value-refined modal semantics}
\begin{algorithmic}[1]
\Require finite structure $(S,U,A,T,L,c)$, observation map $\obs:S\to O$, target formula $\varphi$, discount $\gamma\in(0,1)$
\Ensure observational certified information states $W$, information-state domain map $M_W$, information state value $J_{\obs}$, certified observation selector $\pi_{\obs}$
\State compute $G\gets\Sat(\varphi)$
\State construct reachable information states using $b_0\subseteq S$ and the update $\bpost(b,u,o')$
\State $\mathcal X\gets\{b:b\subseteq G,\ b\neq\varnothing,\ b\text{ lies in one observation fibre}\}$ restricted to reachable information states
\Repeat
\State $\mathcal X_{prev}\gets\mathcal X$
\State $\mathcal X\gets\{b\in\mathcal X_{prev}:\exists u\in A(b)\text{ with }\Post_B(b,u)\subseteq\mathcal X_{prev}\}$
\Until{$\mathcal X=\mathcal X_{prev}$}
\State $W\gets\mathcal X$
\State $M_W(b)\gets\{u\in A(b):\Post_B(b,u)\subseteq W\}$ for each $b\in W$
\State solve $J_{\obs}=\mathcal B_W^{\obs}J_{\obs}$ by value iteration or selector iteration
\State choose a Bellman-minimizing selector $\pi_{\obs}$ over $M_W$
\State \Return $W,M_W,J_{\obs},\pi_{\obs}$
\end{algorithmic}
\end{algorithm}

\begin{proposition}[Algorithmic bounds]\label{prop:algorithmic-bounds}
Let
\[
n=|S|,
\qquad m=\sum_{s\in S}|A(s)|,
\qquad e=\sum_{s\in S}\sum_{u\in A(s)}|T(s,u)|.
\]
The naive full-information admissible continuation computation terminates in at most $n$ rounds and costs $O(ne)$ time.  One value-iteration sweep costs $O(e)$ over the admissible region.  If $B$ reachable information states and $e_B=\sum_{b,u}|\Post_B(b,u)|$ information state incidences are generated, the naive observational admissible continuation computation costs $O(Be_B)$ and each information state-value sweep costs $O(e_B)$.
\end{proposition}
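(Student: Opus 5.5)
The plan is to bound each procedure by counting passes over the incidence structure and charging the work of one pass to the quantity $e$ (respectively $e_B$). Throughout, the data are stored as adjacency lists. For each state $s$ and each $u\in A(s)$ we keep the list $T(s,u)$. Sets $X\subseteq S$ are stored as boolean arrays, so a membership test costs $O(1)$. Under this representation $m\le e$, because every $T(s,u)$ is nonempty. Hence any term of order $m$ or $n+m$ is absorbed into $O(n+e)$. I will treat $n\le e$ as well, since every state has at least one choice and every choice at least one successor.

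For the admissible-continuation loop, the first step is the round bound. I will invoke the descending Kleene lemma (equivalently Theorem~\ref{thm:core-fixedpoint}). Each non-final round removes at least one state, so there are at most $n$ strict rounds plus one confirming round; this gives at most $n+1$ evaluations, which is $O(n)$ rounds. The second step is the cost of one round, i.e.\ computing $G\cap\Pre_{\exists\Box}(X_{prev})$. For each $s\in G$ and each $u\in A(s)$ we scan $T(s,u)$ and test membership in $X_{prev}$, stopping at the first witness. This is $O(e)$ work, plus $O(n)$ to test membership in $G$ and $O(n)$ to compare $X$ with $X_{prev}$, so $O(n+e)=O(e)$ per round. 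Multiplying gives $O(ne)$. Building $M_V$ afterwards is one further $O(e)$ scan.

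For a value-iteration sweep, computing $J_{k+1}(s)$ takes the maximum of $J_k$ over $T(s,u)$ for each $u\in M_V(s)$, and then a minimum over $M_V(s)$. The total work is at most $\sum_{s\in V}\sum_{u\in M_V(s)}(|T(s,u)|+1)\le e+m=O(e)$. The residual norm costs an extra $O(n)$, which is also within $O(e)$.

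The observational case is the same argument on the information-state game, and the only delicate point is bookkeeping. Once the $B$ reachable information states and the sets $\Post_B(b,u)$ have been materialised (this is taken as given, since the statement is conditional on their generation), the candidate family is a subset of at most $B$ nodes. The descending loop therefore strictly shrinks at most $B$ times. Each round scans every $\Post_B(b,u)$ list with $O(1)$ membership tests, costing $O(e_B)$ plus the number of pairs $(b,u)$, which is again at most $e_B$ because each $\Post_B(b,u)$ is nonempty. This nonemptiness holds since every $T(s,u)$ is nonempty and $b\neq\varnothing$. The total is $O(Be_B)$. A sweep of $\mathcal B^{\obs}_W$ is charged to $e_B$ in the same way, provided $\hat c(b,u)$ is precomputed once rather than recomputed in each sweep.

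The main obstacle I expect is this last representation issue. Computing $\hat c$ and $A(b)$, and indexing information states by identifiers so that the $O(1)$ membership tests are legitimate, must be attributed to the generation phase rather than to the fixed-point computation. I will state that convention explicitly, so that the claimed bounds refer only to the kernel and sweep computations.
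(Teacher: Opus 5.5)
Your proposal is correct and follows the paper's argument. The admissible-continuation loop has at most $n$ strict removals (at most $B$ on the information-state graph). Each round is charged to one pass over all successor incidences, and each Bellman sweep is charged to one maximum per incidence. The extra bookkeeping you supply is left implicit in the paper and makes its terse count rigorous: absorbing the $O(n)$ and $O(m)$ overheads into $O(e)$ via nonemptiness of $A(s)$ and $T(s,u)$, checking $\Post_B(b,u)\neq\varnothing$, and assigning the computation of $\hat c(b,u)$ and $A(b)$ to the generation phase.
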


\begin{proof}
Each strict full-information admissible continuation round removes at least one state, so there are at most $n$ strict rounds, and a naive round inspects all successor incidences.  A value sweep computes one maximum over each admissible successor incidence.  The information state proof is the same argument on the finite information state graph: each strict round removes at least one reachable information state, and each round inspects the information state-incidence relation.  The Bellman sweep over information states computes one maximum over each information state successor incidence.
\end{proof}

\begin{proposition}[Domain-map maximality]\label{prop:filter-maximality}
The domain map $M_V$ is largest among all state-based constraints that preserve $G$ forever from $V$, and $M_W$ is largest among all observation-state-based constraints that preserve $G$ forever from $W$.
\end{proposition}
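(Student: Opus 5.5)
The plan is to read ``state-based constraint that preserves $G$ forever from $V$'' as a map $M'$ with $M'(s)\subseteq A(s)$ nonempty for $s\in V$, such that every selector path from $V$ whose choices are always drawn from $M'$ stays in $G$ forever. The goal is then to show $M'(s)\subseteq M_V(s)$ for every $s\in V$. Together with the fact that $M_V$ itself is such a constraint, this gives maximality. That $M_V$ qualifies is already available: it is nonempty by the certified-choice restriction lemma, and invariance follows from the one-step admissible closure lemma by induction over time, as in Proposition~\ref{prop:extract}.

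For the inclusion I would argue by contradiction. Suppose $s\in V$ and $u\in M'(s)$ with $T(s,u)\not\subseteq V$, and pick $s'\in T(s,u)\setminus V$. By the admissible continuation semantic construction theorem, no state-based selector keeps every path from $s'$ inside $G$. The obstacle is that the constraint $M'$ is only defined (or only nonempty) on $V$, so the path beyond $s'$ is not obviously governed by $M'$.

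I would handle this in one of two ways. The cleanest route is to fix the convention that a constraint assigns nonempty choice sets everywhere, or at least along every path it generates, and that ``preserve $G$ forever'' quantifies over all selectors valued in $M'$ and all resolvers. Under that convention, define $W=\{x:\text{some selector valued in }M'\text{ keeps all plays from }x\text{ in }G\}$. Then $W\subseteq F_G(W)$, so $W\subseteq V$ by greatest-fixed-point maximality (Theorem~\ref{thm:core-fixedpoint}). Since $s'$ is reached from $s\in V$ under an $M'$-choice, $s'$ must lie in $W\subseteq V$, which is a contradiction. Put directly, the constraint would allow the resolver to move to $s'$, from which every continuation eventually leaves $G$. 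This contradicts the requirement that every $M'$-consistent play from $s$ stays in $G$. Alternatively, one can use the resolver's attractor strategy on $S\setminus V=\mu Y.\Attr(Y)$ from the invariance-game theorem: from $s'$ the resolver forces a visit to $S\setminus G$ whatever choices are made, in particular whatever choices $M'$ allows.

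The observational case follows the same template on the information-state lattice. Here $M_W$ preserves $G$ by Theorem~\ref{thm:obs-invariance}. If $u\in M'(b)$ had some $b'\in\Post_B(b,u)\setminus W$, then the characterization in Theorem~\ref{thm:obs-invariance}, that $W$ is exactly the set of information states from which an observation-based selector can keep the true state in $G$, together with information-state soundness (Lemma~\ref{lem:information statesound}), yields a hidden state in $b'$ and a resolver that force exit from $G$. This contradicts the assumption on $M'$, so $M'(b)\subseteq M_W(b)$. I expect the main care to go into the quantifier convention for ``constraint preserves $G$'', namely which paths the constraint governs; the fixed-point inclusion itself is routine.
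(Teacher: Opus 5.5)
Your proposal is correct and follows the paper's own argument. Admissibility of $M_V$ (and of $M_W$) comes from certified choices keeping every successor in the fixed point. Maximality holds because a choice outside $M_V(s)$ has a successor outside $V$, from which the invariance-game (semantic construction) theorem rules out any selector keeping $G$ forever, and the information-state case is identical via Theorem~\ref{thm:obs-invariance}. Your extra care about what a constraint prescribes beyond $V$ (the nonemptiness convention, or the resolver's attractor strategy on $S\setminus V$) makes explicit a point the paper's short proof leaves implicit.
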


\begin{proof}
Every allowed choice keeps every successor inside the corresponding fixed point, so the domain maps are admissible.  If a state domain map allows a choice outside $M_V(s)$, some successor leaves $V$, and from that successor the invariance-game theorem says no selector can guarantee $G$ forever.  The information state argument is identical with successor information states replacing successor states and Theorem~\ref{thm:obs-invariance} replacing the state justification theorem.
\end{proof}

\section{Sensitivity and stability of the semantic construction}

The semantic construction depends on three pieces of data: the successor relation, the cost function, and the discount factor.  This section quantifies how the admissible region and the optimal value respond to perturbations of each.  Throughout, $\|\cdot\|_\infty$ is the supremum norm over the relevant state set, and all costs are nonnegative with $\bar c=\max_{s,u}c(s,u)$.  Write $\Viab_T$ and $J^\star_T$ to expose the dependence on a successor map $T$.

\begin{theorem}[The kernel shrinks under universal strengthening]
Let $T$ and $T'$ be successor maps on the same $(S,A)$ with $T(s,u)\subseteq T'(s,u)$ for all $s,u$.  Then $\Viab_{T'}(G)\subseteq\Viab_T(G)$ for every target $G$.
\end{theorem}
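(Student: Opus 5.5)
The plan is to compare the two admissible predecessors pointwise and then use greatest-fixed-point maximality, exactly as in the proof of Theorem~\ref{thm:canonical-completion}.

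Write $\Pre^T_{\exists\Box}$ and $\Pre^{T'}_{\exists\Box}$ for the admissible predecessors built from $T$ and $T'$. Write $F^T_G$ and $F^{T'}_G$ for the corresponding transformers $X\mapsto G\cap\Pre_{\exists\Box}(X)$. Both are defined on the same lattice $2^S$ and use the same choice sets $A(s)$.

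The first step is the pointwise inequality $\Pre^{T'}_{\exists\Box}(X)\subseteq\Pre^T_{\exists\Box}(X)$ for every $X\subseteq S$. Suppose $s$ lies in the left-hand side, witnessed by $u\in A(s)$ with $T'(s,u)\subseteq X$. Then the same $u$ gives $T(s,u)\subseteq T'(s,u)\subseteq X$. Intersecting with $G$ yields $F^{T'}_G(X)\subseteq F^T_G(X)$ for all $X$. Note that $T(s,u)$ stays nonempty, as the standing definition of a modal choice structure requires, so $T$ is a legitimate structure.

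The second step sets $V'=\Viab_{T'}(G)=\nu X.F^{T'}_G(X)$. By the fixed-point equation and the first step,
\[
V'=F^{T'}_G(V')\subseteq F^T_G(V').
\]
So $V'$ is a post-fixed point of the monotone map $F^T_G$. Tarski's characterisation of the greatest fixed point as the union of all post-fixed points (equivalently, the induction in Proposition~\ref{prop:extract} showing every post-fixed point lies in each Kleene approximant) then gives $V'\subseteq\nu X.F^T_G(X)=\Viab_T(G)$.

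As a consistency check, the strategic reading gives the same result. A memoryless selector that keeps every $T'$-compatible play inside $G$ also keeps every $T$-compatible play inside $G$, since the $T$-plays form a subset of the $T'$-plays. This matches the admissible continuation semantic construction theorem.

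There is no real obstacle. The only point needing care is that the witnessing choice $u$ is drawn from the same set $A(s)$ under both successor maps, since the statement fixes $(S,A)$. This is what lets the existential witness transfer unchanged, with the universal inclusion only getting easier under the smaller successor sets $T(s,u)$.
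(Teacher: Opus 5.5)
Your proposal is correct and follows essentially the same route as the paper. You show the pointwise inclusion $F^{T'}_G(X)\subseteq F^T_G(X)$, observe that $\Viab_{T'}(G)$ is therefore a post-fixed point of $F^T_G$, and conclude by Knaster--Tarski maximality.
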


\begin{proof}
For any $X\subseteq S$, if $T'(s,u)\subseteq X$ then $T(s,u)\subseteq T'(s,u)\subseteq X$, so the admissible predecessor satisfies $\Pre^{T'}_{\exists\Box}(X)\subseteq\Pre^{T}_{\exists\Box}(X)$.  Hence the admissible continuation transformers obey $F^{T'}_G(X)=G\cap\Pre^{T'}_{\exists\Box}(X)\subseteq G\cap\Pre^{T}_{\exists\Box}(X)=F^{T}_G(X)$ pointwise.  Both are monotone, so $\Viab_{T'}(G)=F^{T'}_G(\Viab_{T'}(G))\subseteq F^{T}_G(\Viab_{T'}(G))$, i.e. $\Viab_{T'}(G)$ is a post-fixed point of $F^{T}_G$.  By Knaster--Tarski maximality of the greatest fixed point, $\Viab_{T'}(G)\subseteq\Viab_T(G)$.
\end{proof}

\begin{theorem}[Value is monotone in successor enlargement]
Fix a region $V$ that is admissible under both $T$ and $T'$ with a common certified choice set $A_V(s)=\{u\in A(s):T(s,u)\subseteq V\}=\{u\in A(s):T'(s,u)\subseteq V\}$, and suppose $T(s,u)\subseteq T'(s,u)\subseteq V$ for all $s\in V$ and $u\in A_V(s)$.  Then $J^\star_T\leq J^\star_{T'}$ on $V$.
\end{theorem}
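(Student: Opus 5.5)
The plan is a comparison argument on the two Bellman--Kripke operators over the common certified domain.  Write
\[
(\mathcal B_TJ)(s)=\min_{u\in A_V(s)}\Bigl[c(s,u)+\gamma\max_{s'\in T(s,u)}J(s')\Bigr]
\]
on $\ell_\infty(V)$, and define $\mathcal B_{T'}$ in the same way with $T'$ in place of $T$.  The standing hypothesis says the two operators minimize over the same finite nonempty index set $A_V(s)$ and evaluate $J$ only on subsets of $V$.  Consequently both are well-defined $\gamma$-contractions on the same space, by the Bellman elementary contraction lemma.  Their fixed points $J^\star_T$ and $J^\star_{T'}$ therefore exist and are unique, by the unique admissible value theorem.

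The first step is a pointwise operator inequality: $\mathcal B_TJ\le\mathcal B_{T'}J$ for every $J\in\ell_\infty(V)$.  Fix $s\in V$ and $u\in A_V(s)$.  Since $T(s,u)\subseteq T'(s,u)$ and both are nonempty, a maximum over the larger set dominates one over the smaller set, so
\[
\max_{T(s,u)}J\le\max_{T'(s,u)}J.
\]
Multiplying by $\gamma>0$ and adding $c(s,u)$ preserves this inequality for each $u$.  Taking the minimum over the common set $A_V(s)$ keeps it, because the minimum is monotone in its arguments when the index set is fixed.

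The second step turns this into a comparison of fixed points via the sub- and supersolution sandwich lemma applied to $\mathcal B_T$.  Evaluating the operator inequality at $J=J^\star_{T'}$ gives
\[
\mathcal B_TJ^\star_{T'}\le\mathcal B_{T'}J^\star_{T'}=J^\star_{T'},
\]
so $J^\star_{T'}$ is a supersolution for $\mathcal B_T$.  The sandwich lemma, which rests on monotonicity of the Bellman--Kripke map and uniform convergence of its iterates, then yields $J^\star_T\le J^\star_{T'}$ on $V$.  As an alternative, one could iterate both operators from $J_0\equiv0$ and show by induction that $\mathcal B_T^n0\le\mathcal B_{T'}^n0$, using monotonicity of $\mathcal B_T$ together with the operator inequality, and then pass to the limit.

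There is no serious obstacle.  The only point that needs care is the role of the hypothesis that the certified sets coincide and that $T'(s,u)\subseteq V$.  Without it, $\mathcal B_{T'}$ could minimize over a smaller set, which would break the direction of the inequality, or it could query $J$ outside $V$.  I would state this explicitly at the start, noting that the theorem's hypothesis is exactly what makes both operators act on $\ell_\infty(V)$ with identical index sets.
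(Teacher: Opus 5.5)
Your proof is correct and is essentially the paper's argument: you derive the pointwise inequality $\mathcal B_TJ\le\mathcal B_{T'}J$ from $T(s,u)\subseteq T'(s,u)$ over the common set $A_V(s)$ and then apply the sub- and supersolution sandwich, the only difference being that you plug in $J^\star_{T'}$ and use the supersolution half for $\mathcal B_T$, whereas the paper plugs in $J^\star_T$ and uses the subsolution half for $\mathcal B_{T'}$. One small correction to your closing remark: $\mathcal B_{T'}$ minimizing over a \emph{smaller} set would only reinforce $\mathcal B_T\le\mathcal B_{T'}$, so what the common-certified-set hypothesis actually excludes is $\mathcal B_{T'}$ having access to certified choices that $\mathcal B_T$ lacks, which is what could reverse the inequality.
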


\begin{proof}
On $\ell_\infty(V)$ the two restricted Bellman operators share the choice set $A_V(s)$ and differ only in the successor range.  For any $J$ and any $s\in V$, each choice's evaluation obeys $\max_{s'\in T(s,u)}J(s')\leq\max_{s'\in T'(s,u)}J(s')$ because $T(s,u)\subseteq T'(s,u)$; taking the same minimum over $A_V(s)$ gives $\mathcal B^{T}_VJ(s)\leq\mathcal B^{T'}_VJ(s)$.  Thus $\mathcal B^{T}_V\leq\mathcal B^{T'}_V$ pointwise.  Applying this at $J=J^\star_T$ and using $J^\star_T=\mathcal B^{T}_VJ^\star_T$ gives $J^\star_T\leq\mathcal B^{T'}_VJ^\star_T$, so $J^\star_T$ is a subsolution of $\mathcal B^{T'}_V$.  By the sub- and supersolution sandwich, $J^\star_T\leq J^\star_{T'}$.
\end{proof}

\begin{theorem}[Lipschitz dependence on cost]
Let $c$ and $\hat c$ be cost functions on the same system, with admissible optimal values $J^\star_c$ and $J^\star_{\hat c}$ on a common admissible region $V$.  Then
\[
\|J^\star_c-J^\star_{\hat c}\|_\infty\leq\frac{\|c-\hat c\|_\infty}{1-\gamma}.
\]
\end{theorem}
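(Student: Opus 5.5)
The plan is to compare the two Bellman--Kripke operators $\mathcal B^c_V$ and $\mathcal B^{\hat c}_V$ on the common region $V$ and then convert a one-step discrepancy bound into a fixed-point bound through the contraction property, in the same way as the a posteriori value-error lemma. Both operators minimize over the same certified choice set $A_V(s)$, since that set depends only on $T$ and $V$ and not on costs. The first step is a pointwise estimate: for $J\in\ell_\infty(V)$, $s\in V$ and $u\in A_V(s)$, the two choice evaluations
\[
c(s,u)+\gamma\max_{s'\in T(s,u)}J(s')
\qquad\text{and}\qquad
\hat c(s,u)+\gamma\max_{s'\in T(s,u)}J(s')
\]
differ by exactly $|c(s,u)-\hat c(s,u)|\le\|c-\hat c\|_\infty$. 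The minimum perturbation lemma over the finite nonempty set $A_V(s)$ then gives $\|\mathcal B^c_VJ-\mathcal B^{\hat c}_VJ\|_\infty\le\|c-\hat c\|_\infty$ for every $J$.

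The second step uses the fixed-point equations $J^\star_c=\mathcal B^c_VJ^\star_c$ and $J^\star_{\hat c}=\mathcal B^{\hat c}_VJ^\star_{\hat c}$ together with the triangle inequality:
\[
\|J^\star_c-J^\star_{\hat c}\|_\infty\le\|\mathcal B^c_VJ^\star_c-\mathcal B^c_VJ^\star_{\hat c}\|_\infty+\|\mathcal B^c_VJ^\star_{\hat c}-\mathcal B^{\hat c}_VJ^\star_{\hat c}\|_\infty\le\gamma\|J^\star_c-J^\star_{\hat c}\|_\infty+\|c-\hat c\|_\infty .
\]
Here the first term is bounded by the Bellman elementary contraction lemma (Theorem~\ref{thm:core-value}), which holds for any cost function, and the second by the first step. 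Both quantities are finite because $V$ is finite, so rearranging and dividing by $1-\gamma>0$ yields the claimed bound.

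The only delicate point is the hypothesis of a \emph{common} admissible region. One has to check that $A_V$ is literally the same set for both cost functions, so that the minimum perturbation lemma applies to a single index set. This holds because the kernel $V$ and the certified choices $A_V(s)$ are defined purely from $G$, $A$ and $T$. If $\|c-\hat c\|_\infty$ is read as a maximum over all admissible pairs $(s,u)$, it dominates the maximum over certified pairs, and the estimate goes through unchanged.
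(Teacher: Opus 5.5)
Your proposal is correct and follows the paper's proof essentially verbatim: a pointwise bound $\|\mathcal B^c_VJ-\mathcal B^{\hat c}_VJ\|_\infty\le\|c-\hat c\|_\infty$ from the minimum perturbation lemma over the common index set $A_V(s)$, then the triangle inequality through $\mathcal B^c_VJ^\star_{\hat c}$, the $\gamma$-contraction, and rearrangement. Your remark that $A_V$ depends only on $G$, $A$ and $T$, not on the costs, makes explicit a point the paper uses tacitly.
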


\begin{proof}
For fixed $J$ and $s\in V$, the two operators differ only through the additive cost term, so by the minimum perturbation lemma $|\mathcal B_V^{c}J(s)-\mathcal B_V^{\hat c}J(s)|\leq\max_{u\in A_V(s)}|c(s,u)-\hat c(s,u)|\leq\|c-\hat c\|_\infty$.  Hence
\begin{align*}
\|J^\star_c-J^\star_{\hat c}\|_\infty
&=\|\mathcal B_V^{c}J^\star_c-\mathcal B_V^{\hat c}J^\star_{\hat c}\|_\infty\\
&\leq\|\mathcal B_V^{c}J^\star_c-\mathcal B_V^{c}J^\star_{\hat c}\|_\infty
+\|\mathcal B_V^{c}J^\star_{\hat c}-\mathcal B_V^{\hat c}J^\star_{\hat c}\|_\infty\\
&\leq\gamma\|J^\star_c-J^\star_{\hat c}\|_\infty+\|c-\hat c\|_\infty.
\end{align*}
Rearranging gives the bound.
\end{proof}

\begin{theorem}[Lipschitz dependence on the discount factor]
The admissible continuation kernel does not depend on the discount factor, and for $\gamma,\gamma'\in[0,1)$ on the common admissible region,
\[
\|J^\star_\gamma-J^\star_{\gamma'}\|_\infty\leq\frac{|\gamma-\gamma'|\,\bar c}{(1-\gamma)(1-\gamma')}.
\]
\end{theorem}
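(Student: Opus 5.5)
The plan has two parts: the kernel claim is structural, and the value bound follows the same pattern as the proof of Lipschitz dependence on cost.

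For the kernel, the admissible continuation transformer $F_G(X)=G\cap\Pre_{\exists\Box}(X)$ involves only $G$, $A$, and $T$; no cost and no discount factor appears in it. Hence $\Viab(G)$, the certified choice sets $A_V(s)$, and therefore the restricted Bellman operators $\mathcal B_V^{\gamma}$ and $\mathcal B_V^{\gamma'}$ all live on the same region $V$ and minimize over the same finite nonempty sets. This settles the first claim and justifies speaking of "the common admissible region."

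For the value bound, I first record the a priori range bound. Since costs lie in $[0,\bar c]$ and $\mathcal B^{\gamma'}_V$ maps $[0,\bar c/(1-\gamma')]$-valued functions to functions with the same range, uniqueness of the fixed point gives $\|J^\star_{\gamma'}\|_\infty\leq \bar c/(1-\gamma')$. When $\gamma'=0$ the operator is constant, so the bound holds trivially. Then I split the difference as in the cost theorem:
\[
\|J^\star_\gamma-J^\star_{\gamma'}\|_\infty\leq\|\mathcal B^{\gamma}_VJ^\star_\gamma-\mathcal B^{\gamma}_VJ^\star_{\gamma'}\|_\infty+\|\mathcal B^{\gamma}_VJ^\star_{\gamma'}-\mathcal B^{\gamma'}_VJ^\star_{\gamma'}\|_\infty .
\]
The first term is at most $\gamma\|J^\star_\gamma-J^\star_{\gamma'}\|_\infty$ by the Bellman elementary contraction lemma; when $\gamma=0$ it vanishes, since the operator is then constant.

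For the second term, fix $s\in V$ and apply the minimum perturbation lemma over $A_V(s)$. For each certified $u$, the two backups differ by exactly $|\gamma-\gamma'|\max_{s'\in T(s,u)}J^\star_{\gamma'}(s')$, because the costs cancel. Nonnegativity of the values then bounds this by $|\gamma-\gamma'|\,\bar c/(1-\gamma')$. Rearranging,
\[
(1-\gamma)\|J^\star_\gamma-J^\star_{\gamma'}\|_\infty\leq|\gamma-\gamma'|\,\bar c/(1-\gamma'),
\]
which is the claim.

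The only delicate point is making sure the a priori bound is taken at the right discount: it must be the one whose fixed point is fed into the mismatched operator, here $\gamma'$, so that the two denominators come out as $(1-\gamma)(1-\gamma')$. The rest is routine. The result is symmetric in $\gamma,\gamma'$, so no case split on their order is needed.
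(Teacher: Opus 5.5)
Your proposal is correct and follows the paper's proof essentially step for step. The steps match: the kernel is independent of the discount because $\Pre_{\exists\Box}$ involves neither $\gamma$ nor $c$; you use the a priori bound $\|J^\star_{\gamma'}\|_\infty\leq\bar c/(1-\gamma')$; you split by the triangle inequality into a contraction term at $\gamma$ and a discount-mismatch term evaluated at $J^\star_{\gamma'}$; and you rearrange. Your invariant-interval justification of the a priori bound and your separate handling of the degenerate cases $\gamma=0$ or $\gamma'=0$ are minor refinements rather than a different route; the separate handling is needed because the paper's contraction lemma is stated only for $0<\gamma<1$.
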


\begin{proof}
The kernel is defined from $\Pre_{\exists\Box}$, which involves neither $\gamma$ nor $c$, so $V$ is common to both discount factors.  The value $J^\star_{\gamma'}$ is a discounted supremum of nonnegative per-step costs bounded by $\bar c$, hence $\|J^\star_{\gamma'}\|_\infty\leq\bar c/(1-\gamma')$.  For fixed $J$ and $s$, the operators differ only in the discount multiplier, so $|\mathcal B_V^{\gamma}J(s)-\mathcal B_V^{\gamma'}J(s)|\leq|\gamma-\gamma'|\max_{u}\max_{s'\in T(s,u)}|J(s')|\leq|\gamma-\gamma'|\,\|J\|_\infty$.  Therefore
\[
\|J^\star_\gamma-J^\star_{\gamma'}\|_\infty
\leq\gamma\|J^\star_\gamma-J^\star_{\gamma'}\|_\infty+|\gamma-\gamma'|\,\|J^\star_{\gamma'}\|_\infty
\leq\gamma\|J^\star_\gamma-J^\star_{\gamma'}\|_\infty+\frac{|\gamma-\gamma'|\,\bar c}{1-\gamma'},
\]
and rearranging yields the stated estimate.
\end{proof}

\begin{remark}
The two kinds of stability differ in character.  Invariance is universal at the level of sets: enlarging the resolver's successor choices can only shrink the admissible region, monotonically and without quantitative loss.  Value is universal in the quantitative sense: it varies Lipschitz-continuously in both the cost data and the discount factor, with constants that degrade as $\gamma\to1$.  Invariance has no such blow-up because it is a purely relational property.
\end{remark}

The next boundary fragments are intentionally not part of the finite exact endpoint.  They record how the layer order can be preserved when one assumption is relaxed, and they are used only by Theorem~\ref{thm:relaxed-main}.  The main finite semantics is unchanged by them: none changes the definitions of $K_G$, certified choices, value-refined bisimulation, or residual certification.

\section{Main value-refined semantic theorem}

The preceding pieces now assemble into the central theorem.  This theorem is the finite exact endpoint of the construction.  The later relaxation theorem is separated from it so that the finite result is not conflated with extensions.

\begin{theorem}[Value-refined modal semantic construction]\label{thm:main-semantic-synthesis}
Let $\mathcal T=(S,U,A,T,L,c)$ be a finite modal choice structure.  Let $G\subseteq S$ be modal-definable and let $\gamma\in(0,1)$.  In the full-information endpoint, if $V=\Viab(G)$ is nonempty, then:
\begin{enumerate}[label=(\roman*)]
\item $V$ is the memoryless admissible region of the invariance game and the maximal choice-refined-invariant subset of $G$;
\item $M_V(s)=A_V(s)$ is the largest certified-choice domain map on $V$;
\item the restricted Bellman operator has a unique fixed point $J^\star$ and memoryless optimal certified selectors;
\item proper selector iteration over certified selectors terminates after finitely many improvements;
\item value-refined modal bisimulation quotients preserve choice-refined modal truth, admissible continuation membership, optimal values, and optimal selectors, and are coarsest for exact local logical-value quotienting;
\item certificate-gated stopping is sound exactly on the reach-while-stay region generated inside $V$;
\item the value-refined modal bisimulation of clause~(v) is the zero set of a canonical pseudometric $d^\star$, the unique fixed point of the Hausdorff-lifted choice-matching transformer of Section~\ref{sec:bisimulation-metric}, for which $|J^\star(s)-J^\star(t)|\le d^\star(s,t)$, so any $\varepsilon$-quotient under $d^\star$ preserves optimal value up to $\varepsilon$;
\item the pipeline of Section~\ref{sec:pipeline} computes the preceding objects with the stated finite bounds.
\end{enumerate}
If an observation map is supplied and $W=\Viab_{\obs}(G)$ is nonempty, then the same conclusions hold on the information state game: $W$ is the observation-based admissible region, $M_W$ is the largest observation domain map, $\mathcal B_W^{\obs}$ has a unique universal discounted value $J^\star_{\obs}$, and every Bellman-minimizing observation selector is certified and optimal among stationary certified observation selectors.
\end{theorem}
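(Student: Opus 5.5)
The plan is to prove the theorem by assembly: each clause is matched to one or two earlier results, and the only new work is checking that their hypotheses hold under the standing assumptions ($S,U$ finite, $G=\Sat(\varphi)$, $\gamma\in(0,1)$, $V\neq\varnothing$). In that sense the theorem is a bookkeeping statement over the earlier sections.

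For clause~(i), the admissible-kernel semantic construction theorem and the theorem identifying the kernel with the admissible region give that $V=\mathrm{Win}(G)$, with memoryless witnessing strategies. Maximality among choice-refined-invariant subsets of $G$ then comes from the post-fixed-point argument of Theorem~\ref{thm:core-fixedpoint}.

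Clause~(ii) is Proposition~\ref{prop:filter-maximality}. Clause~(iii) combines three things: the unique-admissible-value theorem (a contraction on a complete finite $\ell_\infty(V)$), the finite selector lemma for existence of argmin selectors, and the admissible-optimality corollary. Together these show that greedy selectors are memoryless, certified and optimal. Theorem~\ref{thm:gamevalue} supplies the saddle-point reading when the environment is treated as a player. Clause~(iv) is the finite-termination theorem for proper selector iteration, with its bound $\prod_{s\in V}|A_V(s)|$.

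Clause~(v) is where some care is needed, because ``preserve optimal selectors'' was not stated verbatim earlier. The ingredients are Theorem~\ref{thm:choice-refined-mu-invariance}, the admissible-continuation quotient invariance theorem, the optimal-value quotient invariance theorem, and the coarsest characterization in Theorem~\ref{thm:ccb-coarsest}. To get selector preservation, I will use Lemma~\ref{lem:viablematch}: it shows that matched certified choices at related states have equal cost and equal successor-class sets. Since $J^\star$ is class-constant, matched choices therefore have equal $Q^\star$-values. Hence the greedy sets correspond under the matching, and a greedy choice at $s$ is matched by a greedy choice at $t$.

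Clause~(vi) is Theorem~\ref{thm:rws} together with the certificate soundness theorem. The hypothesis $C\subseteq V$ is part of the definition of a certificate predicate.

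Clause~(vii) is the content of Section~\ref{sec:bisimulation-metric}. Proposition~\ref{prop:metric-fixed-point} gives existence and uniqueness of $d^\star$. Theorem~\ref{thm:metric-kernel} identifies the zero set of $d^\star$ with the bisimulation of clause~(v) restricted to $V$. Theorem~\ref{thm:metric-lipschitz} and Corollary~\ref{cor:eps-quotient} give the Lipschitz and $\varepsilon$-quotient statements.

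Clause~(viii) is Proposition~\ref{prop:algorithmic-bounds}, together with correctness of the algorithm. The algorithm's loops are exactly the Kleene chain and the value iteration. Its stopping threshold, combined with the a posteriori value-error lemma, yields $\|J_k-J^\star\|_\infty\le\varepsilon/2$, and the bounded-deliberation theorem then controls the returned greedy selector.

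For the observational part, I will invoke Theorem~\ref{thm:obs-invariance} for the region $W$ and for maximality of $M_W$. The information-state admissible-value theorem (Theorem~\ref{thm:information state-value}) gives uniqueness of $J^\star_{\obs}$ and optimality of Bellman-minimizing observation selectors. The observational half of Proposition~\ref{prop:filter-maximality} gives that $M_W$ is the largest observation domain map.

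The main obstacle I expect is making clause~(v)'s ``coarsest'' claim consistent with clause~(vii). The metric lives only on $V$ and uses $A_V$, while the coarsest bisimulation is defined on all of $S$ using the full choice sets $A$. I will handle this by stating the zero-set identity as equality with the restriction of $\sim$ to $V\times V$, exactly as Theorem~\ref{thm:metric-kernel} does. That theorem itself depends on Lemma~\ref{lem:viablematch}, which transports the matching on $A$ to a matching on $A_V$.
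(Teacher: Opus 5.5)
Your clause-by-clause assembly is the same as the paper's proof, which also just cites the kernel, game, domain-maximality, contraction, selector-iteration, quotient, reach-while-stay, metric, algorithmic and information-state results. Your added arguments for selector preservation in (v) and for the stopping threshold in (viii) are correct.

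The gap is in the obstacle you flagged yourself. Restricting $\sim$ to $V\times V$ and invoking Lemma~\ref{lem:viablematch} gives only the inclusion $\sim|_{V\times V}\subseteq\{d^\star=0\}$. That lemma transports a matching over $A$ to one over $A_V$, not back. The zero set $Z^\star$ of $d^\star$ never inspects uncertified choices, so it is a value-refined bisimulation only of the certified substructure $\mathcal T_V=(V,U,A_V,T,L,c)$. The coarsest-ness of $\sim$ on $\mathcal T$ therefore cannot be applied to it. The $(\Rightarrow)$ half of the proof of Theorem~\ref{thm:metric-kernel} makes exactly this unjustified appeal, so deferring to that theorem does not close the gap.

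Concretely, take $S=\{s,t,g,b\}$ with $p$ true exactly on $G=\{s,t,g\}$, zero-cost self-loops at $g$ and $b$, and $T(s,a)=T(t,a)=\{g\}$ with cost $0$. Add one extra choice $T(s,e)=\{b\}$ with $c(s,e)=1$. Then $V=G$ and $A_V(s)=A_V(t)=\{a\}$, so $d^\star(s,t)=\gamma\,d^\star(g,g)=0$. But $e$ has no cost-matched partner in $A(t)$, so $s\not\sim t$; indeed $s\models\Diamond\neg p$ while $t\not\models\Diamond\neg p$. Clause (vii), read with the bisimulation of clause (v), is therefore false as stated, and your proof cannot establish it.

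What the two halves of the metric-kernel argument actually prove is $\{d^\star=0\}=\sim_V$. Here $\sim_V$ is the coarsest value-refined modal bisimulation of $\mathcal T_V$, which is a genuine finite modal choice structure because $A_V(s)\neq\varnothing$ and $T(s,u)\subseteq V$ for $u\in A_V(s)$. Alongside this, $\sim|_{V\times V}\subseteq\sim_V$ holds by Lemma~\ref{lem:viablematch} and the $\sim$-saturation of $V$, but the inclusion can be strict.

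The Lipschitz bound, the $\varepsilon$-quotient corollary, and the recovery of value invariance for $\sim$ use only the fixed-point equation for $d^\star$ and this inclusion, so they survive. You should prove clause (vii) in this corrected form, or assert only the inclusion for the bisimulation of clause (v), rather than equality with $\sim|_{V\times V}$.
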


\begin{proof}
The full-information clauses follow respectively from the admissible-kernel theorem, the invariance-game/admissible-kernel theorem, the certified-domain maximality proposition, the contraction theorem for $\mathcal B_V$, the finite selector-iteration theorem, the quotient-invariance and coarseness theorems, the reach-while-stay certification theorem, and the algorithmic bounds proposition.  The metric clause is the contraction, zero-set, and Lipschitz cluster of Section~\ref{sec:bisimulation-metric}.  The observation clauses follow from the information state justification theorem and the information-state admissible value theorem.  The perfect-information collapse theorem identifies the state construction as the singleton-information state instance, so the two parts are compatible rather than separate semantics.
\end{proof}

\begin{theorem}[Boundary-layer relaxation theorem]\label{thm:relaxed-main}
Assume the finite endpoint theorem above.  The following are secondary boundary fragments.  Each is sound when inserted at its declared layer and unsound in general when inserted earlier:
\begin{enumerate}[label=(\alph*)]
\item replacing the hard kernel by a nested admissible continuation potential is sound at every certified threshold, and boundary pressure preserves Bellman contraction;
\item replacing the universal successor maximum by a risk envelope preserves contraction and orders values monotonically from nominal probabilistic evaluation to universal evaluation;
\item replacing a fixed kernel by a moving certified envelope preserves invariance under arbitrary model updates, provided executed choices are certified against the current conservative envelope;
\item replacing the optimizer by an arbitrary candidate core preserves invariance after certified-domain projection, with local performance loss equal to the admissible projection gap;
\item replacing a single global constraint by covered local cell constraints preserves global invariance under the local-to-global incidence condition;
\item replacing a fixed target by finite target revision preserves segment-wise target satisfaction and persistent base invariance under an outer consistency operator.
\end{enumerate}
In every case, the certified choice set used by the value or candidate layer is still generated by a certificate of certifiability; none of these clauses is needed for the finite exact theorem above.  Therefore none of the relaxations authorizes value refinement, candidate selection, quotienting, stopping, or target revision to create its own invariance permission.
\end{theorem}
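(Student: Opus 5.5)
The plan is to treat the Boundary-layer relaxation theorem as six independent soundness claims and one schematic unsoundness claim. For each clause (a)--(f), fix the relaxed object (nested potential, risk envelope, moving envelope, candidate core, local cell constraints, target revision). Then check two things. \emph{Soundness at the declared layer:} every executed choice still lies in a certified choice set produced by a greatest fixed point of an admissible predecessor, so invariance follows from Theorem~\ref{thm:core-fixedpoint} and the one-step admissible closure lemma. \emph{Unsoundness when inserted earlier:} reuse the cheap-exit countermodel of Proposition~\ref{prop:value-before-kernel}, in the form of the local-descent proposition. In every case, placing the relaxed mechanism before the kernel lets a low-cost move to the bad state $b$ be selected, exactly as in Theorem~\ref{thm:noncommutation}.

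The concrete steps, clause by clause, are as follows.
\begin{enumerate}[label=(\alph*)]
\item Define the nested potential as the family $\Viab(G_\theta)$ for a decreasing family of targets $G_\theta$. Soundness at each threshold is then Theorem~\ref{thm:core-fixedpoint} applied to $G_\theta$. Boundary pressure is an additive bounded cost term, so Bellman contraction persists by the Bellman elementary contraction lemma, since only $c$ changes.
\item Replace $\max_{T(s,u)}$ by a risk envelope $\rho_{s,u}$ that is monotone, translation-equivariant and sandwiched between an expectation and the max. The maximum perturbation lemma generalizes to any such $1$-Lipschitz monotone functional, so contraction holds. The value ordering follows from the pointwise operator ordering together with the sub- and supersolution sandwich, as in the successor-enlargement monotonicity theorem.
\item Take a moving envelope $V_t$ that is conservative, meaning $V_t\subseteq\Viab_{T_t}(G)$ for the current model. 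Certifying the executed choice against $V_t$ gives a one-step invariance argument. Induction over time then keeps the state in $G$ regardless of later updates.
\item Project the candidate onto $A_V(s)$. Invariance comes from the closure lemma. The performance loss is the difference of $Q^\star$ values at the projected choice, which is precisely the definition of the projection gap.
\item Assume the local-to-global incidence condition: each state is covered by some cell whose local kernel contains it, and local certified sets intersect consistently. Then the union of local kernels is a post-fixed point of $F_G$, so it lies inside $\Viab(G)$ by Theorem~\ref{thm:canonical-completion}.
\item Treat target revision segment-wise. On each segment apply Theorem~\ref{thm:core-fixedpoint} to the current target, and use the outer consistency operator to ensure each revised target's kernel contains the switch state and lies inside a fixed base $G_0$. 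This yields persistent base invariance.
\end{enumerate}

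The closing sentence of the theorem is the contrapositive reading. Every clause draws its permitted choices from some $\Viab$ or $A_V$ computed before the value/candidate layer, so no relaxation generates invariance permission itself.

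The main obstacle is that clauses (a), (c), (e), (f) are stated without fully specified hypotheses in the text so far. I would therefore first fix minimal definitions: conservative envelope, incidence condition, and consistency operator as "output target's kernel contains the current state and is contained in $G_0$". I would choose these so that each proof reduces to the post-fixed-point maximality argument. Clause (e) is the one I expect to be delicate, since local certified choices need not agree on overlaps; I would require the incidence condition to supply, at each state, one choice whose successors lie in the union of local kernels, which is exactly what post-fixedness needs.
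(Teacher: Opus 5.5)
The paper's proof is only a citation. The hypotheses you thought were unspecified are fixed later, in the retained-fragment section, and each clause is exactly Proposition~\ref{prop:potential-sound} with Theorem~\ref{thm:boundary-pressure}, or one of Theorems~\ref{thm:risk-contraction}, \ref{thm:moving-envelope}, \ref{thm:filter-projection}, \ref{thm:local-global}, \ref{thm:target-revision}. Your versions of (a), (b), (d), (f) agree with these in substance. For (a), any invariant $P_\theta\subseteq G$ equals $\Viab(P_\theta)$, so a family of kernels loses nothing, provided $G_\theta\subseteq G$.

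The genuine problem is in (c). Your conservativeness condition $V_t\subseteq\Viab_{T_t}(G)$ relates the certified set only to the current model. Your one-step argument, however, infers from $T_t(s_t,u_t)\subseteq V_t$ that the \emph{realized} successor lies in $V_t$. That inference requires the true successor map to be contained in the model, $T^\dagger(s,u)\subseteq\widehat T_t(s,u)$, which is what ``conservative envelope'' means in Definition~\ref{def:certified-envelope}. If the model under-approximates, say $T^\dagger(s,u)=\{g,b\}$ but $T_t(s,u)=\{g\}$ with $b\notin G$, the choice is certified and the real system still leaves $G$. You also need to say why a certified choice exists after each update. Theorem~\ref{thm:moving-envelope} assumes that each updated certificate contains the realized state; without such a clause, ``regardless of later updates'' is not justified.

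In (e), your proposed incidence condition (some choice at each state whose successors lie in the union of local kernels) is literally the post-fixed-point property of that union. The argument therefore presupposes the global invariance it should derive, and it requires a global check rather than composing local certificates. Definition~\ref{def:local-cover} has a different shape:
\begin{itemize}
\item a product embedding $S\subseteq S_1\times\cdots\times S_r$;
\item local invariant sets $V_i\subseteq G_i$ with domain maps $M_i$;
\item a cover condition: every global successor $s'$ under a choice assembled from locally certified components satisfies $\pi_i(s')\in T_i(\pi_i(s),u_i)$.
\end{itemize}
The certified global region is the intersection $V_{loc}=S\cap\bigcap_i\pi_i^{-1}(V_i)$, not a union. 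Invariance is then a componentwise check, $\pi_i(s')\in T_i(\pi_i(s),u_i)\subseteq V_i$. No cells overlap and no local choices need to agree, so the difficulty you anticipated disappears once the product structure is used.

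Finally, your uniform appeal to the cheap-exit model for the ``unsound when inserted earlier'' half is an analogy rather than an argument; for instance, it is not clear what inserting (e) earlier would mean. The paper's proof does not treat that half separately either, so this is not a divergence from it, but your sketch does not establish it.
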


\begin{proof}
Clause (a) is Proposition~\ref{prop:potential-sound} and Theorem~\ref{thm:boundary-pressure}.  Clause (b) is Theorem~\ref{thm:risk-contraction}.  Clause (c) is Theorem~\ref{thm:moving-envelope}.  Clause (d) is Theorem~\ref{thm:filter-projection}.  Clause (e) is Theorem~\ref{thm:local-global}.  Clause (f) is Theorem~\ref{thm:target-revision}.  The last statement follows from the hypotheses of those theorems: each value, candidate, local, update, or revision rule is defined only after a certified domain map has supplied the allowed choices.
\end{proof}

\section{Running finite example}\label{sec:running-example}

This example follows one finite system through the complete pipeline.  Let
\[
S=\{0,1,2,3,4,5\},\qquad G=\{0,1,2,3,4\},\qquad C=\{4\},
\]
and let the admissible transitions be
\[
T(0,a)=\{1,2\},\quad T(0,b)=\{5\},
\]
\[
T(1,a)=\{3\},\quad T(1,c)=\{5\},
\]
\[
T(2,a)=\{3,4\},\quad T(2,b)=\{2\},
\]
\[
T(3,a)=\{4\},\quad T(4,a)=\{4\},\quad T(5,a)=\{5\}.
\]
All unlisted choices are inadmissible.  Costs are
\[
c(0,a)=3,\quad c(0,b)=0,
\qquad
c(1,a)=2,\quad c(1,c)=0,
\]
\[
c(2,a)=1,\quad c(2,b)=1,
\qquad
c(3,a)=1,\quad c(4,a)=0,\quad c(5,a)=0.
\]
The first admissible continuation round removes $5$ because it is outside $G$.  The remaining states all have at least one choice whose successors stay inside the remaining set, so
\[
V=\{0,1,2,3,4\}.
\]
The largest certified domain disables $b$ at $0$ and disables $c$ at $1$.  These are exactly the cheapest raw choices at those states, so the example shows why cost minimization cannot be run before the justification domain map.

With $\gamma=1/2$, the state value is computed inside $V$.  One obtains
\[
J^\star(4)=0,
\qquad
J^\star(3)=1,
\qquad
J^\star(1)=2+\tfrac12J^\star(3)=\tfrac52.
\]
At $2$, choice $a$ has value $1+\tfrac12\max(J^\star(3),J^\star(4))=\tfrac32$, while choice $b$ solves $J=1+\tfrac12J$, giving value $2$; hence $a$ is optimal and $J^\star(2)=\tfrac32$.  At $0$ the only certified choice is $a$, so
\[
J^\star(0)=3+\frac12\max\left(\frac52,\frac32\right)=\frac{17}{4}.
\]
The certificate predicate is still independent of value: $2$ has lower value than $1$, but $2\notin C$, so stopping at $2$ is unsound.

Now add observations
\[
\obs(0)=o_0,
\quad
\obs(1)=\obs(2)=o_m,
\quad
\obs(3)=o_3,
\quad
\obs(4)=o_4,
\quad
\obs(5)=o_5.
\]
After playing $a$ at $0$, the selector may know only the information state $\{1,2\}$.  At that information state the common certified choices are $A(\{1,2\})=A(1)\cap A(2)=\{a\}$.  The information state successors are $\{3\}$ and $\{4\}$, both admissible.  Therefore $\{1,2\}\in\Viab_{\obs}(G)$, and its universal immediate cost is
\[
\hat c(\{1,2\},a)=\max(c(1,a),c(2,a))=2.
\]
The information state value is
\[
J^{\star}_{\obs}(\{1,2\})=2+\frac12\max(J^{\star}_{\obs}(\{3\}),J^{\star}_{\obs}(\{4\}))=\frac52,
\]
and
\[
J^{\star}_{\obs}(\{0\})=3+\frac12J^{\star}_{\obs}(\{1,2\})=\frac{17}{4}.
\]
Thus the information state lift recovers the same universal value from the initial singleton while replacing hidden-state state-based by an observation-certified selector on information states.

Finally consider quotienting.  If a duplicate terminal state $4'$ is added with the same label, the same self-loop choice, and the same zero cost as $4$, then value-refined modal bisimulation merges $4$ and $4'$ and leaves the value unchanged.  If the duplicate has self-loop cost $1$, label equivalence would still merge the two terminal states, but value-refined modal bisimulation refuses the merge; otherwise the quotient would identify values $0$ and $2$ at discount $1/2$.  This is the smallest instance of the quotient theorem's cost clause.

\section{Further worked computations}

The previous example exhibits admissible continuation, value, stopping, and the universal margin on one system.  The following four computations isolate the quotient, switching, bounded-deliberation, and selector-iteration results with explicit arithmetic.

\subsection{A value-preserving quotient that a cost-blind merge would break}

Let $S=\{p,p',q\}$ with a single proposition that holds at every state, so all labels agree.  Each of $p$ and $p'$ has one choice with cost $2$ leading to $q$, and $q$ has one choice with cost $0$ looping at $q$:
\[
T(p,a)=\{q\},\quad T(p',a)=\{q\},\quad T(q,a)=\{q\},\qquad c(p,a)=c(p',a)=2,\ c(q,a)=0.
\]
With $\gamma=1/2$, the loop gives $J^\star(q)=0+\tfrac12 J^\star(q)$, so $J^\star(q)=0$, and then $J^\star(p)=J^\star(p')=2+\tfrac12\cdot0=2$.  The relation merging $p$ and $p'$ is a value-refined modal bisimulation: labels agree, and the single choice of each has equal cost and reaches the same class $[q]$.  Their optimal values coincide, as the quotient theorem requires.  The coarser cost-free bisimilarity $\approx_R$ identifies all three states, since the unlabelled graph is an endless chain of identical states from each; but $J^\star(p)=2\neq0=J^\star(q)$, so $\approx_R$ does not preserve value.  Value-refined modal bisimulation refuses the merge of $p$ with $q$ precisely because their choice costs differ, which is the separation the value layer needs.

\subsection{Switching gated by the hysteresis penalty}

Let $S=\{x,y,z\}$ with choices $a,b$ and
\[
T(y,a)=\{x\},\ T(y,b)=\{z\},\quad T(x,a)=\{x\},\quad T(z,a)=\{z\},
\]
\[
c(y,a)=c(y,b)=0,\quad c(x,a)=1,\quad c(z,a)=3.
\]
Consider two objectives $G_1=\{x,y\}$ and $G_2=\{y,z\}$.  Under $G_1$ the admissible region is $V_1=\{x,y\}$ with $A_{V_1}(y)=\{a\}$, and with $\gamma=1/2$, $J^\star_1(x)=1+\tfrac12J^\star_1(x)=2$ and $J^\star_1(y)=0+\tfrac12\cdot2=1$.  Under $G_2$ the admissible region is $V_2=\{y,z\}$ with $A_{V_2}(y)=\{b\}$, giving $J^\star_2(z)=3+\tfrac12J^\star_2(z)=6$ and $J^\star_2(y)=0+\tfrac12\cdot6=3$.  At the shared state $y$ the two objectives disagree by the margin $\Delta=J^\star_2(y)-J^\star_1(y)=2$.  A selector committed to the costlier mode $2$ and contemplating a switch to mode $1$ compares staying, value $3$, against switching, value $J^\star_1(y)+\lambda=1+\lambda$.  The switch lowers cost only when $1+\lambda<3$, that is $\lambda<\Delta=2$; for $\lambda\geq2$ the no-switch-under-margin principle keeps the current mode.  When switching does occur, every switch costs at least $\lambda$, so under a total discounted budget $M$ the number of switches cannot exceed $\lfloor M/\lambda\rfloor$, the finite-switching bound.

\subsection{A bounded-deliberation trace with a certified choice}

Return to the six-state system.  Starting from $J_0\equiv0$ and iterating $J_{n+1}=\mathcal B_VJ_n$ over $V=\{0,1,2,3,4\}$ gives
\[
J_1=(3,2,1,1,0),\quad J_2=(4,\tfrac52,\tfrac32,1,0),\quad J_3=(\tfrac{17}{4},\tfrac52,\tfrac32,1,0)=J^\star,
\]
listed in state order $0,1,2,3,4$.  The residuals are $r_0=\|J_1-J_0\|_\infty=3$, $r_1=\|J_2-J_1\|_\infty=1$, and $r_2=\|J_3-J_2\|_\infty=\tfrac14$.  At state $2$ and depth $2$ the two certified choices evaluate to
\[
Q_2(2,a)=1+\tfrac12\max(J_2(3),J_2(4))=\tfrac32,\qquad
Q_2(2,b)=1+\tfrac12 J_2(2)=\tfrac74,
\]
so $a$ is the unique minimizer and is $0$-certified at depth $2$.  The bounded-deliberation theorem bounds its optimality gap by $0+2\gamma r_2/(1-\gamma)=\tfrac12$, while the true gap is $0$ because $a$ is also the infinite-horizon optimizer at $2$.  The a posteriori estimate is consistent: at depth $1$, $\|J_1-J^\star\|_\infty=\tfrac54\leq r_1/(1-\gamma)=2$, and the a priori bound gives the same iterate $\|J_1-J^\star\|_\infty\leq\gamma\|J_0-J^\star\|_\infty=\tfrac{17}{8}$.

\subsection{Selector iteration terminating in one improvement}

On the same system, start from the certified selector $\pi_0$ that selects $b$ at state $2$ and the unique admissible choice elsewhere.  Selector evaluation of $\pi_0$ solves the linear system to give
\[
J^{\pi_0}=(\tfrac{17}{4},\tfrac52,2,1,0),
\]
where $J^{\pi_0}(2)=1+\tfrac12 J^{\pi_0}(2)=2$ reflects the self-loop induced by $b$.  The improvement test at state $2$ compares $Q^{\pi_0}(2,a)=1+\tfrac12\max(J^{\pi_0}(3),J^{\pi_0}(4))=\tfrac32$ against $Q^{\pi_0}(2,b)=1+\tfrac12 J^{\pi_0}(2)=2$, so the improved selector $\pi_1$ switches state $2$ to $a$.  Evaluating $\pi_1$ yields $J^{\pi_1}=(\tfrac{17}{4},\tfrac52,\tfrac32,1,0)=J^\star$, and the next improvement step finds no strict change, so selector iteration halts after a single strict improvement, in agreement with the finite-termination theorem.

\section{Semantic interpretation}

The semantics gives a precise answer to a common semantic-order problem.  A selector must not optimize over all choices; it must optimize over choices that preserve the admissible-continuation fixed point.  A quotient must not be formed from labels alone; it must preserve the value-relevant one-step choice data.  A stopping event must not be inferred from a residual equation outside the kernel; it must be certified after the fixed point has supplied the domain of value.  The finite countermodels of Section~\ref{sec:worked-discipline} show that these are not expository preferences but non-commuting semantic operations.

The theory also explains why state compression must be disciplined.  If compression preserves labels but not choice costs, then formulas may survive while values change.  If compression preserves costs but not labels, then values may survive while specifications change.  Value-refined modal bisimulation is sufficient because it preserves both layers.

Switching choice gives a second form of discipline.  A selector with many objectives can drift if every mode change is free.  A positive switching penalty turns mode changes into budget-consuming events.  If total cost is bounded, the number of switches is bounded.  This is a simple theorem, but it captures the structural role of commitment.

Bounded deliberation gives a third form of discipline.  Acting before convergence is not forbidden.  Acting outside the certified choice set is forbidden.  The residual bound permits approximate value refinement while preserving exact invariance.

\section{Boundary conditions and extensions}

The finite-state assumption marks the first boundary of the theorem.  Infinite-state systems require additional topological or measurable structure.  The discount assumption is the second boundary.  Average-cost, total-cost, and undiscounted games require different fixed-point arguments.  The universal successor convention is conservative.  Probabilistic successors may yield less pessimistic values but require probability measures and risk choices.

Observation is treated through the information-state construction.  The qualitative problem becomes a greatest fixed point on observation-consistent information states, and the quantitative problem becomes a universal discounted Bellman equation whose resolver resolves hidden state, successor, and observation.  The full-information hypothesis of the state sections is therefore the singleton-information state collapse, not a separate assumption that observation is immaterial.

The modal language used in the core sections is the invariance fragment needed for the main theorem.  Richer temporal objectives can be added through automata products or mu-calculus formulas.  That extension would increase the size of the state space but would not change the central separation between admissible continuation and value.  Algorithmic refinements are orthogonal to the semantic statement.  Partition refinement and symbolic methods can improve implementation, but the current target is mathematical clarity.

\section{Public share-alike certificate as a semantic fragment}
\label{sec:bysa-certificate}

The manuscript has a public release certificate, not merely a footer.  The certificate is read through the same semantic components used above: a target predicate, a derivative relation, a greatest fixed point, certified local choices, value-refinement, quotient preservation, and residual adequacy.  The legal instrument named by the certificate is Creative Commons Attribution--ShareAlike 4.0 International, whose public deed describes permissions to share and adapt material under attribution, same-license distribution of adaptations, and no additional restrictions \cite{ccby-sa40}.  The legal code gives the corresponding formal conditions for attribution, adapter licensing, and downstream restrictions \cite{ccby-sa40legal}.  The semantic construction below does not restate the license.  It records the internal proof object by which this article treats its own release as a certified derivative state.

\begin{definition}[Derivative-release frame]
A derivative-release frame is a finite tuple
\[
\mathfrak R=(R,r_0,\Der,\Attr,\Lic,\Mod,\NoAdd,\omega),
\]
where $R$ is a finite set of manuscript states, $r_0\in R$ is the root state, $\Der\subseteq R\times R$ is the one-step derivative relation, $\Attr(r)$ records the attribution payload visible at $r$, $\Lic(r)$ is its release label, $\Mod(r)$ is the retained modification trace, $\NoAdd(r)$ says that $r$ carries no additional downstream restriction, and $\omega:R\to\mathbb R_{\geq 0}$ is a public-cost index measuring certificate burden.  The root attribution payload is denoted $\Root$.  The distinguished release label is written $\mathrm{CC\!BY\!SA4}$.
\end{definition}

\begin{definition}[Attribution--share-alike target]
The attribution--share-alike target of $\mathfrak R$ is
\[
G_{\BYSA}=\{r\in R:\Root\subseteq\Attr(r),\ \Lic(r)=\mathrm{CC\!BY\!SA4},\ \Mod(r)\text{ retains the prior trace},\ \NoAdd(r)\}.
\]
A local derivative choice at $r$ is any nonempty finite set $c\subseteq\{r'\in R:(r,r')\in\Der\}$.  It is $\BYSA$-certified over $X\subseteq R$ when $c\subseteq X$.  The predecessor induced by derivative choices is
\[
\Pre_{\BYSA}(X)=\{r\in R:\exists c\in\Cell(r)\; c\subseteq X\}.
\]
The public share-alike kernel is
\[
K_{\BYSA}=\nu X.\bigl(G_{\BYSA}\cap \Pre_{\BYSA}(X)\bigr).
\]
\end{definition}

The notation is intentionally parallel to the main kernel.  Attribution is a label-preservation requirement, ShareAlike is a derivative-closure requirement, retained modification trace is a history condition, and absence of additional restriction is an admissibility condition.  Thus the release certificate is not a separate formalism: it is the main fixed-point engine applied to manuscript derivatives.

\begin{proposition}[Share-alike closure is a greatest fixed point]
A manuscript state $r$ lies in $K_{\BYSA}$ iff there exists a possibly infinite derivative policy through which every finite prefix remains inside $G_{\BYSA}$ and each step chooses only derivative states that again admit such a continuation.  Equivalently, $K_{\BYSA}$ is the largest set of release states closed under certified derivative continuation.
\end{proposition}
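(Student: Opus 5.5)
The plan is to reduce the statement to Theorem~\ref{thm:core-fixedpoint} (and the admissible continuation semantic construction theorem) by recognizing the derivative-release frame as a finite modal choice structure.  Set $S=R$.  At each $r$ let the available choices be the local derivative cells $\Cell(r)$, i.e.\ the nonempty finite sets $c\subseteq\{r':(r,r')\in\Der\}$, and set $T(r,c)=c$.  With this translation $\Pre_{\BYSA}=\Pre_{\exists\Box}$ holds literally, and $K_{\BYSA}$ is the kernel $K_G$ for $G=G_{\BYSA}$.  States with no derivative successors have $\Cell(r)=\varnothing$.  I will handle them exactly as the choice-cell section handles terminal cells: either exclude them (they can never lie in $\Pre_{\BYSA}$ of anything), or note that the fixed-point argument never uses nonemptiness of $A(s)$ except to guarantee witnesses, and those witnesses come from the fixed-point equation itself.

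The second sentence of the proposition, ``largest set closed under certified derivative continuation'', is then the first clause of Theorem~\ref{thm:core-fixedpoint}.  The argument runs in two directions.  A set $Y\subseteq G_{\BYSA}$ in which every $r$ has some $c\in\Cell(r)$ with $c\subseteq Y$ satisfies $Y\subseteq G_{\BYSA}\cap\Pre_{\BYSA}(Y)$, so it is post-fixed and hence contained in $K_{\BYSA}$ by Tarski.  Conversely, $K_{\BYSA}$ is itself closed by the fixed-point equation.

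For the policy characterization, I will prove both directions.  Forward: if $r\in K_{\BYSA}$, store a witness $c_x\subseteq K_{\BYSA}$ at each $x\in K_{\BYSA}$ using the fixed-point equality.  The policy $x\mapsto c_x$ then generates derivative continuations of arbitrary (possibly infinite) length.  By induction on prefix length, every state reached lies in $K_{\BYSA}\subseteq G_{\BYSA}$, and each chosen derivative state again admits such a continuation, since it is in $K_{\BYSA}$.  Backward: let $W$ be the set of states from which such a policy exists.  The defining condition says each $r\in W$ is in $G_{\BYSA}$ (the length-zero prefix) and has a chosen cell $c$ all of whose members admit continuations, i.e.\ $c\subseteq W$.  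Hence $W\subseteq F(W)$, and by greatest-fixed-point maximality $W\subseteq K_{\BYSA}$.

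The main obstacle is interpretive rather than technical.  The phrase ``each step chooses only derivative states that again admit such a continuation'' is self-referential, so I will read it coinductively: the policy class is defined as the largest predicate closed under the displayed step.  Under that reading the backward direction is immediate.  I also want the policy to be memoryless, which the forward construction supplies.  If a history-dependent policy is intended instead, I will collapse it to the set $W$ of states reachable under some admissible policy prefix and run the same post-fixed-point argument, exactly as in the converse of the admissible continuation semantic construction theorem.  Finiteness of $R$ guarantees that the descending Kleene chain stabilizes within $|R|$ steps, so nothing beyond the finite core is needed.
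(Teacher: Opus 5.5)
Your proposal is correct and follows essentially the same route as the paper: monotonicity of $F_{\BYSA}$ on $\mathcal P(R)$ together with Tarski, iteration of a stored fixed-point witness $c\subseteq K_{\BYSA}$ for the forward direction, and, for the converse, the observation that the set of states admitting such a continuation is post-fixed and hence contained in $K_{\BYSA}$. The explicit translation into a modal choice structure, the coinductive reading of the policy clause, and the treatment of states with $\Cell(r)=\varnothing$ are extra care that the paper's short proof leaves implicit. For those states, the second option (not relying on nonemptiness of choices) is the clean one, since simply deleting them would leave cells pointing outside $S$.
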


\begin{proof}
The map $F_{\BYSA}(X)=G_{\BYSA}\cap\Pre_{\BYSA}(X)$ is monotone on the finite lattice $\mathcal P(R)$.  Tarski's theorem gives the greatest fixed point $K_{\BYSA}=\nu X.F_{\BYSA}(X)$.  If $r\in K_{\BYSA}$, then $r\in G_{\BYSA}$ and some derivative choice $c\subseteq K_{\BYSA}$ is available; iterating this witness keeps every prefix in the target.  Conversely, any set from which such closed continuation is possible is a post-fixed point of $F_{\BYSA}$ and is therefore contained in the greatest fixed point.
\end{proof}

\begin{definition}[BY-SA certificate]
For a manuscript state $r\in K_{\BYSA}$, a BY-SA certificate is a tuple
\[
\mathsf C_{\BYSA}(r)=(r,\Attr(r),\Lic(r),\Mod(r),\NoAdd(r),\rho(r)),
\]
where $\rho(r)$ is a residual bound for checking that the displayed attribution payload, license label, modification trace, and no-additional-restriction predicate are stable under the finite derivative choices inspected from $r$.  The certificate is accepted when $\rho(r)=0$ in the exact finite frame, or when $\rho(r)\leq\varepsilon$ under an explicitly declared finite approximation tolerance.
\end{definition}

The value-refined component now ranks certified release choices without changing the fixed point.  Let
\[
q(r,c)=\omega(r)+\eta_{\Attr}\,d_{\Attr}(r,c)+\eta_{\Mod}\,d_{\Mod}(r,c)+\eta_{\Lic}\,d_{\Lic}(r,c),
\]
where the three defect terms measure missing attribution payload, incomplete modification trace, and license-label drift along the derivative choice $c$.  In the exact certificate fragment these defects are zero on $\BYSA$-certified choices.  With discount $\gamma\in(0,1)$, define
\[
(\mathcal V_{\BYSA}J)(r)=\min_{c\in\CAct_{\BYSA}(r)}\max_{r'\in c}\bigl(q(r,c)+\gamma J(r')\bigr).
\]
Thus value refinement selects the least burdensome certified public derivative route, but it cannot select a route outside $K_{\BYSA}$.

\begin{theorem}[The manuscript release certificate is an instance of certified choice]
Let $M$ be the current manuscript state and suppose $M\in G_{\BYSA}$.  If every public derivative cell declared by the manuscript preserves attribution payload, the CC BY-SA 4.0 label, modification trace, and absence of additional downstream restriction, then
\[
M\in K_{\BYSA},\qquad \CAct_{\BYSA}(M)\neq\varnothing,
\]
and the certificate $\mathsf C_{\BYSA}(M)$ is a certified-choice witness in the sense of the main construction.  Moreover, any exact value-certified extension of the release frame factors through $K_{\BYSA}$, and any quotient preserving $\Attr$, $\Lic$, $\Mod$, $\NoAdd$, costs $q$, and successor certificate classes preserves the accepted BY-SA certificate.
\end{theorem}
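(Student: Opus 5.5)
The plan is to reduce the release frame to the main construction and then reuse the core theorems. First, I would read $\mathfrak R$ as a modal choice structure: $S=R$, the choices at $r$ are the declared derivative cells $c\in\Cell(r)$, $T(r,c)=c$, and the cost is $q(r,c)$. Under this translation $\Pre_{\BYSA}=\Pre_{\exists\Box}$, so $K_{\BYSA}$ is exactly the kernel $K_G$ of Theorem~\ref{thm:core-fixedpoint} with $G=G_{\BYSA}$. I would define $\CAct_{\BYSA}(r)=\{c\in\Cell(r):c\subseteq K_{\BYSA}\}$, matching $\CAct_G$.

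To show $M\in K_{\BYSA}$, I would take the set $Y$ of states reachable from $M$ through declared derivative cells. The hypothesis says every declared cell preserves attribution, label, trace and $\NoAdd$. By induction along derivative steps starting from $M\in G_{\BYSA}$, this gives $Y\subseteq G_{\BYSA}$.

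Each $r\in Y$ then needs some declared cell $c\subseteq Y$. The inclusion $c\subseteq Y$ holds by reachability, but I also need $\Cell(r)\neq\varnothing$. This nonemptiness is the main obstacle, because the statement does not make it explicit. I would read it as part of ``every public derivative cell declared by the manuscript'': the release declares at least one continuation cell at each reachable state, for instance the identity or redistribution cell. This plays the role of the standing nonemptiness assumption $A(s)\neq\varnothing$ in the modal choice structure. With it, $Y\subseteq G_{\BYSA}\cap\Pre_{\BYSA}(Y)$, so $Y$ is post-fixed. Greatestness, exactly as in the converse part of Theorem~\ref{thm:core-fixedpoint}, gives $Y\subseteq K_{\BYSA}$, hence $M\in K_{\BYSA}$.

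Nonemptiness of $\CAct_{\BYSA}(M)$ then follows directly from the fixed-point equality, as in Theorem~\ref{thm:core-fixedpoint}. For the witness claim, I would note that $\mathsf C_{\BYSA}(M)$ names $M$, its labels, and the cells inspected from $M$. Since those cells lie in $K_{\BYSA}$, the defect terms vanish and the residual $\rho(M)=0$ in the exact frame. The certificate is therefore accepted and is precisely the local proof object of Proposition~\ref{prop:certified-reading}.

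Factorization of any exact value-certified extension is Theorem~\ref{thm:canonical-completion} applied verbatim under the translation: $D\subseteq F_{\BYSA}(D)$ forces $D\subseteq K_{\BYSA}$ and $C(r)\subseteq\CAct_{\BYSA}(r)$.

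For quotient preservation, I would treat the tuple $(\Attr,\Lic,\Mod,\NoAdd)$ as the label $L$, together with cost $q$ and successor classes. A quotient preserving these is a value-refined modal bisimulation. Then:
\begin{enumerate}[label=(\roman*)]
\item $G_{\BYSA}$ is a union of classes;
\item the class saturation lemma and Theorem~\ref{thm:core-quotient} make $K_{\BYSA}$ class-saturated;
\item Lemma~\ref{lem:viablematch} matches certified cells with certified cells.
\end{enumerate}
The certificate's components are preserved by construction of the quotient. Its residual is preserved because the transformer $\mathcal V_{\BYSA}$ restricts to class-constant functions, as in the optimal-value quotient argument.

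One more check concerns $\mathcal V_{\BYSA}$: the form $\max_{r'}(q+\gamma J(r'))$ equals $q+\gamma\max J$, so the contraction of Theorem~\ref{thm:core-value} applies unchanged.
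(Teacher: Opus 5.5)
Your proposal is correct and takes essentially the same route as the paper. In both, the set of states generated from $M$ under the closed release condition is a post-fixed point of $F_{\BYSA}$ and so lies in $K_{\BYSA}$, factorization is Theorem~\ref{thm:canonical-completion} with $G=G_{\BYSA}$, and quotient preservation comes from value-refined bisimulation. Your one addition is to state explicitly that each reachable state needs at least one declared cell: the paper uses this implicitly, and without it the hypothesis holds vacuously while $M\notin K_{\BYSA}$, so your reading of the hypothesis is the right one.
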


\begin{proof}
The hypotheses say exactly that $M\in G_{\BYSA}$ and that every declared derivative cell from $M$ is contained in $G_{\BYSA}$ while preserving the same conditions at the next manuscript states.  The set of states satisfying the same closed condition is a post-fixed point of $F_{\BYSA}$, hence is contained in $K_{\BYSA}$.  Therefore $M\in K_{\BYSA}$ and at least one certified derivative choice is available.  The tuple $\mathsf C_{\BYSA}(M)$ records the target clauses used by the fixed-point witness, so it is a certificate in the earlier sense.  The factorization claim is the canonical-completion theorem applied with $G=G_{\BYSA}$ and with derivative cells as local choices.  The quotient claim follows from value-refined bisimulation: the listed fields are precisely the labels, local costs, and successor certificate classes needed to keep the kernel, value transformer, greedy certified choices, and residual tests invariant.
\end{proof}

\begin{corollary}[No drift under public derivation]
If $M\in K_{\BYSA}$ and $r$ is reachable from $M$ by certified derivative choices, then $r\in K_{\BYSA}$ and $\Root\subseteq\Attr(r)$, $\Lic(r)=\mathrm{CC\!BY\!SA4}$, $\Mod(r)$ retains the prior trace, and $\NoAdd(r)$ holds.  Hence the public certificate propagates along every certified derivative branch.
\end{corollary}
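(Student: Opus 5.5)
The plan is an induction along certified derivative paths, using only the greatest-fixed-point equation $K_{\BYSA}=G_{\BYSA}\cap\Pre_{\BYSA}(K_{\BYSA})$ from the preceding proposition together with the definition of $\CAct_{\BYSA}$. I will call a derivative choice $c$ at $r$ \emph{certified} when $c\in\Cell(r)$ and $c\subseteq K_{\BYSA}$. This mirrors $\CAct_G$ in the main construction. A state $r$ is \emph{reachable from $M$ by certified derivative choices} when there is a finite sequence $M=r_0,r_1,\dots,r_n=r$ with $r_{k+1}\in c_k$ for some certified choice $c_k$ at $r_k$.

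First I prove by induction on $n$ that every $r_n$ lies in $K_{\BYSA}$. The base case $r_0=M\in K_{\BYSA}$ is the hypothesis. For the inductive step, assume $r_k\in K_{\BYSA}$ and let $c_k$ be the certified choice used at that step. Certification means $c_k\subseteq K_{\BYSA}$, so $r_{k+1}\in K_{\BYSA}$. This step is exactly the one-step admissible closure lemma, transported to the release frame.

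Second, I read off the four target clauses. Since $K_{\BYSA}$ is a fixed point of $X\mapsto G_{\BYSA}\cap\Pre_{\BYSA}(X)$, we have $K_{\BYSA}\subseteq G_{\BYSA}$. So $r\in K_{\BYSA}$ gives $r\in G_{\BYSA}$. Unfolding the definition of $G_{\BYSA}$ then yields $\Root\subseteq\Attr(r)$, $\Lic(r)=\mathrm{CC\!BY\!SA4}$, the retained modification trace for $\Mod(r)$, and $\NoAdd(r)$.

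Third, for propagation of the certificate, note that each reachable $r$ lies in $K_{\BYSA}$. Hence $\CAct_{\BYSA}(r)\neq\varnothing$ by the fixed-point equation, so the tuple $\mathsf C_{\BYSA}(r)$ is defined with the same four fields just established. In the exact frame, stability of these fields along inspected certified choices gives $\rho(r)=0$, and the certificate is accepted at $r$. Because every branch consists of certified steps, the same induction covers every certified derivative branch.

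The main obstacle is definitional rather than mathematical. The phrase ``$\Mod(r)$ retains the prior trace'' is relational: it refers to the predecessor on the branch, whereas $G_{\BYSA}$ is written as a state predicate. I would handle this by reading the clause as part of the membership predicate of $G_{\BYSA}$ exactly as the definition states it, and not re-deriving it, so that it is inherited through $K_{\BYSA}\subseteq G_{\BYSA}$. If needed, I would add a remark that retention composes transitively along the path, so the trace of $M$ is retained at $r$.
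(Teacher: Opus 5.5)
Your proof is correct and follows the paper's argument: each certified step moves into a cell contained in $K_{\BYSA}$, so by induction every reached state stays in $K_{\BYSA}$, and $K_{\BYSA}\subseteq G_{\BYSA}$ then gives the four target clauses. The paper reads the propagation sentence as exactly this conclusion, so your third paragraph (nonemptiness of $\CAct_{\BYSA}(r)$ and $\rho(r)=0$) is not needed, and its $\rho(r)=0$ step rests only on the informal definition of $\rho$.
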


\begin{proof}
Reachability by certified derivative choices means that each step follows a cell contained in $K_{\BYSA}$.  Since $K_{\BYSA}\subseteq G_{\BYSA}$, every reached state satisfies the four clauses defining the attribution--share-alike target.
\end{proof}

For this manuscript, the declaration $\BYSA(M)$ should therefore be read as a small public instance of the article's own semantics.  It says that the written object is not only accompanied by a release notice; it carries a checkable derivative-closure certificate whose clauses are attribution preservation, same-label propagation, retained modification history, and absence of downstream restriction.  The mathematical theorems do not depend on this release certificate.  Rather, the release certificate demonstrates the same discipline on the manuscript object itself: close first, refine second, certify last.

\section{Conclusion}

The construction gives a finite modal account of certified choice in value-refined modal semantics.  The core fixed point is classical: it is the admissible-predecessor greatest fixed point of the modal $\mu$-calculus and the one-step ATL/coalition ability operator iterated to invariance.  The semantic content is the order placed around that core.  Modal truth is closed under admissible continuation; value is computed only over certified choices; quotienting must preserve labels, successor classes, and costs; and stopping must be justified by a residual certificate.

This separation gives a precise boundary for the construction.  It does not propose a new temporal language, and it does not replace existing choice or game logics.  It identifies the finite quotient and certificate discipline needed when a fixed-point modal semantics is refined by value.  The retained transition fixed point, the branching choice-cell representation, the non-commutation witnesses, and the coarsest value-refined bisimulation theorem are the central mathematical objects.  The bisimulation-metric refinement of that theorem grades the quotient discipline quantitatively: it collapses to the exact equivalence at distance zero and bounds, by the metric itself, the optimal-value error incurred by any inexact compression.  Boundary fragments show how the same order can survive selected relaxations, but the main theorem remains the finite exact construction.

\section{Retained boundary fragments}
\label{sec:retained-boundary-fragments}
The following fragments are retained from the expanded construction as compatibility tests.  They are not premises of the main theorem cluster.  Each fragment records how one rigidity of the finite core can be loosened while the layer order is kept intact under the stated hypotheses.

\subsection{Retained fragment: layer-preserving relaxation of the admissible kernel}\label{sec:relaxation}

The preceding sections establish the finite exact endpoint: a Boolean target, a hard admissible continuation kernel, a largest certified domain, and a discounted universal value.  This section turns the endpoint into a limiting case.  The invariant part is not the particular rigidity of the endpoint; it is the order in which the layers are allowed to act.  Truth still precedes certifiability, certifiability still precedes preference, and preference still precedes stopping or revision.  The relaxations below loosen the objects carried by the layers without commuting the layers themselves.

\begin{definition}[Layer-ordered extension]\label{def:layer-extension}
A layer-ordered extension of the finite endpoint consists of objects
\[
G \quad\leadsto\quad \mathcal K \quad\leadsto\quad M_{\mathcal K}\quad\leadsto\quad \mathcal V\quad\leadsto\quad \mathcal C,
\]
where $G$ is the truth region of a formula or finite family of formulas, $\mathcal K$ is a certified certifiability object, $M_{\mathcal K}$ is the set of choices certified by $\mathcal K$, $\mathcal V$ is a preference or value functional defined only over $M_{\mathcal K}$, and $\mathcal C$ is a termination, quotient, update, or target-revision certificate defined only after the previous layers have been fixed.  The extension is sound when every selector that selects choices from $M_{\mathcal K}$ satisfies the original invariance objective whenever the certificate assumptions of $\mathcal K$ are true.
\end{definition}

The definition deliberately does not prescribe a unique representation of $\mathcal K$.  In the endpoint $\mathcal K$ is the Boolean set $V=\Viab(G)$.  In the relaxed constructions it may be a nested potential, an information state kernel, a moving envelope, or a family of local certificates.  The next theorems give sufficient conditions under which these representations remain sound.

\subsection{Retained fragment: semantic potentials}\label{sec:potential}

A Boolean kernel separates states into admissible and nonviable.  That separation is sometimes too coarse: in large symbolic spaces and continuous abstractions, a selector should feel almost no restriction deep inside the admissible region and should tighten only near the boundary.  The construction therefore promotes the margin from a tie-breaker into a primary object, while keeping hard invariance as the zero-level certificate.

\begin{definition}[Nested admissible continuation potential]\label{def:nested-potential}
Let $G\subseteq S$.  A nested admissible continuation potential is a map $p:S\to\mathbb R\cup\{-\infty\}$ together with a finite threshold set $\Theta\subseteq\mathbb R$ such that, for every $\theta\in\Theta$, the superlevel set
\[
P_\theta=\{s\in S:p(s)\geq \theta\}
\]
satisfies $P_\theta\subseteq G$ and is choice-refined invariant: for every $s\in P_\theta$ there exists $u\in A(s)$ with $T(s,u)\subseteq P_\theta$.  The induced margin domain map is
\[
M_\theta(s)=\{u\in A(s):T(s,u)\subseteq P_\theta\}.
\]
When $P_0=\Viab(G)$, the potential is exact at the hard endpoint.
\end{definition}

\begin{proposition}[Potential domain maps are sound and nested]\label{prop:potential-sound}
For every threshold $\theta\in\Theta$, any selector $\pi$ with $\pi(s)\in M_\theta(s)$ for all visited $s\in P_\theta$ keeps every play inside $P_\theta$ and hence inside $G$.  If $\theta_1\leq\theta_2$, then $P_{\theta_2}\subseteq P_{\theta_1}$ and $M_{\theta_2}(s)\subseteq M_{\theta_1}(s)$ on $P_{\theta_2}$.
\end{proposition}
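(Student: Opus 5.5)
The argument is a one-step invariance induction, the same one used for the admissible-kernel theorem, applied to the superlevel set $P_\theta$ in place of $\Viab(G)$. The nestedness claim then follows from monotonicity of superlevel sets in the threshold. I would prove the two parts separately.

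\emph{Soundness.} Fix $\theta\in\Theta$ and a selector $\pi$ with $\pi(s)\in M_\theta(s)$ at every visited $s\in P_\theta$. Take a play $s_0,s_1,\ldots$ with $s_0\in P_\theta$ and $s_{t+1}\in T(s_t,\pi(s_t))$. I induct on $t$ with hypothesis $s_t\in P_\theta$. The base case is the assumption on $s_0$. For the step, $s_t\in P_\theta$ is visited, so $\pi(s_t)\in M_\theta(s_t)$. By the definition of $M_\theta$ this gives $T(s_t,\pi(s_t))\subseteq P_\theta$, hence $s_{t+1}\in P_\theta$. The defining inclusion $P_\theta\subseteq G$ then places the whole play inside $G$.

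Two remarks close this part. First, choice-refined invariance of $P_\theta$ makes $M_\theta(s)$ nonempty for every $s\in P_\theta$, so such selectors exist; this is the analogue of the certified-choice restriction lemma. Second, invariance also shows that $P_\theta$ is a post-fixed point of $F_G$, so $P_\theta\subseteq\Viab(G)$ by the maximality in Theorem~\ref{thm:core-fixedpoint}. This second fact is not needed for soundness, but it places every level inside the hard kernel.

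\emph{Nestedness.} Let $\theta_1\le\theta_2$. If $p(s)\ge\theta_2$, then $p(s)\ge\theta_1$, so $P_{\theta_2}\subseteq P_{\theta_1}$. The value $p(s)=-\infty$ lies in neither set, so it causes no problem. Now let $s\in P_{\theta_2}$ and $u\in M_{\theta_2}(s)$. Then $T(s,u)\subseteq P_{\theta_2}\subseteq P_{\theta_1}$, so $u\in M_{\theta_1}(s)$.

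No step is a real obstacle. The only point needing care is the phrasing ``for all visited $s\in P_\theta$'': the induction must start from $s_0\in P_\theta$, and each state's membership in $P_\theta$ must be established before the selector constraint is applied to it. That ordering is exactly what the induction hypothesis supplies.
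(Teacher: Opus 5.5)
Your proof is correct and follows the paper's own argument essentially step for step. It runs induction on time using the defining inclusion $T(s,u)\subseteq P_\theta$ for $u\in M_\theta(s)$, then monotonicity of superlevel sets, then the observation that preserving $P_{\theta_2}$ is a stronger condition than preserving $P_{\theta_1}$; your extra remarks (nonemptiness of $M_\theta(s)$, and $P_\theta\subseteq\Viab(G)$ because $P_\theta$ is a post-fixed point of $F_G$) are also correct, though not needed for the statement.
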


\begin{proof}
The first claim follows by induction on time.  At time zero the state is in $P_\theta$.  If the current state is in $P_\theta$ and the selector selects a choice from $M_\theta$, every successor belongs to $P_\theta$ by definition of the domain map.  Thus all finite prefixes and all infinite plays remain in $P_\theta\subseteq G$.  The nesting of sets follows from the order of superlevel sets: $p(s)\geq\theta_2$ and $\theta_2\geq\theta_1$ imply $p(s)\geq\theta_1$.  The inclusion of domain maps follows because preserving the smaller set $P_{\theta_2}$ is a stronger condition than preserving the larger set $P_{\theta_1}$.
\end{proof}

\begin{definition}[Boundary-sensitive value]\label{def:boundary-value}
Let $\ell:\mathbb R\to\mathbb R_{\geq0}$ be nonincreasing and let $\lambda\geq0$.  For a fixed threshold $\theta$, define
\[
(\mathcal B_{\theta,\lambda}J)(s)=\min_{u\in M_\theta(s)}\left(c(s,u)+\lambda\ell(p(s))+\gamma\max_{s'\in T(s,u)}J(s')\right).
\]
The term $\lambda\ell(p(s))$ is a boundary pressure: it is small deep inside the potential and grows near the certified edge.
\end{definition}

\begin{theorem}[Boundary pressure does not weaken invariance]\label{thm:boundary-pressure}
For every fixed threshold $\theta$ and $\lambda\geq0$, the operator $\mathcal B_{\theta,\lambda}$ is a $\gamma$-contraction on $\ell_\infty(P_\theta)$ and has a unique fixed point.  Every greedy selector for that fixed point is admissible for $G$.
\end{theorem}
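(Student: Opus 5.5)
The plan splits the statement into two parts: contraction (hence a unique fixed point), and admissibility of greedy selectors. Fix $\theta\in\Theta$ and $\lambda\geq0$. First I check that $\mathcal B_{\theta,\lambda}$ is well defined as a map $\ell_\infty(P_\theta)\to\ell_\infty(P_\theta)$. For $s\in P_\theta$ the domain map $M_\theta(s)$ is nonempty, because Definition~\ref{def:nested-potential} requires $P_\theta$ to be choice-refined invariant. Every $u\in M_\theta(s)$ has $T(s,u)\subseteq P_\theta$, so the inner maximum only evaluates $J$ on $P_\theta$. The pressure term $\lambda\ell(p(s))$ is finite on $P_\theta$, since $p(s)\geq\theta>-\infty$ there. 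Together with bounded costs and finitely many states, this makes $\mathcal B_{\theta,\lambda}J$ a finite real-valued function on the finite set $P_\theta$.

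For the contraction I would follow the Bellman elementary contraction lemma verbatim. The pressure term $\lambda\ell(p(s))$ depends only on $s$, not on $u$ or $J$, so it is an additive constant at each state. It can be pulled out of the minimum and cancels in the difference $(\mathcal B_{\theta,\lambda}J)(s)-(\mathcal B_{\theta,\lambda}K)(s)$. What remains is exactly the restricted Bellman--Kripke operator with $M_\theta$ in place of $A_V$. The maximum perturbation lemma gives the bound $\gamma\|J-K\|_\infty$ for each fixed choice $u$, and the minimum perturbation lemma over the finite nonempty set $M_\theta(s)$ preserves it. Taking the supremum over $s\in P_\theta$ and applying Banach's theorem on the complete space $\ell_\infty(P_\theta)$ gives a unique fixed point $J_{\theta,\lambda}$.

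For admissibility, I note that a greedy selector $\pi$ for $J_{\theta,\lambda}$ attains the minimum in $\mathcal B_{\theta,\lambda}J_{\theta,\lambda}$. The minimum is taken over $M_\theta(s)$ and attained by the finite selector lemma, so $\pi(s)\in M_\theta(s)$ for every $s\in P_\theta$. Proposition~\ref{prop:potential-sound} then says every play from $P_\theta$ stays in $P_\theta\subseteq G$, which is admissibility for $G$.

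I expect the only real obstacle to be well-definedness: ensuring the pressure term never takes the value $+\infty$ or an undefined value at $p=-\infty$. Restricting to $P_\theta$ with a finite threshold settles this, so I would state that restriction explicitly. Everything else reduces to earlier lemmas.
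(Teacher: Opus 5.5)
Your proposal is correct and follows the paper's proof. The state-only pressure term cancels, so the maximum and minimum perturbation lemmas give the $\gamma$-contraction and Banach's theorem the unique fixed point; greedy selectors choose from $M_\theta$, so Proposition~\ref{prop:potential-sound} gives invariance, and your explicit well-definedness checks (nonempty $M_\theta(s)$, successors in $P_\theta$, finiteness of $\lambda\ell(p(s))$ on $P_\theta$) just make explicit what the paper leaves implicit.
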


\begin{proof}
The added pressure term depends on the current state but not on the continuation value $J$.  Therefore the same maximum and minimum perturbation lemmas used for the endpoint Bellman operator give the contraction bound with factor $\gamma$.  The Banach fixed-point theorem gives uniqueness.  A greedy selector chooses only from $M_\theta$, and Proposition~\ref{prop:potential-sound} proves invariance.
\end{proof}

This construction is intentionally representation-free.  The potential may be graph distance to the losing attractor, a ranking function, an abstraction margin, a validated conservative score validated by superlevel certificates, or a domain-specific clearance measure.  The theorem uses only certified choice-refined invariance of the superlevel sets, not the origin of the potential.

\subsection{Retained fragment: risk envelopes between universal and probabilistic successors}\label{sec:risk-dial}

The endpoint model evaluates each certified choice by the universal successor.  This is correct for universal nondeterminism but too rigid when successor frequencies or confidence information are available.  The admissible way to introduce probability is not to blur the semantics; it is to insert a visible risk envelope after the certified domain has already been computed.

\begin{definition}[Successor risk envelope]\label{def:risk-envelope}
Assume that each pair $(s,u)$ has a nominal distribution $q_{s,u}$ on $T(s,u)$.  A successor risk envelope is a family $(\mathfrak Q_\beta)_{\beta\in[0,1]}$ such that $\mathfrak Q_\beta(s,u)$ is a nonempty closed convex set of distributions on $T(s,u)$, $q_{s,u}\in\mathfrak Q_\beta(s,u)$, and
\[
\mathfrak Q_0(s,u)=\{q_{s,u}\},
\qquad
\mathfrak Q_1(s,u)=\Delta(T(s,u)),
\qquad
\mathfrak Q_\beta(s,u)\subseteq\mathfrak Q_{\beta'}(s,u)	ext{ when }\beta\leq\beta'.
\]
The risk-evaluated admissible Bellman operator is
\[
(\mathcal B_V^\beta J)(s)=\min_{u\in A_V(s)}\left(c(s,u)+\gamma\sup_{q\in\mathfrak Q_\beta(s,u)}\sum_{s'\in T(s,u)}q(s')J(s')\right).
\]
\end{definition}

\begin{theorem}[Risk-envelope contraction and ordering]\label{thm:risk-contraction}
For every $\beta\in[0,1]$, $\mathcal B_V^\beta$ is a $\gamma$-contraction on $\ell_\infty(V)$ and has a unique fixed point $J_\beta^\star$.  If $\beta\leq\beta'$, then $J_\beta^\star\leq J_{\beta'}^\star$ pointwise.  Moreover $\beta=1$ recovers the universal successor maximum, while $\beta=0$ recovers the nominal probabilistic Bellman operator on the same certified domain.
\end{theorem}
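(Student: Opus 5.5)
The argument has three parts: contraction for each fixed $\beta$, monotonicity in $\beta$, and identification of the two endpoints. The common device is a one-step backup that is $1$-Lipschitz. For fixed $(s,u)$ and $J$, set
\[
\Phi_\beta(J)(s,u)=\sup_{q\in\mathfrak Q_\beta(s,u)}\sum_{s'\in T(s,u)}q(s')J(s').
\]
The supremum is attained, since $\mathfrak Q_\beta(s,u)$ is nonempty, closed, and contained in the compact simplex, and the map is linear in $q$. Each $q$ is a probability vector, so
\[
\Bigl|\sum_{s'} q(s')J(s')-\sum_{s'} q(s')K(s')\Bigr|\le\|J-K\|_\infty .
\]
A supremum of $1$-Lipschitz functionals over a common index set is again $1$-Lipschitz; this is the supremum analogue of the maximum perturbation lemma. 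Hence $|\Phi_\beta(J)(s,u)-\Phi_\beta(K)(s,u)|\le\|J-K\|_\infty$.

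\emph{Contraction.} Multiply by $\gamma$ and add the $J$-independent cost $c(s,u)$. Then apply the minimum perturbation lemma over the finite nonempty set $A_V(s)$; nonemptiness comes from the certified-choice restriction lemma. This gives $\|\mathcal B_V^\beta J-\mathcal B_V^\beta K\|_\infty\le\gamma\|J-K\|_\infty$. Since $\ell_\infty(V)$ is complete, Banach's fixed-point theorem yields the unique fixed point $J_\beta^\star$, exactly as in Theorem~\ref{thm:core-value}.

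\emph{Ordering.} This is the step needing the most care, though it is still short. If $\beta\le\beta'$, then $\mathfrak Q_\beta(s,u)\subseteq\mathfrak Q_{\beta'}(s,u)$. A supremum over a larger set is no smaller, so $\Phi_\beta(J)\le\Phi_{\beta'}(J)$ pointwise for every $J$, and hence $\mathcal B_V^\beta J\le\mathcal B_V^{\beta'}J$ pointwise.

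Each $\mathcal B_V^{\beta'}$ is monotone, because every $q$ has nonnegative weights. Applying the comparison at $J=J_\beta^\star$ gives
\[
J_\beta^\star=\mathcal B_V^\beta J_\beta^\star\le\mathcal B_V^{\beta'}J_\beta^\star ,
\]
so $J_\beta^\star$ is a subsolution for $\mathcal B_V^{\beta'}$. The monotonicity lemma and the sub- and supersolution sandwich lemma were stated for $\mathcal B_V$, but their proofs use only monotonicity and contraction. I would re-run them verbatim for $\mathcal B_V^{\beta'}$, with its contraction established above. This gives $J_\beta^\star\le J_{\beta'}^\star$. The same pattern appears in the proof that value is monotone in successor enlargement.

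\emph{Endpoints.} For $\beta=0$ the envelope is the singleton $\{q_{s,u}\}$, so the supremum is the nominal expectation $\sum_{s'} q_{s,u}(s')J(s')$. The operator is then the nominal probabilistic Bellman operator restricted to the certified domain $A_V$.

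For $\beta=1$ the envelope is the full simplex $\Delta(T(s,u))$. I would show that $\sup_{q\in\Delta(T(s,u))}\sum_{s'} q(s')J(s')=\max_{s'\in T(s,u)}J(s')$ by two inequalities:
\begin{itemize}
\item every convex combination is at most the maximum of its entries;
\item the Dirac mass at a maximizing successor lies in the simplex and attains that maximum.
\end{itemize}
So $\mathcal B_V^1=\mathcal B_V$ and $J_1^\star=J^\star$.

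The only real care point is that the lemmas were stated for $\mathcal B_V$. I would note explicitly that they transfer to any monotone $\gamma$-contraction, so the ordering step is not circular.
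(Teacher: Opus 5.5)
Your proposal is correct and follows the paper's proof essentially step for step: $1$-Lipschitz linear backups, the supremum and minimum perturbation arguments for the $\gamma$-contraction, pointwise operator comparison plus the monotone sandwich for the ordering, and the Dirac-mass and singleton identifications at $\beta=1$ and $\beta=0$. Your explicit remark that the monotonicity and sandwich lemmas transfer to any monotone $\gamma$-contraction makes precise what the paper simply cites as the ``monotone contraction sandwich.''
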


\begin{proof}
For any fixed distribution $q$, the map $J\mapsto\sum q(s')J(s')$ is $1$-Lipschitz in the sup norm.  Taking the supremum over a nonempty family of such linear maps remains $1$-Lipschitz by the maximum perturbation argument, and taking the minimum over choices remains $1$-Lipschitz before multiplication by $\gamma$.  Hence $\mathcal B_V^\beta$ is a $\gamma$-contraction and has a unique fixed point.  If $\beta\leq\beta'$, then the supremum for $\beta$ is taken over a subset of the distributions used for $\beta'$, so $\mathcal B_V^\beta J\leq\mathcal B_V^{\beta'}J$ for all $J$.  The monotone contraction sandwich gives $J_\beta^\star\leq J_{\beta'}^\star$.  When $\beta=1$, the supremum over all distributions on a finite successor set is attained at a Dirac mass on a maximizer, so it equals $\max_{s'\in T(s,u)}J(s')$.  When $\beta=0$, the only distribution is $q_{s,u}$, giving expectation under the nominal model.
\end{proof}

The dial is not an additional logical operator.  It is a modeling parameter in the value layer.  Invariance remains choice-refined by $A_V$ or by one of the certified relaxed domain maps.  Moving the dial can change the value and the preferred certified choice, but it cannot authorize a choice that the logical layer excluded.

\subsection{Retained fragment: moving certified domains}\label{sec:moving-envelope}

A one-shot kernel assumes that the transition model is fixed.  In updated semantic construction, the model may be refined as observations arrive.  A moving boundary is admissible only if every newly admitted choice is certified against a conservative model before it is used.

\begin{definition}[Certified transition envelope]\label{def:certified-envelope}
Let $T^\dagger$ be the unknown true successor map.  A transition envelope at time $k$ is a map $\widehat T_k$ with $T^\dagger(s,u)\subseteq\widehat T_k(s,u)$ for all $s,u$.  A certified admissible envelope is a set $C_k\subseteq S$ such that $C_k\subseteq G$ and, for every admitted choice,
\[
M_k(s)=\{u\in A(s):\widehat T_k(s,u)\subseteq C_k\}
\]
is nonempty on every state in $C_k$.
\end{definition}

\begin{algorithm}[H]
\caption{Certified moving envelope}
\begin{algorithmic}[1]
\Require target $G$, conservative envelopes $\widehat T_k$, current state $s_k$, candidate $u_k^{prop}$
\Ensure executed choice $u_k$ with no transient invariance violation
\State compute or update an inner certificate $C_k\subseteq\Viab_{\widehat T_k}(G)$ containing $s_k$
\State compute $M_k(s_k)=\{u:\widehat T_k(s_k,u)\subseteq C_k\}$
\If{$u_k^{prop}\in M_k(s_k)$}
\State $u_k\gets u_k^{prop}$
\Else
\State choose any $u_k\in M_k(s_k)$, for example a Bellman-minimizing fallback
\EndIf
\State execute $u_k$ and record the next observation for the construction of $\widehat T_{k+1}$
\end{algorithmic}
\end{algorithm}

\begin{theorem}[Invariance under arbitrary certified boundary motion]\label{thm:moving-envelope}
Suppose that for every time $k$ the true successor map is contained in the current envelope, $T^\dagger\subseteq\widehat T_k$, and the selector executes only choices in $M_k(s_k)$ for a certified set $C_k\subseteq G$.  If $s_0\in C_0$ and every update supplies a certificate containing the realized next state, then every realized state belongs to $G$.  If in addition the envelopes are monotonically refined, $\widehat T_{k+1}(s,u)\subseteq\widehat T_k(s,u)$, then the exact kernels are monotone expanding: $\Viab_{\widehat T_k}(G)\subseteq\Viab_{\widehat T_{k+1}}(G)$.
\end{theorem}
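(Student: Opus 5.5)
The statement has two parts, and I will prove them separately. The first part is invariance under arbitrary motion of the certified envelopes. The second part is monotone expansion of the exact kernels under refinement of the envelopes.

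For the invariance claim I will argue by induction on time. The inductive hypothesis is that the realized state satisfies $s_k\in C_k$. The base case is the assumption $s_0\in C_0$.

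For the step, the selector executes some $u_k\in M_k(s_k)$. By the definition of the certified domain map this gives $\widehat T_k(s_k,u_k)\subseteq C_k$. The realized successor satisfies $s_{k+1}\in T^\dagger(s_k,u_k)$. Combining this with the envelope containment $T^\dagger\subseteq\widehat T_k$ places $s_{k+1}$ in $C_k\subseteq G$.

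The one point needing care is the passage from $C_k$ to $C_{k+1}$, since the certificate is allowed to move arbitrarily between steps. The hypothesis that every update supplies a certificate containing the realized next state is exactly what re-establishes $s_{k+1}\in C_{k+1}$. This closes the induction.

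The induction then shows $s_k\in C_k\subseteq G$ for every $k$, so every realized state lies in $G$. The argument never uses any relation between $C_k$ and $C_{k+1}$ beyond this containment hypothesis. That is why the motion of the boundary can be arbitrary.

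For the monotonicity claim I will reuse the theorem that the kernel shrinks under universal strengthening. Apply it with $T=\widehat T_{k+1}$ and $T'=\widehat T_k$. Its hypothesis $T(s,u)\subseteq T'(s,u)$ is exactly the refinement assumption $\widehat T_{k+1}(s,u)\subseteq\widehat T_k(s,u)$. The theorem then yields $\Viab_{\widehat T_k}(G)\subseteq\Viab_{\widehat T_{k+1}}(G)$.

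If I wanted the monotonicity argument self-contained, I would repeat the post-fixed-point argument from that theorem directly. Refinement gives $\Pre^{\widehat T_k}_{\exists\Box}(X)\subseteq\Pre^{\widehat T_{k+1}}_{\exists\Box}(X)$ for every $X$. Hence $\Viab_{\widehat T_k}(G)$ is a post-fixed point of $F^{\widehat T_{k+1}}_G$, and Knaster--Tarski maximality gives the inclusion.

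I do not expect a real obstacle in either part. The only thing to get right is the direction of the inclusions: a smaller successor envelope makes the admissible predecessor larger, so the kernel grows. The one modest subtlety is that the invariance conclusion must be read as holding along the realized play, not uniformly over all plays compatible with later envelopes.
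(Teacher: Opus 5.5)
Your proposal is correct and follows the paper's proof essentially step for step. Both arguments use a time induction built on $\widehat T_k(s_k,u_k)\subseteq C_k$, the envelope containment $T^\dagger\subseteq\widehat T_k$, and the update hypothesis to continue from $C_k$ to $C_{k+1}$, and both then apply the kernel-shrinking theorem with $T=\widehat T_{k+1}$ and $T'=\widehat T_k$, which is what the paper's phrase ``applied in the opposite direction'' abbreviates.
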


\begin{proof}
At time $k$, the executed choice satisfies $\widehat T_k(s_k,u_k)\subseteq C_k$.  Since the true successor set is contained in the envelope, the realized next state lies in $C_k\subseteq G$.  The additional certificate condition for the next step ensures that the induction can continue.  Thus all realized states remain in $G$.  For the monotonicity statement, $\widehat T_{k+1}\subseteq\widehat T_k$ makes the resolver weaker.  The kernel-shrinking theorem applied in the opposite direction gives $\Viab_{\widehat T_k}(G)\subseteq\Viab_{\widehat T_{k+1}}(G)$.
\end{proof}

The theorem is stronger than recomputing the kernel after each update.  Recomputing is harmless only if the selector refuses to use a choice until the current envelope certifies it.  The boundary may breathe; the admissibility test may not.

\subsection{Retained fragment: approximate candidate maps under an exact certified domain}\label{sec:verified-shell}

Approximation is dangerous in the invariance layer but useful in the value layer.  The clean separation is to let an arbitrary candidate map rank choices while a proof-carrying domain map projects the candidate back into the certified set.

\begin{definition}[Certified-domain projection]\label{def:filter-projection}
Let $M(s)$ be any nonempty certified domain map on a region $K$.  Let $\mu:S\to U$ be an arbitrary candidate selector, and let $r(s,u)$ be any tie-breaking score on $M(s)$.  The certified-domain projection of $\mu$ is
\[
\Pi_M\mu(s)=
\begin{cases}
\mu(s), & \mu(s)\in M(s),\\
\argmin_{u\in M(s)} r(s,u), & \mu(s)\notin M(s).
\end{cases}
\]
\end{definition}

\begin{theorem}[Candidate-independent invariance and projection gap]\label{thm:filter-projection}
If $M$ is sound for $G$ on $K$, then $\Pi_M\mu$ is sound for $G$ on $K$ for every candidate selector $\mu$.  If $Q^\star$ is the optimal admissible state-choice value and the fallback minimizes $Q^\star$ on $M(s)$ whenever projection is needed, then the one-step loss at $s$ is zero after projection.  For an arbitrary fallback $\bar u(s)\in M(s)$, the local loss is bounded by
\[
Q^\star(s,\bar u(s))-\min_{u\in M(s)}Q^\star(s,u).
\]
\end{theorem}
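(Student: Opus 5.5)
The soundness claim reduces to the single-step invariance induction already used for Proposition~\ref{prop:potential-sound} and Theorem~\ref{thm:moving-envelope}. By construction, $\Pi_M\mu(s)\in M(s)$ at every $s\in K$. In the first case $\mu(s)\in M(s)$ by hypothesis. In the second case the argmin over the finite nonempty set $M(s)$ exists by the finite selector lemma, with ties broken by any fixed rule. So the projected selector is a selector valued in $M$. Soundness of $M$ for $G$ on $K$, in the sense of Definition~\ref{def:layer-extension}, means exactly that every selector choosing from $M$ keeps plays in $G$. If soundness is read concretely as ``every $u\in M(s)$ has $T(s,u)\subseteq K\subseteq G$'', then induction on time gives $s_t\in K$ for all $t$, hence $s_t\in G$. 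Nothing about $\mu$ enters, which gives the candidate-independence.

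For the loss statements, I define the one-step loss at $s$ as $Q^\star(s,\Pi_M\mu(s))-\min_{u\in M(s)}Q^\star(s,u)$, where $Q^\star(s,u)=c(s,u)+\gamma\max_{s'\in T(s,u)}J^\star(s')$ is the admissible state-choice value. When the fallback is a $Q^\star$-minimizer on $M(s)$, the loss at a projected state is zero by definition of argmin. This is exactly what the statement asserts: ``zero after projection'' refers to states where projection is needed. At unprojected states the loss is the candidate's own gap, and I will say explicitly that the theorem does not claim anything there.

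For an arbitrary fallback $\bar u(s)\in M(s)$, the loss at a projected state is by definition the displayed difference. It is nonnegative because $\bar u(s)\in M(s)$, so there is no real content beyond unfolding the definitions.

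I expect the main obstacle to be interpretive rather than technical. The statement leaves three things implicit: the precise meaning of ``$M$ sound for $G$ on $K$'', which $Q^\star$ is meant when $M\ne A_V$, and whether ``loss'' is measured against $\min_{M(s)}$ or $\min_{A_V(s)}$. I would fix soundness as $T(s,u)\subseteq K\subseteq G$ for $u\in M(s)$, i.e.\ $K$ is $M$-closed. I would take $Q^\star$ to be computed from the Bellman fixed point on the relevant certified domain. I would measure loss relative to $\min_{M(s)}$, matching the displayed bound. I would also note that when $M=A_V$ and $K=V$, Proposition~\ref{prop:filter-maximality} identifies this with the loss against the largest certified domain.
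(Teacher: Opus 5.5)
Your proposal is correct and follows the paper's argument. Since $\Pi_M\mu(s)\in M(s)$ on $K$, soundness is inherited directly from $M$ independently of $\mu$, and both loss claims simply unfold the definition of the $Q^\star$-minimizing fallback and of the excess $Q^\star(s,\bar u(s))-\min_{u\in M(s)}Q^\star(s,u)$; your extra details (existence of the argmin via the finite selector lemma, and fixing what ``sound'' and ``loss'' mean) only make explicit what the paper leaves implicit.
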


\begin{proof}
The projected selector always chooses an element of $M(s)$ on $K$, so soundness follows directly from the soundness of the certified domain map.  If the fallback minimizes $Q^\star$, then the selected choice has the same admissible optimal one-step value as the best certified choice.  For an arbitrary fallback, the displayed expression is exactly the excess of the fallback's admissible state-choice value over the admissible optimum at $s$.
\end{proof}

This is the formal role of approximation in the construction.  The candidate map may be fast, generous, approximate, and data-driven.  The certified domain map is small, symbolic, and proof-carrying.  Approximation can improve the selected value only inside the choices already established by the certified domain map.

\subsection{Retained fragment: composing local cell constraints}\label{sec:local-constraints}

The most expensive object in the endpoint is the global kernel.  A scalable relaxation is to replace one global domain map by several local constraints whose certificates compose.  The following theorem is only a sufficient condition; it does not solve the general distributed case.  It gives a disciplined case in which no central authority needs to compute the whole game.

\begin{definition}[Local cell-constraint cover]\label{def:local-cover}
Let the global state set embed into a product $S\subseteq S_1\times\cdots\times S_r$, with projections $\pi_i:S\to S_i$.  For each component $i$, let $G_i\subseteq S_i$, let $V_i\subseteq G_i$, and let $M_i(x_i)\subseteq U_i$ be a local certified domain map satisfying
\[
T_i(x_i,u_i)\subseteq V_i
\quad\text{for all }x_i\in V_i,\ u_i\in M_i(x_i).
\]
A global transition relation $T$ is covered by the local constraints if every global choice $u=(u_1,\ldots,u_r)$ with $u_i\in M_i(\pi_i(s))$ satisfies
\[
\pi_i(s')\in T_i(\pi_i(s),u_i)
\quad\text{for every }s'\in T(s,u)\text{ and every }i.
\]
\end{definition}

\begin{theorem}[Local-to-global constraint soundness]\label{thm:local-global}
Under a local cell-constraint cover, the global set
\[
V_{loc}=S\cap\bigcap_{i=1}^r\pi_i^{-1}(V_i)
\]
is choice-refined invariant for the global target
\[
G_{loc}=S\cap\bigcap_{i=1}^r\pi_i^{-1}(G_i).
\]
Consequently any selector whose component choices satisfy $u_i\in M_i(\pi_i(s))$ keeps all plays inside $G_{loc}$ from every initial state in $V_{loc}$.
\end{theorem}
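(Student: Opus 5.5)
The plan is to show that $V_{loc}$ is a subset of $G_{loc}$ closed under the choices allowed by the local domain maps, and then reuse the time-induction argument of Proposition~\ref{prop:potential-sound}.

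First, the inclusion $V_{loc}\subseteq G_{loc}$ is immediate. For each $i$ we have $V_i\subseteq G_i$, hence $\pi_i^{-1}(V_i)\subseteq\pi_i^{-1}(G_i)$, and intersecting over $i$ and with $S$ gives the claim.

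Second, the one-step closure. Fix $s\in V_{loc}$ and a global choice $u=(u_1,\ldots,u_r)$ with $u_i\in M_i(\pi_i(s))$ for every $i$. Take any $s'\in T(s,u)$. By the cover condition of Definition~\ref{def:local-cover}, $\pi_i(s')\in T_i(\pi_i(s),u_i)$. Since $\pi_i(s)\in V_i$ and $u_i\in M_i(\pi_i(s))$, the local certificate gives $T_i(\pi_i(s),u_i)\subseteq V_i$, so $\pi_i(s')\in V_i$ for all $i$. Also $s'\in S$, because $T$ maps into $S$. Hence $s'\in V_{loc}$, that is, $T(s,u)\subseteq V_{loc}$.

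This step also delivers choice-refined invariance in the sense of the earlier sections, namely that some admissible $u\in A(s)$ exists for each $s\in V_{loc}$. That needs two facts: each $M_i(\pi_i(s))$ is nonempty, and the assembled tuple $u$ is available in $A(s)$. I expect this to be the main obstacle, since Definition~\ref{def:local-cover} does not state either fact explicitly. My first approach is to read the cover condition as presupposing that such component tuples are global choices at $s$, and to treat nonemptiness of the local domain maps as part of what ``local certified domain map'' means, matching the convention of Definition~\ref{def:nested-potential}. If that reading is not intended, I will state the closure claim for the domain map $M_{loc}(s)=\{u\in A(s):u_i\in M_i(\pi_i(s))\ \forall i\}$ and add nonemptiness of $M_{loc}$ as an explicit hypothesis.

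Finally, the selector consequence follows by induction on time. Start from $s_0\in V_{loc}$. If $s_t\in V_{loc}$ and the selector picks component choices in $M_i(\pi_i(s_t))$, then the closure step puts every successor $s_{t+1}$ in $V_{loc}$. So every state along every play lies in $V_{loc}\subseteq G_{loc}$. By the post-fixed-point argument of Theorem~\ref{thm:canonical-completion}, this also yields $V_{loc}\subseteq\Viab(G_{loc})$.
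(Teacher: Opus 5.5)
Your proposal is correct and follows the paper's proof essentially verbatim. You fix $s\in V_{loc}$ and a componentwise constrained choice, use the cover condition to place each $\pi_i(s')$ in $T_i(\pi_i(s),u_i)\subseteq V_i$, and conclude via $V_i\subseteq G_i$ and induction on time. The existence issue you flag is genuine: the paper simply says ``choose a componentwise constrained choice'' and tacitly assumes nonempty local domain maps whose assembled tuples lie in $A(s)$. So making this an explicit hypothesis, as with your $M_{loc}$, is the right repair, and it is also exactly what your closing inclusion $V_{loc}\subseteq\Viab(G_{loc})$ needs.
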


\begin{proof}
Let $s\in V_{loc}$ and choose a componentwise constrained choice.  For every successor $s'\in T(s,u)$ and every component $i$, the cover condition gives $\pi_i(s')\in T_i(\pi_i(s),u_i)$.  Since $\pi_i(s)\in V_i$ and $u_i\in M_i(\pi_i(s))$, local soundness gives $T_i(\pi_i(s),u_i)\subseteq V_i$.  Hence $\pi_i(s')\in V_i$ for all $i$, so $s'\in V_{loc}$.  Since every $V_i\subseteq G_i$, the global state stays in $G_{loc}$.
\end{proof}

\subsection{Retained fragment: target revision under an outer consistency operator}\label{sec:target-revision}

The outer relaxation allows target revision, while a higher layer checks that the proposed target is admissible before it can govern choice.  This is not the same problem as optimizing under a fixed target.  It requires an outer finite domain map over candidate formulas.

\begin{definition}[Finite target-revision interface]\label{def:target-revision}
Let $\Phi=\{\varphi_1,\ldots,\varphi_N\}$ be a finite set of candidate target formulas, with regions $G_i=\Sat(\varphi_i)$.  Let $\mathsf{Cons}\subseteq\Phi$ be an outer consistency predicate and let $\Rev(i,h)\subseteq\Phi$ be the set of targets that the current target $\varphi_i$ is allowed to revise to after finite history $h$.  A revision is certified at history $h$ if the proposed $\varphi_j$ satisfies $\varphi_j\in\mathsf{Cons}$, $\varphi_j\in\Rev(i,h)$, and the current information state belongs to the corresponding certified kernel.
\end{definition}

\begin{theorem}[Outer-operator soundness for finite target revision]\label{thm:target-revision}
Suppose every executed revision is certified in the sense of Definition~\ref{def:target-revision}, and after revision the selector selects choices only from the certified domain associated with the new target's certified kernel.  Then every play segment between two consecutive revisions satisfies the target active on that segment.  If the outer predicate also requires all candidate targets to imply a persistent base invariance formula $\psi_0$, then the entire play satisfies $\psi_0$ at every time.
\end{theorem}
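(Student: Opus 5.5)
The plan is to cut the play into maximal segments between consecutive executed revisions and argue segment by segment, reusing the invariance results already proved for a fixed target. Index the revision times $0=t_0<t_1<t_2<\cdots$ (finitely or infinitely many), and let $\varphi_{j_k}$ be the target active on $[t_k,t_{k+1})$, with region $G_{j_k}=\Sat(\varphi_{j_k})$ and certified kernel $V_{j_k}=\Viab(G_{j_k})$. In the partial-observation case the kernel is $\Viab_{\obs}(G_{j_k})$ and the domain is $A_W$.

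The first step fixes one segment. Certification of the revision at $t_k$, in the sense of Definition~\ref{def:target-revision}, gives that the current (information) state lies in $V_{j_k}$. On the segment the selector uses only choices from the certified domain $A_{V_{j_k}}$, or $A_{W_{j_k}}$. By the one-step admissible closure lemma, and induction on time exactly as in the admissible continuation semantic construction theorem, every state visited up to but excluding $t_{k+1}$ stays in $V_{j_k}\subseteq G_{j_k}$. In the observational case I would instead invoke Theorem~\ref{thm:obs-invariance} together with information-state soundness, Lemma~\ref{lem:information statesound}, to transfer membership from information states to the true state. This proves the first claim.

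I would also check the revision instant itself. At time $t_{k+1}$ the state was reached by a certified choice of segment $k$, so it lies in $V_{j_k}$. The new revision is certified only if the state also lies in $V_{j_{k+1}}$, which Definition~\ref{def:target-revision} requires. So the boundary point satisfies both adjacent targets, and the segments glue without a gap.

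For the second claim, I assume the outer predicate ensures $\varphi_j\in\mathsf{Cons}$ implies $\Sat(\varphi_j)\subseteq\Sat(\psi_0)$. Every time $t$ belongs to some segment, or is a revision point, so $s_t\in G_{j_k}\subseteq\Sat(\psi_0)$. Hence $\psi_0$ holds at every time. If there are only finitely many revisions, the last segment is infinite, and the same fixed-target argument covers it.

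The main obstacle I expect is the interface at revision times. One has to make sure that certification of the new target is checked against the actual current state (or information state), and that the history-dependent $\Rev(i,h)$ introduces no gap. I would handle this by stating the induction invariant explicitly: at every time $t$, the current (information) state lies in the certified kernel of the target active at $t$. Both segment propagation and revision steps preserve this invariant, by the closure lemma and the certification clause respectively.
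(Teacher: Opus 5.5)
Your proposal is correct and follows essentially the same route as the paper: certification at each revision places the current (information) state in the new target's certified kernel, the fixed-target justification theorem keeps each segment inside the active target region, and the fact that every candidate target implies $\psi_0$ carries $\psi_0$ across revision boundaries. Your explicit invariant, the gluing check at revision instants, and the transfer from information states to true states via information-state soundness spell out what the paper compresses into one line per step. Note that both you and the paper tacitly take the initial target to be certified at time $0$, and the second claim needs this.
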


\begin{proof}
At each revision time, certification places the current information state inside the certified kernel of the newly active target.  Between revision times, the selector chooses only choices from that target's certified domain, so the corresponding justification theorem keeps the segment inside the active target region.  If every active target implies the base invariance formula $\psi_0$, then membership in the active target region at every time implies satisfaction of $\psi_0$ at every time, including across revision boundaries.
\end{proof}

The statement is deliberately limited: target revision is admissible only when each proposed target is certified before it constrains subsequent choice.  Without that outer domain map, target generation would commute preference above truth and break the layer order.

\subsection{Retained note: semantic pressure and value tension}

The phrase semantic pressure denotes the loss of future admissibility caused by a present choice.  It is not a psychological notion and it is not a count.  Measuring it by the raw number of in-kernel successors conflates ``many successors'' with ``choice-forcing future'' and depends on the arbitrary normalizer $|V|$; the principled quantity is how deep inside the justifiably maintainable region a choice keeps the play, measured as distance to the boundary of the kernel.

\begin{definition}[Continuation image]
For $s\in V$ and $u\in A_V(s)$, the continuation image is $R_V(s,u)=T(s,u)$.  Admissibility of $u$ in $A_V(s)$ gives $R_V(s,u)\subseteq V$.
\end{definition}

\begin{definition}[Flattened reachability and the kernel boundary]
Write $s\to s'$ if $s'\in T(s,u)$ for some $u\in A(s)$.  The \emph{inner boundary} of $V$ is
\[
\partial V=\{\,s\in V:\exists\,s'\notin V\ \text{with}\ s\to s'\,\}.
\]
The \emph{escape distance} $\dist_V:S\to\mathbb N\cup\{\infty\}$ is the least fixed point of
\[
\dist_V(s)=0\ \ (s\notin V),\qquad
\dist_V(s)=1+\min_{s':\,s\to s'}\dist_V(s')\ \ (s\in V),
\]
with $\dist_V(s)=\infty$ when no flattened path from $s$ reaches $S\setminus V$.  Thus $\dist_V$ is the breadth-first layering of $V$ inward from its complement: $\dist_V(s)=1$ on $\partial V$ and increases with structural depth.
\end{definition}

\begin{definition}[Universal boundary margin]
For $s\in V$ and $u\in A_V(s)$, the \emph{universal boundary margin} of the choice is
\[
\marg_V(s,u)=\min_{s'\in T(s,u)}\dist_V(s').
\]
\end{definition}

Because $u\in A_V(s)$ forces $T(s,u)\subseteq V$, every successor has $\dist_V(s')\geq1$, so $\marg_V(s,u)\geq1$.  The minimum over successors is the worst case the resolver can impose, so $\marg_V$ is a universal margin: a choice with large margin keeps the play deep inside the kernel even against the least favourable successor, which is precisely the future-choice maintainability that the cardinality proxy only gestured at.

\begin{definition}[Semantic pressure as a margin penalty]\label{def:pressure}
Let $g:\mathbb N\cup\{\infty\}\to[0,1]$ be any nonincreasing function with $\inf_d g(d)=0$.  The \emph{semantic pressure} of an admissible choice is
\[
P_V(s,u)=g\bigl(\marg_V(s,u)\bigr).
\]
\end{definition}

\begin{remark}[The map from margin to penalty is a modeling parameter, not a constant]
Definition~\ref{def:pressure} fixes the \emph{structure} of pressure --- a nonincreasing function of the universal margin --- without fixing the function $g$.  The development below uses only two properties: that $P_V$ is valued in $[0,1]$, and, for the ordinal selector, that it is monotone in the margin.  Every $g$ meeting these is admissible, and no exchange rate, threshold, or normalizer is mandated by the construction; the choice of $g$ is left to the model rather than hardcoded here.  This is the deliberate replacement for $1-|R_V(s,u)|/|V|$, whose dependence on $|V|$ and on a count was an arbitrary commitment of exactly the kind the rest of the construction avoids.
\end{remark}

\begin{definition}[Penalty-regularized Bellman operator]
For a bounded penalty $m:\{(s,u):s\in V,\ u\in A_V(s)\}\to[0,1]$ and a weight $\eta\geq0$, define
\[
(\mathcal B_V^{\eta,m}J)(s)=\min_{u\in A_V(s)}\left[c(s,u)+\eta\,m(s,u)+\gamma\max_{s'\in T(s,u)}J(s')\right].
\]
The \emph{semantic-pressure operator} is the instance $\mathcal B_V^{\eta}=\mathcal B_V^{\eta,P_V}$.
\end{definition}

\begin{theorem}[Penalty regularization preserves contraction]
For every $\eta\geq0$ and every bounded $m$, the operator $\mathcal B_V^{\eta,m}$ has a unique fixed point and value iteration converges uniformly to it.
\end{theorem}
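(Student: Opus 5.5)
The plan is to repeat the contraction argument of Theorem~\ref{thm:core-value} (equivalently the Bellman elementary contraction lemma), observing that the penalty term $\eta\,m(s,u)$ does not depend on the continuation $J$. For fixed $s\in V$ and $u\in A_V(s)$, I will write
\[
Q^{\eta,m}_J(s,u)=c(s,u)+\eta\,m(s,u)+\gamma\max_{s'\in T(s,u)}J(s').
\]
The first two summands form a constant offset in $J$, bounded by $\bar c+\eta$ because $m$ takes values in $[0,1]$. The maximum perturbation lemma then gives
\[
|Q^{\eta,m}_J(s,u)-Q^{\eta,m}_K(s,u)|\le\gamma\|J-K\|_\infty .
\]

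Next I will use that $A_V(s)$ is finite and nonempty, by the certified-choice restriction lemma. The minimum perturbation lemma then transfers the same bound to $(\mathcal B_V^{\eta,m}J)(s)-(\mathcal B_V^{\eta,m}K)(s)$. Taking the maximum over the finitely many $s\in V$ shows that $\mathcal B_V^{\eta,m}$ is a $\gamma$-contraction on $\ell_\infty(V)$. I also need to check that the operator maps $\ell_\infty(V)$ into itself. This is immediate: each backup is a finite minimum of finite quantities, and for bounded $J$ it is bounded in absolute value by $\bar c+\eta+\gamma\|J\|_\infty$.

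Finally, $\ell_\infty(V)$ is complete since $V$ is finite. Banach's fixed-point theorem therefore gives a unique fixed point, and value iteration converges to it geometrically in the sup norm, hence uniformly. For the semantic-pressure instance $m=P_V$, the only hypothesis to check is that $P_V$ takes values in $[0,1]$, and this holds because $g$ maps into $[0,1]$.

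I do not expect a real obstacle. The one point that needs care is that the penalty stays bounded and independent of $J$. If $m$ were allowed to depend on $J$, or if $\eta$ were unbounded, the contraction constant could degrade. Under the stated hypotheses, however, the argument is exactly the endpoint one with a shifted cost $c+\eta m$.
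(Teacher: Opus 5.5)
Your proposal is correct and follows essentially the same route as the paper: the $J$-independent terms $c(s,u)+\eta\,m(s,u)$ cancel, the maximum and minimum perturbation lemmas over the finite nonempty $A_V(s)$ give the $\gamma$ bound, and Banach's theorem on the complete space $\ell_\infty(V)$ finishes. One small correction to your closing aside: the size of $\eta$ never affects the contraction constant, because any $J$-independent offset cancels exactly; only a penalty that depended on $J$ could degrade it.
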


\begin{proof}
Let $J,K\in\ell_\infty(V)$ and fix $s\in V$.  For each $u\in A_V(s)$ the terms $c(s,u)$ and $\eta\,m(s,u)$ are identical in the two evaluations and cancel.  The maximum perturbation lemma gives
\[
\left|\gamma\max_{s'\in T(s,u)}J(s')-\gamma\max_{s'\in T(s,u)}K(s')\right|\leq \gamma\|J-K\|_\infty,
\]
and the minimum perturbation lemma over the finite nonempty set $A_V(s)$ transfers the bound to $|(\mathcal B_V^{\eta,m}J)(s)-(\mathcal B_V^{\eta,m}K)(s)|$.  Supremizing over $s$ yields a $\gamma$-contraction; Banach's theorem gives the unique fixed point and uniform convergence.
\end{proof}

\begin{remark}
Penalty regularization does not replace admissible continuation.  It ranks certified choices after the uncertified choices have already been removed, and no finite penalty can make a non-admissible choice logically admissible.  The order between the qualitative constraint and the quantitative penalty is what keeps the invariance guarantee exact.
\end{remark}

\begin{corollary}[Vanishing regularization]
Let $J^\star_{\eta,m}$ be the fixed point of $\mathcal B_V^{\eta,m}$ and $J^\star=J^\star_{0,m}$.  Since $m(s,u)\in[0,1]$,
\[
\|J^\star_{\eta,m}-J^\star\|_\infty\leq\frac{\eta}{1-\gamma},
\]
so $J^\star_{\eta,m}\to J^\star$ uniformly as $\eta\to0^+$; in particular this holds for the semantic-pressure penalty $m=P_V$.
\end{corollary}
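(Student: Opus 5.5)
The bound follows from a sandwich comparison between the two fixed points. Write $\mathcal B_V^{\eta,m}$ for the regularized operator and $\mathcal B_V=\mathcal B_V^{0,m}$ for the unregularized Bellman--Kripke operator on the same certified domain $A_V$. Both are $\gamma$-contractions on $\ell_\infty(V)$ by the penalty-regularization contraction theorem. They differ only through the additive term $\eta\,m(s,u)$ inside the minimum.

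The first step is a pointwise operator comparison. Fix $J\in\ell_\infty(V)$ and $s\in V$. For each $u\in A_V(s)$, the two bracketed $Q$-expressions differ by exactly $\eta\,m(s,u)\in[0,\eta]$. Applying the minimum perturbation lemma over the finite nonempty set $A_V(s)$ gives
\[
0\le (\mathcal B_V^{\eta,m}J)(s)-(\mathcal B_VJ)(s)\le \eta .
\]
The lower bound holds because each term only increases, so the minimum does too. This mirrors the first estimate in the proof of Lipschitz dependence on cost, with $\hat c=c+\eta m$ and $\|c-\hat c\|_\infty\le\eta$.

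The second step is the usual fixed-point triangle argument. Let $J^\star_{\eta,m}=\mathcal B_V^{\eta,m}J^\star_{\eta,m}$ and $J^\star=\mathcal B_VJ^\star$. Then
\[
\|J^\star_{\eta,m}-J^\star\|_\infty\le\|\mathcal B_V^{\eta,m}J^\star_{\eta,m}-\mathcal B_V^{\eta,m}J^\star\|_\infty+\|\mathcal B_V^{\eta,m}J^\star-\mathcal B_VJ^\star\|_\infty\le\gamma\|J^\star_{\eta,m}-J^\star\|_\infty+\eta .
\]
Rearranging gives the stated bound $\eta/(1-\gamma)$. Equivalently, one can invoke the cost-Lipschitz theorem directly, treating $c+\eta m$ as a perturbed cost on the common admissible region $V$. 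The region $V$ does not depend on $\eta$, since $\Pre_{\exists\Box}$ involves neither costs nor penalties.

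Uniform convergence as $\eta\to0^+$ is then immediate from the bound. The semantic-pressure case $m=P_V$ only requires checking that $P_V$ is valued in $[0,1]$, which holds because $g$ maps into $[0,1]$. If desired, the sub-/supersolution sandwich also gives the one-sided refinement $J^\star\le J^\star_{\eta,m}$, since $J^\star\le\mathcal B_V^{\eta,m}J^\star$ makes $J^\star$ a subsolution.

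There is no real obstacle here. The only point to watch is that both operators minimize over the identical set $A_V(s)$, so the minimum perturbation lemma applies without any matching of choices.
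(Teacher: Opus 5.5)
Your proof is correct and takes the paper's route: the paper identifies $\mathcal B_V^{\eta,m}$ as the admissible Bellman operator for the perturbed cost $c+\eta m$ on the same region $V$ and invokes the Lipschitz dependence of the fixed point on cost, which is exactly the operator-comparison-plus-contraction argument you write out inline. Your additional one-sided bound $J^\star\le J^\star_{\eta,m}$ is also valid, since the sandwich argument carries over verbatim to the monotone contraction $\mathcal B_V^{\eta,m}$.
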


\begin{proof}
$\mathcal B_V^{\eta,m}$ is the admissible Bellman operator for the cost $c_\eta(s,u)=c(s,u)+\eta\,m(s,u)$ on the same region $V$.  Then $\|c_\eta-c\|_\infty=\eta\|m\|_\infty\leq\eta$, and the Lipschitz dependence of the fixed point on the cost gives the bound.
\end{proof}

A scalarized penalty trades value against margin at the fixed weight $\eta$.  When the intent is only to break ties among choices that are already value-optimal, the margin can be used ordinally, paying no value at all and committing to no $g$.

\begin{proposition}[Secondary value refinement over the Bellman argmin set]\label{prop:secondary}
Let $Q^\star(s,u)=c(s,u)+\gamma\max_{s'\in T(s,u)}J^\star(s')$ and let
\[
O(s)=\Big\{u\in A_V(s):Q^\star(s,u)=\min_{v\in A_V(s)}Q^\star(s,v)\Big\}
\]
be the value-optimal certified choices at $s$.  Any selector with
\[
\pi_{\mathrm{lex}}(s)\in\argmax_{u\in O(s)}\marg_V(s,u)
\]
is admissible, attains the optimal value $J^{\pi_{\mathrm{lex}}}=J^\star$, and among all value-optimal certified selectors it maximizes the universal boundary margin at every state.  The selection is ordinal: it uses only the order of $\marg_V$ on $O(s)$, hence requires neither the function $g$ nor any numerical constant.
\end{proposition}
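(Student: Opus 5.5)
The proof splits into three claims, and each one reduces to results already established: admissibility, value optimality, and lexicographic maximality of the margin.

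First, well-definedness. For each $s\in V$ the set $A_V(s)$ is finite and nonempty by the certified-choice restriction lemma. Hence $O(s)$ is nonempty by the finite selector lemma applied to $Q^\star(s,\cdot)$. Each value $\marg_V(s,u)$ lies in the totally ordered set $\mathbb N\cup\{\infty\}$, so $\argmax_{u\in O(s)}\marg_V(s,u)$ is nonempty as well, and $\pi_{\mathrm{lex}}$ exists. This step uses only comparisons of $\marg_V$, never its numerical size, which is the source of the ordinality claim: no $g$ and no constant enters the definition.

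Second, admissibility and optimality. Since $\pi_{\mathrm{lex}}(s)\in O(s)\subseteq A_V(s)$, the one-step admissible closure lemma together with induction over time keeps every play in $V\subseteq G$. Membership in $O(s)$ means that $\pi_{\mathrm{lex}}$ is a Bellman-minimizing (optimal certified) selector. By the greedy domination lemma, $\mathcal B^{\pi_{\mathrm{lex}}}J^\star=\mathcal B_VJ^\star=J^\star$. Uniqueness of the fixed point of the contraction $\mathcal B^{\pi_{\mathrm{lex}}}$ (the selector-evaluation lemma) then gives $J^{\pi_{\mathrm{lex}}}=J^\star$. This is exactly the argument used in the admissible-optimality corollary.

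Third, the maximization claim, which I expect to be the main obstacle, because the phrase "among all value-optimal certified selectors" has to be pinned down. The step I would establish is that a certified selector $\pi$ satisfies $J^\pi=J^\star$ if and only if $\pi(s)\in O(s)$ for every $s\in V$. The "if" direction is the previous paragraph. For "only if", suppose $J^\pi=J^\star$. Then
\[
J^\star=J^\pi=\mathcal B^\pi J^\pi=\mathcal B^\pi J^\star .
\]
So $Q^\star(s,\pi(s))=J^\star(s)=\min_{v\in A_V(s)}Q^\star(s,v)$, which by the equality clause of greedy domination says $\pi(s)\in O(s)$. Once this equivalence holds, the value-optimal certified selectors are exactly the selections from $O$, and the claim reduces to a pointwise comparison. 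For any such $\pi$ and any $s$, we have $\pi(s)\in O(s)$, and therefore
\[
\marg_V(s,\pi_{\mathrm{lex}}(s))=\max_{u\in O(s)}\marg_V(s,u)\ge\marg_V(s,\pi(s)).
\]
This holds at every state simultaneously, because the margin of a choice depends only on $(s,u)$ through $T(s,u)$ and the fixed function $\dist_V$, not on the selector. There is therefore no interaction across states, and the statewise maxima combine into a single selector.
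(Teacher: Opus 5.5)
Your proposal is correct and follows the paper's argument: admissibility from $O(s)\subseteq A_V(s)$, and optimality from $\mathcal B^{\pi_{\mathrm{lex}}}J^\star=\mathcal B_VJ^\star=J^\star$ together with uniqueness of the selector-evaluation fixed point. Statewise margin maximality then holds because every value-optimal certified selector must choose from $O(s)$; your derivation of that last fact via $J^\star=J^\pi=\mathcal B^\pi J^\star$ just spells out what the paper asserts in a single clause.
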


\begin{proof}
Each $O(s)\subseteq A_V(s)$, so every selected choice keeps all successors in $V$ and the selector is admissible.  Because $\pi_{\mathrm{lex}}(s)\in O(s)$ attains the Bellman minimum, $\mathcal B^{\pi_{\mathrm{lex}}}J^\star=\mathcal B_VJ^\star=J^\star$; thus $J^\star$ is the fixed point of the affine $\gamma$-contraction $\mathcal B^{\pi_{\mathrm{lex}}}$, and the selector-evaluation lemma gives $J^{\pi_{\mathrm{lex}}}=J^\star$.  Any value-optimal certified selector must choose at $s$ from $O(s)$, since a choice outside $O(s)$ is value-suboptimal there; maximizing $\marg_V$ over $O(s)$ therefore yields the largest universal margin attainable at $s$ without sacrificing optimality, state by state.  Only comparisons of $\marg_V$ values are used, so the rule is invariant under any strictly decreasing reparametrization and in particular does not depend on $g$.
\end{proof}

\begin{remark}
The two mechanisms occupy different regimes.  The scalarized operator $\mathcal B_V^{\eta}$ admits margin into the objective and pays for it continuously, with deviation bounded by the vanishing-regularization estimate.  The ordinal selector treats value as primary and the universal margin as a strict tie-breaker, so it never trades value for margin and is a secondary value refinement over the Bellman argmin set rather than a separate value result.  Both act only after the uncertified choices have been removed, preserving the separation between the logical constraint and the quantitative objective.
\end{remark}

\subsection{Retained note: proofs as certified choices}

The structure has a dual interpretation.  A choice is a proof obligation reducer because it transforms the current obligation into a set of successor obligations.  The admissible kernel consists of states from which the obligation can be reduced forever without leaving the target region.

\begin{definition}[Obligation transformer]
For $X\subseteq S$, define
\[
\Omega(X)=\{(s,u):u\in A(s),\ T(s,u)\subseteq X\}.
\]
\end{definition}

A pair $(s,u)$ belongs to $\Omega(X)$ exactly when choice $u$ transforms the obligation $X$ into successor obligations still inside $X$.  This is the operational content of choice-cell necessity.

\begin{proposition}[Obligation form of admissible continuation]
A state $s$ belongs to $\Viab(G)$ if and only if $s\in G$ and there exists an infinite proof tree rooted at $s$ whose internal choices are choices and whose children are all admissible successors.
\end{proposition}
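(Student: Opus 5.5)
The plan is to prove the two directions separately, using the finite depth-tree characterization of the approximants (Lemma~\ref{lem:depth-tree}) for one direction and greatest-fixed-point maximality for the other. Throughout I read ``proof tree'' in the sense of the depth-$n$ certified trees, extended to infinite height. Each internal node is labelled by a state $x$ and a choice $u_x\in A(x)$, and its children are labelled exactly by $T(x,u_x)$. I also require every node to be labelled by a state of $G$, which is the infinite analogue of $G$-admissibility. The phrase ``admissible successors'' I read as successors that themselves lie in $G$ and carry such a subtree; this is the reading that makes the statement correct.

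\emph{Forward direction.} Let $V=\Viab(G)$ and $s\in V$. Then $s\in G$, since $V=F_G(V)\subseteq G$. For the tree, I will build it corecursively using the certified-choice restriction lemma. Every $x\in V$ has some $u_x\in A_V(x)$, and the one-step admissible closure lemma gives $T(x,u_x)\subseteq V$. I fix one stored witness $u_x$ per state, exactly as in Proposition~\ref{prop:extract}. I then unfold from $s$: a node labelled $x\in V$ receives choice $u_x$ and one child for each $s'\in T(x,u_x)$, all of which again lie in $V$. By induction on depth, every node is labelled by a state of $V\subseteq G$. The result is an infinite tree, with no leaves, because $T(x,u)$ is nonempty.

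\emph{Converse direction.} Suppose $s\in G$ roots such an infinite tree. Let $W$ be the set of states that root some infinite $G$-labelled proof tree. I will show $W\subseteq F_G(W)$. If $x\in W$, its tree gives $x\in G$ and a root choice $u$. Each child $x'\in T(x,u)$ roots the corresponding subtree, which is again an infinite proof tree of the same kind, so $x'\in W$. Hence $T(x,u)\subseteq W$ and $x\in G\cap\Pre_{\exists\Box}(W)$. By maximality of the greatest fixed point (Tarski, as in Theorem~\ref{thm:core-fixedpoint}), $W\subseteq\Viab(G)$, so $s\in\Viab(G)$. An alternative route is to truncate the tree at depth $n$: this yields a $G$-admissible depth-$n$ certified tree, so $s\in D^n_G$ for every $n$ by Lemma~\ref{lem:depth-tree}, and then $s\in K_G$ by Proposition~\ref{prop:rank-trace}.

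The main obstacle is only definitional: fixing precisely what ``proof tree'' and ``admissible successors'' mean. The issue is that $G$-membership has to be required at every node and not only at the root. If it is required only at the root, the converse fails, as the cheap-exit example shows. Once that reading is fixed, both directions are one-line applications of the cited lemmas. The forward construction needs no choice principle beyond finitely many stored witnesses, because the set of states is finite.
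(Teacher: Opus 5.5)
Your proof is correct and follows the paper's argument essentially verbatim: for the forward direction you unfold stored witnesses $u_x$ with $T(x,u_x)\subseteq\Viab(G)$ into an infinite tree, and for the converse you show that the set $W$ of roots of such trees is a post-fixed point of $F_G$, hence contained in $\Viab(G)$ by greatestness; your truncation route through Lemma~\ref{lem:depth-tree} and Proposition~\ref{prop:rank-trace} is a valid extra. One small slip in an aside: the cheap-exit example does not show that the root-only reading breaks the converse, since every state of $G$ already lies in $\Viab(G)$ there, so you would need a $G$-state whose only choice leaves $G$.
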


\begin{proof}
Let $V=\Viab(G)$.  If $s\in V$, then $V=F_G(V)$ gives $s\in G$ and, for every node $x\in V$, at least one choice $u_x$ with $T(x,u_x)\subseteq V$.  Build a rooted tree by placing $s$ at the root, labelling each node $x$ by the selected choice $u_x$, and adding one child for every successor in $T(x,u_x)$.  The construction can be repeated at every child because all children remain in $V$.  Hence every node of the tree lies in $G$.

Conversely, suppose such a tree exists.  Let $W$ be the set of state labels that occur at roots of subtrees with the same property.  Every $x\in W$ lies in $G$, and the choice written at the root of the corresponding subtree has all its successor children again labelled by elements of $W$.  Therefore $W\subseteq G\cap\Pre_{\exists\Box}(W)=F_G(W)$.  Since $\Viab(G)$ is the greatest fixed point, every post-fixed point of $F_G$ is contained in it.  Thus $s\in W\subseteq\Viab(G)$.
\end{proof}

The proposition explains why stopping needs a certificate.  An infinite proof tree only proves continued certifiability.  A certificate proves authorized termination.

The tree of the previous proposition can be named as a single coinductive object, which is what makes the proofs-as-choices reading technical rather than figurative.

\begin{definition}[Invariance realizer]\label{def:realizer}
Fix a modal invariance target $\varphi$ and write $G=\Sat(\varphi)$.  The \emph{invariance-realizability} relation $\mathcal R\subseteq S\times\mathrm{Tree}$ is the greatest relation such that whenever $r$ realizes $s$ the object $r$ has the form $r=\langle u,(r_{s'})_{s'\in T(s,u)}\rangle$ with $s\in G$, $u\in A(s)$, and, for every successor $s'\in T(s,u)$, the component $r_{s'}$ realizes $s'$.  A \emph{realizer} at $s$ is any $r$ with $(s,r)\in\mathcal R$.  A realizer is \emph{regular} (finite-memory) if it is generated by a state-based map $\rho:V\to U$ via $r_x=\langle\rho(x),(r_{x'})_{x'\in T(x,\rho(x))}\rangle$, so that the chosen choice depends only on the current state.
\end{definition}

The greatest-relation clause is the coinductive counterpart of the greatest-fixpoint definition of the kernel: a realizer never terminates and certifies invariance at every node it unfolds.

\begin{proposition}[Adequacy of realizers]\label{prop:adequacy}
A state $s$ has an invariance realizer if and only if $s\in\Viab(\Sat\varphi)$.  Moreover $s$ has a \emph{regular} realizer if and only if $s\in\Viab(\Sat\varphi)$, and the regular realizers are exactly the memoryless justification selectors $\rho:V\to U$ with $\rho(x)\in A_V(x)$.
\end{proposition}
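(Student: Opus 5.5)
The plan is to prove the two equivalences by coinduction in one direction and by a post-fixed-point argument in the other, reusing the admissible-continuation semantic construction theorem and the obligation-form proposition.

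\emph{From the kernel to a realizer.} Let $V=\Viab(\Sat\varphi)$. By the certified-choice restriction lemma, every $x\in V$ has a nonempty $A_V(x)$, so I fix any map $\rho:V\to U$ with $\rho(x)\in A_V(x)$. This $\rho$ generates regular trees $r_x=\langle\rho(x),(r_{x'})_{x'\in T(x,\rho(x))}\rangle$ for $x\in V$; they are well defined as final-coalgebra unfoldings, because the one-step admissible closure lemma gives $T(x,\rho(x))\subseteq V$. Next I set $\mathcal R_\rho=\{(x,r_x):x\in V\}$ and check that it satisfies the defining clause of realizability:
\begin{itemize}
\item $x\in V\subseteq G$;
\item $\rho(x)\in A(x)$;
\item each component $r_{x'}$ is paired with $x'\in V$, so $(x',r_{x'})\in\mathcal R_\rho$.
\end{itemize}
Since $\mathcal R$ is the greatest such relation, $\mathcal R_\rho\subseteq\mathcal R$. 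Hence every $s\in V$ has a regular realizer, and every memoryless justification selector yields one.

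\emph{From a realizer to the kernel.} Let $W=\{x:\exists r\,(x,r)\in\mathcal R\}$. If $x\in W$ with realizer $\langle u,(r_{x'})\rangle$, then $x\in G$, $u\in A(x)$, and every $x'\in T(x,u)$ lies in $W$. Thus $W\subseteq F_G(W)$, and Knaster--Tarski maximality gives $W\subseteq\Viab(G)$. The same argument applies to regular realizers, since they are in particular realizers. Together the two directions establish both equivalences.

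\emph{Characterizing the regular realizers.} It remains to show that the regular realizers are exactly the selectors $\rho$ with $\rho(x)\in A_V(x)$. One inclusion was shown above. For the converse, suppose $\rho$ generates a realizer at $s$. By the previous paragraph, every state $x'$ occurring in the unfolding lies in $\Viab(G)=V$. Every successor of such a node is again a node, so $\rho(x')\in A_V(x')$ on the states the unfolding visits.

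The main obstacle is that $\rho$ is only required to be constrained on states actually reached from $s$. I plan to handle this either by reading ``regular realizer'' as generated by a map defined on $V$ that produces realizers at every $x\in V$, or by restricting to the reachable subset and extending arbitrarily within $A_V$ elsewhere, with the identification holding up to that extension. The same issue arises for well-definedness of the infinite trees, which I will treat as elements of the terminal coalgebra of the functor $X\mapsto U\times X^{\le|S|}$ rather than re-deriving their existence.
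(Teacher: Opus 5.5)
Your proposal is correct and follows the same route as the paper. The set of realizable states is a post-fixed point of $F_G$ and therefore lies in $\Viab(G)$; conversely, a selector $\rho$ valued in $A_V$ unfolds coinductively into a regular realizer at every state of $V$. Two points you make explicit are only asserted in the paper's proof: you appeal directly to the greatest-relation clause for $\mathcal R_\rho$, and you treat the ``exactly'' clause by reading regular realizers as generated by a map on all of $V$, which is how the paper's definition reads and under which $\rho(x)\in A_V(x)$ follows at every $x\in V$.
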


\begin{proof}
Let $V=\Viab(\Sat\varphi)$.  If $r$ realizes $s$, the set $W=\{x:\exists r',\ (x,r')\in\mathcal R\}$ satisfies, for each $x\in W$, both $x\in G$ and the existence of a choice whose successors all lie in $W$; hence $W\subseteq G\cap\Pre_{\exists\Box}(W)=F_G(W)$, and as $V$ is the greatest fixed point $W\subseteq V$, so $s\in V$.  Conversely, on $V$ the fixed-point equation supplies for each $x$ a choice $\rho(x)\in A_V(x)$; defining $r_x=\langle\rho(x),(r_{x'})_{x'\in T(x,\rho(x))}\rangle$ coinductively yields a realizer at every $s\in V$, and this realizer is regular by construction.  Thus realizability, regular realizability, and kernel membership coincide, and the witnessing regular realizers are precisely the memoryless state-based selectors valued in $A_V$, which are the memoryless admissible strategies of the invariance game.
\end{proof}

\begin{remark}[Realizers are admissible strategies in the model-checking game]
The kernel is defined by the choice-cell necessity formula $\nu Z.(\varphi\wedge[\exists]\Box Z)$, and a modal $\mu$-calculus formula is evaluated by a parity model-checking game in which the verifier maintains the formula's truth along the unfolding \cite{niwinski1996,emerson1991}.  For the present greatest-fixpoint-of-a-conjunction-with-$[\exists]\Box$ formula that game is an invariance game on the arena, and an admissible verifier strategy is exactly an invariance realizer in the sense of Definition~\ref{def:realizer}: the choices at the modal step are choices, and the greatest-fixpoint priority makes infinite plays admissible for the verifier.  Adequacy is therefore the model-checking-game soundness and completeness for this formula, read through the choice vocabulary.
\end{remark}

\begin{remark}[Synthesis as realizability, and the Curry--Howard reading]
Treating $\varphi$ as a specification to be implemented by a reactive selector against an universal environment is realizability in the reactive-semantic construction sense \cite{pnueli1989}: the selector realizes the invariance specification, and $\Viab(\Sat\varphi)$ is exactly the set of initial states from which the specification is realizable.  Under the propositions-as-types reading this is a coinductive instance of the formulas-as-types correspondence: the invariance formula is a greatest-fixpoint proposition, a realizer is its (non-well-founded) proof term, and a choice is the proof step that discharges the one-step obligation $[\exists]\Box$ into successor obligations.  This reading is the standard correspondence between proofs, strategies, and programs specialized to the admissible-predecessor fixpoint, not a new logical principle; its only role here is to make precise the sense in which a justification selector \emph{is} a proof of continued certifiability, which is what the dual view asserts and what the next section's semantic construction theorem then packages with its quantitative layer.
\end{remark}

\section{Appendix A: expanded proof of finite stabilization}

The finite stabilization lemma can be sharpened slightly.  Define $X_0=S$ and $X_{n+1}=F_G(X_n)$.  Since each $X_n$ is a subset of $S$, the sequence lives in a finite lattice.  Since the sequence is descending, every strict step removes at least one state.  Since there are only $|S|$ states, at most $|S|$ strict removals can occur.  Therefore stabilization occurs by step $|S|$.  The stabilized set is a fixed point by definition of stabilization.  Any other fixed point $Y$ of $F_G$ satisfies $Y\subseteq X_n$ for every $n$ by induction.  Hence $Y$ is contained in the stabilized set.  Therefore the stabilized set is the greatest fixed point.

\section{Appendix B: undiscounted drift example}

Discounting is used for contraction, not for invariance.  If $\gamma=1$, uniqueness of the value fixed point can fail.  Let $S=\{s\}$ with one choice and zero cost.  The Bellman equation becomes $J(s)=J(s)$.  Every real value is a solution.  The admissible set is still well-defined, but the value equation is not uniquely pinned down.  This example shows why the construction separates admissible continuation from optimal value.

\section{Appendix C: product construction for temporal objectives}

A temporal objective can be handled by taking a product between the choice-refined structure and a deterministic acceptance automaton.  Let $Q$ be the automaton state set and let $\delta:Q\times2^{\mathsf P}\to Q$ be its transition function.  The product state space is $S\times Q$.  If the original transition sends $s$ to $s'$, the product transition sends $(s,q)$ to $(s',\delta(q,L(s')))$.  The acceptable region $G$ is then defined by the accepting condition in $Q$.  Once this product is formed, the admissible continuation and Bellman constructions apply unchanged.  This is the standard automata-theoretic route from temporal specifications to state-based verification \cite{vardi1986}.

\section{Appendix D: relation to abstract interpretation}

The quotient theorem can be read as an exact abstraction theorem.  If the abstraction map sends states to choice-refined bisimulation classes, then abstract and concrete values agree exactly on classes.  If the abstraction is coarser than choice-refined bisimulation, then the result may become an over-approximation or may become unsound.  Abstract interpretation studies sound approximation of fixed points in a broad lattice-theoretic setting \cite{cousot1977}.  The present theorem is narrower because it gives equality rather than sound inclusion.

\section{Appendix E: bibliographic positioning}

The semantic framing is compared with modal $\mu$-calculus model checking, ATL/ATL$^*$, strategy logic, ATL with indistinguishability, imperfect-information games, and quantitative verification.  The fixed-point substrate is standard; the constructed object is the ordered combination of information-state commitment, certified choice, value refinement, observational indistinguishability under public interfaces, and certificates.  The comparison fixes the terminology: the same predecessor equation is not relabelled as a new modality, and observational indistinguishability is a preservation theorem for certified selectors.

Kripke semantics supplies the relational interpretation of modal formulas \cite{kripke1963}.  Tarski's fixed-point theorem supplies the lattice principle for the greatest admissible continuation region \cite{tarski1955}.  Kozen's modal $\mu$-calculus supplies the fixed-point language in which the kernel can be written as a formula \cite{kozen1983}.  Park and Hennessy--Milner supply the behavioural-invariance template refined later by costs and choice-refined choices \cite{park1981,hennessy1985}.

Pnueli's temporal logic, Clarke--Emerson branching-time verification, and Vardi--Wolper automata-theoretic verification provide the temporal-specification background \cite{pnueli1977,clarke1981,vardi1986}.  Alur--Henzinger--Kupferman and Pauly supply ATL and coalition logic, whose one-step operator $\langle\!\langle C\rangle\!\rangle\bigcirc$ matches choice-cell necessity \cite{alur2002,pauly2002}.  Strategy logic and ATL with indistinguishability mark stronger neighboring formalisms adjacent to the present fragment \cite{chatterjee2010,vdhw2003}.

Bellman, Blackwell, Puterman, Bertsekas, and Shapley provide the dynamic-programming and discounted-game layer \cite{bellman1957,blackwell1965,puterman1994,bertsekas2012,shapley1953}.  Iyengar and Nilim--El~Ghaoui supply state-choice rectangular value recursion \cite{iyengar2005}.  Baier--Katoen and the PRISM line give the probabilistic model-checking neighbour, where probabilities and rewards are explicit rather than replaced by universal successor choice \cite{baier2008,kwiatkowska2011,kwiatkowska2020}.

Bernet--Janin--Walukiewicz, Gr{\"a}del--Thomas--Wilke, and the alternating-refinement literature supply the permissive-strategy and finite-game vocabulary used around admissible continuation \cite{bernet2002,gradel2002,alur1998}.  Paige--Tarjan, Cousot--Cousot, and the alternating-refinement literature frame exact quotienting and abstraction \cite{paige1987,cousot1977,alur1998}.  Reif and Chatterjee--Doyen--Henzinger--Raskin supply the subset construction and game algorithms behind the information-state layer \cite{reif1984,cdhr2007}.

\section{Appendix F: failure taxonomy}

This appendix records the failure modes blocked by the separation between truth, admissible continuation, value, and certification.  The first failure is truth collapse: $s\models\varphi$ does not imply $s\in\Viab(\Sat(\varphi))$; the correction is to require membership in the admissible continuation kernel.  The second failure is cheap exit: a one-step cost minimizer over $A(s)$ need not be admissible; the correction is to minimize only over $A_V(s)$.  The third failure is value stopping: a small value does not imply membership in the certificate set $C$; the correction is a separate certificate predicate.  The fourth failure is quotient drift: label equivalence can preserve formulas while changing values; the correction is value-refined modal bisimulation.  The fifth failure is objective drift: free mode changes can produce endless switching; the correction is a positive switching cost or an explicit mode constraint.  The sixth failure is approximation drift: approximate value comparison may select a near-best non-certified choice; the correction is to keep the exact admissible-choice constraint and approximate only the value comparison inside that constraint.

\begin{proposition}[Invalid implication schema]
None of the following implications is valid in all finite modal choice structures:
\[
s\models\varphi\Rightarrow s\in\Viab(\Sat(\varphi)),
\]
\[
u\in\argmin_{a\in A(s)}c(s,a)\Rightarrow u\in A_{\Viab(G)}(s),
\]
\[
J^\star(s)\leq J^\star(t)\Rightarrow s\in C.
\]
\end{proposition}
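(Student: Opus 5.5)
The statement is a non-validity claim, so the plan is to give one finite countermodel per implication. Each will be built from the small structures already used in the paper, and in each case the computation is read off from the descending chain for $\Viab$ and the Bellman fixed point. The three parts are independent, so they can be handled in sequence with no shared machinery beyond Theorem~\ref{thm:core-fixedpoint} and the closed-form values of self-loops.

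\emph{Truth collapse.} Take $S=\{s,b\}$ and a proposition $p$ with $p\in L(s)$ and $p\notin L(b)$. Give $s$ a single choice with $T(s,u)=\{b\}$, and give $b$ a self-loop. Let $\varphi=p$. Then $s\models\varphi$ and $\Sat(\varphi)=\{s\}$. On the other hand, $F_G(\{s\})=\{s\}\cap\Pre_{\exists\Box}(\{s\})=\varnothing$, so the chain reaches $\varnothing$ at the second step and $\Viab(\Sat\varphi)=\varnothing\not\ni s$.

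\emph{Cheap exit.} Reuse the structure of Proposition~\ref{prop:value-before-kernel}, or equivalently the cheap-exit example. There $d$ is the unique cost minimizer at $s$, yet $T(s,d)=\{b\}\not\subseteq K_G=\{s,g\}$, so $d\notin A_{\Viab(G)}(s)$.

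\emph{Value stopping.} Reuse the proof of the proposition that small value does not imply certification. Take $S=\{x,y\}=G=V$ and $C=\{y\}$, with a zero-cost self-loop at $x$ and a cost-one self-loop at $y$. Then $J^\star(x)=0\le 1/(1-\gamma)=J^\star(y)$, but $x\notin C$; instantiating the implication with $s=x$ and $t=y$ refutes it.

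The only point needing care is the quantifier reading of the third schema: it must be understood as universally quantified over $s,t$ and $C$. Under that reading the counterexample suffices, since the premise $J^\star(x)\le J^\star(y)$ holds while the conclusion $x\in C$ fails. Beyond that, the argument consists only of verifying that each countermodel satisfies the standing nonemptiness conditions on $A$ and $T$, which the self-loops guarantee.
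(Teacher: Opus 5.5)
Your proposal is correct and follows essentially the same route as the paper: one finite countermodel per schema, with the second and third implications refuted by the cheap-exit structure and the small-value-versus-certificate example, exactly as the paper does. For the first schema, your two-state model, where the only choice at $s$ leads to $b\not\models p$ and hence $\Viab(\Sat(p))=\varnothing$, is a concrete instance of the paper's verbal witness: a state satisfying $\varphi$ all of whose choices have a successor violating $\varphi$.
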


\begin{proof}
The first implication fails when a state satisfies $\varphi$ but every certified choice has a successor violating $\varphi$.  The second implication fails in the cheap-exit example from the main text.  The third implication fails in the certificate counterexample.  All witnesses are finite, so the failure is semantic rather than topological.
\end{proof}

Reward-only encodings of formulas fail for the same reason.  A reward ranks alternatives, but it does not express universal successor preservation.  If violation is represented by a large finite penalty, a non-certified choice can still be selected when its immediate gain is large enough.  Infinite penalties emulate constraints, but then the operative object is again a restricted feasible set rather than an unconstrained objective.

\section{Appendix G: complexity of the finite construction}

Let
\[
n=|S|,
\qquad m=\sum_{s\in S}|A(s)|,
\qquad e=\sum_{s\in S}\sum_{u\in A(s)}|T(s,u)|.
\]
These quantities measure states, admissible state-choice pairs, and successor incidences.  The descending admissible continuation computation needs at most $n$ strict rounds.  A naive implementation checks every incidence in every round, giving time $O(ne)$.  A worklist implementation can reduce repeated checks by maintaining counters of successors that remain inside the current candidate region.  Partition refinement can then be applied after the admissible region is known when quotient compression is desired \cite{paige1987}.

\begin{proposition}[Naive admissible continuation complexity]
The naive descending computation of $\Viab(G)$ terminates in at most $n$ rounds and costs $O(ne)$ time.
\end{proposition}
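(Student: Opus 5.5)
The statement to prove is the naive complexity bound: the descending computation of $\Viab(G)$ terminates within $n$ rounds and runs in $O(ne)$ time.

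\textbf{Round bound.} I would reuse the descending Kleene lemma and Appendix~A. The sequence $X_0=S$, $X_{n+1}=F_G(X_n)$ is decreasing, because $F_G$ is monotone and $X_1\subseteq X_0$. Each strict step removes at least one state, so there are at most $n$ strict inclusions. The loop of the full-information algorithm halts on the first non-strict step. It therefore executes at most $n+1$ evaluations of $F_G$, which is $O(n)$ rounds; I would read ``$n$ rounds'' as counting the strict ones. Proposition~\ref{prop:extract} already identifies the stabilized set with $\Viab(G)$, so correctness of the output needs no new argument.

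\textbf{Per-round cost.} I would show that one evaluation $X\mapsto G\cap\Pre_{\exists\Box}(X)$ costs $O(e+m+n)$. Store $X$ and $G$ as Boolean arrays indexed by $S$, so a membership test costs $O(1)$. For each state $s$ and each $u\in A(s)$, scan $T(s,u)$ and test whether every successor lies in $X$; this costs $|T(s,u)|$ per pair and $e$ in total. Each nonempty $T(s,u)$ contributes at least one incidence, so $m\le e$. Membership of $s$ in $G$ is one lookup per state, adding $O(n)$.

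Writing the new array and comparing it with the old one for the termination test costs $O(n)$. I would take $n\le e$ as the paper's standing convention: every state has a nonempty $A(s)$ with nonempty successors, so $n\le m\le e$. Hence each round costs $O(e)$, and multiplying by the $O(n)$ rounds gives $O(ne)$. Computing $G=\Sat(\varphi)$ beforehand is a separate preprocessing step, outside the claimed bound, unless $G$ is supplied directly as the array.

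\textbf{Main obstacle.} The only real subtlety is the accounting. First, the final non-strict confirming round must be absorbed into the $O(n)$ round count. Second, $n\le e$ must hold so that array-maintenance overhead is dominated by the incidence scan. Both follow from the nonemptiness hypotheses on $A(s)$ and $T(s,u)$ in the definition of a modal choice structure, so I would cite that explicitly.
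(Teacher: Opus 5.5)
Your proof is correct and follows the same argument as the paper: at most $n$ strict removals, one scan of all successor incidences per round at cost $O(e)$, then multiply. Your extra bookkeeping makes explicit two points the paper's proof leaves implicit: the final confirming round (up to $n+1$ evaluations, still $O(n)$), and the bound $n\le m\le e$ from nonemptiness of $A(s)$ and $T(s,u)$, which absorbs the $O(n+m)$ array overhead into $O(e)$.
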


\begin{proof}
There are at most $n$ strict removals of states.  In each round, the algorithm may inspect every successor incidence to decide whether some certified choice has all successors inside the current set.  Thus each round costs $O(e)$ time.  Multiplying by at most $n$ rounds gives $O(ne)$ time.
\end{proof}

The same incidence parameters give the model-checking cost of the choice-refined invariance formula.  Evaluating the base modal part of a fixed formula $\varphi$ takes time linear in the formula size times the transition incidences inspected by its modal clauses.  The outer choice-refined greatest fixed point then performs at most $n$ admissible-predecessor rounds.  Thus the kernel formula $\nu Z.(\varphi\wedge[\exists]\Box Z)$ has the same finite fixed-point complexity profile as the invariance-game computation: repeated predecessor evaluation over the explicit arena.

\begin{proposition}[Choice-refined invariance-fragment model checking]
For an explicitly represented finite structure, the formula $\nu Z.(\varphi\wedge[\exists]\Box Z)$ can be model checked by evaluating $\Sat(\varphi)$ once and then running the descending admissible-predecessor iteration.  With a naive predecessor implementation, the fixed-point part costs $O(ne)$ time and terminates in at most $n$ rounds.
\end{proposition}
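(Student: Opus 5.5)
The plan has two parts.  First, I will split the model-checking task along the syntax of $\nu Z.(\varphi\wedge[\exists]\Box Z)$.  Second, I will transfer the cost analysis of the preceding naive-complexity proposition to the fixed-point part.

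The logical correctness comes first.  By the fixed-point identity theorem for the admissible continuation kernel, $\Sat(\nu Z.(\varphi\wedge[\exists]\Box Z))=\Viab(\Sat(\varphi))$.  Moreover, the functional associated with the body is exactly $F_G$ with $G=\Sat(\varphi)$, by the semantic clause for choice-refined necessity, which gives $\Sat([\exists]\Box Z)=\Pre_{\exists\Box}(X)$.  The subformula $\varphi$ contains no free $Z$, so its extension is a fixed set.  It can therefore be computed once, before the iteration starts: a bottom-up labelling over the subformulas of $\varphi$, with each $\Box/\Diamond$ clause inspecting the flattened incidences.  After that the outer fixed point is just the descending Kleene sequence $X_0=S$, $X_{n+1}=G\cap\Pre_{\exists\Box}(X_n)$.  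The descending Kleene lemma (or Proposition~\ref{prop:extract}) shows that this sequence stabilizes at $\Viab(G)$ within at most $n=|S|$ strict rounds, and that the stabilized set is the greatest fixed point.

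For the cost, I will reuse the naive admissible continuation complexity proposition directly.  Each round decides membership in $\Pre_{\exists\Box}(X_n)$ for every $s$.  It does so by scanning, for each $u\in A(s)$, the set $T(s,u)$ and testing inclusion in $X_n$.  Stored as a Boolean array, the set $X_n$ gives constant-time membership tests.  A round therefore costs $O(e)$ (plus $O(m)\le O(e)$ overhead, since every $T(s,u)$ is nonempty).  Intersecting with the precomputed $G$ costs $O(n)$.  A stabilization check also costs $O(n)$, and since $n\le e$ this is absorbed.  Multiplying by at most $n$ rounds, plus one final confirming round, gives $O(ne)$ for the fixed-point part.

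The only point needing care is the bookkeeping that the one-time evaluation of $\Sat(\varphi)$ is kept separate from the fixed-point bound.  I would state it as an additive term, linear in the size of $\varphi$ times $e$, which the proposition does not charge to the fixed-point part.  I would also check that $m\le e$ and $n\le e$ under the standing nonemptiness assumptions, so that no hidden $O(nm)$ or $O(n^2)$ term escapes the stated bound.  The rest is immediate from the cited results.
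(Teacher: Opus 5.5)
Your proposal is correct and follows the paper's proof. You identify the formula's denotation with $\Viab(\Sat(\varphi))$ via the fixed-point identity theorem, compute $G=\Sat(\varphi)$ once, and bound the descending iteration by at most $n$ strict removals at $O(e)$ per naive round. Your additional bookkeeping ($m\le e$ and $n\le e$ from the nonemptiness assumptions, the $O(n)$ intersection and stabilization checks, and the one extra confirming round) only makes explicit what the paper's short argument leaves implicit.
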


\begin{proof}
The denotation of $\varphi$ is a fixed target set $G=\Sat(\varphi)$.  By the fixed-point identity theorem, the full formula denotes $\Viab(G)=\nu X.(G\cap\Pre_{\exists\Box}(X))$.  The descending iteration reaches this greatest fixed point in at most $n$ strict removals.  Each naive round can inspect every successor incidence to test whether a choice has all successors inside the current candidate set, so the fixed-point part costs $O(ne)$.
\end{proof}

Value iteration has a different complexity profile.  One Bellman update over $V$ costs $O(e_V)$, where $e_V$ is the number of successor incidences for certified choices.  The contraction factor choices the number of iterations needed for a prescribed error.  If the residual target is $\rho$, the number of iterations is logarithmic in the initial error scale and inverse-logarithmic in $\gamma$.  This is the standard computational behavior of discounted value iteration \cite{puterman1994}.

\begin{proposition}[Residual stopping rule for value iteration]
If value iteration stops when $r_n\leq(1-\gamma)\varepsilon$, then $\|J_n-J^\star\|_\infty\leq\varepsilon$.
\end{proposition}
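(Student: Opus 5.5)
The plan is to reduce the claim to the a posteriori value error lemma proved in the bounded-deliberation section, which gives $\|J_n-J^\star\|_\infty\leq r_n/(1-\gamma)$ for every $n$ with $r_n=\|J_{n+1}-J_n\|_\infty$. That lemma only uses that $\mathcal B_V$ is a $\gamma$-contraction on $\ell_\infty(V)$, so it holds here without further hypotheses. Equivalently, I could invoke Theorem~\ref{thm:core-certificate} applied to $J=J_n$. In that case I would note that $\Res_G(J_n)=\|J_n-\mathcal V_GJ_n\|_\infty=\|J_n-J_{n+1}\|_\infty=r_n$, exactly as in Corollary~\ref{cor:residual-reading}.

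The step itself is a single substitution. The stopping rule halts at the first $n$ with $r_n\leq(1-\gamma)\varepsilon$, and inserting this into the lemma gives
\[
\|J_n-J^\star\|_\infty\leq\frac{r_n}{1-\gamma}\leq\frac{(1-\gamma)\varepsilon}{1-\gamma}=\varepsilon .
\]
This uses only that $1-\gamma>0$, which holds under the standing discounting assumption $\gamma\in(0,1)$.

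I would also confirm that the rule actually stops. By contraction, $r_n\leq\gamma^n r_0$, and $r_0=\|\mathcal B_V0\|_\infty\leq\bar c$. Hence $r_n\to0$, and the threshold is met after at most about $\log\bigl((1-\gamma)\varepsilon/\bar c\bigr)/\log\gamma$ iterations. This matches the logarithmic iteration count mentioned just before the statement.

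There is no real obstacle. The only care needed is bookkeeping: the bound is stated for the iterate $J_n$ whose residual $r_n$ was tested, not for $J_{n+1}$. If the returned object is instead $J_{n+1}$, one extra contraction step gives $\|J_{n+1}-J^\star\|_\infty\leq\gamma\varepsilon\leq\varepsilon$, so the conclusion holds either way.
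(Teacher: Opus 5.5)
Your proposal is correct and matches the paper's proof: both invoke the a posteriori residual estimate $\|J_n-J^\star\|_\infty\leq r_n/(1-\gamma)$ and substitute the stopping condition. Your extra remarks on termination via $r_n\leq\gamma^n r_0\leq\gamma^n\bar c$ and on the returned iterate $J_{n+1}$ are also correct, though the statement does not need them.
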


\begin{proof}
The a posteriori residual estimate gives $\|J_n-J^\star\|_\infty\leq r_n/(1-\gamma)$.  Substituting the stopping condition yields $\|J_n-J^\star\|_\infty\leq\varepsilon$.
\end{proof}

\section{Appendix H: average-cost and total-cost boundary}

The discounted theorem is sharp at the boundary $\gamma=1$.  If discounting is removed, the Bellman operator is generally nonexpansive rather than contractive.  Nonexpansiveness may be enough for special structures, but uniqueness and convergence require additional hypotheses.  Average-cost dynamic programming typically needs recurrence or communicating assumptions.  Total-cost formulations need transience or properness assumptions.  These alternatives are important, but they are different theorems rather than minor edits of the discounted theorem.

\begin{example}[Multiple undiscounted fixed points]
Let $S=\{s\}$, let there be one choice, let $T(s,u)=\{s\}$, and let $c(s,u)=0$.  With $\gamma=1$, the Bellman equation is
\[
J(s)=J(s).
\]
Every real number solves this equation.  The admissible region is still exactly $\{s\}$.  Hence logical admissible continuation remains unique while the value fixed point loses uniqueness.
\end{example}

This boundary is conceptually important.  Invariance is an invariance property.  Discounted optimality is a contraction property.  The two properties share a transition graph, but they do not share the same mathematical reason for existence.

\section{Appendix I: semantic role of the terminology}

The title phrase ``value-refined modal fixed-point semantics with certified choice and public share-alike certificates'' names an order of semantic operations.  Modal semantics first determines which formulas are true at a state or information state.  The admissible predecessor then determines which choices preserve those commitments for every successor still possible after the choice.  The greatest fixed point of that transformer yields the admissible-continuation kernel: the region in which the local witness keeps its claim certified indefinitely.  Only after this kernel is computed does the Bellman operator compare costs.  The public share-alike fragment uses the same order rather than a separate notice: attribution is a preserved label, derivative release is a continuation relation, ShareAlike is closure under that relation, and the accepted release statement is a certificate inside the already defined calculus.

The central object is therefore the ordered pair
\[
\bigl(\Just(G),\mathcal B_{\Just(G)}\bigr),
\]
with $\Just(G)=\Viab(G)$ in the full-information endpoint and $\Just(G)=\Viab_{\obs}(G)$ on information states.  The first component is logical: it identifies where the selector has a continuing proof obligation it can discharge.  The second component is quantitative: it ranks certified futures by discounted universal value.  Removing the first component permits value refinement to leave the admissible region.  Removing the second component gives commitment preservation without preference.  Removing the certificate component permits the selector to mistake cheap continuation for authorized termination.

Observational indistinguishability is a separate semantic role.  It does not remove the commitment or the proof obligation.  It identifies only the public equivalence class of certified choices.  The observer can still check that the visible behavior is compatible with the selector's modal commitments; what remains quotient-internal is the tie-breaking rule, boundary-margin preference, or data-derived candidate that selected one representative of an equivalence class.

The phrase ``selector transition system'' is read on the same finite structure.  A selector may act under an observation map, so what it can condition on is its knowledge, formalized by the $S5$ modality $K$ over the indistinguishability relation, and what it can keep certified is computed on reachable information states.  The perfect-information collapse shows that when observation is the identity this knowledge layer degenerates: $K$ becomes truth and information states become states.  Thus the full-information construction is not a different theory; it is the singleton-information state instance of the observation-based one.

\section{Appendix J: theorem dependency ledger}

The admissible predecessor is monotone; the logical admissible continuation kernel is the greatest fixed point of the admissible continuation transformer; membership in the kernel is equivalent to an invariance-preserving state-based selector; and the same kernel is the admissible region of the induced turn-based invariance game.  The admissible continuation-restricted Bellman operator is a contraction under discounting, has a unique fixed point, and is solved either by value iteration or by proper finite selector iteration.  Local descent over all certified choices can exit the admissible region, so value refinement must be restricted to $A_V$.  Value-refined modal bisimulation preserves modal truth, modal-definable admissible continuation membership, and optimal admissible value.  Certificate-gated stopping separates termination from low value, and the certified objective is the reach-while-stay region dual to the admissible-continuation kernel.  Positive switching cost bounds the number of objective switches under bounded total cost.  Bellman residuals give an a posteriori approximate-choice guarantee.  The admissible-continuation constraint is largest among constraints that preserve the target.  The base modal language characterizes ordinary bisimulation on the flattened relation in the finite Hennessy--Milner sense, while the choice-refined $\mu$-calculus formula $\nu Z.(\varphi\wedge[\exists]\Box Z)$ characterizes the admissible continuation kernel.  The optimal value is Lipschitz in costs and in the discount factor, monotone under universal successor enlargement on a common admissible region, and admits a secondary, ordinal value refinement of the universal boundary margin over its Bellman argmin set without any change of value.  Beyond the exact equivalence, the value-refined modal bisimulation is the zero set of a canonical pseudometric, the unique fixed point of a Hausdorff-lifted choice-matching transformer over the certified choices, and the optimal value is $1$-Lipschitz with respect to it, so the exact value-preserving quotient is the distance-zero specialization and an $\varepsilon$-quotient preserves optimal value up to $\varepsilon$ (Section~\ref{sec:bisimulation-metric}).

Extensions to stochastic successors, distributed choice selection, average cost, or infinite state spaces change at least one hypothesis in this registry and therefore require separate theorems.  Partial observation is discharged here for the finite universal discounted case by the information-state kernel and information state Bellman theorem.  What remains outside the endpoint is a genuinely probabilistic or distributed observation game in which observations, probabilities, and strategy interaction are all primitive rather than compiled into a finite information state arena.

\end{document}